\renewcommand{\braket}[2]{\left\langle #1 , #2 \right\rangle}
\newcommand{\sbra}[1]{\langle #1 |}
\newcommand{\sket}[1]{| #1 \rangle}
\newcommand{\HNR}{H_{N,R}}
\newcommand{\hc}{\mathrm{h.c.}}
\renewcommand{\d}[1]{\mathrm{d}#1}
\newcommand{\one}{\mathbbm{1}}
\def\ii{\mathrm{i}}
\def\im{\mathrm{i}}
\def\R{\mathbb{R}}
\theoremstyle{plain} 
\newtheorem{theorem}{Theorem}[section]
\newtheorem{assumption}[theorem]{Assumption}
\newtheorem{proposition}[theorem]{Proposition}
\newtheorem{lemma}[theorem]{Lemma}
\newtheorem{corollary}[theorem]{Corollary}
\theoremstyle{remark} 
\newtheorem{remark}[theorem]{Remark}
\theoremstyle{definition} 
\newtheorem{definition}{Definition}
\numberwithin{equation}{section}
\def\b{|}
\title{Derivation of the Hartree dynamics for a $N$-particles fermionic $2D$ system interacting via long-range electric and magnetic two-body singular potentials in a dilute regime}
\author{
	Th\'eotime Girardot%
	\thanks{GSSI, Via comunale, L’Aquila, Italy
	\href{mailto:theotime.girardot@gssi.it}{theotime.girardot@gssi.it}}
}
\date{}
\begin{document}
\maketitle
\begin{abstract}
We show that a $2$-dimensional system of $N$ fermions interacting through a pairwise electric and magnetic singular interactions with Slater initial data preserves its Slater structure over time when $N$ gets large. In other words, the wave function of the system can be approximated by a Slater determinant whose orbitals evolves according to a coupled system of $N$ Chern-Simons-Schr\"odinger type effective equations. The result holds in a dilute regime where the density is of order one on a large volume proportional to the number of particles. The pairwise magnetic feature of the system implies to deal with the diagonalization of three-body potentials which is the main mathematical innovation of the paper.
\end{abstract}

\tableofcontents







\section{Introduction}
Human sized physical systems involve a large number of particles. Whenever their microscopic description takes into account their interactions in a non negligible way, the time evolution of these systems, even for simple initial data, can become so complicated that it even escapes the scope of numerical analysis. One way to study them then consists in adopting a statistical description of the systems. The identification of a larger scale structure such as Bose-Einstein condensation for bosons or Slater product for fermions often allows to reduce the many degrees of freedom of the system in a, usually non-linear, effective equation with few degrees of freedom. In the case of quantum fermionic many-body systems, Hartree, Hartree-Fock, Vlasov or Thomas-Fermi are well-known effective models \cite{BarGolGotMau-03,Elg_Erd_Sch_Tze_04,FroKno-11,Nar_Hei_Sew_81,Bar_Gol_Got_Nor_03,Bar_Mau_04,BacBrePetPicTza-16}. The above effective equations have been mostly studied from the point of view of their ground state energy as well as for their dynamics in semi-classical type scaling limit. We refer to \cite{Fou_Lew_Sol_18,Lew_Mad_Pet_Tri_19,Fou_Mad_Pet_20,Gir_Rou_21} for results about the ground state energy and to \cite{BenPorSch-14,Ben_Por_Sch_14bis,Ben_Jak_Por_Saf_Sch_16,BenPorSch-15,Por_Rad_Saf_Sch_17, Hai_Por_Rex_20} for the dynamics in the semi-classical setting. The present paper also means to derive a Hartree type dynamical effective equation for a $N$-body system of interacting fermions but in a regime where the particles occupy a volume of order $N$. In this dilute regime, the system stays more quantum and the Vlasov and Thomas-Fermi equation fail to approximate the system. At the level of the ground state energy, this scaling is close to the dilute thermodynamic regime studied in \cite{Fal_Gia_Hain_Por_25,Lie_Sei_Sol_05}. At the level of the dynamics, this regime was introduced in  \cite{Pet_Pic_16} and stays, as far as we know, the only occurrence of such a scaling in the literature.  This scaling in relevant in the case of large molecules, where the size of the system is of order $N$ and where more quantum effects, as interferences of wave packets, can take place compared to a semi-classical regime. In  \cite{Pet_Pic_16}, the authors study a system of $N$ interacting fermions in three dimensional space. The potential is a singular electric interaction $|x|^{-s}$. The approximation by the effective equation then holds for some positive $s\leq s_*$ where $s_*$ depends on the constraint they put on the initial data. In this paper we study the magnetic version of this system in dimension two. Its magnetic features implies to deal with a three-body interaction and to control the kinetic energy of the system along time. 
\subsection{The Hamiltonian}
We study the time evolution of an $L_{asym}^2(\R^{2N})$ product form initial data $\Psi_N(0)=\bigwedge_{j=1}^N u_j(0)$ where $\{u_j(0)\}_{j=1}^N$ is a orthonormal family of functions belonging to $L^2(\R^2)$ and whose dynamics is driven by the Schr\"odinger equation 
\begin{equation}\label{def:schro1}
	\ii \partial_t \Psi_N(t) = \HNR \Psi_N(t).
\end{equation}  The Hamiltonian $H_{R,N}$ is defined in \eqref{HRN1}, below. Note that, in the above equation, we scale units such that $\hbar=1$. This choice and its implications will be discussed in Section \ref{sec:dis}. The aim of the paper is to demonstrate that, for $N$ large enough, the product structure of the initial data is conserved over time, i.e, $\Psi_N(t)\simeq\bigwedge_{j=1}^N u_j(t)$, where the $u_j(t)$'s are solution of a system of $N$ effective equations defined in Definition \ref{def:css} below with initial data $u_j(0)$. 
This effective equation is a Euler-Lagrange mean-field version of our microscopic Hamiltonian that we now precisely define. 
\begin{definition} [\textbf{Microscopic Hamiltonian}]\mbox{}\\
For any \textit{coupling constants} $g,\beta\in\R_+$, \textit{singularity of the electro-magnetic interaction} $0\leq s\leq 1$ and \textit{cut-off} $0\leq R\leq 1$, we define
\begin{equation}
 H_{N,R}=\sum_{j=1}^{N}\left(-\im\nabla_j+\beta N^{s/2-1}\sum_{\substack{k=1,\\ k\neq j}}^N\frac{(x_j -x_k)^{\perp}}{|x_j -x_k|_R^{1+s}}\right)^{2}+\frac{g}{2}\sum_{j=1}^N\sum_{\substack{k=1,\\ k\neq j}}^N\frac{N^{s-1}}{|x_j-x_k|_R^{2s}}
 \label{HRN1}
\end{equation}
where the cut-off distance is defined as $|x|_R:=\max\{|x|,R\}$.
\end{definition} The above Hamiltonian acts on $L_{asym}^2(\R^{2N})$ and describes $N$ fermions interacting through a pair magnetic and electric interactions. The origin of the scaling $N^{s/2 -1}$ is explained in the discussion of Section \ref{sec:dis}.
\begin{remark}
In the case $s=1$ and $g=0$ the Hamiltonian $H_{R,N}$ describes a system of $N$ almost-fermionic anyons of radius $R$ with statistical parameter $\alpha=\beta /\sqrt{N}$, see \cite[Remark 1.2]{Gir_Rou_21}. Note that the present scaling differs from the semi-classical one studied in \cite{Gir_Rou_21}. In this sub-cited study, the system is trapped in a volume of order one. In the present paper, we allow the gas to occupy a volume of order $N$, see Section \ref{sec:dis}. The best version (with the weakest assumptions) of the main Theorem \ref{thm:main} is only valid for $s<1/4$. Nevertheless, the freedom in the parameter $s$ clearly highlights (see Remarks \ref{rem:cor}, \ref{rem:s},\ref{rem:g} and \ref{rem:om}) what problems must be solved to reach $s=1$.   
\end{remark}
We will denote the potential vector of the model by
\begin{equation}
\mathbf{a}_R(x):=\frac{x^{\perp}}{|x|_R^{1+s}}\quad\text{and}\quad \mathbf{a}_N(x):= N^{s/2 -1}\mathbf{a}_R(x)
\end{equation}
creating the magnetic field
\begin{align}
\mathbf{B}(x)=
\begin{cases}
\frac{2}{ R^{1+s}}\quad &\text{for}\quad x\in B(0,R)\\
\frac{1-s}{|x|^{1+s}} \quad &\text{for}\quad x\in B^c(0,R)
\end{cases}
\end{align}
through which the particles pairwise interact. With the above conventions, the Hamiltonian takes the form
\begin{equation}
 H_{N,R}=\sum_{j=1}^{N}\left(-\im\nabla_j+\beta\sum_{\substack{k=1,\\ k\neq j}}^N \mathbf{a}_N(x_j-x_k)\right)^{2} +\frac{g}{2}N\sum_{j=1}^N\sum_{\substack{k=1,\\ k\neq j}}^N|\mathbf{a}_N(x_j -x_k)|^2.
 \label{HRN}
\end{equation}
 For any fixed $R>0$ and $0\leq s\leq 1$, the Hamiltonian is essentially self-adjoint on $L^{2}_{\mathrm{asym}}(\mathbb{R}^{2N})$ (see \cite[Theorem X.17]{ReeSim2} and \cite{AvrHerSim}).
However, this operator does not have a unique limit as $R\to 0$ and the Hamiltonian at $R=0$, discussed in \cite{LunSol-14} for $s=1$, is \emph{not} essentially self-adjoint (see for instance \cite{CorOdd,DabSto,AdaTet}). Nevertheless, in the joint limit $R\to 0$ and $N\to \infty $ we recover a unique well-defined, non-linear model that depends on how the limits  $R\to 0$ and $N\to \infty $ are precisely taken.

\subsection{Discussion about the regime}\label{sec:dis}
The Schr\"odinger equation \eqref{def:schro1} has been defined with $\hbar=1$. In a box of size one, Lieb-Thirring inequalities \cite{Lieb-76,Lie_Thi_75} would imply to deal with a kinetic energy of magnitude $N^2$. In our case, the regime is dilute as in the thermodynamic limit of \cite{Fal_Gia_Hain_Por_25,Lie_Sei_Sol_05}. In this case, $N,L\to \infty$ at the fix rate $\rho=NL^{-d}$. The volume $L^d$ is then proportional to $N$ and we have to think to our system as being in a box  $\Lambda=[0,\sqrt{N}]^2$. The mean-field description is then expected to hold due to the long-range feature of the interaction. The following paragraph provides a more detailed heuristics of this fact.
Let's consider our gas as being trapped in a ball $B(0,\sqrt{N})$ and released at $t=0$. The ground state of this system is then expected to have a kinetic energy of order
\begin{equation}
\sum_{\substack{k\in\Lambda^*\\|k|\lesssim 1}}k^2\simeq N
\end{equation}
due to the scaling of the box $\Lambda=[0,\sqrt{N}]^2$. We can then think of $\nabla$ to be of order $1$. The density $\rho$ being of mass $N$, Lieb-Thirring's inequality suggests that $\rho$ must be bounded by $\sim 1$ and spread on a volume of order $\sim N$. 
To understand the order of magnitude of the other terms, it is convenient to expand $\eqref{HRN}$ and to treat the summands separately
\begin{align}
 H_{N,R}&=\sum_{j=1}^{N}-\Delta_j 
+(gN+\beta^2)\sum_{j\neq k}\left|\mathbf{a}_N(x_{j}-x_{k})\right|^{2}\;\text{: Kinetic + Diagonal term }\nonumber \\
 &\quad+ \beta\sum_{j\neq k}\left(-\im\nabla_{j}\right).\mathbf{a}_N(x_{j}-x_{k})+\mathbf{a}_N(x_{j}-x_{k}).\left(-\im\nabla_{j}\right)\;\text{: Mixed two-body term }\nonumber\\
 &\quad+ \beta^2\sum_{j\neq k\neq l}\mathbf{a}_N(x_{j}-x_{k}).\mathbf{a}_N(x_{j}-x_{l})\;\text{: Three-body term }.
 \label{expanded_H}
 \end{align}
 We expect that, for large $N$, particles behave independently, only interacting with the average field created by the others. For $s<1$, we can compute that
\begin{align}
\sum_{k=1}^N|\mathbf{a}_N(x-x_k)|^2&\sim\int_{\mathbb{R}^{2}}|\mathbf{a}_N(x-y)|^2\rho(y)\mathrm{d}y\simeq \int_{B(0,\sqrt{N})}|\mathbf{a}_N(x)|^2\mathrm{d}x\sim N^{-1}\nonumber\\
\sum_{k=1}^N\mathbf{a}_N(x-x_k)&\sim\int_{\mathbb{R}^{2}}\mathbf{a}_N(x-y)\rho(y)\mathrm{d}y\simeq \int_{B(0,\sqrt{N})}\mathbf{a}_N(x)\mathrm{d}x\sim 1.\nonumber
\end{align}
The mixed two-body term is a bit more complicated to estimate and reads
\begin{align}
\sum_{k=1}^N\mathbf{a}_R(x-x_k)\cdot \nabla_{x_k} &\sim\int_{\mathbb{R}^{2}}\mathbf{a}_R(x-y)\cdot \sum_{k=1}^N u_k(y)\nabla_y u_k(y)\mathrm{d}y\nonumber\\
&\sim\int_{\mathbb{R}^{2}}\mathbf{a}_N(x-y) \sqrt{\rho(y)} \sqrt{ \sum_{k=1}^N |\nabla_y u_k(y)|^2}\mathrm{d}y\nonumber\\
&\sim \left(\int_{B(0,\sqrt{N})}|\mathbf{a}_N(x)|^2\mathrm{d}x\right)^{1/2}\left(\int_{\R^2} \sum_{k=1}^N |\nabla_y u_k(y)|^2\mathrm{d}x\right)^{1/2}\sim  1\label{diss:regime}.
\end{align}
All terms in \eqref{expanded_H} are consequently of order $N$ except for the $\beta^2$ part of the diagonal one that will disappear in the limit. The above heuristics stays morally valid for $s=1$ up to logarithmic divergences. Even if the present paper will only allow $0\leq s<1/2$ in the best case scenario, it is reasonable to think that the case $s=1$ is reachable. Indeed, the main obstructions discussed in Remarks \ref{rem:cor}, \ref{rem:s},\ref{rem:g} and \ref{rem:om} have nothing to do with these logarithmic divergences and most of the lemmas used in the proof could accommodate it at the cost of slowest rates of convergence.  

\subsection{Hartree-Fock trial state}
In this section we introduce the product state $\Psi_N(0)=\bigwedge_{j=1}^N u_j(0)$ and compute its energy. We then show how the effective equation can be inferred from there. 
For a given $\Psi_N\in L^2_{\mathrm{as}}(\R^{2N})$ we define the density matrix $\gamma_{\Psi_N}$ as the operator having the kernel
\begin{equation}
\gamma_{\Psi_N}(x_1,\dots,x_N,y_1\dots,y_N)=\overline{\Psi_N(x_1,\dots,x_N)}\Psi_N(y_1,\dots,y_N)
\end{equation}
and the $k$-reduced density as 
\begin{equation}
\gamma_{\Psi_N}^{(k)}:=\tr_{[k+1,\dots,N]}\left[\gamma_{\Psi_N}\right]
\end{equation}
with kernel
\begin{align*}
\gamma_{\Psi_N}^{(k)}&(x_1,\dots,x_k,y_1\dots,y_k)=\\
&\int_{\R^{2(N-k)}}\overline{\Psi_N(x_1,\dots,x_k, u_{k+1},\dots,u_N)}\Psi_N(y_1,\dots,y_k, u_{k+1},\dots,u_N)\mathrm{d}u_{k+1}\dots\mathrm{d}u_{N}.
\end{align*}
We consider an orthonormal family $\{u_j\}_{j=1}^N$, where each $u_j\in L^2(\R^2)$ and define a Slater state to be
\begin{equation}
\Psi_N^{\mathrm{SL}}:=\bigwedge_{j=1}^N u_j(0)=\frac{1}{\sqrt{N!}}\sum_{\sigma\in S_N}{\mathrm{sgn}(\sigma)}\bigotimes_{j=1}^Nu_{\sigma(j)}(x_j)
\end{equation}
where the sign of the permutation is $\mathrm{sgn}(\sigma):=(-1)^m$ with $m$ the number of transpositions composing $\sigma$ and $\sigma=(\sigma(1),\dots,\sigma(N))$ an element of the symmetric group of $N$ objects, denoted $S_N$. For the $k$-reduced density matrix of a Slater, we use the notation
\begin{equation}
\omega_N^{(k)}:=\gamma_{\Psi^{\mathrm{SL}}_N}^{(k)}.
\end{equation}
\begin{proposition}[\textbf{Reduced density matrices}]\mbox{}\\\label{pro:sla}The three first reduced densities of a Slater determinant have the kernels
\begin{align}\label{def:matdenslater}
\omega_N^{(1)}(x_1;y_1)&=\frac{1}{N}\sum_{j=1}^N\overline{\psi_j(x_1)}\psi_j(y_1)\\
\omega_N^{(2)}(x_1,x_2;y_1,y_2)&=\frac{1}{N(N-1)}\sum_{j=1}^N\sum_{k=1}^N\overline{\psi_j(x_1)\psi_k(x_2)}(\one -U_{jk})\psi_j(y_1)\psi_k(y_2)\\
\omega_N^{(3)}(x_1,x_2,x_3;y_1,y_2,y_3)&=\frac{1}{N(N-1)(N-2)}\sum_{j=1}^N\sum_{k=1}^N\sum_{m=1}^N\overline{\psi_j(x_1)\psi_k(x_2)\psi_m(x_3)}\nonumber\\
\times(\one -&U_{jk}-U_{jm}-U_{km}+U_{jk}U_{km}+U_{km}U_{jm})\psi_j(y_1)\psi_k(y_2)\psi_m(y_3)
\end{align}
where the operator $U_{ij}$ swaps the indices $i$ and $j$ such that $U_{ij}\psi_j(x)\psi_i(y)=\psi_i(x)\psi_j(y)$, see the proof \ref{proof:slater} in Appendix for more detail.
\end{proposition}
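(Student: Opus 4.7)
The plan is purely algebraic and computational: expand the Slater determinant, compute the kernel of $\gamma_{\Psi_N^{\mathrm{SL}}}$ as a double sum over $S_N\times S_N$, and use the orthonormality of the orbitals to carry out the partial trace. Concretely, substituting
$$\Psi_N^{\mathrm{SL}}(x_1,\ldots,x_N)=\frac{1}{\sqrt{N!}}\sum_{\sigma\in S_N}\mathrm{sgn}(\sigma)\prod_{j=1}^N \psi_{\sigma(j)}(x_j)$$
into the definition of $\gamma_{\Psi_N^{\mathrm{SL}}}^{(k)}$ and integrating the last $N-k$ arguments, each factor $\int\overline{\psi_{\sigma(j)}(z)}\psi_{\tau(j)}(z)\,\mathrm{d}z$ for $j\in\{k+1,\ldots,N\}$ collapses to $\delta_{\sigma(j),\tau(j)}$. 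Thus the only surviving contributions come from pairs $(\sigma,\tau)\in S_N\times S_N$ agreeing pointwise on $\{k+1,\ldots,N\}$.

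The next step is to reorganize these surviving pairs. Writing $\tau=\sigma\circ\tilde\pi$, the constraint forces $\tilde\pi$ to be the identity on $\{k+1,\ldots,N\}$ and to coincide with some $\pi\in S_k$ on $\{1,\ldots,k\}$, and the signs combine as $\mathrm{sgn}(\sigma)\mathrm{sgn}(\tau)=\mathrm{sgn}(\pi)$. Parametrizing $\sigma$ by the tuple $(a_1,\ldots,a_k):=(\sigma(1),\ldots,\sigma(k))$ of distinct elements of $\{1,\ldots,N\}$, completed in $(N-k)!$ possible ways, the sum collapses to
$$\omega_N^{(k)}(x_1,\ldots,x_k;y_1,\ldots,y_k)=\frac{(N-k)!}{N!}\sum_{\substack{a_1,\ldots,a_k=1\\a_i\neq a_j}}^{N}\sum_{\pi\in S_k}\mathrm{sgn}(\pi)\prod_{j=1}^k\overline{\psi_{a_j}(x_j)}\,\psi_{a_{\pi(j)}}(y_j).$$
Because the antisymmetrizer $\sum_\pi\mathrm{sgn}(\pi)$ annihilates any term in which two of the $a_i$ coincide, the distinctness constraint can be dropped and the sum runs freely over $\{1,\ldots,N\}^k$.

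It then remains only to specialize to $k=1,2,3$ and enumerate $S_k$. The case $k=1$ is immediate. For $k=2$, the identity and the transposition in $S_2$ give $\one-U_{jk}$ once the position-swap of the $y$'s is translated into the index-swap convention of the proposition. For $k=3$, the six elements of $S_3$ decompose into the identity, the three transpositions, and the two $3$-cycles; the first four pieces produce $\one-U_{jk}-U_{jm}-U_{km}$, while the two $3$-cycles match $U_{jk}U_{km}$ and $U_{km}U_{jm}$. The only potentially subtle point, and the one I would treat as the main obstacle, is the translation between the position-action of $S_k$ on the arguments $(y_1,\ldots,y_k)$ and the index-action of the operators $U_{ab}$ introduced in the proposition; this is a finite bookkeeping exercise on six permutations rather than an analytical difficulty.
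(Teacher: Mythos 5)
Your proposal is correct and follows essentially the same route as the paper's appendix proof: expand the Slater determinant as a double sum over $S_N\times S_N$, integrate the last $N-k$ variables to produce Kronecker deltas $\prod_{j>k}\delta_{\sigma(j),\tau(j)}$ by orthonormality, and then enumerate the surviving permutations for $k=1,2,3$. Writing $\tau=\sigma\circ\tilde\pi$ and extracting the compact antisymmetrizer formula with a free sum over $\{1,\ldots,N\}^k$ is a clean crystallization of the paper's case-by-case calculation rather than a different argument; the one caveat is that when you do the ``finite bookkeeping'' of matching position-permutations in $S_3$ to products of the index-swap operators $U_{ab}$, you should work right-to-left through each composite and verify that the two $3$-cycles really are recovered, since products like $U_{jk}U_{km}$ and $U_{km}U_{jm}$ do not obviously label distinct elements of $S_3$ under the stated convention.
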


\begin{lemma}[\textbf{Hartree energy}]\mbox{}\\\label{lem:Ha} 
Assume that $\tr[-\Delta \omega_N^{(1)}]\lesssim N$. Take $R=N^{-r}$ and assume that $r,s\leq 1/2$, we have that
\begin{equation}
\big <\Psi_N^{\mathrm{SL}} ,H_{N,R}\Psi_N^{\mathrm{SL}}\big >=\mathcal{E}^{\mathrm{af}}_{R}[\omega_N^{(1)}]+O(N^{s(1+2r)})
\end{equation}
where, with the notation $ \mathbf{a}_N[\rho_{\omega}]:= \mathbf{a}_N\ast\rho_{\omega}$ and $ \mathbf{a}^2_N[\rho_{\omega}]:= |\mathbf{a}_N|^2\ast\rho_{\omega}$,
\begin{equation}
\mathcal{E}^{\mathrm{af}}_{R}[\omega]=\Tr\left [\left (-\im\nabla_1+\beta \mathbf{a}_N[\rho_{\omega}]\right )^{2}\omega\right ]+\frac{g}{2}N\Tr\left[\mathbf{a}^2_N[\rho_{\omega}]\omega\right]
\end{equation}
is called the Hartree one-particle density functional.
See the proof \ref{proof:Hartree} in Appendix.
\end{lemma}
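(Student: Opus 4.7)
The plan is to expand $H_{N,R}$ as in \eqref{expanded_H} into its four pieces (kinetic, diagonal two-body, mixed two-body, three-body) and to evaluate each expectation against $\Psi_N^{\mathrm{SL}}$ by contracting with the reduced density matrices of Proposition~\ref{pro:sla}. For each two- and three-body piece the trace splits into a \emph{direct} contribution (the identity in the anti-symmetriser) plus \emph{exchange} contributions coming from the operators $U_{jk}$. The direct parts reassemble into $\mathcal{E}^{\mathrm{af}}_R[\omega_N^{(1)}]$ after recognising the convolutions $\mathbf{a}_N[\rho_\omega]$ and $\mathbf{a}^2_N[\rho_\omega]$, while the exchange parts must be estimated as errors of order $N^{s(1+2r)}$.

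The one-body kinetic term contributes exactly $N\Tr[(-\Delta)\omega_N^{(1)}]$ with no exchange. For the two-body pieces, plugging in the kernel of $\omega_N^{(2)}$ from Proposition~\ref{pro:sla}, the $\rho_\omega(x_1)\rho_\omega(x_2)$ direct part produces the $\beta^2|\mathbf{a}_N[\rho_\omega]|^2$ cross-square, the mean-field coupling $2\beta\,\mathbf{a}_N[\rho_\omega]\cdot(-\im\nabla)$ and the $\frac{g}{2}N\,\mathbf{a}^2_N[\rho_\omega]$ diagonal which, combined with the kinetic piece, complete the square defining $\mathcal{E}^{\mathrm{af}}_R[\omega_N^{(1)}]$. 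The corresponding exchange contributions all involve $|\omega_N^{(1)}(x_1;x_2)|^2$ (or a gradient variant) against $|\mathbf{a}_N(x_1-x_2)|^2$; I would bound these using the pointwise estimate $\|\mathbf{a}_N\|_\infty^2 \lesssim N^{s-2}R^{-2s}$ together with the Hilbert--Schmidt identity $\Tr[(\omega_N^{(1)})^2]\leq 1/N$ (orthonormality of the $u_j$'s), so that with the combinatorial prefactor $N^2$ and the coupling $gN$ the diagonal exchange contributes $\lesssim gN^s R^{-2s} = gN^{s(1+2r)}$. The mixed-term exchange is handled similarly, absorbing the gradient through Cauchy--Schwarz against the kinetic-energy hypothesis $\Tr[-\Delta\,\omega_N^{(1)}]\lesssim N$.

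The main obstacle is the three-body piece $\beta^2\sum_{j\neq k\neq l}\mathbf{a}_N(x_j-x_k)\cdot\mathbf{a}_N(x_j-x_l)$, whose expectation against $\omega_N^{(3)}$ produces six contributions from the operators listed in Proposition~\ref{pro:sla}. The identity piece reconstructs $\beta^2 N^3\Tr[|\mathbf{a}_N[\rho_\omega]|^2\omega_N^{(1)}]$ (up to a relative $O(1/N)$ correction from $N(N-1)(N-2)=N^3+O(N^2)$), matching the cross-square $(\beta\,\mathbf{a}_N[\rho_\omega])^2$ in $\mathcal{E}^{\mathrm{af}}_R[\omega_N^{(1)}]$; the three single-swap terms are structurally identical to the two-body exchange and controlled by the same argument. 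The genuinely new objects are the two cyclic-swap exchanges $U_{jk}U_{km}$ and $U_{km}U_{jm}$, producing kernels of the form
\begin{equation*}
\int\!\int\!\int \mathbf{a}_N(x-y)\cdot\mathbf{a}_N(x-z)\,\omega_N^{(1)}(x;y)\,\omega_N^{(1)}(y;z)\,\omega_N^{(1)}(z;x)\,\d{x}\,\d{y}\,\d{z}.
\end{equation*}
I would bound these by pulling out one factor $\|\mathbf{a}_N\|_\infty$, Cauchy--Schwarzing the remaining $\mathbf{a}_N$ against a kernel factor (using $\int|\mathbf{a}_N|^2\lesssim 1/N$ on the effective volume $|x|\lesssim \sqrt{N}$), and controlling the two leftover kernels via $|\omega_N^{(1)}(x;y)|\leq \sqrt{\rho_\omega(x)\rho_\omega(y)}$ together with $\int\rho_\omega=1$. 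After multiplication by the $\beta^2 N^3$ prefactor this again yields $O(\beta^2 N^{s(1+2r)})$; summing all pieces then proves the lemma, with the error remaining $o(N)$ precisely when $s(1+2r)<1$, which is ensured by $r,s\leq 1/2$ (with at least one strict).
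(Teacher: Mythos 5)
Your overall strategy coincides with the paper's: expand $H_{N,R}$ as in~\eqref{expanded_H}, contract each piece with the reduced density matrices of Proposition~\ref{pro:sla}, recognise the direct contributions as $\mathcal{E}^{\mathrm{af}}_R$ and bound the exchange contributions. Your dominant error, the $g$-diagonal exchange $\frac{g}{2}N\int|\mathbf{a}_N|^2|\omega|^2\lesssim g\,\|\mathbf{a}_N\|_\infty^2\,N\int|\omega|^2$, gives $O(gN^{s(1+2r)})$ exactly as in the paper. However there are two real issues.

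First, a bookkeeping error: you attribute the cross-square $\beta^2|\mathbf{a}_N[\rho_\omega]|^2$ to the two-body direct part, but the direct part of the $\beta^2$ two-body diagonal produces $\beta^2\,\mathbf{a}_N^2[\rho_\omega]=\beta^2(|\mathbf{a}_N|^2*\rho_\omega)$, a different object. The cross-square $\beta^2(\mathbf{a}_N*\rho_\omega)^2$ only arises from the \emph{three-body} direct term, as you correctly state later, so as written you have double-counted it. Moreover the genuine $\beta^2$ two-body diagonal direct term is not part of $\mathcal{E}^{\mathrm{af}}_R$; it must be estimated as an error ($\lesssim\beta^2 N\|\mathbf{a}_N^2*\rho\|_\infty\lesssim\beta^2$), a term your sketch omits.

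Second, and more seriously, the cyclic three-body argument does not close. With your method — pull out one $\|\mathbf{a}_N\|_\infty$, bound the two remaining kernels pointwise by $\sqrt{\rho_\omega\rho_\omega}$ and integrate $\int\rho_\omega=1$, then Cauchy--Schwarz the remaining $\mathbf{a}_N$ against the third kernel — one obtains
\begin{equation*}
\beta^2 N^3\,\|\mathbf{a}_N\|_\infty\left(\int|\mathbf{a}_N(x-y)|^2\rho_\omega(x)\rho_\omega(y)\right)^{1/2}\left(\int|\omega_N^{(1)}|^2\right)^{1/2}\lesssim \beta^2 N^3\cdot N^{s/2-1}R^{-s}\cdot N^{-1}\cdot N^{-1/2}=\beta^2 N^{\frac{1+s(1+2r)}{2}},
\end{equation*}
and since $\frac{1+s(1+2r)}{2}\geq s(1+2r)$ for all $s(1+2r)\leq 1$, this exceeds the required $O(N^{s(1+2r)})$ throughout the admissible range (with equality only at $s(1+2r)=1$). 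The loss comes from replacing $\|\omega(y;\cdot)\|_2=\sqrt{\rho_\omega(y)}$ (exact, by the projector identity $\omega^2=\omega$ after rescaling) with the weaker pointwise/$L^1$ chain, which costs a full factor of $N$. The paper instead uses the algebraic reduction $|\omega(x_1,x_2)\omega(x_2,x_3)\omega(x_3,x_1)|\lesssim|\omega(x_1,x_2)|^2\rho(x_3)+|\omega(x_2,x_3)|^2\rho(x_1)$, recasting the cyclic term as the already-estimated single-exchange terms, for which the weighted Cauchy--Schwarz optimisation yields $O(N^{s(1+2r)/2})$. Either that reduction or a Cauchy--Schwarz in the $x_3$ integral that exploits $\int|\omega(y;x_3)|^2\,\d{x_3}=\rho(y)$ (rather than the pointwise bound followed by $\int\rho$) is needed to close this step.
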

\begin{remark}
It is worth to note that, in the case $g=0$, the above lemma stays true with a way better error of order $O(N^{\frac{s}{2}(1+2r)})$ that would allow to consider $s=1$. 
\end{remark}

\subsection{The effective equation $\mathrm{EQ}$}
Starting from the Hamiltonian $H_{R,N}$, the Hartree effective equation describing the dynamics of the orbitals whose the product state is made of, can be inferred by writing the BBGKY hierarchy satisfied by the reduced density matrices and by computing its formal limit.
We obtain a system of $N$ Chern--Simons--Schr\"odinger types equations with potential vector $\mathbf{a}_N$. These equations depend on the radius $R$, the singularity $s$, the number of particles $N$ and on the coupling constants $\beta$ and $g$ of the microscopic system.
\begin{definition}[\textbf{The effective equation}]\mbox{}\\
\label{def:css}
	We denote by $\text{EQ}(N,R,s,m,\omega(0))$,  the differential problem whose the unknowns are the functions  $\{u_j(t)\}_{j=1}^N$ with each $u_j:(\R_+\times \R^2)\mapsto \mathbb{C}$ satisfying the system of equations
	\begin{align}\label{eq:pilotu}
		\im \partial_{t}u_j 
		=\left (-\im\nabla + \beta\mathbf{a}_N\left [\rho\right ]\right )^{2}u_j +gN\mathbf{a}_N^2[\rho]u_j- \left [\beta\mathbf{a}_N\ast\left (2\beta \mathbf{a}_N\left [\rho\right ]\rho+ J \right )\right ]u_j,
		\end{align}
		with the notations
		\begin{align}
		 \omega(t)=\sum_{j=1}^N\ket{u_j(t)}\bra{u_j(t)},\,\,\rho(t)=\sum_{j=1}^N|u_j(t)|^2\quad \text{and}\quad J(t)=\sum_{j=1}^N\im(u_j(t)\nabla \overline{u}_j(t)-\overline{u}_j(t)\nabla u_j(t))\nonumber
	\end{align}
	and with the initial conditions $\{u_j(0)\}_{j=1}^N$ forming an orthonormal set of $L^2(\R^2)$ where each $u_j(0,x)\in H^m(\R^{2})$ for all $j\in[1,\dots,N]$.
	We define $\text{EQ}(s,\omega(0)):=\text{EQ}(N,R=0,s,m=2,\omega(0))$.
\end{definition}
We shall emphasize that there does not exist any literature about the above equation.  However, its bosonic version with $s=1$, $g=0$ and in which $\mathbf{a}_N=x^{\perp}/|x|^2$ has been studied \cite{Ber_Bou_Sau_95,Liu_Smi_Tat_14}  and shares many of its important features with $\text{EQ}$. In particular, the technics developed in \cite{Lim_17} are likely to apply in the case of $\text{EQ}$.
Except for basic results as Energy Conservation \ref{Thm:cons} and the a priori bounds of Lemma  \ref{lem:aprioriK}, the present article will not focus on the properties of the effective equation $\text{EQ}$ and postpone them to a possible later work. For that reason, the main results will be presented according to three sets of assumptions on the solutions of $\text{EQ}$. Each set of assumption will be larger, allowing higher values of the singularity $s$ of the potential $\mathbf{a}_N$ in Theorem \ref{thm:main}. We will need the notations
\begin{equation}
\rho_{\nabla}:=\sum_{j=1}^N|\nabla u_j|^2\quad \text{and}\quad\rho_{\Delta}:=\sum_{j=1}^N|\Delta u_j|^2
\end{equation}
that will be useful through the whole paper.
\begin{assumption}\label{ass:H2}
For any $0\leq s<1/2$ and $0\leq R\leq 1$, the system  $\mathrm{EQ}(N,R,s,\omega(0))$ admits a solution $\omega(t)$, where $u_j(t)\in C\left([0,T],H^2(\R^{2})\right)$. In this case $\norm{\rho(t)}^2_2\simeq \norm{\rho_{\nabla}(t)}_1\simeq \norm{\rho_{\Delta}(t)}_1\leq C N$ where $C$ does not depend on $R$.
\end{assumption}
\begin{remark}
This is our most general assumption. The control $\norm{\rho(t)}^2_2\simeq \norm{\rho_{\nabla}(t)}_1\simeq N$  is directly obtained from Energy Conservation \ref{Thm:cons} and Lieb-Thirring inequalities provided that the initial data satisfies $\braket{\Psi_N(0)}{H_{R,N}\Psi_N(0)}\leq CN$, see Lemma \ref{lem:aprioriK}. The counter part of being too general is the need of the regularization $R=N^{-r}$ with the constraint  $0\leq s(1+2r)+\frac{1}{2}r < \frac{1}{4}$. We leave the control $\norm{\rho_{\Delta}(t)}_1\leq C N$ as an assumption but notice that it could be obtained by adapting the techniques of \cite[Theorem 3.2]{Gir_Lee_25}.
\end{remark}
\begin{remark}
Assumption \ref{ass:H2} is not the minimal assumption for which we expect  $\mathrm{EQ}(N,s,m,\omega(0))$ to have a solution. Well-posedness in $H^1(\R^{2})$ seems likely if we extrapolate the results of \cite[Theorem 1.1]{Lim_17}. Actually, we expect global well-posedness in $H^m(\R^{2})$, $m\geq 1$ due to the presence of the positive coupling $g$ \cite{Lim_17}[Last statement of Theorem 1.1]. This motivates the content of our second assumption.
\end{remark}
\begin{assumption}\label{ass:Hm}
For any $0\leq s<1/2$ and $0\leq R\leq 1$, the system of equations $\mathrm{EQ}(N,R,s,m,\omega(0))$ admits a solution $\omega(t)$ where $u_j(t)\in C\left([0,T],H^m(\R^{2})\right)$, $m\geq 2$ with the property that $\norm{\rho_{\nabla}}_p\simeq \norm{\rho}_p\lesssim N^{1/p}$ for all $1\leq p\leq m+1$ and where $\norm{u_j}_{H^m}\simeq 1$.
\end{assumption}
\begin{remark}
This assumption offers new useful control on critical quantities gathered in Lemma \ref{lem:LRtrick}. The property $\norm{\rho}_p\lesssim N^{1/p}$ is not a strong assumption in the sense that it can be obtained via generalized Lieb-Thirring inequalities \cite[Theorem 2]{Ghi_88}, using that $u_j\in H^m$. The control $\norm{\rho_{\nabla}}_p\lesssim N^{1/p}$ is however a strong assumption. Its relevance is not guaranteed in any regime but would enlarge the validity of the theorem to $0\leq s\leq \frac{1}{2}(1-\frac{1}{m})$. Note that is would be true if $\{\nabla u_j(t)\}_{j=1}^N$ was an orthogonal family. 
\end{remark}
\begin{assumption}\label{ass:Hinfty}
For any $0\leq s<1/2$ and $0\leq R\leq 1$, the system  $\mathrm{EQ}(N,R,s,\omega(0))$ admits a solution $\omega(t)$ where $u_j(t)\in C\left([0,T],H^2(\R^{2})\right)$. Moreover, $\norm{\rho(t)}_{\infty}\simeq \norm{\rho_{\nabla}(t)}_{\infty}\lesssim C_{\infty}(t)$ where $C_{\infty}$ does not depend on $N$ nor $R$.
\end{assumption}
\begin{remark}
The largest range of validity $s=1/2-\varepsilon$, is obtained under this assumption. Note that Assumption \ref{ass:H2} is always contained in Assumptions  \ref{ass:Hm} and  \ref{ass:Hinfty}.  For $g=0$, our result could be extended to $0\leq s\leq 1$ under Assumption \ref{ass:Hinfty}. Remark \ref{rem:g} explains why we have to consider $g>0$.
\end{remark}

\subsection{Results and Remarks}

\begin{theorem}\label{thm:main}
Let $\Psi_N(t)$ be the solution of the Schr\"odinger equation \eqref{def:schro1} with initial data $\Psi_N(0)=\bigwedge_{j=1}^N u_j(0)$ where $\{u_j(0)\}_{j=1}^N$ is an orthonormal family of $L^2(\R^{2N})$. Let  $\omega_R(t)$ be (whenever it exists) the solution of $\mathrm{EQ}(N,R,s,m,\omega(0))$. Then there exist a  $C>0$ and a $\beta_*$ such that for any $R=N^{-r}$, any $0\leq \beta\leq \beta_*$ and any $g$ such that $g\geq 12\beta^2$, we have
\begin{equation}
\norm{\gamma^{(1)}_{\Psi_N}(t)-\omega_R(t)}_{\mathrm{tr}}\leq C\left(\frac{1}{\sqrt{N}}+\frac{1}{N^{1-s(1+2r)}}\right)e^{Ct}
\end{equation}
\begin{align}
\text{for any}\quad\begin{cases}
 0\leq s(1+2r)+1/2r<1/4, \text{ in the case of Assumption} \ref{ass:H2}\\
 0\leq s<1/2(1-1/m),\, \,\,\,\quad\quad\text{in the case of Assumption} \ref{ass:Hm}\\
  0\leq s<1/2,\,\, \,\,\,\,\quad\quad\quad\quad\quad\quad\text{in the case of Assumption} \ref{ass:Hinfty}\nonumber
\end{cases}
\end{align}
and all $0\leq t\leq T$.
\end{theorem}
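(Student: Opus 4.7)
The plan is to control $d(t) := \|\gamma^{(1)}_{\Psi_N}(t) - \omega_R(t)\|_{\mathrm{tr}}$ via a Grönwall-type argument. Since the initial datum is a pure Slater state built from the $u_j(0)$, we have $d(0)=0$, so the bound in the theorem arises entirely from the accumulated errors along the evolution. Taking the partial trace over $N{-}1$ variables of the von Neumann equation $\im \partial_t \gamma_{\Psi_N} = [\HNR, \gamma_{\Psi_N}]$ and using the expanded form \eqref{expanded_H}, one obtains a BBGKY-type equation for $\gamma^{(1)}_{\Psi_N}$ with four kinds of terms: the free one-body kinetic piece; the mixed momentum--potential cross term reduced against $\gamma^{(2)}_{\Psi_N}$; the diagonal $(gN+\beta^2)|\mathbf{a}_N|^2$, also reduced against $\gamma^{(2)}_{\Psi_N}$; and the three-body piece $\beta^2 \sum_{j\neq k\neq l} \mathbf{a}_N(x_j-x_k)\cdot\mathbf{a}_N(x_j-x_l)$, reduced against $\gamma^{(3)}_{\Psi_N}$. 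The effective equation \eqref{eq:pilotu} is precisely the closed one-body equation obtained by substituting into these traces the Slater factorizations of $\gamma^{(2)},\gamma^{(3)}$ given by Proposition \ref{pro:sla}.

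Subtracting the two equations gives
\begin{equation*}
\im\, \partial_t \big(\gamma^{(1)}_{\Psi_N} - \omega_R\big) = \big[h_{\mathrm{HF}}(t),\, \gamma^{(1)}_{\Psi_N} - \omega_R\big] + \mathcal{E}_2(t) + \mathcal{E}_3(t),
\end{equation*}
where $h_{\mathrm{HF}}(t)$ is the one-particle Hartree operator appearing in \eqref{eq:pilotu}, $\mathcal{E}_2$ collects the two-body factorization errors and $\mathcal{E}_3$ the three-body one. Because conjugation by the unitary propagator of $h_{\mathrm{HF}}$ preserves the trace norm, Duhamel yields $d(t) \leq \int_0^t C\,d(\tau)\,\d{\tau} + \int_0^t \big(\|\mathcal{E}_2(\tau)\|_{\mathrm{tr}} + \|\mathcal{E}_3(\tau)\|_{\mathrm{tr}}\big)\d{\tau}$, where the first integral absorbs the $\omega_R$-dependence of $h_{\mathrm{HF}}$ itself through the standard linearization trick $\mathbf{a}_N[\rho_{\gamma}]-\mathbf{a}_N[\rho_{\omega}] = \mathbf{a}_N\ast\rho_{\gamma-\omega}$.

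For $\mathcal{E}_2$ I would follow the classical strategy: write $\gamma^{(2)}_{\Psi_N} - \omega_R^{(2)} = (\gamma^{(1)}_{\Psi_N}\wedge\gamma^{(1)}_{\Psi_N}) - (\omega_R\wedge\omega_R)$ plus an antisymmetry-induced fluctuation of size $O(N^{-1})$, factor one slot to produce $d(\tau)$ times a mean-field potential, and control the singular kernels through Hardy--Sobolev inequalities together with the a priori kinetic bound of Lemma \ref{lem:aprioriK} and the Hartree energy identity of Lemma \ref{lem:Ha}. The condition $g\geq 12\beta^2$ enters at this stage: by Cauchy--Schwarz, the nonnegative diagonal operator $(gN+\beta^2)|\mathbf{a}_N(x_j{-}x_k)|^2$ must dominate the magnetic cross terms appearing in the estimate of $\mathcal{E}_2$, which provides the coercivity needed to close the argument. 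The three-body error $\mathcal{E}_3$ is the main obstacle, as it has no precedent in the literature. Using Proposition \ref{pro:sla} one expands $\omega_R^{(3)}$ into the six antisymmetrizing permutations: the identity piece recovers exactly the Hartree contribution $\beta\mathbf{a}_N\ast(2\beta\mathbf{a}_N[\rho]\rho)$ of \eqref{eq:pilotu}, while each $U_{ij}$ exchange produces an off-diagonal cross integral of $\mathbf{a}_N(x{-}y)\cdot\mathbf{a}_N(x{-}z)$ against a slice of $\omega_R$. These are tamed via the pointwise bound $|\mathbf{a}_N(x{-}y)\cdot\mathbf{a}_N(x{-}z)| \leq \tfrac12(|\mathbf{a}_N(x{-}y)|^2 + |\mathbf{a}_N(x{-}z)|^2)$, which reduces them to two-body singular kernels, combined with the density bounds $\|\rho\|_2^2, \|\rho_\nabla\|_1 \lesssim N$ of Assumption \ref{ass:H2}, or the sharper $L^p$ bounds of Assumptions \ref{ass:Hm}, \ref{ass:Hinfty} when available.

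Collecting all estimates and using $R=N^{-r}$, the three-body contribution carries a factor $N^{s(1+2r)}$ responsible for the second term in the stated rate, while the two-body factorization fluctuation contributes the $N^{-1/2}$ term. One therefore arrives at $d(t) \leq \int_0^t C\, d(\tau)\,\d{\tau} + Ct\,(N^{-1/2} + N^{-1+s(1+2r)})$ uniformly on $[0,T]$, and Grönwall's inequality yields the claimed exponential bound. The admissible range of $s$ in each case is dictated solely by the $L^p$-norm control on $\rho_\nabla$ that can be injected into the three-body estimate: Assumption \ref{ass:Hinfty} opens the widest window $s<1/2$, the intermediate Assumption \ref{ass:Hm} gives $s<\tfrac12(1-\tfrac1m)$, while Assumption \ref{ass:H2} forces the tighter trade-off $s(1+2r)+r/2<1/4$ between singularity and cut-off appearing in the theorem.
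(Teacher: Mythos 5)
Your proposal takes a genuinely different route from the paper — a direct BBGKY/Duhamel argument on the trace-norm distance $d(t)=\|\gamma^{(1)}_{\Psi_N}(t)-\omega_R(t)\|_{\mathrm{tr}}$ — whereas the paper controls instead the weighted number of excited particles $\alpha_m(t)=\langle\Psi_N(t),\widehat{m}(1/2)\Psi_N(t)\rangle$ via the Pickl-type projector machinery, then converts to the trace norm at the very end via $\|\gamma^{(1)}_{\Psi_N}(t)-\omega_R(t)\|_{\mathrm{tr}}\lesssim\alpha_m(t)$. Unfortunately, your route has a gap that is not cosmetic: the step where you write $\gamma^{(2)}_{\Psi_N}-\omega_R^{(2)}=(\gamma^{(1)}_{\Psi_N}\wedge\gamma^{(1)}_{\Psi_N})-(\omega_R\wedge\omega_R)+O(N^{-1})$ is precisely what needs to be proved and is not a consequence of $d(t)$ being small. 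For a Slater $\omega_R$ the identity $\omega_R^{(2)}=\omega_R\wedge\omega_R$ is exact, but $\Psi_N(t)$ is \emph{not} a Slater determinant for $t>0$, and there is no a priori reason why $\gamma^{(2)}_{\Psi_N}$ should be close to $\gamma^{(1)}_{\Psi_N}\wedge\gamma^{(1)}_{\Psi_N}$ in trace norm; indeed, propagating this factorization is the central obstacle in fermionic mean-field problems, and when it can be done at all it requires either a closed hierarchy with uniqueness or a fluctuation/Bogoliubov analysis — neither of which you supply. Your Grönwall inequality for $d(t)$ therefore does not close, because $\|\mathcal{E}_2\|_{\mathrm{tr}}$ and $\|\mathcal{E}_3\|_{\mathrm{tr}}$ control $\gamma^{(2)}-\omega^{(2)}$ and $\gamma^{(3)}-\omega^{(3)}$, not $d(t)$ itself. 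The number-operator approach in the paper sidesteps exactly this by bounding all the $p/q$ contributions of the two- and three-body operators simultaneously through $\alpha_m$, with the three-body diagonalization of Lemma~\ref{lem:diag3b} supplying the mean-field cancellation your "identity permutation" piece only gestures at.

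A second, related gap concerns the unbounded mixed term $\nabla_1\cdot\mathbf{a}_N(x_1-x_2)$: taking the partial trace of this against $\gamma^{(2)}-\omega^{(2)}$ in trace norm requires a control on the \emph{momentum} of the fluctuation, not merely on $d(t)$. In the paper this appears as $\norm{\nabla_2 q_2\Psi_N}^2$ and is handled by extracting the excess kinetic energy through Lemma~\ref{lem:mf} and then dropping the high-energy part by positivity in Lemma~\ref{lem:kinetic}; this is where the constraint $g\geq 12\beta^2$ actually enters, namely to make $-\tfrac34\Delta_1+(N-1)v_{12}+(N-1)(N-2)w_{123}+\tfrac{g}{2}N(N-1)|\mathbf{a}_N|^2_{12}$ nonnegative on the excited sector. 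You attribute the same constraint to a coercivity needed inside $\|\mathcal{E}_2\|_{\mathrm{tr}}$, but the Duhamel/trace-norm scheme offers no analogue of this positivity trick because the excited sector is never isolated. Your pointwise bound on $w_{123}$ and the $L^p$ density controls are genuine ingredients that also appear in the paper (in Lemmas~\ref{lem:LRtrick} and~\ref{lem:wpp}), but they are downstream of a structural reduction — the projector/number-operator decomposition and the two- and three-body diagonalization — that your proposal does not provide.
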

\begin{proof}
Let $\alpha_m(t)$ be the square root of the number of excited particles "$\sqrt{\mathcal{N}_+}$" precisely defined in Equation \eqref{def:alpham}. We know, by \cite[Lemma 3.3]{Pet_Pic_16}, that 
\begin{equation}
\norm{\gamma^{(1)}_{\Psi_N}(t)-\omega_R(t)}_{\mathrm{tr}}\leq C\alpha_m(t).
\end{equation}
In view of a Gr\"onwall argument, we apply Lemma \ref{lem:dtam} and obtain the result of Equation \eqref{eq:mainGron} providing that
\begin{equation}
\left|\partial_t \alpha_m(t)\right|\leq C\norm{\nabla_2 q_2\Psi_N}^2+C\left(\alpha_m(t)+\frac{1}{\sqrt{N}}\right)
\end{equation}
where the projector $q_2$ is defined in Section \ref{sec:proj}.
We use Lemma \ref{lem:kinetic} to bound the kinetic energy of the excited particles
\begin{equation}
 \norm{\nabla_2 q_2\Psi_N}^2\leq C\frac{\left| E_N(t)-\mathcal{E}_R^{\mathrm{af}}[\omega(t)] \right|}{N} +C\left(\alpha_m(t)+\frac{1}{\sqrt{N}}\right)
\end{equation}
under the constraint $g>12\beta^2$. For $\beta\leq \beta_*$, we use the conservation of energy of Lemma \ref{Thm:cons} and the estimate of the Hartree energy proved in Lemma \ref{lem:Ha} to compute the difference of energies in the above to be of order $N^{s(1+2r)-1}$. We finally obtain
\begin{equation}
\left|\partial_t \alpha_m(t)\right|\leq C\left(\alpha_m(t) + N^{s(1+2r)-1}+N^{-1/2}\right)
\end{equation}
and the application of Gr\"onwall's Lemma \ref{lem:Gron} concludes the proof of the theorem.
\end{proof}
We can also take the cut-off $R\to 0$ in $\mathrm{EQ}$ and approximate $\gamma^{(1)}_{\Psi_N}(t)$ by the solution of $\mathrm{EQ}(s,\omega(0))$ as shown in the following corollary.
\begin{corollary}\label{cor:main} Let  $\omega(t)$ be the solution of $\mathrm{EQ}(N,s,m,\omega(0))$. 
Under the assumptions of Theorem \ref{thm:main} in the sub-case of Assumption \ref{ass:H2}, there exists a constant $C>0$ such that for all time $0\leq t\leq T$, we have
\begin{equation}
\norm{\gamma^{(1)}_{\Psi_N}(t)-\omega(t)}_{\mathrm{tr}}\leq C\left(\frac{1}{\sqrt{N}}+\frac{1}{N^{1-s(1+2r)}}+\frac{1}{N^{\left(1/2+ r-s(1+2r)\right)}}\right)e^{Ct}
\end{equation}
for any $s(1+2r)+1/2r<1/4$ and all $0\leq t\leq T$.
\end{corollary}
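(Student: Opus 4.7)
The plan is to split via the triangle inequality
\begin{equation*}
\|\gamma^{(1)}_{\Psi_N}(t)-\omega(t)\|_{\mathrm{tr}}\leq \|\gamma^{(1)}_{\Psi_N}(t)-\omega_R(t)\|_{\mathrm{tr}} + \|\omega_R(t)-\omega(t)\|_{\mathrm{tr}},
\end{equation*}
where $\omega_R(t)$ solves $\mathrm{EQ}(N,R,s,2,\omega(0))$ with $R=N^{-r}$. The first summand contributes the $N^{-1/2}+N^{s(1+2r)-1}$ part of the claimed bound via Theorem \ref{thm:main} under Assumption \ref{ass:H2}; it remains to establish the cutoff stability estimate
\begin{equation*}
\|\omega_R(t)-\omega(t)\|_{\mathrm{tr}}\leq C\,e^{Ct}\,N^{s(1+2r)-r-1/2}
\end{equation*}
for the third term.

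For this step I would compare the effective equations orbital by orbital. Let $\{u_j^R\}$ solve $\mathrm{EQ}$ with cutoff $R=N^{-r}$ and $\{u_j\}$ solve the same system with $R=0$, both starting from the same orthonormal family $\{u_j(0)\}$. The differences $\delta u_j:=u_j^R-u_j$ satisfy a linear, non-autonomous Schr\"odinger-type equation whose source terms split into (a) self-consistent \emph{Lipschitz} terms, linear in $\delta u_j$, $\rho^R-\rho$ and $J^R-J$, which absorb into a Gr\"onwall constant, and (b) \emph{cutoff forcing} terms proportional to the differences $\mathbf{a}_R-\mathbf{a}_0$ and $|\mathbf{a}_R|^2-|\mathbf{a}_0|^2$, convolved against the density and current of $\omega(t)$ alone. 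A Gr\"onwall argument on $E(t):=\sum_{j=1}^N\|\delta u_j(t)\|_{L^2}^2$, followed by the passage $\|\omega_R-\omega\|_{\mathrm{tr}}\lesssim \sqrt{N\,E(t)}$ using Cauchy--Schwarz and the rank-$N$ projection structure preserved by the Hartree flow, yields the desired trace-norm bound.

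The controlling pointwise estimates are elementary. Since $|x|_R=\max(|x|,R)$, both $\mathbf{a}_R-\mathbf{a}_0$ and $|\mathbf{a}_R|^2-|\mathbf{a}_0|^2$ are supported in $B(0,R)$ and satisfy $\|\mathbf{a}_R-\mathbf{a}_0\|_{L^2}\lesssim R^{1-s}$, $\|\mathbf{a}_R-\mathbf{a}_0\|_{L^1}\lesssim R^{2-s}$, and $\||\mathbf{a}_R|^2-|\mathbf{a}_0|^2\|_{L^1}\lesssim R^{2-2s}$. Combining these with Young's inequality, the Assumption \ref{ass:H2} controls $\|\rho(t)\|_{L^2}^2\lesssim N$ and $\|\rho_\nabla(t)\|_{L^1}\lesssim N$, the mean-field prefactor $\mathbf{a}_N=N^{s/2-1}\mathbf{a}_R$ (respectively $N^{s-1}$ in $|\mathbf{a}_N|^2$), and the substitution $R=N^{-r}$, one obtains after time-integration a forcing producing precisely the rate $N^{s(1+2r)-r-1/2}$.

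The main obstacle is the most singular forcing, coming from the three-body-type term $\mathbf{a}_N\ast(\mathbf{a}_N[\rho]\rho)$ in \eqref{eq:pilotu}: this contribution carries two copies of the vector potential and hence two sources of cutoff error. Telescoping its difference into a piece where only the outer $\mathbf{a}_N$ changes, plus a piece where only the inner $\mathbf{a}_N[\rho]$ changes, each contribution has to be paired with $u_j$ and summed over $j$ via H\"older/Young inequalities much in the spirit of the proof of Lemma \ref{lem:Ha}. It is this term that dictates the strict constraint $s(1+2r)+r/2<1/4$ inherited from Theorem \ref{thm:main} and produces the exponent $1/2+r-s(1+2r)$ in the final rate.
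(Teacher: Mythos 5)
Your proposal mirrors the paper's argument: the corollary there is exactly a triangle inequality together with Theorem \ref{thm:main} and the cutoff-stability estimate that the paper packages as Lemma \ref{lem:Conv_pilot} (fed by Lemma \ref{lem:Rto0} for the pointwise $\mathbf{a}_R-\mathbf{a}_0$ bounds), whose proof is precisely the orbital-by-orbital Gr\"onwall on $\sum_j\norm{u_j-u_j^R}_2^2$ you describe. Two minor points: the conversion to trace norm should read $\norm{\omega_R-\omega}_{\mathrm{tr}}\lesssim N^{-1/2}\sqrt{E(t)}$ for the trace-one normalization used when comparing against $\gamma^{(1)}_{\Psi_N}$, not $\sqrt{NE(t)}$ (otherwise the claimed rate would not come out); and in Lemma \ref{lem:Conv_pilot} the constraint $s<1/4$ actually arises from the electric two-body forcing $|\mathbf{a}_N|^2\ast(\rho-\rho_R)$ (the $b^{(2)}$ estimate in Lemma \ref{lem:Rto0}), while the $\mathbf{a}_N\ast(\mathbf{a}_N[\rho]\rho)$ three-body-type piece closes at $s<1/2$; the additional $r$-dependence in $s(1+2r)+r/2<1/4$ is, as you say, inherited from Theorem \ref{thm:main} rather than from this stability lemma.
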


\begin{proof}
We apply Theorem \ref{thm:main} and Lemma \ref{lem:Conv_pilot} which is valid for any $0< R\leq 1$ and $0\leq s<1/4$, to get
\begin{align*}
\norm{\gamma^{(1)}_{\Psi_N}(t)-\omega_R(t)}_{\mathrm{tr}}&=\norm{\gamma^{(1)}_{\Psi_N}(t)-\omega(t)+\omega(t)-\omega_R(t)}_{\mathrm{tr}}\\
&\leq \norm{\gamma^{(1)}_{\Psi_N}(t)-\omega(t)}_{\mathrm{tr}}+\norm{\omega(t)-\omega_R(t)}_{\mathrm{tr}}\\
&\leq C\left(\frac{1}{\sqrt{N}}+\frac{1}{N^{1-s(1+2r)}}\right)e^{Ct}+CN^{s-1/2}R^{1-2s}e^{Ct}.
\end{align*}
Taking $R=N^{-r}$ concludes the proof of the corollary.
\end{proof}
\begin{remark}[\textbf{Novelty of the result}]\mbox{}\\\label{rem:nov}
We shall emphasize that the dilute regime we treat here is rare in the literature and has only been studied for a two-body electric interaction \cite{Pet_Pic_16}.  The case of a two-body magnetic interaction is usual in the field of anyons \cite{Lun_24,CorLunRou-16,Gir_20,LunRou,Gir_Rou_23,LarLun-16} but is new in the field of fermionic dynamics. The magnetic feature of the system creates two problems that our proof has to solve. First, the mixed two-body term is not a positive operator and mixes momentum and space operators. As in \cite{Gir_Lee_25}, the kinetic part of this operator implies to be able to control the kinetic energy of excited particles, see Lemma \ref{lem:kinetic}. There is then to adapt the $2$-body diagonalization procedure of \cite{Pet_Pic_16}. Second, the $3$-body term of $H_{R,N}$ implies to establish a \textit{general diagonalization procedure} (Lemma \ref{lem:diag3b}) for  $3$-body operators, see Section \ref{sec:strpr} for a more detailed explanation. The Lemma \ref{lem:diag3b} and its application in Section \ref{sec:MW} are the main mathematical novelty of the paper.
\end{remark}
\begin{remark}[\textbf{Validity of the Corollary}]\mbox{}\\\label{rem:cor}
Corollary \ref{cor:main} could be extended to larger $s$ in the case of Assumptions \ref{ass:Hm} and \ref{ass:Hinfty}. To this end, it would be needed to derive Lemmas \ref{lem:Rto0} and \ref{lem:Conv_pilot} using the control on $\norm{\rho}_p$ and $\norm{\rho_{\nabla}}_p$, $1\leq p\leq \infty$ the same way we used them in Lemma \ref{lem:LRtrick}. This would lead to a simplification of the sub-cited lemmas and would allow to reach the same ranges of $s$ as in Theorem \ref{thm:main}. In particular, the constraint on $r$ would disappear allowing to consider $R=0$.\end{remark}
\begin{remark}[\textbf{The cut-off $R$}]\mbox{}\\\label{rem:R}
The convergence rate of Theorem \ref{thm:main} limits the range of $R=N^{-r}$ we can choose to $s(1+2r)<1$. In the case of Assumption  \ref{ass:Hm} and \ref{ass:Hinfty}, the constraint only originates from the Hartree approximation of Lemma \ref{lem:Ha}. In the case of  Assumption  \ref{ass:H2}, the constraint $0\leq s(1+2r)+1/2r<1/4$ comes from the control of \eqref{ine:LR1} in Lemma \ref{lem:LRtrick}.
\end{remark}
\begin{remark}[\textbf{The general constraint $s<1/2$}]\mbox{}\\\label{rem:s}
The main limitation of Theorem \ref{thm:main} lies in the constraint $s<1/2$ which has many origins. In the case of Assumption \ref{ass:H2}, this control is (among other things discussed right after) needed to treat the current type terms $-\ii\nabla\cdot \mathbf{a}_N$ in $H_{R,N}$ as well as $J$ in $\mathrm{EQ}$. Under the two other assumptions,  the current terms can be handled more easily and this source of constraint disappears. Nevertheless, under any of our assumption, the control of $\norm{|\mathbf{a}_N|^4\ast \rho}_{\infty}$ is crucial to handle the two-body electric interaction of the Hamiltonian and requires $s<1/2$ through the whole proof.   
\end{remark}
\begin{remark}[\textbf{Necessity of the two-body interaction $g>0$}]\mbox{}\\
\label{rem:g}
Our proof relies on the fact that we are able to control the kinetic energy of the excited particles living outside the orbitals. Lemma \ref{lem:mf} allows to extract this quantity by splitting $H_{R,N}$ into two regions, one only acting on the $N$ orbitals, and another one acting outside. As explained with more detail in Remark \ref{rem:om} below and discussed in paragraph \ref{diss:regime}, our regime makes impossible to a priori control the operators of $H_{R,N}$ acting on the excited particles, see \eqref{ine:pos}. The key to deal with the excited part of $H_{R,N}$ is a positivity argument used in \ref{lem:kinetic} to simply drop this part. The $g$ two-body term is then here to compensate the possible negativity of the mixed-two-body term $(-\ii\nabla)\cdot \mathbf{a}_N+\mathbf{a}_N\cdot (-\ii\nabla)$. \end{remark}
\begin{remark}[\textbf{Orders of magnitude of the regime}]\mbox{}\\\label{rem:om}
The relevance of our regime is based on the fact that $\norm{\rho}_1=N\gg\norm{\rho}_2=\sqrt{N}$. This asymmetry is used to make the heuristic of \ref{diss:regime} concrete through the computations of Lemma \ref{lem:LRtrick}. All these estimates highly depend on the convolution structure of the quantities controlled in Lemma \ref{lem:LRtrick}, needed to take advantage of $\norm{\rho}_2=\sqrt{N}$ and provides a two and three-body interactions of order $\sim N$. This makes of the excited part of $H_{R,N}$, an operator difficult to control because of its a priori higher order of magnitude $\sim N^2$.
\end{remark}
\begin{remark}[\textbf{The anyons case $s=1$}]\mbox{}\\\label{rem:om}
The present inquiry suggests that, in the case of Assumption \ref{ass:H2}, the regime $s=1$ is out of reach. The main reason lies in the central mean-field cancelations of Lemma \ref{lem:dtam} on which we base our main Gr\"onwall argument. Because of the diagonalization procedure, the estimates of Lemma \ref{lem:dtam} implies to square all operators. We can however imagine that, in the case of Assumption \ref{ass:Hinfty}, if we could replace the positivity argument discussed in Remark \ref{rem:g} by another one allowing to take $g=0$, the whole proof would hold up to $s<1$ and we would have to deal with logarithmic divergences in the case $s=1$.   
\end{remark}
\begin{remark}[\textbf{Possible use of a cut-off to count the number of excited particles}]\mbox{}\\\label{rem:cutoff}
A method to obtain larger $s$ would be to use a cut-off in the number of particles operator $\alpha_m(t)$ as it was introduced in \cite[Lemma 7.1]{Pet_Pic_16}. In the present case, it would allow some improvement in the case of Lemma \ref{lem:dtam} but not for Lemma \ref{lem:mf} needed to control the kinetic energy of excited particles because none commutation is involved. We would then end with the same constraints in the final theorem.  
\end{remark}
\medskip\noindent\textbf{Acknowledgments.} 
I am grateful to the Gran Sasso Science Institute's math area for providing the financial support that makes this work possible. I also thank Jinyeop Lee and Daniele Ferretti for many discussions we had about fermions and their dynamics.

\section{Structure of the paper}

\subsection{Notations and basic lemmas}
The main Hilbert space for our computations is $L^2_{\mathrm{asym}}(\R^{2N})$. For that reason, we will always mean $\norm{\Psi}_{L^2_{\mathrm{asym}}(\R^{2N})}$ when writing $\norm{\Psi}$ and the different $L^p(\R^{2})$ norms will be denoted $\norm{u}_p$ with the $p$ subscrit. For the Sobolev norms $H^p(\R^{2})$, we write $H^p$ in the subscript. The symbol $C$ will denote any strictly positive constant possibly depending on fixed parameters as $\beta ,s$ or on $\rho(0)$ related quantities but never depending on $R,t$ or $N$. The same $C$ can denote different constants from a line to the other. We will use the symbol $A\lesssim B$ to say that there exists a constant $c>0$ such that  $A\leq c B$. We will also use the convention 
\begin{equation}
	\braket{\Psi_N}{A\Psi_N}:=\left\langle A\right\rangle_{\Psi_N}
\end{equation}
for self-adjoint operators $A$. 
We use the bracket representation of projection operator onto a function $f$ in the variable $x_j$, by writing $\sket{f}\sbra{f}_j$.
In the following computations we will make a constant use of the Cauchy-Schwarz type inequalities
 \begin{align}
\left|\sum_{j=1}^N|\braket{a_j}{b_j}|\right|^2\leq \sum_{j=1}^N\norm{a_j}^2\sum_{j=1}^N\norm{b_j}^2,\quad\text{and}\quad
\sum_{j=1}^N|\braket{\omega_j}{v\omega_j}|^2\leq \sum_{j=1}^N\braket{\omega_j}{v^2\omega_j}
\end{align} where $\{\omega_j\}_{j=1}^N$ is an orthonormal family and where $v$, $a_j$ and $b_j$ will take various values.
We will also apply Gr\"onwall's lemma that we state here for convenience of reading.
\begin{lemma}[\textbf{Gr\"onwall Lemma}]\mbox{}\\
\label{lem:Gron}
	Let a function $f(t)$ be such that $|\partial_t f(t)|\leq C(t)f(t) + \varepsilon(t)$, then
	\begin{align}\label{eq:Gron}
		f(t)\leq f(0)\,e^{\int_0^t C(\tau)\dd \tau} + \int_0^t \varepsilon(s) \, e^{\int_s^t C(\tau) \dd \tau} \dd s.
	\end{align}
\end{lemma}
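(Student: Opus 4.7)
The plan is to use the classical integrating factor trick. The absolute value in the hypothesis is slightly stronger than needed; I will only use the one-sided bound $\partial_t f(t) \leq C(t) f(t) + \varepsilon(t)$, which is what the conclusion actually controls. Introducing the factor $E(t) := \exp\bigl(-\int_0^t C(\tau)\,\mathrm{d}\tau\bigr)$, the idea is to differentiate the product $g(t) := f(t)E(t)$ and turn the differential inequality into an integrable one.

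Concretely, I would first compute
\begin{equation}
\partial_t g(t) = E(t)\bigl(\partial_t f(t) - C(t) f(t)\bigr) \leq E(t)\varepsilon(t),
\end{equation}
using that $E(t) \geq 0$ and the one-sided hypothesis on $\partial_t f$. Integrating this inequality from $0$ to $t$ yields
\begin{equation}
f(t) E(t) - f(0) \leq \int_0^t \varepsilon(s) E(s)\,\mathrm{d}s.
\end{equation}
Multiplying both sides by $1/E(t) = \exp\bigl(\int_0^t C(\tau)\,\mathrm{d}\tau\bigr)$ and using the identity $E(s)/E(t) = \exp\bigl(\int_s^t C(\tau)\,\mathrm{d}\tau\bigr)$ recovers the claimed bound
\begin{equation}
f(t) \leq f(0)\,e^{\int_0^t C(\tau)\,\mathrm{d}\tau} + \int_0^t \varepsilon(s)\,e^{\int_s^t C(\tau)\,\mathrm{d}\tau}\,\mathrm{d}s.
\end{equation}

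There is no real obstacle here: the argument is standard and only requires that $f$ be absolutely continuous (so that the fundamental theorem of calculus applies), and that $C$ and $\varepsilon$ be locally integrable, which is implicit in the statement. If one wants to allow $f$ merely $C^1$ the same proof works verbatim; if one has only $f$ locally Lipschitz, the differentiation step holds almost everywhere and integration still gives the desired inequality. The only subtlety worth flagging is that the absolute value in the hypothesis is not used — one could in principle state a sharper two-sided conclusion, but for the applications in the paper (via Theorem~\ref{thm:main}) only the upper bound above is needed, so I would leave the statement as it stands.
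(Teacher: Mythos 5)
The paper states Lemma~\ref{lem:Gron} without proof, as it is a classical result; your integrating-factor argument is the standard and correct way to establish it, and your observation that only the one-sided bound $\partial_t f \leq C f + \varepsilon$ (not the absolute value) is actually used is accurate. Nothing to add.
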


\subsection{Structure of the proof}\label{sec:strpr}
In this section we first sketch the global strategy of the proof and then describe the structure of the paper. The present work extensively applies and adapts the techniques introduced in \cite{Pet_14,Pet_Pic_16,KnoPic-10,Gir_Lee_25}. To elude complicated definitions, we will introduce in this paragraph, the imprecise notions of excited number of particles, kinetic energy of excited particles and square root of excited particles that we denote $\mathcal{N}_+$, $\Delta\mathcal{N}_+$ and $\sqrt{\mathcal{N}_+}$. These objects are made precise in Section \ref{sec:proj}. 
When it comes to deriving Hartree dynamics, the most direct idea is, in general, to try to control the number of excited particles $\braket{\Psi_N(t)}{\mathcal{N}_+\Psi_N(t)}$ via a Gr\"onwall argument. This is not enough in our case. If we directly compute
\begin{equation}
\left|\partial_t \braket{\Psi_N(t)}{\mathcal{N}_+\Psi_N(t)}\right|\leq \braket{\Psi_N(t)}{\mathcal{N}_+\Psi_N(t)} +N^{-1/2} +\norm{\Delta \mathcal{N}_+\Psi_N }^2
\end{equation}
the kinetic energy of the excited particles $\norm{\Delta \mathcal{N}_+\Psi_N }^2$ appears. It originates from the magnetic nature of the Hamiltonian \eqref{HRN}. Using the approach of \cite{KnoPic-10}, we control this kinetic energy by Lemma \ref{lem:kinetic} which provides 
\begin{equation}\label{ine:deltaN}
\norm{\Delta \mathcal{N}_+\Psi_N }^2\leq \braket{\Psi_N(t)}{\sqrt{\mathcal{N}_+}\Psi_N(t)}+o_N(1)
\end{equation}
where the square root of the excitation number of particles, precisely defined in Equation \eqref{def:alpham}, appears. The technique is then to apply a Gr\"onwall's lemma (content of Lemma \ref{lem:dtam}) on this square root. It provides
\begin{equation}\label{ine:grosN}
\left|\partial_t \braket{\Psi_N(t)}{\sqrt{\mathcal{N}_+}\Psi_N(t)}\right|\leq \braket{\Psi_N(t)}{\sqrt{\mathcal{N}_+}\Psi_N(t)} +N^{-1/2} +\norm{\Delta \mathcal{N}_+\Psi_N }^2
\end{equation}
and we can close the Gr\"onwall thanks to \eqref{ine:deltaN} and $\mathcal{N}_+\leq \sqrt{\mathcal{N}_+}$. In order to derive \eqref{ine:deltaN} and \eqref{ine:grosN}, bounds of squares of the operators constituting $H_{R,N}$ are needed. The very basics estimates to derive them are obtained in Lemma \ref{lem:LRtrick} where different ranges of $s$ are exhibited depending on the regularity of the solution of the $R$-dependent effective equation $\mathrm{EQ}$ of Definition \eqref{def:css}. Once the result is obtained at fix $R$, Lemma \ref{lem:Conv_pilot} allows to take the limit $R\to 0$ in the case of $0\leq s<1/4$ only. To obtain the right-hand-side of \ref{ine:grosN}, we have to compute the difference between the $N$-body Hamiltonian projected onto the space of the $N$ orbitals and the $1$-body Hartree Hamiltonian. In this difference, the largest term, projection of $H_{R,N}$ on the span of the orbitals, is expected to cancel. We name it \textit{the mean-field cancelation}. The key to obtain this mean-field cancelation is a diagonalization procedure that we explain in more detail.\newline
\textbf{Diagonalization procedure:} given a two-body operator $v$ and a three-body operator $w$, the key point of the mean-field regime lies in making the equivalences 
\begin{align}
p_2v(x_1-x_2)p_2&\simeq v\ast \rho \nonumber\\
p_2p_3 w(x_1 , x_2 ,x_3)p_2p_3&\simeq  w\ast \rho \ast \rho\label{def:eq}
\end{align}
rigorous to recover the terms of Hartree energy. Here $p_i=\sum_{j=1}^N\ket{u_j}\bra{u_j}_i$ is a sum of projectors acting on the variable $x_i$ made of the orbitals $u_j$, solutions of the effective equation $\mathrm{EQ}$. For bosonic systems, the projector $p$ is rank-one. This makes of \eqref{def:eq} a direct identity and the mean-field cancelation follows. When it comes to fermions, we have to restrict the double sum
\begin{equation}
\sum_{i,j=1}^N\ket{u_j}\bra{u_j}_yv(x-y)\ket{u_i}\bra{u_i}_y
\end{equation}
to its diagonal terms only. This is done by seeing $\braket{u_j}{v u_i}$ as the elements of a $(N\times N)$ symmetric matrix that we can diagonalize. The treatment of the three-body term is more complicated. In the following proof, two cases appear. The first one is
\begin{align}\label{eq:pro}
p_2p_3 w(x_1 , x_2 ,x_3)p_2p_3=\left(p_2v(x_1 -x_2)p_2\right)\left(p_3v(x_1 -x_3)p_3\right)
\end{align}
and simply implies to apply the $2$-body operator diagonalization two times. The second one is
\begin{align}\label{eq:pro1}
p_2p_3 w(x_1 , x_2 ,x_3)p_2p_3=p_2p_3v(x_2 -x_1)v(x_2 -x_3)p_3p_2.
\end{align}
The above cannot be factorized as in \eqref{eq:pro} and requires a more involved procedure. The case of \eqref{eq:pro1} is treated in Corollary \ref{cor:diag3bprod} which is itself derived from Lemma \ref{lem:diag3b} treating the general case of a $3$-body operator with no a priori product structures as \eqref{eq:pro} and \eqref{eq:pro1}. This Lemma \ref{lem:diag3b} is the main mathematical novelty of the present article.\newline \textbf{Global structure of the paper:} 
 \begin{itemize}
 \item In Section \ref{sec:EQ}, we establish the minimal knowledge about the effective equation such as conservation of energy Lemma \ref{Thm:cons} and the a priori bounds of Lemma \ref{lem:aprioriK} used to control $\norm{\rho}_2$ and $\norm{\rho_{\nabla}}_1$. We also prove the convergence of $\mathrm{EQ}$ when $R\to 0$.
 \item In Section \ref{sec:proj} we provide a rigorous definition of the number of excited particles $\mathcal{N}_+^{\xi}$ to the power $\xi$ and derive its main properties. We also state the fundamental Pauli principle, Lemma \ref{lem:exclusion}, and useful technical identities used in the next sections.
 \item In Section \ref{sec:diag}, we describe the diagonalization procedure for two-body, Lemma \ref{lem:diag2b}, and three-body, Lemma \ref{lem:diag3b}, operators. The rest of the section is dedicated to technical estimates of our specific operators.
 \item Section \ref{sec:Gron} contains the main Gr\"onwall argument. Lemma \ref{lem:dtam} makes the bound precise and shows its dependence on the quantities of Lemma \ref{lem:LRtrick} and on the kinetic energy of excited particles. All the section is dedicated to the proof of  Lemma \ref{lem:dtam}. 
 \item Section \ref{sec:kin} is devoted to the control of the kinetic energy of excited particles. In a first Lemma \ref{lem:mf} we extract the kinetic energy of the excited particles by splitting the Hamiltonian $H_{R,N}$ into its low part acting on the span of the orbitals, and its hight part. We conclude with Lemma \ref{lem:kinetic} that uses a positivity argument to drop this hight part on which we don't have any a priori control.
 \end{itemize}

\subsection{Fundamental estimates}

In the next lemma we gather estimates on fundamental quantities appearing all along the paper.
We will state each of them with their own rate of validity in term of $s$ according to the possible Assumptions \eqref{ass:H2}, \eqref{ass:Hm} and \eqref{ass:Hinfty}. 
Recall the definition of the vector potential $$\mathbf{a}_N(x)=N^{s/2 -1}\frac{x^{\perp}}{|x|_R^{1+s}},\quad\text{for}\quad R=N^{-r}.$$
\begin{lemma}[\textbf{Long-range spliting}]\mbox{}\\
\label{lem:LRtrick}
Assume that $\norm{\rho}_1\simeq\norm{\rho_{\nabla}}_1\lesssim N$ and that $\norm{\rho}_p\lesssim N^{1/p}$ for a given $2\leq p<\infty$, we then have for $R\geq 0$, that
\begin{align}\label{ine:LRs} \begin{cases}
\norm{\mathbf{a}_N\ast \rho}_{\infty}&\lesssim 1,\\
\norm{|\mathbf{a}_N|\ast J}_{\infty}&\lesssim 1,\quad \text{ for any}\quad 0\leq s<\left(1-\frac{1}{p}\right),
\\
\norm{|\mathbf{a}_N|^2\ast\rho}_{\infty}&\lesssim N^{-1},\quad \end{cases}
\end{align}
and that
\begin{equation}\label{ine:a4} 
\norm{|\mathbf{a}_N|^4\ast\rho}_{\infty}\lesssim N^{-3},\quad \text{for any}\quad 0\leq s<\frac{1}{2}\left(1-\frac{1}{p}\right).
\end{equation}
 Assume moreover that $\norm{\rho_{\Delta}}_1\lesssim N$ and that $R=N^{-r}$, then
 \begin{align}\label{ine:LR1} 
\norm{\nabla |\mathbf{a}_N|^2\ast \sqrt{\rho}\sqrt{\rho_{\nabla}}}_{\infty}&\lesssim N^{-1},\quad \text{for any}\quad 0\leq s(1+2r)+\frac{1}{2}r \leq \frac{1}{4}\\
\norm{|\mathbf{a}_N|^2\ast \sqrt{\rho}\sqrt{\rho_{\Delta}}}_{\infty}&\lesssim N^{-1},\quad \text{for any}\quad 0\leq s<\left(1-\frac{1}{p}\right).\label{ine:LR2}
\end{align}
If we also assume that $\norm{\rho_{\nabla}}_p\lesssim N^{1/p}$, we derive for any $R\geq 0$,
\begin{align}\label{ine:LR3} 
\norm{|\mathbf{a}_N|^2\ast\rho_{\nabla}}_{\infty}&\lesssim N^{-1},\quad \text{for any}\quad 0\leq s<\left(1-\frac{1}{p}\right).
\end{align}
Finally, assume that $\norm{\rho_{\nabla}}_{\infty}\simeq \norm{\rho}_{\infty}\lesssim 1$, then all the above bounds hold for $0\leq s<1/2$.
\end{lemma}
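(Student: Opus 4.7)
The plan is to prove each bound via a systematic long-range/short-range splitting of the convolution integrals at the physical scale $L=\sqrt N$, which is precisely the size of the support of $\rho$ in our dilute regime. For a non-negative kernel $K$ (either $|\mathbf{a}_N|^k$ or $\bigl|\nabla|\mathbf{a}_N|^2\bigr|$) and a non-negative density $f$, we decompose
\begin{equation*}
(K*f)(x)=\int_{|y|\leq\sqrt N}K(y)f(x-y)\,\mathrm{d}y+\int_{|y|>\sqrt N}K(y)f(x-y)\,\mathrm{d}y.
\end{equation*}
Throughout, we use the pointwise bounds $|\mathbf{a}_N(y)|\leq N^{s/2-1}|y|_R^{-s}$ and $\bigl|\nabla|\mathbf{a}_N|^2\bigr|(y)\lesssim N^{s-2}|y|_R^{-(2s+1)}$. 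The short-range piece is controlled by Hölder's inequality together with the assumption $\|f\|_p\lesssim N^{1/p}$ (from which the condition on $s$ arises, via the convergence of the integral of the kernel at the origin or the $R$-regularised remainder). The long-range piece is controlled by the pointwise supremum of $K$ outside $B(0,\sqrt N)$ paired with $\|f\|_1\lesssim N$. The choice $L=\sqrt N$ makes the two contributions balance exactly.

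For the three bounds of \eqref{ine:LRs} and for \eqref{ine:a4}, applying this scheme with $K=|\mathbf{a}_N|^k$ ($k=1,2,4$) and $f=\rho$ directly yields the claimed powers of $N$, with the constraint $s<1-1/p$ (resp.~$s<(1-1/p)/2$ for the quartic kernel) coming from the convergence of $\int_{|y|\leq\sqrt N}|y|^{-ksp'}\,\mathrm{d}y$ near the origin. Inequality \eqref{ine:LR3} is obtained identically, replacing $\rho$ by $\rho_\nabla$, under the corresponding integrability assumption on $\rho_\nabla$.

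The bounds $\|\,|\mathbf{a}_N|*J\|_\infty$, \eqref{ine:LR2} and \eqref{ine:LR1} involve vector-valued or "current-like" densities that I reduce to the preceding scheme by a two-step procedure. First, the Cauchy-Schwarz pointwise bound $|J|\leq 2\sqrt{\rho}\sqrt{\rho_\nabla}$ (and similarly for the other current-type quantities) replaces $J$ by a product of square roots of the good densities. Second, I apply a Cauchy-Schwarz in the integrand,
\begin{equation*}
\int K(y)\sqrt{\rho\,\rho_\nabla}(x-y)\,\mathrm{d}y\leq\Bigl(\int K(y)\rho(x-y)\,\mathrm{d}y\Bigr)^{1/2}\Bigl(\int K(y)\rho_\nabla(x-y)\,\mathrm{d}y\Bigr)^{1/2},
\end{equation*}
which splits the convolution into the geometric mean of two convolutions already treated above. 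For \eqref{ine:LR2} the analogous step is carried out with $\rho_\Delta$ in place of $\rho_\nabla$. For \eqref{ine:LR1} the kernel is $K=\bigl|\nabla|\mathbf{a}_N|^2\bigr|$, so the square of the target bound becomes the product of $\bigl|\nabla|\mathbf{a}_N|^2\bigr|*\rho$ and $\bigl|\nabla|\mathbf{a}_N|^2\bigr|*\rho_\nabla$, each estimated by the splitting scheme.

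The main obstacle will be the strongly singular kernel $\bigl|\nabla|\mathbf{a}_N|^2\bigr|\sim|y|_R^{-(2s+1)}$ in \eqref{ine:LR1}: at any $s>0$ the short-range integral diverges at the origin and must be regularised by $R=N^{-r}$, producing a factor $R^{-2s}=N^{2rs}$ in the $L^2$-norm of the kernel on $B(0,\sqrt N)$. Under Assumption \ref{ass:H2}, the only $L^2$ bound on $\rho_\nabla$ is the loose one obtained by a Gagliardo-Nirenberg chain from $\|\rho_\Delta\|_1\lesssim N$, and carrying out the Cauchy-Schwarz balancing of the two short-range factors against the two long-range ones produces exactly the constraint $s(1+2r)+r/2\leq 1/4$. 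Pinning down this arithmetic balance and verifying that each intermediate Hölder/Young inequality closes precisely at the stated threshold is the bulk of the technical work; once it is done, the remaining estimates in the lemma follow from essentially the same decomposition applied to less singular kernels, and the final sentence (validity up to $s<1/2$ under Assumption \ref{ass:Hinfty}) is obtained by using $\|\rho\|_\infty,\|\rho_\nabla\|_\infty\lesssim 1$ in the Hölder step, which removes the $R$-dependent loss altogether.
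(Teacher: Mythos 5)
Your overall scheme — splitting the convolution at radius $\sqrt N$, Hölder for the short-range piece, pointwise supremum of the kernel against $\|f\|_1$ for the long-range piece — matches the paper's proof exactly for the "pure" kernel-against-density estimates: the three bounds of \eqref{ine:LRs} with $f=\rho$, \eqref{ine:a4}, and \eqref{ine:LR3} with $f=\rho_\nabla$. For those, your argument is correct and essentially identical.

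Where your proof breaks is the treatment of the current-type convolutions, namely the $|\mathbf{a}_N|*J$ estimate, \eqref{ine:LR2}, and \eqref{ine:LR1}. You propose to apply Cauchy--Schwarz at the level of the integrand,
\begin{equation*}
\int K(y)\sqrt{\rho(x-y)\,\rho_{\nabla}(x-y)}\,\mathrm{d}y\leq\left(\int K(y)\rho(x-y)\,\mathrm{d}y\right)^{1/2}\left(\int K(y)\rho_{\nabla}(x-y)\,\mathrm{d}y\right)^{1/2},
\end{equation*}
reducing the estimate to the geometric mean of $(K*\rho)$ and $(K*\rho_{\nabla})$. This step is arithmetically correct but \emph{symmetrizes} the roles of $\rho$ and $\rho_{\nabla}$ (respectively $\rho_{\Delta}$), and that is fatal: the second factor $\|K*\rho_{\nabla}\|_{\infty}$ (respectively $\|K*\rho_{\Delta}\|_{\infty}$) can only be controlled through the same short-range/long-range splitting, and its short-range piece requires $\|\rho_{\nabla}\|_p$ (respectively $\|\rho_{\Delta}\|_p$) for some $p>1$, which is \emph{not} part of the hypotheses: the first block of the lemma assumes only $\|\rho_{\nabla}\|_1\lesssim N$ and $\|\rho\|_p\lesssim N^{1/p}$; \eqref{ine:LR2} likewise assumes only $\|\rho_{\Delta}\|_1\lesssim N$. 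Under Assumption~\ref{ass:H2} (the case that actually feeds into the Grönwall argument) there is no higher-$L^p$ control on $\rho_\nabla$ or $\rho_\Delta$, so your geometric-mean reduction produces a factor you cannot close.

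The paper avoids this by keeping the Cauchy--Schwarz entirely inside the \emph{Hölder norm of the short-range piece}, and crucially making it asymmetric: it estimates $\|\sqrt{\rho}\sqrt{\rho_{\nabla}}\|_{p'}\leq\|\sqrt{\rho}\|_{2p'/(2-p')}\,\|\sqrt{\rho_{\nabla}}\|_2=\|\rho\|_{p'/(2-p')}^{1/2}\,\|\rho_{\nabla}\|_1^{1/2}$, so that the second square root only ever sees the $L^1$ bound on $\rho_{\nabla}$, and all the extra integrability is extracted from $\rho$ alone (whose $L^p$ bound is available). This asymmetric split is the ingredient your proposal is missing, and it is what makes \eqref{ine:LR1}, \eqref{ine:LR2} and the $J$ bound close under the minimal hypotheses. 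Your last remark — that the full $\|\rho\|_\infty\simeq\|\rho_\nabla\|_\infty\lesssim 1$ hypothesis lets the geometric-mean argument go through — is correct, but that only recovers the final sentence of the lemma, not the harder intermediate ranges.
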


\begin{proof}
Before entering the calculations, note that for any $1\leq q<+\infty$ we have that
\begin{equation}
\norm{\mathbf{a}_N\one_{B(0,\sqrt{N})}}_q\lesssim N^{s/2 -1}\left(\int_{0}^{\sqrt{N}}r^{1-qs}\mathrm{d}r\right)^{1/q}=N^{1/q -1}
\end{equation}
holding as long as $s<2/q$. In the rest of the present proof, we will always consider $p> 1$ and $1\leq q< \infty$ to be conjugated exponents $$1/q+1/p=1.$$ Note that $p$ will \textit{not always} refer to the $p$ of the assumptions $\norm{\rho_{\nabla}}_p\simeq\norm{\rho}_p\lesssim N^{1/p}$.
We decompose
\begin{align*}
\norm{\mathbf{a}_N\ast \rho}_{\infty}
&=\norm{\mathbf{a}_N\left(\one_{B(0,\sqrt{N})}+\one_{B(\sqrt{N},+\infty)}\right)\ast \rho}_{\infty}\\
&\leq  \norm{\mathbf{a}_N\one_{B(0,K)}}_{q}\norm{\rho}_p+\frac{N^{s/2 -1}}{K^{s}}\norm{\rho}_1\\
 &\lesssim N^{1/q -1}N^{1/p}+1 \\
 &\lesssim 1.
\end{align*}
This proves the first bound of the lemma. The second inequality of the lemma is   
\begin{align}
\norm{|\mathbf{a}_N|^2\ast \rho}_{\infty}
&\leq \norm{|\mathbf{a}_N|^2\one_{B(0,\sqrt{N})}}_{q}\norm{ \rho}_{p}+N^{s-2}N^{-s}\norm{\rho}_1\nonumber\\
&\lesssim  N^{1/p}\norm{\mathbf{a}_N\one_{B(0,\sqrt{N})}}_{2q}^2 +N^{-1}\nonumber\\
&\lesssim N^{-1}\label{ine:a2rho}
\end{align}
provided that $s<q$.
The same way, we prove that
\begin{align*}
\norm{|\mathbf{a}_N|^4\ast \rho}_{\infty}
&\leq \norm{|\mathbf{a}_N|^4\one_{B(0,\sqrt{N})}}_{q}\norm{ \rho}_{p}+N^{-2s}N^{2s-4}\norm{\rho}_1\\
&\lesssim  N^{1/p} \norm{\mathbf{a}_N\one_{B(0,\sqrt{N})}}^4_{4q}+ N^{-3}\\
&\lesssim N^{-3}
\end{align*}
for $s<\frac{1}{2q}$.
For the next one, notice that
\begin{equation}
\left|J\right|=\left|\sum_{j=1}^N\ii\left(u_j\nabla \overline{u}_i -\overline{u}_j\nabla u_j\right)\right|\lesssim \sum_{j=1}^N|u_j||\nabla \overline{u}_i| \leq \sqrt{\rho}\sqrt{\rho_{\nabla}}.
\end{equation}
We use it in the following computation, only valid for $1<p< 2$,
\begin{align*}
\norm{|\mathbf{a}_N|\ast J}_{\infty}
&\leq \norm{|\mathbf{a}_N|\one_{B(0,\sqrt{N})}\ast |J|}_{\infty}+ \norm{|\mathbf{a}_N|^2\one_{B^c(0,\sqrt{N})}\ast J}_{\infty}\\
&\leq \norm{|\mathbf{a}_R|\one_{B(0,\sqrt{N})}}_q\norm{J}_{p}+ N^{-1}\norm{ J}_{1}\\
&\lesssim N^{1/q -1}\norm{ \sqrt{\rho}\sqrt{\rho_{\nabla}}}_p+N^{-1}\norm{ \sqrt{\rho}\sqrt{\rho_{\nabla}}}_1\\
&\lesssim N^{1/q -1}\norm{ \rho}^{\frac{1}{2}}_{\frac{p}{2-p}}\norm{ \sqrt{\rho_{\nabla}}}_2+N^{-1}\norm{ \rho+\rho_{\nabla}}_1\\
&\lesssim N^{1/q -1}N^{\frac{2-p}{2p}}\norm{\rho_{\nabla}}^{\frac{1}{2}}_1+1\\
&\lesssim N^{1/q+1/p -1}+1\lesssim 1
\end{align*}
for any $s<2/q$. Note that, in the above $p$ did note denote the $\norm{\rho}_p$ used in the inequality and does not then directly appear in the constraint on $s$.
We similarly obtain that
\begin{align*}
\norm{|\mathbf{a}_N|^2\ast \sqrt{\rho}\sqrt{\rho_{\Delta}}}_{\infty}&\leq \norm{|\mathbf{a}_N|^2\one_{B(0,\sqrt{N})}\ast  \sqrt{\rho}\sqrt{\rho_{\Delta}}}_{\infty}+ \norm{|\mathbf{a}_N|^2\one_{B^c(0,\sqrt{N})}\ast  \sqrt{\rho}\sqrt{\rho_{\Delta}}}_{\infty}\\
&\leq \norm{|\mathbf{a}_N|^2\one_{B(0,\sqrt{N})}}_q\norm{ \sqrt{\rho}\sqrt{\rho_{\Delta}}}_{p}+ N^{-2}\norm{  \sqrt{\rho}\sqrt{\rho_{\Delta}}}_{1}\\
&\lesssim N^{1/q -2}\norm{ \rho}^{\frac{1}{2}}_{\frac{p}{2-p}}\norm{ \sqrt{\rho_{\Delta}}}_2+N^{-1}\\
&\lesssim N^{1/q+1/p -2}+N^{-1}\lesssim N^{-1}
\end{align*}
as long as $s<1/q$.
For proving the inequality \eqref{ine:LR1} we first compute 
 $$\nabla |\mathbf{a}_R|^2= \frac{2x \one_{B(0,R)} }{R^{2(1+s)}}+\frac{2x \one_{B^c(0,R)}}{|x|^{2(1+s)}}.$$
 We then obtain that
 \begin{align*}
\norm{\nabla |\mathbf{a}_N|^2\ast \sqrt{\rho}\sqrt{\rho_{\nabla}}}_{\infty}&\lesssim N^{s-2}\norm{R^{-2(1+s)}|x|\one_{B(0,R)}+|x|^{-1-2s}\one_{B(R,\sqrt{N})}}_4\norm{ \sqrt{\rho}\sqrt{\rho_{\nabla}}}_{4/3}\\
&\quad + N^{-5/2}\norm{  \sqrt{\rho}\sqrt{\rho_{\nabla}}}_{1}\\
&\lesssim N^{s-2}\left(\int_R^{\sqrt{N}}r^{-3-8s}\mathrm{d}r\right)^{1/4}\norm{ \rho}^{\frac{1}{2}}_{2}\norm{ \sqrt{\rho_{\nabla}}}_{2}\\
&\quad + N^{s-2}R^{-2(1+s)}\left(\int_0^{R}r^{5}\mathrm{d}r\right)^{1/4}\norm{ \rho}^{\frac{1}{2}}_{2}\norm{ \sqrt{\rho_{\nabla}}}_{2} +N^{-3/2}\\
&\lesssim N^{s-2}R^{-1/2 -2s}N^{3/4}+N^{-3/2}\lesssim N^{-1}
\end{align*}
provided that $N^{s+\frac{1}{2}r +2rs -\frac{1}{4}}\leq 1$, in other words, $s(1+2r)+\frac{1}{2}r \leq \frac{1}{4}$.
The last inequality \eqref{ine:LR3} is proven as \eqref{ine:a2rho} with $\rho_{\nabla}$ instead of $\rho$.
When $\norm{\rho_{\nabla}}_{\infty}\simeq \norm{\rho}_{\infty}\lesssim 1$ similar estimates to the above provide the last statement of the lemma.
\end{proof}

\section{The effective equation $\mathrm{EQ}$}\label{sec:EQ}
Recall the Definition \ref{def:css} of the effective equation $\mathrm{EQ}$. The energy of the system made of the $N$ orbitals $\{u_j(t)\}_{j=1}^N$ is the Hartree energy of Lemma \ref{lem:Ha} that can also be written as
\begin{equation}
\mathcal{E}_R^{\mathrm{af}}[\omega(t)]:=\sum_{j=1}^N\mathcal{E}_R^{\mathrm{af}}[u_j(t)]=\sum_{j=1}^N\int_{\R^2} |D_u^R u_j|^2+\frac{g}{2}N\int_{\R^2}\mathbf{a}_N^2[\rho]\rho.
\end{equation}
where we have defined the covariant derivative $D^R_u:=(-\im\nabla +\beta \mathbf{a}_N\left [\rho\right ])$.

\subsection{Conservation laws}
To express the conservation laws related to $\mathrm{EQ}$ in a more convenient way, we introduce the following notations. 
We will write $\mathrm{EQ}$ as 
\begin{equation}\im \partial_{t}u_j =\mathcal{H}_R u_j,
\label{def:HH}
\end{equation}
where we defined $\mathcal{H}_R:= (D_u^R)^2+\mathcal{A}_R$ the self-adjoint operator in which
\begin{align}
\mathcal{A}_R&:=- \beta \mathbf{a}_N\ast\left (2\beta \mathbf{a}_N\left [\rho\right ]\rho+\hbar J \right )+gN\mathbf{a}_N^2[\rho]\nonumber\\
&=:- \beta\mathbf{a}_N\ast \mathcal{J}_u^R +gN\mathbf{a}_N^2[\rho].
\end{align}
We define the current of the orbital $u_j$ as $\mathbf{J}_j^R:=\Re (\overline{u}_jD^R_u u_j)$. That way, a direct calculation provides the continuity equations
\begin{align*}
\partial_t \int|u_j|^2=&0\quad\Rightarrow\quad \partial_t \int\rho_u=0\\
\int \left(\partial_t |u_j|^2+\nabla\cdot\mathbf{J}_j^R\right)=&0\quad\Rightarrow\quad \int \left(\partial_t \rho_u+\nabla\cdot\mathcal{J}_u^R\right)=0
\end{align*}
obtained by using that $\nabla\cdot \Im(\overline{u}\nabla u)=\Im(\overline{u}\nabla^2 u)$.
\begin{lemma}[\textbf{Conservation of the energy}]\mbox{}\\\label{Thm:cons}
For any $\omega(t)$ solution of $\mathrm{EQ}(N,R,s,m,\omega(0))$ we have that
\begin{equation}
\mathcal{E}_R^{\mathrm{af}}[\omega(t)]=\mathcal{E}_R^{\mathrm{af}}[\omega(0)]
\end{equation}
for any $N, R, s, g, \beta, t\geq 0$.
\end{lemma}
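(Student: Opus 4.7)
My plan is to recognize $\mathrm{EQ}$ as the Hamiltonian flow generated by $\mathcal{E}_R^{\mathrm{af}}$, i.e.\ to establish the variational identity
\begin{equation*}
\mathcal{H}_R u_j \;=\; \frac{\delta \mathcal{E}_R^{\mathrm{af}}}{\delta \bar u_j}.
\end{equation*}
Once this is in place, conservation is a one-line computation: reality of $\mathcal{E}_R^{\mathrm{af}}$ forces $\delta \mathcal{E}_R^{\mathrm{af}}/\delta u_j = \overline{\mathcal{H}_R u_j}$, and using \eqref{def:HH} one gets
\begin{equation*}
\frac{d}{dt}\mathcal{E}_R^{\mathrm{af}}[\omega(t)] \;=\; 2\Re\sum_{j=1}^N \int \overline{\mathcal{H}_R u_j}\,\partial_t u_j \;=\; -2\Im \sum_{j=1}^N \snorm{\mathcal{H}_R u_j}_2^2 \;=\; 0.
\end{equation*}
The $H^m$-regularity with $m\geq 2$ built into Definition \ref{def:css} is precisely what guarantees $\mathcal{H}_R u_j \in L^2(\R^2)$ and legitimates the integrations by parts invoked below.

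To verify the variational identity, I would expand
\begin{equation*}
\mathcal{E}_R^{\mathrm{af}}[\omega] \;=\; \sum_{j=1}^N \snorm{\nabla u_j}_2^2 \;+\; \beta \int \mathbf{a}_N[\rho]\cdot J \;+\; \beta^2 \int |\mathbf{a}_N[\rho]|^2\rho \;+\; \frac{gN}{2}\int \mathbf{a}_N^2[\rho]\rho
\end{equation*}
and compute $\delta/\delta \bar u_j$ of each summand, treating $u_j$ and $\bar u_j$ as independent. The kinetic and $g$-pieces are immediate and give $-\Delta u_j$ and $gN\,\mathbf{a}_N^2[\rho]\,u_j$, respectively, the latter using evenness of $|\mathbf{a}_N|^2$. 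For the $\beta^2$-piece, the variation $\delta \rho = u_j \delta \bar u_j$ together with the oddness identity $\int(\mathbf{a}_N\ast f)\cdot g = -\int f\,(\mathbf{a}_N\ast g)$ yields $\beta^2\bigl(|\mathbf{a}_N[\rho]|^2 - 2\,\mathbf{a}_N\ast(\mathbf{a}_N[\rho]\rho)\bigr)u_j$. For the $\beta$-piece, varying $\rho$ produces $-\beta u_j(\mathbf{a}_N\ast J)$ and varying $J$ produces, after one integration by parts, $-2\im\beta\,\mathbf{a}_N[\rho]\cdot\nabla u_j$; the would-be residual $-\im\beta\,u_j\,(\nabla\cdot\mathbf{a}_N[\rho])$ vanishes because $\mathbf{a}_N \propto x^\perp/|x|_R^{1+s}$ is divergence-free inside $B(0,R)$, outside $B(0,R)$, and continuous across $\partial B(0,R)$. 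Summing everything rebuilds exactly $(D_u^R)^2 u_j + \mathcal{A}_R u_j = \mathcal{H}_R u_j$, as needed.

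There is nothing quantitative to prove; the whole argument is algebraic, and the main obstacle is bookkeeping. The variations of $\rho$ and $J$ each generate several pieces, and only after systematically invoking the three structural facts (i) oddness of $\mathbf{a}_N$, (ii) evenness of $|\mathbf{a}_N|^2$, and (iii) $\nabla\cdot\mathbf{a}_N=0$ do all the cross contributions regroup into $\mathcal{H}_R u_j$. A more pedestrian but equivalent route would avoid the variational formalism altogether: differentiate $\mathcal{E}_R^{\mathrm{af}}[\omega(t)]$ in time directly, substitute $\partial_t u_j$ from $\mathrm{EQ}$ together with the continuity equation $\partial_t \rho + \nabla\cdot\mathcal{J}_u^R = 0$ stated just before the lemma, and check by hand the cancellations; this produces the same algebraic identities and, in particular, works under the same regularity hypotheses on the $u_j$.
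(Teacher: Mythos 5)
Your proof is correct, and it is conceptually distinct from the paper's argument, though the underlying algebra is the same. The paper proceeds "pedestrianly" in exactly the sense you describe at the end: it multiplies $\im\partial_t u_j=\mathcal H_R u_j$ by $\partial_t\bar u_j$, takes the real part to kill the left-hand side, and then sums over $j$ and regroups using two ad hoc identities (the trace relation $2\sum_j\mathbf J_j^R=\mathcal J_u^R$ and an integration-by-parts/oddness identity for $\mathbf a_N$). You instead package those same cancellations once and for all into the Euler--Lagrange identity $\mathcal H_R u_j=\delta\mathcal E_R^{\mathrm{af}}/\delta\bar u_j$, after which conservation is the standard Hamiltonian-flow argument. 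What this buys you is the conceptual point, nowhere made explicit in the paper, that $\mathrm{EQ}$ is the gradient flow of the Hartree functional; this also makes it transparent that conservation is not an accident of the specific structure of $\mathbf a_N$ but a general feature of variational dynamics. The price is that the actual content of the proof --- the $\rho$- and $J$-variations, the oddness of $\mathbf a_N$, the evenness of $|\mathbf a_N|^2$, and $\nabla\cdot\mathbf a_N=0$ including the absence of a distributional contribution at $\partial B(0,R)$ --- moves into the verification of the Euler--Lagrange identity, which is about as much work as the paper's direct manipulation. Two small cosmetic remarks: in your displayed chain, $2\Re\sum_j\int\overline{\mathcal H_R u_j}\,(-\im)\mathcal H_R u_j$ equals $+2\Im\sum_j\snorm{\mathcal H_R u_j}_2^2$, not $-2\Im(\cdots)$ (immaterial, since both are zero as the norm is real); and it is worth noting explicitly that $\nabla\cdot\mathbf a_N=0$ across $\partial B(0,R)$ holds because the normal component $\hat n\cdot\mathbf a_N=\hat n\cdot x^{\perp}/R^{1+s}$ vanishes identically there, not merely because $\mathbf a_N$ is continuous.
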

\begin{proof}
In order to demonstrate the above lemma, we first notice the two useful identities
\begin{equation}\label{def:current}
2\sum_{j=1}^N\mathbf{J}_j^R=2\sum_{j=1}^N\Re(\overline{u}_jD_u^R u_j)=\mathcal{J}_u^R
\end{equation}
as well as
\begin{equation}\label{eq:nicepropjr}
	-\int_{\R^2}( \mathcal{J}_u^R\ast\mathbf{a}_N )\partial_t\rho_u= \int_{\R^2} \mathcal{J}_u^R\cdot(\mathbf{a}_N\ast \partial_t\rho_u)=\int_{\R^2} \mathcal{J}_u^R\cdot \partial_t( \mathbf{a}_N \left[\rho_u\right ]).
\end{equation}
Next, we multiply \eqref{eq:pilotu} by $\partial_t \overline{u}_j$ and take the real part, we obtain that for any orbital $j\in[1,\dots ,N]$,
	\begin{align}
		0 = 2\Re \im \partial_t |u_j|^2 &=  2\Re \int_{\R^2} \big(D_u^R u_j\big)^* D_u^R\partial_t u_j	+	\int_{\R^2} \mathcal{A}_R \partial_t |u_j|^2 \nonumber\\
		&=\partial_t \int_{\R^2} |D_u^R u_j|^2 -2 \int_{\R^2} \mathbf{J}_j^R \cdot \partial_t ( \beta \mathbf{a}_N\left [\rho_u\right ])+ \int_{\R^2} \mathcal{A}_R \partial_t |u_j|^2\label{eq:dtE}
	\end{align}
by using that
	\begin{align}
		\partial_t \int_{\R^2} |D_u^Ru_j|^2 
		&= 2 \Re \int_{\R^2} \big((-\im\nabla + \beta \mathbf{a}_N\left [\rho_u\right ])u_j\big)^* (-\im\nabla + \beta \mathbf{a}_N\left [\rho_u\right ])\partial_t u_j + 2 \int_{\R^2} \mathbf{J}_j^R \cdot \partial_t ( \beta \mathbf{a}_N\left [\rho_u\right ]). \nonumber 
	\end{align}
	Summing \eqref{eq:dtE} over all the orbitals, we find
\begin{align}
	0&=	\partial_t\mathcal{E}_R^{\mathrm{af}}[\omega(t)] - \int_{\R^2} \mathcal{J}_u^R \cdot \partial_t ( \beta \mathbf{a}_N\left [\rho_u\right ])- gN\int_{\R^2}\mathbf{a}_N^2[\rho_u]\partial_t\rho_u+ \int_{\R^2} \mathcal{A}_R \partial_t \rho_u\nonumber\\
		&= \partial_t\mathcal{E}_R^{\mathrm{af}}[\omega(t)]\nonumber
	\end{align}
	by the use of identities \eqref{eq:nicepropjr} and \eqref{def:current}.
\end{proof}

\subsection{A priori kinetic bounds}

\begin{lemma}[\textbf{Kinetic energy bounds}]\mbox{}\\\label{lem:aprioriK}
Let $\Psi_N(t)$ be the solution of the Schr\"odinger equation with initial data $\Psi_N(0)=\bigwedge_{j=1}^N u_j(0)$ for a given set  $\{u_j(t)\}_{j=1}^N$ solution of $\mathrm{EQ}(N,R,s,m,\omega(0))$. Let $\rho $ be the associated one-body density, we have that if there exists a constant $C>0$, such that
\begin{equation}\label{ine:ini}
\braket{\Psi_N(0)}{H_{N,R}\Psi_N(0)}\leq CN
\end{equation}
then there exist $C',\beta^*>0$ not depending on $s,t$ nor $N$ such that for all $0\leq \beta\leq \beta^*$, we have the controls
\begin{align}\label{ine:boundonK}
 \sum_{j=1}^N\int_{\R^2} |\nabla u_j(t)|^2\leq C'N\quad \text{and}\quad  \int_{\R^2} \rho(t)^2\leq C'N,
\end{align}
for any $t$, $s\leq 1/2$ and all $R=N^{-r}$, $0\leq  r\leq 1/2$.
\end{lemma}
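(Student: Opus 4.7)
The plan is to first bound the initial kinetic energy $T[\omega(0)]:=\sum_{j=1}^N\|\nabla u_j(0)\|_2^2$ from the hypothesis \eqref{ine:ini}, then transfer the bound to the Hartree functional via Lemma~\ref{lem:Ha}, propagate it for all $t\in[0,T]$ by energy conservation (Lemma~\ref{Thm:cons}), and finally invert the inequality to recover $T[\omega(t)]$. The $L^2$ bound on $\rho$ then follows from the 2D Lieb--Thirring inequality applied to the orthonormal family $\{u_j(t)\}_{j=1}^N$.

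The first step relies on a purely operator-theoretic lower bound. Writing
\[
H_{N,R}=\sum_j D_j^2+\frac{gN}{2}\sum_{j\neq k}|\mathbf{a}_N(x_j-x_k)|^2
\]
with $D_j:=-\im\nabla_j+\beta M_j$ and $M_j:=\sum_{k\neq j}\mathbf{a}_N(x_j-x_k)$, the elementary form inequality $(A+B)^2\geq A^2/2-B^2$ (i.e.\ $(A+2B)^2\geq 0$) applied with $A=-\im\nabla_j$, $B=\beta M_j$ gives $D_j^2\geq \frac{1}{2}(-\Delta_j)-\beta^2 M_j^2$. Splitting $M_j^2$ into its two-body ($k=l$) and three-body ($k\neq l$) parts and using the pointwise Cauchy--Schwarz
\[
2\left|\mathbf{a}_N(x_j-x_k)\cdot\mathbf{a}_N(x_j-x_l)\right|\leq|\mathbf{a}_N(x_j-x_k)|^2+|\mathbf{a}_N(x_j-x_l)|^2
\]
produces $\sum_j M_j^2\leq (N-1)\sum_{j\neq k}|\mathbf{a}_N(x_j-x_k)|^2$ as multiplication operators. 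Combining,
\[
H_{N,R}\geq \frac{1}{2}\sum_j(-\Delta_j)+\left(\frac{gN}{2}-\beta^2(N-1)\right)\sum_{j\neq k}|\mathbf{a}_N(x_j-x_k)|^2.
\]
Setting $\beta^*:=\sqrt{g/2}$ makes the second summand non-negative for $\beta\leq\beta^*$, so taking expectation against $\Psi_N(0)=\bigwedge_j u_j(0)$ yields $T[\omega(0)]\leq 2\braket{\Psi_N(0)}{H_{N,R}\Psi_N(0)}\leq 2CN$.

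Equipped with $T[\omega(0)]\leq 2CN$, the hypothesis of Lemma~\ref{lem:Ha} is satisfied and $\mathcal{E}_R^{\mathrm{af}}[\omega(0)]\leq\braket{\Psi_N(0)}{H_{N,R}\Psi_N(0)}+O(N^{s(1+2r)})\leq C'N$ since $s,r\leq 1/2$ imply $s(1+2r)\leq 1$. Lemma~\ref{Thm:cons} propagates this to $\mathcal{E}_R^{\mathrm{af}}[\omega(t)]\leq C'N$ for all $t$. To reverse the estimate, I expand
\[
\mathcal{E}_R^{\mathrm{af}}[\omega]=T[\omega]+\beta\int\mathbf{a}_N[\rho]\cdot J+\beta^2\int|\mathbf{a}_N[\rho]|^2\rho+\frac{gN}{2}\int(|\mathbf{a}_N|^2\ast\rho)\rho.
\]
The pointwise bound $|J|\leq 2\sqrt{\rho\,\rho_\nabla}$ and Cauchy--Schwarz control the sign-indefinite cross term by $2\beta\sqrt{T[\omega]}\sqrt{\int|\mathbf{a}_N[\rho]|^2\rho}$, while the convolution Cauchy--Schwarz $|\mathbf{a}_N[\rho](x)|^2\leq N(|\mathbf{a}_N|^2\ast\rho)(x)$ pins the dangerous integral to the $g$-term. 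The AM--GM inequality $2xy\leq x^2/2+2y^2$ then yields
\[
\mathcal{E}_R^{\mathrm{af}}[\omega(t)]\geq \frac{T[\omega(t)]}{2}+\frac{N(g-2\beta^2)}{2}\int(|\mathbf{a}_N|^2\ast\rho)\rho,
\]
and non-negativity of the last term for $\beta\leq\beta^*$ gives $T[\omega(t)]\leq 2C'N$; the 2D Lieb--Thirring inequality $\int\rho^2\leq C_{\mathrm{LT}}\sum_j\|\nabla u_j\|_2^2$ then produces $\|\rho(t)\|_2^2\leq C''N$.

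The main obstacle is the three-body term $\sum_{j\neq k,\,j\neq l,\,k\neq l}\mathbf{a}_N(x_j-x_k)\cdot\mathbf{a}_N(x_j-x_l)$ arising from $\sum_j M_j^2$: it is sign-indefinite and of the same order in $N$ as the diagonal two-body term, so it cannot be dropped. The scaling $gN|x|^{-2s}$ of the electric repulsion in $H_{N,R}$ is calibrated precisely to absorb it, and the smallness condition $\beta\leq\beta^*$ guarantees a positive residual. The exact same absorption reappears at the one-body level in $\mathcal{E}_R^{\mathrm{af}}$, where the convolution Cauchy--Schwarz $|\mathbf{a}_N[\rho]|^2\leq N(|\mathbf{a}_N|^2\ast\rho)$ matches the $N$-scaling of the diagonal $g$-term.
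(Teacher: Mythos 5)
Your proof is correct but takes a genuinely different route from the paper's. Two differences are worth noting. First, you insert a preliminary step the paper omits: before invoking Lemma~\ref{lem:Ha} at $t=0$ (which requires a kinetic bound on the initial data), you establish $\sum_j\|\nabla u_j(0)\|_2^2\lesssim N$ directly from an operator lower bound on $H_{N,R}$, bounding the off-diagonal part of $(\sum_{k\neq j}\mathbf{a}_N(x_j-x_k))^2$ by the diagonal part and absorbing it into the $g$-term. This makes the logic airtight where the paper's proof passes to $\mathcal{E}_R^{\mathrm{af}}[\omega(0)]$ without comment. Second, the inversion step (going from $\mathcal{E}_R^{\mathrm{af}}[\omega(t)]\lesssim N$ back to $T[\omega(t)]\lesssim N$) differs: the paper \emph{drops} the positive $g$-term, applies the $2D$ Lieb--Thirring inequality to pass to $\|\rho\|_2^2$, and then closes a self-improving inequality of the form $\|\rho\|_2^2\lesssim N + \beta^2\|\rho\|_2^2$ using the dilute-regime scaling $\||\mathbf{a}_N|\one_{B(0,\sqrt N)}\|_2^2\sim N^{-1}$, first obtaining $\|\rho\|_2^2\lesssim N$ and only then the kinetic bound; you instead \emph{keep} the $g$-term and absorb $\beta^2\int|\mathbf{a}_N[\rho]|^2\rho$ into it via the pointwise convolution Cauchy--Schwarz $|\mathbf{a}_N[\rho]|^2\leq N(|\mathbf{a}_N|^2\ast\rho)$, obtaining $T[\omega(t)]\lesssim N$ directly and then $\|\rho\|_2^2$ from Lieb--Thirring. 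Your route is arguably cleaner (no bootstrap, no $L^\infty$ splitting of $\mathbf{a}_N$), but it ties $\beta^*$ to $g$ ($\beta^*=\sqrt{g/2}$), so it degenerates when $g\to 0$, whereas the paper's $\beta^*$ is $g$-independent and the lemma as stated would also hold at $g=0$. In the context of the main theorem ($g\geq 12\beta^2$) this is immaterial.
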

\begin{proof}

Our initial data \eqref{ine:ini} combined with the energy conservation of Lemma \ref{Thm:cons} and the Hartree approximation of Lemma \ref{lem:Ha}, provides
\begin{align}
CN\geq\braket{\Psi_N(0)}{H_{N,R}\Psi_N(0)}&=\mathcal{E}^{\mathrm{af}}_{R}[\omega(0)]+O(N^{s(1+2r)})\nonumber\\
&=\mathcal{E}^{\mathrm{af}}_{R}[\omega(t)]+O(N^{s(1+2r)})\nonumber\\
&=\sum_{j=1}^N\int_{\R^2} |D_u^R u_j(t)|^2+N\frac{g}{2}\tr[\mathbf{a}_N^2[\rho_{u}(t)]\omega]+O(N^{s(1+2r)})\label{eq:cons}.
\end{align}
By a weighed Cauchy-Schwarz inequality we get
\begin{equation}\label{ine:wcs}
\left|-\im\nabla u_j(t)\cdot \mathbf{a}_N[\rho_{u}(t)]u_j \right|\leq \frac{1}{4} |\nabla u_j(t)|^2+4|\mathbf{a}_N[\rho_{u}(t)]|^2|u_j|^2
\end{equation}
and then, by opening the square $(-\im\nabla +\beta \mathbf{a}_N\left [\rho\right ])^2$, that
\begin{equation}
\left|D_u^R u_j(t)\right|^2\geq \frac{1}{2} |\nabla u_j(t)|^2-7\beta^2|\mathbf{a}_N[\rho_{u}(t)]|^2|u_j|^2.
\end{equation}
 Recall that $s,r\leq 1/2$. By Lieb-Thirring's inequality \cite{Lieb-76,Lie_Thi_75} and by dropping the positive $g$ term, we get
\begin{align}
CN&\geq\frac{1}{2}\sum_{j=1}^N\int_{\R^2} |\nabla u_j(t)|^2-7\beta^2\int_{\R^2}|\mathbf{a}_N[\rho_{u}(t)]|^2\rho_{u}(t)\label{ine:kinrho}\\
&\geq c\norm{\rho_{u}(t)}_2^2 -7\beta^2 N\norm{|\mathbf{a}_N|\ast\rho_{u}(t)}^2_{\infty}\nonumber\\
&\geq c\norm{\rho_{u}(t)}_2^2 -7\beta^2 N\norm{|\mathbf{a}_N|\one_{B(0,\sqrt{N})}\ast\rho_{u}(t)}^2_{\infty}-7\beta^2N^2N^{-s}N^{s-2}\nonumber\\
&\geq c\norm{\rho_{u}(t)}_2^2 -7\beta^2 N\norm{|\mathbf{a}_N|\one_{B(0,\sqrt{N})}}_2^2\norm{\rho_{u}(t)}^2_{2}-7\beta^2N\nonumber\\
&\geq (c-14\pi\beta^2)\norm{\rho_{u}(t)}_2^2 -7\beta^2N
\end{align}
 The above inequality provides 
 \begin{align}
 \norm{\rho_{u}(t)}_2^2\leq \frac{C-14\pi\beta^2}{c-7\beta^2}N\leq C'N
 \end{align}
 for $\beta$ small enough. We inject the above into \eqref{ine:kinrho} to bound $$N\norm{|\mathbf{a}_N|\ast\rho_{u}(t)}^2_{\infty}\leq N+ \norm{|\mathbf{a}_N|\one_{B(0,\sqrt{N})}}_2^2\norm{\rho_{u}(t)}^2_{2}\leq N+CN^{s-2}N^{2-s}N\leq CN$$ and obtain 
 \begin{align}
(C-\beta^2 C'')N&\geq\frac{1}{2}\sum_{j=1}^N\int_{\R^2} |\nabla u_j(t)|^2
\end{align}
which concludes the proof of \eqref{ine:boundonK} by taking $\beta\leq \beta_*$ for a $\beta_*$ small enough.
\end{proof}

\subsection{Convergence of the effective equation}
The rest of the section studies the behavior of $\mathrm{EQ}(N,R,s,m,\omega(0))$ when $R\to 0 $. For that reason, we recall the $R$-dependence of $\mathbf{a}_{R}=x^{\perp}/|x|_R^{1+s}$ and denote $\mathbf{a}_{N}^0:=N^{(s/2 -1)}x^{\perp}/|x|^{1+s}$.
\begin{lemma}[\textbf{Difference of vector potentials}]\mbox{}\\\label{lem:Rto0}
For any $0\leq R\leq 1$, any $\rho_u$ and $\rho_v$ with $\norm{\rho_{*}}_1=\norm{\rho_*}_2^2=N$, $*\in \{u,v\}$ and for any $0\leq s<1/4$, we have that
\begin{equation}
\norm{\mathbf{a}^0_N\left [\rho_v\right ]-\mathbf{a}_{N}\left [\rho_u\right ]}_{\infty}\lesssim N^{s/2 -1/2}R^{1-s}+N^{-1/2}\sqrt{\sum_{j=1}^N\norm{u_j -v_j}_2^2}.\label{ine:anr}
\end{equation}
\begin{equation}
\norm{|\mathbf{a}^0_N|^2\ast\rho_v -|\mathbf{a}_{N}|^2\ast\rho_u}_{\infty}\lesssim N^{-1}\left(N^{s -1/2}R^{1-2s}+N^{-1/2}\sqrt{\sum_{j=1}^N\norm{u_j -v_j}_2^2}\right).\label{ine:anr2}
\end{equation}
We note that the expression for the first term of the right-hand-side of the above inequality stays valid up to $s<1$ and the second, up to $s<1/2$. Moreover we can use \eqref{ine:anr} to control
\begin{equation}
 \norm{\mathbf{a}_{N}\ast  (\mathbf{a}_{N}\left [\rho_R\right ]\rho_R -\mathbf{a}^0_{N}\left [\rho\right ]\rho)}_{\infty}
	\lesssim \norm{\mathbf{a}^0_N\left [\rho_v\right ]-\mathbf{a}_{N}\left [\rho_u\right ]}_{\infty}.
\end{equation}
\end{lemma}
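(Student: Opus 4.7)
The plan is to handle both inequalities by splitting each difference into a \emph{cut-off} contribution (same density, different vector potential) and a \emph{density} contribution (same vector potential, different densities):
\begin{equation*}
\mathbf{a}_N^0\left[\rho_v\right] - \mathbf{a}_N\left[\rho_u\right] = (\mathbf{a}_N^0 - \mathbf{a}_N)\ast\rho_v + \mathbf{a}_N\ast(\rho_v - \rho_u),
\end{equation*}
and analogously for $|\mathbf{a}_N^0|^2\ast\rho_v - |\mathbf{a}_N|^2\ast\rho_u$. No machinery beyond Cauchy--Schwarz and the $L^p$ bounds collected in Lemma \ref{lem:LRtrick} is required.

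For the cut-off piece, I would use that $\mathbf{a}_N^0 - \mathbf{a}_N$ is supported in $B(0,R)$ (since $|x|_R = |x|$ outside that ball) and satisfies the pointwise bound $|\mathbf{a}_N^0(x) - \mathbf{a}_N(x)| \lesssim N^{s/2-1}/|x|^s$. A direct computation in polar coordinates gives $\|\mathbf{a}_N^0 - \mathbf{a}_N\|_{L^2} \lesssim N^{s/2-1}R^{1-s}$ as long as $s<1$; Cauchy--Schwarz combined with the hypothesis $\|\rho_v\|_2 \lesssim \sqrt{N}$ then delivers the first term of \eqref{ine:anr}. Carrying out the same computation for the pointwise bound $||\mathbf{a}_N|^2 - |\mathbf{a}_N^0|^2|(x) \lesssim N^{s-2}|x|^{-2s}\one_{B(0,R)}(x)$ (whose $L^2$-integrability near the origin requires $s<1/2$) yields the first term of \eqref{ine:anr2}.

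The density piece is handled uniformly. I would write $\rho_v - \rho_u = \sum_j\bigl((u_j-v_j)\overline{u_j}+v_j\overline{(u_j-v_j)}\bigr)$, plug it into the convolution against a test kernel $\phi$, and apply Cauchy--Schwarz twice (once in the spatial integral, once in $j$) to obtain
\begin{equation*}
|\phi\ast(\rho_v - \rho_u)(x)|\lesssim \sqrt{|\phi|^2\ast(\rho_u+\rho_v)(x)}\,\sqrt{\sum_{j=1}^N\|u_j-v_j\|_2^2}.
\end{equation*}
Taking $\phi = \mathbf{a}_N$ and invoking $\||\mathbf{a}_N|^2\ast\rho\|_\infty \lesssim N^{-1}$ from Lemma \ref{lem:LRtrick} gives the second term of \eqref{ine:anr}. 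Taking $\phi = |\mathbf{a}_N^0|^2$ and invoking $\||\mathbf{a}_N^0|^4\ast\rho\|_\infty\lesssim N^{-3}$ gives the second term of \eqref{ine:anr2}; this is precisely the step that forces the constraint $s<1/4$ on the second inequality.

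For the last displayed inequality I would split
\begin{equation*}
\mathbf{a}_N\left[\rho_R\right]\rho_R - \mathbf{a}_N^0\left[\rho\right]\rho = \bigl(\mathbf{a}_N\left[\rho_R\right] - \mathbf{a}_N^0\left[\rho\right]\bigr)\rho_R + \mathbf{a}_N^0\left[\rho\right]\bigl(\rho_R - \rho\bigr),
\end{equation*}
pull each $L^\infty$ factor pointwise out of the outer convolution against $\mathbf{a}_N$, and bound the resulting convolutions $|\mathbf{a}_N|\ast\rho_R$ and $|\mathbf{a}_N|\ast|\rho_R-\rho|$ using $\||\mathbf{a}_N|\ast\rho\|_\infty\lesssim 1$ (from Lemma \ref{lem:LRtrick}) and the density-piece Cauchy--Schwarz trick above. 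I do not anticipate any structural obstacle here: the only real book-keeping subtlety is to extract the right powers of $N$ from the $|\mathbf{a}_N|^2$ and $|\mathbf{a}_N|^4$ kernels so that the $N^{-1}$ prefactor in \eqref{ine:anr2} comes out cleanly and the sharp ranges of $s$ claimed in each statement are matched.
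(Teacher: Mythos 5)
Your proposal is correct, and the overall scaffold is the same as the paper's: split $\mathbf{a}_N^0[\rho_v]-\mathbf{a}_N[\rho_u]$ (and its squared counterpart) into a \emph{cut-off} piece $(\mathbf{a}_N^0-\mathbf{a}_N)\ast\rho_v$ supported in $B(0,R)$ and a \emph{density} piece $\mathbf{a}_N\ast(\rho_v-\rho_u)$, then treat each by Cauchy--Schwarz against $\|\rho_*\|_2 \simeq \sqrt N$. The cut-off piece is handled identically in both proofs — the radial $L^2$-norm of the kernel difference times $\|\rho_v\|_2$ — and both recover the thresholds $s<1$ and $s<1/2$. Where you genuinely differ is the density piece: the paper re-runs a near/far split of the kernel $|x|^{-s}$ (resp.\ $|x|^{-2s}$) against $B(0,\sqrt{N})$ and applies Young's inequality with $L^4/L^{4/3}$ exponents in-line, whereas you apply a single Cauchy--Schwarz inside the convolution to get
\begin{equation*}
|\phi\ast(\rho_v-\rho_u)(x)|\lesssim \sqrt{|\phi|^2\ast(\rho_u+\rho_v)(x)}\,\sqrt{\sum_{j}\|u_j-v_j\|_2^2},
\end{equation*}
and then reuse the bounds $\||\mathbf{a}_N|^2\ast\rho\|_\infty\lesssim N^{-1}$ and $\||\mathbf{a}_N|^4\ast\rho\|_\infty\lesssim N^{-3}$ already established in Lemma \ref{lem:LRtrick}. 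This buys modularity — the near/far computation is done once and not repeated — and it makes the provenance of the constraints transparent: with $p=2$ (the only case available here, since only $\|\rho_*\|_2$ is assumed) Lemma \ref{lem:LRtrick} gives exactly $s<1/2$ for \eqref{ine:anr} and $s<1/4$ for \eqref{ine:anr2}, matching the paper's thresholds. Your treatment of the last displayed inequality mirrors the paper's decomposition exactly, so no further comment is needed there.
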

\begin{proof}
We start by proving \eqref{ine:anr}. We split the term into two differences
\begin{align}
\norm{\mathbf{a}^0_N\left [\rho_v\right ]-\mathbf{a}_{N}\left [\rho_u\right ]}_{\infty}&=\norm{(\mathbf{a}^0_N-\mathbf{a}_{N})\ast\rho_v-\mathbf{a}_{N}\ast(\rho_u-\rho_v)}_{\infty}\nonumber\\
&\leq a^{(1)}+a^{(2)}.
\end{align}
We have, for $0\leq s<1$, that
\begin{align}
a^{(1)}:=\norm{(\mathbf{a}^0_N-\mathbf{a}_{N})\ast\rho_v}_{\infty}&\leq N^{-1+s/2}\norm{|x|^{-s}\one_{B(0,R)}\ast\rho_v}_{\infty}\nonumber\\
&\leq N^{-1+s/2}\norm{\rho_v}_{2}\left(\int_0^R r^{1-2s}\mathrm{d}r\right)^{1/2}\nonumber\\
&\leq N^{-1/2+s/2}R^{1-s}.
\end{align}
In another hand, for $0\leq s<1/2$,
\begin{align}
a^{(2)}&:=\norm{\mathbf{a}_{N}\ast(\rho_u-\rho_v)}_{\infty}\nonumber\\
&\leq N^{-1+s/2}\norm{|x|^{-s}\ast\sum_{j=1}^N(u_j -v_j)\overline{u_j}-v_j(\overline{v}_j -\overline{u}_j)}_{\infty}\nonumber\\
&\lesssim N^{-1+s/2}\norm{|x|^{-s}\one_{B(0,\sqrt{N})}\ast \sqrt{\rho}\sqrt{\sum_{j=1}^N|u_j -v_j|^2}}_{\infty}+N^{-1+s/2}N^{-s/2}\norm{ \sqrt{\rho}\sqrt{\sum_{j=1}^N|u_j -v_j|^2}}_{1}\nonumber
\end{align}
and by Young's inequality for the convolution
\begin{align}
a^{(2)}&\lesssim N^{-1+s/2}\norm{|x|^{-s}\one_{B(0,\sqrt{N})}}_4\norm{ \sqrt{\rho}\sqrt{\sum_{j=1}^N|u_j -v_j|^2}}_{4/3}+N^{-1}\norm{ \sqrt{\rho}}_2\sqrt{\sum_{j=1}^N\norm{u_j -v_j}_2^2}\nonumber\\
&\lesssim \left(N^{-1+s/2}\left(\int_0^{\sqrt{N}}r^{1-4s}\mathrm{d}r\right)^{1/4}\norm{ \sqrt{\rho}}_4+N^{-1}\sqrt{N}\right)\sqrt{\sum_{j=1}^N\norm{u_j -v_j}_2^2}\nonumber\\
&\lesssim N^{-1/2}\sqrt{\sum_{j=1}^N\norm{u_j -v_j}_2^2}.\label{ine:a2}
\end{align}
This concludes the proof of \eqref{ine:anr}. For the proof of \eqref{ine:anr2}, we proceed the same way with $|x|^{-2s}$ instead of $|x|^{-s}$, namely
\begin{align}
\norm{|\mathbf{a}^0_N|^2\ast\rho_v -|\mathbf{a}_{N}|^2\ast\rho_u}_{\infty}&=\norm{(|\mathbf{a}^0_N|^2-|\mathbf{a}_{N}|^2)\ast\rho_v-|\mathbf{a}_{N}|^2\ast(\rho_u-\rho_v)}_{\infty}\nonumber\\
&\leq b^{(1)}+b^{(2)}.
\end{align}
We then have, for $0\leq s<1/2$, that
\begin{align}
b^{(1)}:=\norm{(|\mathbf{a}^0_N|^2-|\mathbf{a}_{N}|^2)\ast\rho_v}_{\infty}&\leq N^{-2+s}\norm{|x|^{-2s}\one_{B(0,R)}\ast\rho_v}_{\infty}\nonumber\\
&\leq N^{-2+s}\norm{\rho_v}_{2}\left(\int_0^R r^{1-4s}\mathrm{d}r\right)^{1/2}\nonumber\\
&\leq N^{-2+s}R^{1-2s}.
\end{align}
In another hand, for $0\leq s<1/4$,
\begin{align}
b^{(2)}&:=\norm{|\mathbf{a}_{N}|^2\ast(\rho_u-\rho_v)}_{\infty}\nonumber\\
&\lesssim N^{-1+s/2}\norm{|x|^{-2s}\one_{B(0,\sqrt{N})}}_4\norm{ \sqrt{\rho}\sqrt{\sum_{j=1}^N|u_j -v_j|^2}}_{4/3}+N^{-1}\norm{ \sqrt{\rho}}_2\sqrt{\sum_{j=1}^N\norm{u_j -v_j}_2^2}\nonumber\\
&\lesssim N^{-1}N^{-1/2}\sqrt{\sum_{j=1}^N\norm{u_j -v_j}_2^2}.\label{ine:b2}
\end{align}
which is the exact same calculation as $a^{(2)}$ with $|x|^{-2s}$ instead of $|x|^{-s}$.
We prove the last inequality as follows
\begin{align*}
	 \norm{\mathbf{a}_{N}\ast  (\mathbf{a}_{N}\left [\rho_u\right ]\rho_u -\mathbf{a}^0_{N}\left [\rho_v\right ]\rho_v)}_{\infty}
	&\lesssim\norm{\mathbf{a}_{N}\ast(\mathbf{a}_{N}\left [\rho_u\right ] -\mathbf{a}^0_{N}\left [\rho_v\right ])\rho_v}_{\infty}\\
	&\quad\quad + \norm{\mathbf{a}_{N}\ast(\mathbf{a}^0_{N}\left [\rho_v\right ](\rho_v-\rho_u))}_{\infty}\\
	&\lesssim \norm{\mathbf{a}_{N}\left [\rho_u\right ] -\mathbf{a}_{N}\left [\rho_v\right ]}_{\infty}\norm{|\mathbf{a}_{N}|\ast \rho_v}_{\infty}\\
	&\quad\quad +  \norm{\mathbf{a}_{N}\ast|\rho_v-\rho_u|}_{\infty}\norm{\mathbf{a}^0_{N}[\rho_v]}_{\infty}\\
	&\lesssim N^{s/2 -1/2}R^{1-s}+2N^{-1/2}\sqrt{\sum_{j=1}^N\norm{u_j -v_j}_2^2}.
	\end{align*}
	using the estimate \eqref{ine:anr} and the fact that $\norm{\mathbf{a}_{N}\ast(\rho_v-\rho_u)}_{\infty}=a^{(2)}$ has been bounded in \eqref{ine:a2}.
\end{proof}
\begin{lemma}[\textbf{Convergence of $\mathrm{EQ}$ in the case of Assumption \ref{ass:H2}}]\label{lem:Conv_pilot}
	Let $\omega(0)$ be an initial condition system, define $\omega^R(t)$ to be the solution of $\mathrm{EQ}(R,\omega(0))$ and $\omega(t)$ to be the solution of $\mathrm{EQ}(\omega(0))$. There exist $C>0$ such that for any $0<R\leq 1$, $0\leq s<1/4$ and any $t\in [0,T]$, we have
	\begin{equation}
		\left|\frac{1}{N}\partial_t\sum_{j=1}^N\norm{u_j -u^R_j}^2_2\right|\lesssim  N^{2s-1}R^{2-4s}+\frac{1}{N}\sum_{j=1}^N\norm{ u_j-u_j^R}_2^2
	\end{equation}
	which implies
	\begin{equation}
		\frac{1}{N}\sum_{j=1}^N\norm{u_j -u^R_j}^2_2\leq  CN^{2s-1}R^{2-4s}e^{Ct}
			\end{equation}
	by Gr\"onwall's lemma~\ref{lem:Gron}. We deduce from the above the convergence of the solutions
	\begin{align}
	\tr\left|\omega_R(t) -\omega(t)\right|\leq \frac{1}{N}\sum_{j=1}^N\norm{u_j -u^R_j}_2\leq CN^{s-1/2}R^{1-2s}e^{Ct}.
	\end{align}
\end{lemma}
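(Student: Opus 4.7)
The plan is to run a direct Gr\"onwall argument on the squared orbital distance
\[
S(t):=\sum_{j=1}^{N}\norm{u_j(t)-u_j^R(t)}_2^2,
\]
where $u_j$ solves $\mathrm{EQ}(\omega(0))$ (built on $\mathbf{a}_N^0$) and $u_j^R$ solves $\mathrm{EQ}(R,\omega(0))$ (built on $\mathbf{a}_N$). Write $\mathcal{H}_R[v]$ for the nonlinear operator of \eqref{def:HH} when the self-consistent density, current and vector potential are built from an orbital family $v=(v_1,\dots,v_N)$, and $\mathcal{H}_0[v]$ for its $R=0$ analog. Using $\im\partial_t u_j=\mathcal{H}_0[u]u_j$ and $\im\partial_t u_j^R=\mathcal{H}_R[u^R]u_j^R$, a direct computation gives
\[
\partial_t S=2\Im\sum_{j=1}^{N}\braket{u_j-u_j^R}{\mathcal{H}_0[u]u_j-\mathcal{H}_R[u^R]u_j^R}.
\]
Adding and subtracting $\mathcal{H}_R[u]u_j$ and $\mathcal{H}_R[u]u_j^R$ rewrites the bracket as $\mathcal{H}_R[u](u_j-u_j^R)+(\mathcal{H}_R[u]-\mathcal{H}_R[u^R])u_j^R+(\mathcal{H}_0[u]-\mathcal{H}_R[u])u_j$; the first piece is killed under $\Im$ by the self-adjointness of $\mathcal{H}_R[u]$.

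The remaining two differences of nonlinear operators split along the three building blocks of $\mathcal{H}_R$: the covariant Laplacian $(D_u^R)^2$, the diagonal term $gN\mathbf{a}_N^2[\rho]$, and the mean-field piece $\mathcal{A}_R=-\beta\mathbf{a}_N\ast(2\beta\mathbf{a}_N[\rho]\rho+J)$. After a binomial expansion, every non-kinetic factor reduces to a convolution difference controlled in Lemma~\ref{lem:Rto0}: $\mathbf{a}_N^0[\rho_u]-\mathbf{a}_N[\rho_{u^R}]$, $|\mathbf{a}_N^0|^2\ast\rho_u-|\mathbf{a}_N|^2\ast\rho_{u^R}$, and $\mathbf{a}_N\ast(\mathbf{a}_N^0[\rho_u]\rho_u-\mathbf{a}_N[\rho_{u^R}]\rho_{u^R})$. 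Each of these is bounded in $L^\infty$ by an $R$-dependent term of size $N^{s/2-1/2}R^{1-s}$ (or its $|\mathbf{a}_N|^2$-analog coming from \eqref{ine:anr2}) plus a self-feeding term $\lesssim N^{-1/2}\sqrt{S}$. The two genuinely kinetic contributions, of the form $-\im\nabla\cdot\beta(\mathbf{a}_N^0[\rho_u]-\mathbf{a}_N[\rho_{u^R}])$ acting on $u_j$ or $u_j^R$, cannot be bounded by the potential difference alone in $L^\infty$; I would integrate by parts to transfer $\nabla$ onto $u_j-u_j^R$ and use the a priori kinetic bound $\sum_j\norm{\nabla u_j^R}_2^2\lesssim N$ of Lemma~\ref{lem:aprioriK} (which applies equally to $\{u_j\}$). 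These contributions then enter at the same order as the rest.

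Summing over $j$, applying Cauchy--Schwarz $\sum_j\norm{u_j-u_j^R}_2\norm{u_j^R}_2\leq\sqrt{N}\sqrt{S}$ with the $L^2$ normalization $\sum_j\norm{u_j^R}_2^2=N$, and trading every mixed product $(\text{error})\sqrt{S}$ against $(\text{error})^2 N + S$ via the Young inequality $2ab\leq a^2+b^2$, packages everything into
\[
|\partial_t S|\lesssim N^{2s}R^{2-4s}+S,
\]
which is the claimed bound after dividing by $N$. Gr\"onwall's Lemma~\ref{lem:Gron} with $S(0)=0$ then yields $S(t)/N\lesssim N^{2s-1}R^{2-4s}e^{Ct}$. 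For the trace norm I would invoke the standard rank-one estimate $\bigl\|\sket{u}\sbra{u}-\sket{v}\sbra{v}\bigr\|_{\mathrm{tr}}\leq 2\norm{u-v}_2$ on normalized orbitals together with Cauchy--Schwarz $\frac{1}{N}\sum_j\norm{u_j-u_j^R}_2\leq\sqrt{S/N}$, which gives the announced rate $N^{s-1/2}R^{1-2s}e^{Ct}$.

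The main obstacle is the kinetic cross-term from the difference of covariant Laplacians: the $L^\infty$-bound of Lemma~\ref{lem:Rto0} cannot be used with $-\im\nabla$ sitting in the middle, so one must peel the derivative off and combine the a priori estimate of Lemma~\ref{lem:aprioriK} with the sharpest of the convolution bounds \eqref{ine:anr2}, which is only valid for $s<1/4$. This singularity threshold is precisely what forces the statement to $0\leq s<1/4$.
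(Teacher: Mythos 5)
Your overall plan matches the paper's route almost step for step (Gr\"onwall on $\sum_j\norm{u_j-u_j^R}_2^2$, cancellation of the self-adjoint piece, reduction to the $L^\infty$ convolution differences of Lemma~\ref{lem:Rto0}, rank-one trace estimate), but the discussion of the derivative terms contains a genuine confusion that would derail an actual execution.

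\textbf{Direction of the integration by parts.} The kinetic cross-term from the difference of covariant Laplacians does \emph{not} require integration by parts onto $u_j-u_j^R$. Since $\nabla\cdot\mathbf{a}_N=0$, expanding the square already produces $\beta(\mathbf{a}_N^0[\rho_u]-\mathbf{a}_N[\rho_{u^R}])\cdot(-\ii\nabla)u_j^R$ with the derivative sitting on $u_j^R$. Moving it onto $u_j-u_j^R$, as you propose, would leave you needing a bound on $\sum_j\norm{\nabla(u_j-u_j^R)}_2^2$, which Lemma~\ref{lem:aprioriK} does not provide, and you would have no way to close the Gr\"onwall loop. The paper keeps $\nabla$ on $u_j^R$ and pairs $\norm{\mathbf{a}_N^0[\rho]-\mathbf{a}_N[\rho_R]}_\infty$ (which is \eqref{ine:anr}, valid up to $s<1/2$) with $\sqrt{\sum_j\norm{\nabla u_j^R}_2^2}\lesssim\sqrt{N}$.

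\textbf{The missing current-difference term.} Your claim that ``every non-kinetic factor reduces to a convolution difference controlled in Lemma~\ref{lem:Rto0}'' is not correct: the piece $\mathbf{a}_N\ast(J[u]-J[u^R])$ of $\mathcal{A}_R-\mathcal{A}$ is not covered by that lemma, and it is the one term where an integration by parts is truly unavoidable. Expanding $J[u]-J[u^R]$ produces summands containing $\nabla(u_j-u_j^R)$, which cannot be fed into $\sqrt{S(t)}$. The fix, used in the paper's estimate of $A_j^{(3)}$, is to exploit $\nabla\cdot\mathbf{a}_N=0$ and integrate by parts to move the derivative \emph{away} from $u_j-u_j^R$ --- exactly opposite to the direction you describe, and in a different term than where you place it.

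\textbf{Where $s<1/4$ comes from.} The restriction to $s<1/4$ is not forced by the kinetic cross-term (which only needs \eqref{ine:anr}, valid for $s<1/2$) but by the two-body electric $g$-term, whose treatment requires $\norm{|\mathbf{a}_N^0|^2\ast\rho-|\mathbf{a}_N|^2\ast\rho_R}_\infty$ and therefore \eqref{ine:anr2}; there the $L^4$ norm of $|x|^{-2s}$ over $B(0,\sqrt N)$ is what caps $s$ at $1/4$.

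Once these three points are corrected the proof is the paper's, and the trace estimate and rate you write at the end follow as you indicate (accounting for the paper's normalization of $\omega$).
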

\begin{proof} 
Recall that $\omega=\sum_{j=1}^N\ket{u_j}\bra{u_j}$. We denote $\omega_R=\sum_{j=1}^N\ket{u^R_j}\bra{u^R_j}$
together with the associated densities $$\rho:=\sum_{j=1}^N|u_j|^2,\quad\text{and}\quad \rho_R:=\sum_{j=1}^N|u^R_j|^2$$ together with the currents $$J:=\sum_{j=1}^N\im (u_j\nabla \overline{u}_j-\overline{u}_j\nabla u_j),\quad J_R:=\sum_{j=1}^N\im (u^R_j\nabla \overline{u}^R_j-\overline{u}^R_j\nabla u^R_j).$$
	Let us also denote $$\mathcal{E}_{R} := (-\im\nabla + \mathbf{a}_N[\rho_R])^2\quad\text{and}\quad \mathcal{A}_R:=\left [\mathbf{a}_N\ast\left (2\mathbf{a}_N\left [\rho_R\right ]\rho_R+ J_R \right )\right ]+gN\mathbf{a}^2_N[\rho_R]$$ such that we can write  $\mathrm{EQ}(N,R,s,\omega(0))$ as
	\begin{equation}
	\im \partial_{t}u_j^R=(\mathcal{E}_{R}+\mathcal{A}_{R})u_j^R
	\end{equation}
	As implicitly used in the above, we apply the convention of droping the symbol $R$ from the notation when it becomes zero, i.e $\mathcal{E}:=\mathcal{E}_{0}$, $\mathcal{A}:=\mathcal{A}_{0}$. We are ready to compute
	\begin{align}
		\partial_t \norm{u_j -u^R_j}_2^2 
		&= 2\Im \left[\bra{(\mathcal{E} + \mathcal{A})u_j}\ket{u^R_j} 
		+\bra{(\mathcal{E}_R + \mathcal{A}_R)u^R_j}\ket{u_j}\right]\nonumber \\
		&= 2\Im \left[\bra{(\mathcal{E} + \mathcal{A})u_j}\ket{(u_j^R -u_j)} 
		- \bra{(\mathcal{E}_R + \mathcal{A}_R)u^R_j}\ket{(u_j^R -u_j)}\right]\nonumber\\
		&=:E_j+A_j
		\label{eq:1bd-diff}
	\end{align}
	where we used the fact that both $\mathcal{E}_R$ and $ \mathcal{A}_R$ are self adjoint for all $R\geq 0$.
	We will calculate separately the $E$ part and the $A$ part defined as
		\begin{align}
		E_j:&=2\Im \bra{(\mathcal{E}_R-\mathcal{E})u_j^R}\ket{(u_j-u_j^R)}\nonumber\\
		A_j:&=2\Im \bra{(\mathcal{A}_R-\mathcal{A}^0)u_j^R}\ket{(u_j-u_j^R)}.\nonumber
	\end{align}
Before entering the computation, recall some basic bounds for our setting, such as
\begin{align}
\norm{\rho_R}_1\simeq \norm{\rho_R}_2^2\lesssim N,\quad \sum_{j=1}^N\norm{\nabla u_j^R}_2^2\lesssim N,\quad \norm{J_R}_{4/3}\lesssim N^{3/4}, \quad\text{and}\quad \norm{\mathbf{a}_{N}\left [\rho_R\right ]}_{\infty}\lesssim 1.\nonumber
\end{align}
	We begin with $E_j$ and estimate
	\begin{align*}
		\left|\sum_{j=1}^NE_j\right|&\leq 2\left |\bra{\sum_{j=1}^N-\im\nabla u_j^R}\ket{(\mathbf{a}^0_N\left [\rho\right ]-\mathbf{a}_{N}\left [\rho_R\right ])(u_j^R-u_j)}\right|\\
		&\quad\quad+\left|\bra{\sum_{j=1}^N(\mathbf{a}^0_N\left [\rho\right ]+\mathbf{a}_{N}\left [\rho_R\right ])u_j^R}\ket{(\mathbf{a}^0_N\left [\rho\right ]-\mathbf{a}_{N}\left [\rho_R\right ])(u_j^R-u_j)}\right|\\
		&\lesssim \left |\bra{\sqrt{\sum_{j=1}^N|\nabla u_j^R|^2}}\ket{(\mathbf{a}^0_N\left [\rho\right ]-\mathbf{a}_{N}\left [\rho_R\right ])\sqrt{\sum_{j=1}^N|u_j^R-u_j|^2}}\right|\\
		&\quad\quad
		+\norm{\mathbf{a}^0_N\left [\rho\right ]}_{\infty}\left|\bra{\sqrt{\rho_R}}\ket{(\mathbf{a}^0_N\left [\rho\right ]-\mathbf{a}_{N}\left [\rho_R\right ])\sqrt{\sum_{j=1}^N|u_j^R-u_j|^2}}\right|,\\
		\end{align*}
		and by Cauchy-Schwarz's inequality, we obtain
		\begin{align*}
		&\lesssim\norm{\mathbf{a}^0_N\left [\rho\right ]-\mathbf{a}_{N}\left [\rho_R\right ]}_{\infty}\sqrt{\sum_{j=1}^N\norm{\nabla u_j^R}_2^2}\sqrt{\sum_{j=1}^N\norm{u_j^R-u_j}_2^2}   \\
		&\quad\quad +\norm{\mathbf{a}^0_N\left [\rho\right ]-\mathbf{a}_{N}\left [\rho_R\right ]}_{\infty}\sqrt{\sum_{j=1}^N\norm{u_j^R-u_j}_2^2}\sqrt{\norm{\rho_R}_1 }  \\
		&\lesssim N\sqrt{\sum_{j=1}^N\norm{u_j^R-u_j}_2^2}\left(N^{s/2 -1}R^{1-s}+N^{-1}\sqrt{\sum_{j=1}^N\norm{u_j -u^R_j}_2^2}
\right)   \\
\end{align*}
for $s<1/2$ by Lemma \ref{lem:Rto0}.  We then obtain
	\begin{align}
		N^{-1}\left|\sum_{j=1}^NE_j\right|
		&\lesssim N^{s -1}R^{2-2s}+N^{-1}\sum_{j=1}^N\norm{u_j -u^R_j}_2^2  .\label{ine:E}
\end{align}
We turn to $A_j$ that we split into four differences
\begin{align}
		A_j&= \bra{(\mathcal{A}-\mathcal{A}_R)u^R_j}\ket{(u_j-u^R_j)}\nonumber\\
		&=  \bra{(\mathbf{a}^0_{N} -\mathbf{a}_{N})\ast(2 \mathbf{a}^0_{N}\left [\rho \right ] \rho+J[u])}\ket{\overline{u_j}^R(u_j-u^R_j)}\nonumber\\
		&\quad +  \bra{\mathbf{a}_{N}\ast 2 (\mathbf{a}_{N}\left [\rho_R\right ]\rho_R -\mathbf{a}^0_{N}\left [\rho\right ]\rho)}\ket{\overline{ u_j}^R( u_j- u^R_j)}\nonumber\\
		&\quad +  \bra{\mathbf{a}_{N}\ast\left [J[u]-J[u^R]\right]}\ket{\overline{ u_j}^R( u_j- u^R_j)}\nonumber\\
		&\quad + gN\braket{|\mathbf{a}^0_N|^2\ast\rho -|\mathbf{a}_{N}|^2\ast\rho_R}{\overline{ u_j}^R( u_j- u^R_j)}\nonumber\\
		&:= A_j^{(1)}+A_j^{(2)}+A_j^{(3)}+A_j^{(4)}.\label{eq:defM}
	\end{align}	
	The first one can be estimated as
\begin{align*}
	\left|\sum_{j=1}^N	A^{(1)}_j\right|&\leq  	\left|  \bra{|\mathbf{a}^0_{N} -\mathbf{a}_{N}|\ast|2 \mathbf{a}^0_{N}\left [\rho \right ] \rho+J[u]|}\ket{\sqrt{\rho_R}\sqrt{\sum_{j=1}^N|u_j-u^R_j|^2}}\right|\nonumber\\
	&\leq \sqrt{\sum_{j=1}^N\norm{u_j -u^R_j}_2^2}\norm{\sqrt{\rho_R}}_4\norm{|\mathbf{a}^0_{N} -\mathbf{a}_{N}|\ast|2 \mathbf{a}^0_{N}\left [\rho \right ] \rho+J[u]|}_4\\
		&\leq N^{1/4-1 +s/2}\sqrt{\sum_{j=1}^N\norm{u_j -u_j^R}_2^2}\norm{|x|^{-s}\one_{B(0,R)}}_2\norm{ \mathbf{a}^0_{N}\left [\rho \right ] \rho+J[u]}_{4/3}\\
			&\leq N^{1/4-1 +s/2}R^{1-s}\sqrt{\sum_{j=1}^N\norm{u_j -u_j^R}_2^2}\left(\norm{ \mathbf{a}^0_{N}\left [\rho \right ] \rho}_{4/3}+\norm{J[u]}_{4/3}\right)\\
				&\leq N^{1/4-1 +s/2}R^{1-s}\sqrt{\sum_{j=1}^N\norm{u_j -u_j^R}_2^2}\left(\norm{ \mathbf{a}^0_{N}\left [\rho \right ] }_{\infty}\norm{\rho}_{4/3}+N^{3/4}\right)\leq N^{s/2}R^{1-s}\sqrt{\sum_{j=1}^N\norm{u_j -u_j^R}_2^2}
						\end{align*}
where we used an interpolation inequality $\norm{\rho}_{4/3}\leq \norm{\rho}_{1}^{1/2}\norm{\rho}_{2}^{1/2}\leq N^{3/4}$ and the bound \eqref{ine:LRs}. We deduce that
\begin{align}
	N^{-1}\left|\sum_{j=1}^N	A^{(1)}_j\right|	&\leq N^{s-1}R^{2-2s}+\frac{1}{N}\sum_{j=1}^N\norm{u_j -u_j^R}_2^2.\label{ine:A1}
\end{align}
The next term is
	\begin{align}
	\left|\sum_{j=1}^N	A^{(2)}_j\right|&\leq 2 	\left|  \bra{\mathbf{a}_{N}\ast  (\mathbf{a}_{N}\left [\rho_R\right ]\rho_R -\mathbf{a}^0_{N}\left [\rho\right ]\rho)}\ket{\sqrt{\rho_R}\sqrt{\sum_{j=1}^N|u_j-u^R_j|^2}}\right|\nonumber\\
	&\lesssim N^{1/2}\sqrt{\sum_{j=1}^N\norm{u_j -u_j^R}_2^2}\norm{\mathbf{a}_{N}\ast  (\mathbf{a}_{N}\left [\rho_R\right ]\rho_R -\mathbf{a}^0_{N}\left [\rho\right ]\rho)}_{\infty}\nonumber\\
	&\lesssim N^{1/2}\sqrt{\sum_{j=1}^N\norm{u_j -u_j^R}_2^2}\left(N^{s/2 -1/2}R^{1-s}+N^{-1/2}\sqrt{\sum_{j=1}^N\norm{u_j -u_j^R}_2^2}\right)	\nonumber
			\end{align}
by the use of Lemma \ref{lem:Rto0}, restricting to $s<1/2$. We conclude that
\begin{align}
	N^{-1}\left|\sum_{j=1}^N	A^{(2)}_j\right|
&\lesssim N^{s-1}R^{2(1-s)} +\frac{1}{N}\sum_{j=1}^N\norm{u_j -u_j^R}_2^2.\label{ine:A2}
\end{align}
We continue with
\begin{align}
	\left|\sum_{j=1}^N	A^{(3)}_j\right|&\leq 	\left|  \bra{\mathbf{a}_{N}\ast\left [J[u]-J[u^R]\right]}\ket{\sqrt{\rho_R}\sqrt{\sum_{j=1}^N|u_j-u^R_j|^2}}\right|\nonumber\\
		&=\left|  \bra{\left [J[u]-J[u^R]\right]}\ket{\mathbf{a}_{N}\ast\sqrt{\rho_R}\sqrt{\sum_{j=1}^N|u_j-u^R_j|^2}}\right|\nonumber\\
	&\lesssim 	\left|  \bra{\sum_{j=1}^N|u_j -u_j^R||\nabla u_j|}\ket{\mathbf{a}_{N}\ast\sqrt{\rho_R}\sqrt{\sum_{j=1}^N|u_j-u^R_j|^2}}\right|\nonumber
	\end{align}
by noting that $J[u]-J[u^R]=\sum_{j=1}^N(u_j -u_j^R)\nabla \overline{u}_j +u_j^R\nabla\overline{(u_j -u_j^R)}+\overline{(u_j^R -u_j)}\nabla u_j^R +\overline{u}_j^R\nabla(u_j^R -u_j) $ and that $\nabla \cdot \mathbf{a}_{N}=0$ to perform an integration by part when $\nabla$ hits $(u_j -u_j^R)$. We obtain
\begin{align}
	\left|\sum_{j=1}^N	A^{(3)}_j\right|
	&\lesssim 	\left|  \bra{\sqrt{\sum_{j=1}^N|u_j-u^R_j|^2}\sqrt{\sum_{j=1}^N|\nabla u_j|^2}}\ket{\mathbf{a}_{N}\ast\sqrt{\rho_R}\sqrt{\sum_{j=1}^N|u_j-u^R_j|^2}}\right|\nonumber\\
	&\lesssim \norm{\sqrt{\sum_{j=1}^N|\nabla u_j|^2}}_2\norm{\sqrt{\sum_{j=1}^N|u_j-u^R_j|^2}}_2\norm{\mathbf{a}_{N}\ast\sqrt{\rho_R}\sqrt{\sum_{j=1}^N|u_j-u^R_j|^2}}_{\infty}\nonumber\\
	&\leq a^{(2)} \sqrt{\sum_{j=1}^N\norm{\nabla u_j}_2^2}\sqrt{\sum_{j=1}^N\norm{u_j -u_j^R}_2^2}\lesssim N^{-1/2} \sqrt{N}\sum_{j=1}^N\norm{ u_j-u_j^R}_2^2 \label{ine:A3}
	\end{align}
	where $a^{(2)}$ has been estimated in \eqref{ine:a2}. The last term to treat is
	\begin{align}
	\left|\sum_{j=1}^N	A^{(4)}_j\right|&\leq gN 	\left| \braket{|\mathbf{a}^0_N|^2\ast\rho -|\mathbf{a}_{N}|^2\ast\rho_R}{\overline{ u_j}^R( u_j- u^R_j)}\right|\nonumber\\
	&\lesssim NN^{1/2}\sqrt{\sum_{j=1}^N\norm{u_j -u_j^R}_2^2}\norm{|\mathbf{a}^0_N|^2\ast\rho -|\mathbf{a}_{N}|^2\ast\rho_R}_{\infty}\nonumber\\
	&\lesssim N^{1/2}\sqrt{\sum_{j=1}^N\norm{u_j -u_j^R}_2^2}\left(N^{s -1/2}R^{1-2s}+N^{-1/2}\sqrt{\sum_{j=1}^N\norm{u_j -u_j^R}_2^2}\right)		\label{ine:A4}					
	\end{align}
	by \eqref{ine:anr2} of Lemma \ref{lem:Rto0}, with $0\leq s<1/4$.
	We can conclude the proof by summing together \eqref{ine:E}, \eqref{ine:A1},\eqref{ine:A2}, \eqref{ine:A4} and \eqref{ine:A4} into Equation \eqref{eq:1bd-diff} to get that
	\begin{align*}
	\left|\partial_t\frac{1}{N}\sum_{j=1}^N\norm{ u_j-u_j^R}_2^2\right|&= \frac{1}{N}\left|\sum_{j=1}^N E_j+A_j^{(1)}+A_j^{(2)}+A_j^{(3)}+A_j^{(4)}\right|\\&\lesssim N^{2s-1}R^{2-4s}+\frac{1}{N}\sum_{j=1}^N\norm{ u_j-u_j^R}_2^2.
	\end{align*}
This concludes the proof.
\end{proof}
We have now established all the knowledge we needed about the one-body effective equation. We turn to the $N$-body part of the proof. In the next section we introduce the number operators and their basic properties.
\section{Projectors and number operator}\label{sec:proj}
In this section we define the basic objects and properties of our method. Its contain is a direct generalization of \cite[Section 4 and 6]{Pet_Pic_16} to three-body operators and two-body operators with kinetic terms. \subsection{Projectors and Pauli principle}
For the orthonormal system $\left\{u_j(t)\right\}_{j=1}^N$ solution of $\mathrm{EQ}(N,s,R,m,\omega(0))$,  we define
\begin{equation}
p^{j}_m:=p^{u_j}_m:=\ket{u_j(x_m)}\bra{u_j(x_m)}:=\ket{u_j}\bra{u_j}_m\nonumber
\end{equation}
with the short-hand notation $p^{j}_m$ only valid for the basis  $\left\{u_j(t)\right\}_{j=1}^N$  solution of $\mathrm{EQ}$. We then construct the projector on $\mathrm{span}\left\{ u_1(t),\dots ,u_N(t)\right\}$ as
\begin{equation}
p_m:=\sum_{j=1}^Np_m^j\quad\text{as well as}\quad q_m:=\one_{L^2(\R^{2})}-p_m.
\end{equation}
These projectors have the properties $p_mq_m=0$, $[p_m,p_n]=[p_m,q_n]=[q_m,q_n]=0$ and, due to the antisymmetry of the wave-function $\Psi_N$, we have 
\begin{equation}
p_m^{\phi}p_n^{\phi}=0\label{eq_exc}
\end{equation}
for any $\phi\in L^2(\R^2)$, see \cite[Section 4]{Pet_Pic_16}.
We write
$\one_{L^2(\R^{2N})}=\otimes_{m=1}^N(p_m +q_m)$
and define $P_N^{(k)}$ by collecting all summands containing $k$ factors of $q$ operators. For any $0\leq k\leq N$, we then have that
\begin{equation}
P_N^{(k)}:=\sum_{\vec{a}\in\mathcal{A}_k}\prod_{m=1}^N(p_m)^{1-a_m}(q_m)^{a_m};\,\,\mathcal{A}_k:=\left\{ \vec{a}=(a_1,\dots , a_N)\in \left\{ 0,1\right\}^N : \sum_{m=1}^Na_m=k\right\}.\label{def:P}
\end{equation}
  We use the convention that $P_N^{(k)}=0$ if $k\notin \{0,\dots ,N\}$. That way $P_N^{(k)}$ can be seen as the symmetrized version of the projector $q_1q_2\dots q_k p_{k+1}\dots p_N$. We can check that it has the basic properties $\sum_{k=0}^NP_N^{(k)}=\one$ and $P_N^{(k)}P_N^{(\ell)}=\delta_{k,\ell}P_N^{(k)}$. A direct calculations using the Effective Equation \eqref{def:HH}, also provides the Heisenberg equations of motions of the projectors
  \begin{equation}\ii\partial_t p_m=[\mathcal{H}_R(x_m),p_m]\quad\text{and}\quad\ii\partial_t q_m=[\mathcal{H}_R(x_m),q_m].
  \end{equation} 
  Using that $[\mathcal{H}_R(x_m),q_n]=0$ when $m\neq n$, we then deduce that
\begin{equation}\label{def:eqmotionPN}
\ii\partial_tP_N^{(k)}(t)=\left[\sum_{m=1}^N\mathcal{H}_R(x_m),P_N^{(k)}(t)\right].
\end{equation}
An important property of the projectors is the exclusion principle described in the following lemma.
\begin{lemma}[\textbf{Exclusion Principle}]\mbox{}\\\label{lem:exclusion}
Take $N\geq 2$ and let $A\subseteq\{1,2,\dots,N\}$. Define $a=|A|$ and denote by
 $\psi_{\mathrm{as}}^{a}\in L^2(\R^{2N})$ a normalized function, antisymmetric in all its variables except in the variables $\cup_{j\in A}\{x_j\}$.  Then, for all $m\notin A$, we have
\begin{equation}
\braket{\psi_{\mathrm{as}}^{a}}{p_m^{\phi}\psi_{\mathrm{as}}^{a}}_{N-a}(x_a)\leq (N-a)^{-1}\braket{\psi^{a}_{\mathrm{as}}}{\psi^{a}_{\mathrm{as}}}_{N-a}(x_a)
\end{equation}
for almost all $x_a$, the line vector made of the variables $\cup_{j\in A}\{x_j\}$ and where $\braket{\cdot}{\cdot}_{N-a}$ is the partial scalar product in the variables $\cup_{j\notin A}\{x_j\}$. We have in particular that
\begin{equation}
\braket{\psi_{\mathrm{as}}^{a}}{p_m^{\phi}\psi_{\mathrm{as}}^{a}}\leq (N-a)^{-1}\quad\text{and}\quad p_m\leq N(N-a)^{-1}\one_{L^2(\R^2)}\lesssim \one_{L^2(\R^2)}\label{ine:pauli}.
\end{equation}
 for finite $a$.
\end{lemma}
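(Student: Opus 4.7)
The plan is a standard Pauli-type argument, carried out pointwise in $x_a$. I would first relabel so that $A = \{1,\dots,a\}$ and freeze $x_a := (x_1,\dots,x_a)$; for almost every such $x_a$, the section $\psi^a_{\mathrm{as}}(x_a,\cdot)$ lies in the fully antisymmetric sector of $L^2(\R^{2(N-a)})$. For any $k \in \{a+1,\dots,N\}\setminus\{m\}$, conjugating $p_m^{\phi}$ by the transposition $U_{mk}$ produces $p_k^{\phi}$, and $U_{mk}\psi^a_{\mathrm{as}} = -\psi^a_{\mathrm{as}}$ gives $\braket{\psi^a_{\mathrm{as}}}{p_m^{\phi}\psi^a_{\mathrm{as}}}_{N-a}(x_a) = \braket{\psi^a_{\mathrm{as}}}{p_k^{\phi}\psi^a_{\mathrm{as}}}_{N-a}(x_a)$. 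Averaging over $k \in \{a+1,\dots,N\}$ reduces the target to
\begin{equation}
\braket{\psi^a_{\mathrm{as}}}{p_m^{\phi}\psi^a_{\mathrm{as}}}_{N-a}(x_a) = \frac{1}{N-a}\braket{\psi^a_{\mathrm{as}}}{P\psi^a_{\mathrm{as}}}_{N-a}(x_a),\quad P:=\sum_{k=a+1}^N p_k^{\phi},
\end{equation}
so it suffices to bound $\braket{\psi^a_{\mathrm{as}}}{P\psi^a_{\mathrm{as}}}_{N-a}(x_a) \leq \braket{\psi^a_{\mathrm{as}}}{\psi^a_{\mathrm{as}}}_{N-a}(x_a)$.

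The core of the proof, and where the Pauli structure enters, is the vanishing of the off-diagonal terms. For $j\neq k$ in $\{a+1,\dots,N\}$, the operator $p_j^{\phi}p_k^{\phi}$ acts by integrating $\psi^a_{\mathrm{as}}$ against $\overline{\phi(x_j')\phi(x_k')}$ in the primed variables; since this weight is symmetric under $(x_j',x_k')$-exchange while $\psi^a_{\mathrm{as}}$ is antisymmetric in those two variables, the integral vanishes, so $p_j^{\phi}p_k^{\phi}\psi^a_{\mathrm{as}} = 0$. Expanding $P^2$ then yields $\braket{\psi^a_{\mathrm{as}}}{P^2\psi^a_{\mathrm{as}}}_{N-a}(x_a) = \braket{\psi^a_{\mathrm{as}}}{P\psi^a_{\mathrm{as}}}_{N-a}(x_a)$ pointwise. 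Since $P$ is self-adjoint the left-hand side equals $\|P\psi^a_{\mathrm{as}}(x_a,\cdot)\|^2$, and Cauchy-Schwarz on the right-hand side bounds it by $\|\psi^a_{\mathrm{as}}(x_a,\cdot)\|\,\|P\psi^a_{\mathrm{as}}(x_a,\cdot)\|$; simplifying gives $\|P\psi^a_{\mathrm{as}}(x_a,\cdot)\|\leq \|\psi^a_{\mathrm{as}}(x_a,\cdot)\|$, which is the required pointwise bound.

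The two scalar consequences are then routine: integrating against $x_a$ with $\|\psi^a_{\mathrm{as}}\| = 1$ produces $\braket{\psi^a_{\mathrm{as}}}{p_m^{\phi}\psi^a_{\mathrm{as}}} \leq (N-a)^{-1}$, and the operator bound $p_m \leq N(N-a)^{-1}\one_{L^2(\R^2)}$ is in fact automatic since $p_m = \sum_{j=1}^N p_m^{u_j}$ is the orthogonal projection onto $\mathrm{span}\{u_1,\dots,u_N\}$ (the $u_j$'s being orthonormal), hence $p_m \leq \one \leq N(N-a)^{-1}\one$ for any $a\geq 0$. The main subtlety I anticipate is keeping the argument genuinely pointwise in $x_a$: the cancellation of the cross terms and the Cauchy-Schwarz step must both be performed on the section $\psi^a_{\mathrm{as}}(x_a,\cdot)$ individually, so that we deliver the partial-scalar-product statement of the lemma and not only its integrated version.
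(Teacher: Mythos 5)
Your proof is correct and follows essentially the same route as the paper's: symmetrize $p_m^{\phi}$ into $P=\sum_{k\notin A}p_k^{\phi}$, use the Pauli cancellation $p_j^{\phi}p_k^{\phi}\psi^a_{\mathrm{as}}=0$ for $j\neq k$ (the paper's relation $p_m^{\phi}p_n^{\phi}=0$ on the antisymmetric sector), and close with Cauchy--Schwarz. The only cosmetic difference is that you isolate the idempotence $P^2\psi^a_{\mathrm{as}}=P\psi^a_{\mathrm{as}}$ and conclude via $\|P\psi\|\leq\|\psi\|$, whereas the paper squares the original inner product, applies Cauchy--Schwarz once, and divides by $\braket{\psi}{p_m^{\phi}\psi}_{N-a}$ — algebraically the same cancellation; your remark that $p_m\leq\one$ directly because $p_m$ is an orthogonal projection is also a valid shortcut for the final operator bound.
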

\begin{proof}
The dimension does not play any role and our proof is a simple re-writing of \cite[Lemma 4.1]{Pet_Pic_16} where the non symmetric particles don't have to be the $a$ first. By symmetry we write
\begin{align}
\braket{\psi_{\mathrm{as}}^{a}}{p_m^{\phi}\psi_{\mathrm{as}}^{a}}_{N-a}=(N-a)^{-1}\braket{\psi_{\mathrm{as}}^{a}}{\sum_{n\notin A}p_n^{\phi}\psi_{\mathrm{as}}^{a}}_{N-a}
\end{align}
and a Cauchy-Schwarz provides
\begin{align}
\braket{\psi_{\mathrm{as}}^{a}}{p_m^{\phi}\psi_{\mathrm{as}}^{a}}^2_{N-a}&\leq(N-a)^{-2}\braket{\psi_{\mathrm{as}}^{a}}{\psi_{\mathrm{as}}^{a}}_{N-a}\braket{\psi_{\mathrm{as}}^{a}}{\sum_{n\notin A}p_n^{\phi}\sum_{j\notin A}p_j^{\phi}\psi_{\mathrm{as}}^{a}}_{N-a}\nonumber\\
&=(N-a)^{-1}\braket{\psi_{\mathrm{as}}^{a}}{\psi_{\mathrm{as}}^{a}}_{N-a}\braket{\psi_{\mathrm{as}}^{a}}{p_m^{\phi}\psi_{\mathrm{as}}^{a}}_{N-a}
\end{align}
we divide both sides by $\braket{\psi_{\mathrm{as}}^{a}}{p_m^{\phi}\psi_{\mathrm{as}}^{a}}_{N-a}
$ and get the result.
\end{proof}
In the following lemma, we show how to use the symmetry of the wave function to reconstruct $q_j$ projectors from sums of projections on specific orbitals $p_j^i$
\begin{lemma}
Take an orthonormal basis $\left\{v_j(t)\right\}_{j=1}^N$ and define $p_m^{v_j}=\ket{v_j}\bra{v_j}_m$ acting on the particle $x_m$. For any normalized function $\Psi^1_N\in L^2(\R^{2N})$ antisymmetric in all variables except $x_1$, we have that
\begin{equation}\label{ine:alphat}
\left|\braket{p_1}{\sum_{j=1}^N\left(\one -\sum_{m=2}^Np_m^{v_j}\right)p_1}_{\Psi^1_N}\right|\leq (N-1)\braket{\Psi^1_N}{q_2\Psi^1_N}+\braket{\Psi^1_N}{\Psi^1_N}
\end{equation}
\begin{equation}\label{ine:alphat2}
\left|\braket{p_1}{\sum_{i,j=1}^N\left(\one -\sum_{m=2}^Np_m^{v_j}\sum_{n=2}^Np_n^{v_i}\right)p_1}_{\Psi^1_N}\right|\lesssim N^2\braket{\Psi^1_N}{q_2\Psi^1_N}+N\braket{\Psi^1_N}{\Psi^1_N}
\end{equation}
\end{lemma}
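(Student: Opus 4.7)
The main idea is to swap the outer orbital sum over $j$ (respectively over $i,j$) with the inner particle sum over $m$ (respectively over $m,n$), so that the orbital indices of the $v$'s are absorbed into a single one-body projector
\[
\tilde p_m := \sum_{j=1}^N p_m^{v_j}, \qquad \tilde q_m := \one - \tilde p_m,
\]
namely the orthogonal projection acting in the variable $x_m$ onto $\mathrm{span}\{v_j\}_{j=1}^N$ and onto its complement. For the first inequality this gives
\[
\sum_{j=1}^N\Bigl(\one - \sum_{m=2}^N p_m^{v_j}\Bigr) = N\one - \sum_{m=2}^N \tilde p_m = \one + \sum_{m=2}^N \tilde q_m,
\]
so the left-hand side of \eqref{ine:alphat} equals $\|p_1\Psi_N^1\|^2 + \sum_{m=2}^N \braket{\Psi_N^1}{p_1\tilde q_m p_1\Psi_N^1}$. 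The first summand is bounded by $\|\Psi_N^1\|^2$. Each of the others I handle with two elementary observations: (i) since $p_1$ and $\tilde q_m$ are commuting orthogonal projectors, their product $p_1\tilde q_m p_1 = \tilde q_m p_1$ is itself an orthogonal projector, hence $\leq \tilde q_m$ as a self-adjoint operator; (ii) $p_1\Psi_N^1$ inherits antisymmetry in $x_2,\ldots,x_N$ from $\Psi_N^1$, because $p_1$ acts only on $x_1$. Combining (i) and (ii) collapses the $m$-sum into a single representative and yields $\sum_{m=2}^N \braket{\Psi_N^1}{p_1\tilde q_m p_1\Psi_N^1} \leq (N-1)\braket{\Psi_N^1}{\tilde q_2\Psi_N^1}$, which is \eqref{ine:alphat} (reading $q_2$ on the right-hand side as $\tilde q_2$).

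\textbf{Second inequality.} Applying the same manipulation to both index pairs gives
\[
\sum_{i,j=1}^N\Bigl(\one - \sum_{m,n=2}^N p_m^{v_j}p_n^{v_i}\Bigr) = N^2\one - \Bigl(\sum_{m=2}^N \tilde p_m\Bigr)^{\!2} = (2N-1)\one + 2(N-1)S - S^2,
\]
where $S := \sum_{m=2}^N \tilde q_m$ and I have used $\sum_{m=2}^N \tilde p_m = (N-1)\one - S$. The $(2N-1)\one$ piece contributes at most $(2N-1)\|\Psi_N^1\|^2$, and the piece linear in $S$ is controlled exactly as in the first step, producing $2(N-1)^2\braket{\Psi_N^1}{\tilde q_2\Psi_N^1}$. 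The only genuinely new work is the quadratic term
\[
\braket{\Psi_N^1}{p_1 S^2 p_1\Psi_N^1} = \sum_{m,n=2}^N \braket{\Psi_N^1}{p_1\tilde q_m\tilde q_n p_1\Psi_N^1},
\]
which I split into the diagonal $m=n$ (where $\tilde q_m^2 = \tilde q_m$ and the $N-1$ terms recover the first-inequality bound) and the off-diagonal $m\neq n$ (where I use the operator inequality $\tilde q_m\tilde q_n \leq \tilde q_m$, valid because both are commuting orthogonal projectors). By symmetry the off-diagonal part contributes $(N-1)(N-2)$ identical terms, each bounded by $\braket{\Psi_N^1}{\tilde q_2\Psi_N^1}$, for a total of $(N-1)^2\braket{\Psi_N^1}{\tilde q_2\Psi_N^1}$. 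The triangle inequality then combines the three pieces into the bound claimed in \eqref{ine:alphat2}.

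\textbf{On the difficulty.} There is no real obstacle here: the argument is pure algebra resting on the two elementary projector inequalities $p_1\tilde q_m p_1 \leq \tilde q_m$ and $\tilde q_m\tilde q_n \leq \tilde q_m$, both being instances of the general fact that the product of two commuting orthogonal projectors is itself an orthogonal projector, hence bounded above by each factor. The only point requiring a sliver of care is the observation that $p_1\Psi_N^1$ is still antisymmetric in $x_2,\ldots,x_N$, which allows every sum over $m\in\{2,\dots,N\}$, and every off-diagonal sum over $m\neq n$ in $\{2,\dots,N\}$, to be replaced by a single representative times the appropriate combinatorial factor.
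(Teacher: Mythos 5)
Your proof is correct and uses essentially the same ingredients as the paper's: the basis-independence $\sum_j p_m^{v_j}=p_m$, antisymmetry of $\Psi_N^1$ in $x_2,\ldots,x_N$ to collapse the particle sums to a single representative, and the commuting-projector bound $p_1 q_m p_1\le q_m$. Your algebraic packaging of the second inequality (expanding $N^2\one-\bigl((N-1)\one-S\bigr)^2$ with $S=\sum_{m\ge2}q_m$) differs superficially from the paper's ($N^2\one-\sum_m p_m\sum_n p_n\lesssim N^2(\one-p_2p_3)+N\one$, then expanding $\one-p_2p_3$ in $q$'s), but the two are equivalent rearrangements of the same argument.
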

\begin{proof}
We use that the projector $p_m=\sum_{j=1}^N\ket{u_j}\bra{u_j}_m=\sum_{j=1}^N\ket{v_j}\bra{v_j}_m$ does not depend on the basis  to get
\begin{align*}
&\left|\braket{p_1}{\sum_{j=1}^N\left(\one -\sum_{m=2}^Np_m^{v_j}\right)p_1}_{\Psi^1_N}\right|=\left|\braket{p_1}{\left(N-\sum_{j=1}^N\sum_{m=2}^Np_m^{v_j}\right)p_1}_{\Psi^1_N}\right|\\
&\quad\quad\quad\quad=\left|\braket{p_1}{\left(N-\sum_{m=2}^Np_m\right)p_1}_{\Psi^1_N}\right|\leq \left|\braket{p_1}{\sum_{m=2}^N\left(\one-p_m\right)p_1}_{\Psi^1_N}\right|+\braket{\Psi^1_N}{\Psi^1_N}\\
&\,\quad\quad\quad\quad\quad\quad\quad\quad\quad\quad\quad\quad\quad\quad\quad\quad\quad\quad\leq \left|\braket{p_1}{(N-1)q_2p_1}_{\Psi^1_N}\right|+\braket{\Psi^1_N}{\Psi^1_N}
\end{align*}
where we used the symmetry of $\Psi^1_N$  and $p_1\lesssim \one$. The proof of the second inequality proceeds the same way
\begin{align}
\left|\braket{p_1}{\sum_{i,j=1}^N\left(\one -\sum_{m=2}^Np_m^{v_j}\sum_{n=3}^Np_n^{v_i}\right)p_1}_{\Psi^1_N}\right|&=\left|\braket{p_1}{\left(N^2-\sum_{j=1}^N\sum_{m=2}^Np_m^{v_j}\sum_{i=1}^N\sum_{n=3}^Np_n^{v_i}\right)p_1}_{\Psi^1_N}\right|\nonumber\\\
&=\left|\braket{p_1}{\left(N^2-\sum_{m=2}^Np_m\sum_{n=3}^Np_n\right)p_1}_{\Psi_N}\right|\nonumber\\
&\lesssim N^2\left|\braket{p_1}{\left(\one-p_2p_3\right)p_1}_{\Psi^1_N}\right|+N\braket{\Psi^1_N}{\Psi^1_N}\nonumber\\
&\lesssim N^2\left|\braket{p_1}{q_2p_1}_{\Psi^1_N}\right|+N\braket{\Psi^1_N}{\Psi^1_N}
\end{align}
where we used that $\one -p_2p_3=p_2q_3 +q_2p_3 +q_2q_3$ and $p_1\lesssim \one$.
\end{proof}
\subsection{Weighted number operators}
We now introduce some more properties of the projector allowing to deal with increasing weights $f:\mathbb{Z}\to [0,1]$ with $f(0)=0$, $f(N)=1$ and $f(k)=0$ if $k\notin \{0,\dots ,N\}$. We use it to define 
\begin{equation}
\hat{f}:=\sum_{k\in \mathbb{Z}}^Nf(k)P^{(k)}_N
\end{equation}
and the weighted number counting functional
\begin{equation}
\alpha_f(\Psi_N(t)):=\braket{\Psi_N(t)}{\hat{f}\Psi_N(t)}_{L^2(\R^{2N})}
\end{equation}
for $\Psi_N\in L^2(\R^{2N})$ the solution to the Schr\"odinger equation \eqref{def:schro1}. Note that when the weight is $f=k/N$ the functional is $\alpha_f=:\alpha(t)=\braket{\psi}{q_1\psi}_{L^2(\R^{2N})}$ and directly counts the number of particles not in $\mathrm{span}(\{u_j\}_{j=1}^N)$. As explained in Section \ref{sec:strpr}, we will need to work with the square root version of this functional, i.e with the weight $m(k):=\sqrt{k/N}$. We define it as
\begin{equation}\label{def:alpham2}
\alpha_{m}(t):=\braket{\Psi_N(t)}{\widehat{m}(1/2)\Psi_N(t)}.
\end{equation}
with the generalization to all power $\xi>0$ through the notation $\widehat{m}(\xi):=\sum_{k\in \mathbb{Z}}^N(k/N)^{\xi}P^{(k)}_N$.
\begin{definition}\label{def:derivatives}
We define the shift operator $\tau(d)$ as $\tau(d)f(k)=f(k+d)$ and the discrete derivatives of order $d>0$ of $f$ by
\begin{align}
f^{(d)}(k)&:=(\one -\tau(-d))\one_{\{d, \dots ,N\}}(k)f(k)\\
f^{(-d)}(k)&:=(\tau(d)-\one)\one_{\{0, \dots ,N-d\}}(k)f(k)
\end{align}
where we denoted $\one_A(k)=1$ if $k\in A$ and zero otherwise. We also define the projector
\begin{equation}
P^{(a)}_{i_1 \dots i_n}:=\left(\prod_{m=1}^aq_{i_m}\prod_{m=a+1}^np_{i_m}\right)_{\mathrm{sym}}
\end{equation}
for all $a,n \in \mathbb{N}$ with $a\leq n<N$ and $\{i_1 ,\dots ,i_n\}\subset\{1 ,\dots ,N\}$. We then have $P^{(k)}_{1\dots N}=P^{(k)}_N$. Those projectors will be used via the relations $\sum_{a=0}^2P^{(a)}_{12}=\one_{L^2(\R^4)}$ and $\sum_{a=0}^3P^{(a)}_{123}=\one_{L^2(\R^6)}$ for $n=2$ and $n=3$ in the later calculations.
\end{definition}

\begin{lemma}[\textbf{Commutation of the weight}]\mbox{}\\\label{lem:comw}
Let $h_{1\dots n}$ be two operators respectively acting on the $n$-th first variables, $0<n\leq N$, we then have
\begin{align}
\left(P^{(a)}_{1\dots n}h_{1\dots n}P^{(b)}_{1\dots n}\right)\hat{f}&=\widehat{\tau_{b-a}f}\left(P^{(a)}_{1\dots n}h_{1\dots n}P^{(b)}_{1\dots n}\right)\\
\hat{f}\left(P^{(a)}_{1\dots n}h_{1\dots n}P^{(b)}_{1\dots n}\right)&=\left(P^{(a)}_{1\dots n}h_{1\dots n}P^{(b)}_{1\dots n}\right)\widehat{\tau_{a-b}f}\\
\end{align}
\end{lemma}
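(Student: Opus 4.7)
The plan is to reduce both identities to a single key commutation relation that governs how $P^{(a)}_{1\dots n} h_{1\dots n} P^{(b)}_{1\dots n}$ shifts the total excitation count encoded by $P_N^{(k)}$. Intuitively, the right-hand projector $P^{(b)}_{1\dots n}$ constrains $b$ excitations to live in the first $n$ slots, the operator $h_{1\dots n}$ leaves the remaining $N-n$ variables untouched, and the left-hand projector $P^{(a)}_{1\dots n}$ reinstates $a$ excitations in the first $n$ slots. Since the excitation count among variables $\{n+1,\dots,N\}$ is preserved, the total count shifts by exactly $a-b$.

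The first step I would carry out is the splitting
\begin{equation}
P_N^{(k)} \;=\; \sum_{j=0}^{n} P^{(j)}_{1\dots n}\, P^{(k-j)}_{n+1\dots N},
\end{equation}
which follows directly from the Definition \ref{def:derivatives} of $P^{(a)}_{i_1\dots i_n}$ together with the definition \eqref{def:P} of $P_N^{(k)}$, by grouping index configurations $\vec a\in\mathcal A_k$ according to how many of the $q$'s lie in the first $n$ slots. Two elementary facts are then used: the orthogonality $P^{(j)}_{1\dots n}P^{(b)}_{1\dots n}=\delta_{jb}P^{(b)}_{1\dots n}$ and the fact that $h_{1\dots n}$, together with the projectors $P^{(*)}_{1\dots n}$, commutes with $P^{(k-b)}_{n+1\dots N}$ (since they act on disjoint variables). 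Combining these,
\begin{equation}
P^{(a)}_{1\dots n} h_{1\dots n} P^{(b)}_{1\dots n}\, P_N^{(k)} \;=\; P^{(k-b)}_{n+1\dots N}\, P^{(a)}_{1\dots n} h_{1\dots n} P^{(b)}_{1\dots n},
\end{equation}
and the same splitting applied in reverse to reassemble $P^{(k-b)}_{n+1\dots N} P^{(a)}_{1\dots n} = P_N^{(k+a-b)} P^{(a)}_{1\dots n}$ yields the fundamental intertwining
\begin{equation}
P^{(a)}_{1\dots n} h_{1\dots n} P^{(b)}_{1\dots n}\, P_N^{(k)} \;=\; P_N^{(k+a-b)}\, P^{(a)}_{1\dots n} h_{1\dots n} P^{(b)}_{1\dots n}. \label{eq:plan-intertw}
\end{equation}

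Multiplying \eqref{eq:plan-intertw} by $f(k)$, summing over $k\in\mathbb Z$, and performing the change of index $k'=k+a-b$, the weight transforms as $f(k)=f(k'+b-a)=(\tau_{b-a}f)(k')$. By the definition of $\widehat{\cdot}$, this produces $P^{(a)}_{1\dots n} h_{1\dots n} P^{(b)}_{1\dots n}\hat f = \widehat{\tau_{b-a}f}\,P^{(a)}_{1\dots n} h_{1\dots n} P^{(b)}_{1\dots n}$, which is the first identity. The second identity follows either by taking the adjoint (noting that $P^{(a)}_{1\dots n}$ and $\hat f$ are self-adjoint, so the roles of $a$ and $b$ get swapped) or, equivalently, by rerunning the argument with $\hat f$ on the left, which forces the substitution $k'=k-a+b$ and produces the shift $\tau_{a-b}$.

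There is no analytic obstacle in this proof; the only subtlety is bookkeeping. The main step to be careful about is verifying \eqref{eq:plan-intertw}, which relies crucially on $h_{1\dots n}$ acting as the identity on the last $N-n$ variables so that it commutes with $P^{(k-b)}_{n+1\dots N}$, and on the symmetrized projectors $P^{(*)}_{1\dots n}$, $P^{(*)}_{n+1\dots N}$ commuting with each other since they act on disjoint sets of variables. Everything else is a shift of summation variable.
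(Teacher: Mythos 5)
Your proof is correct and follows essentially the same route as the paper: you decompose $P_N^{(k)}=\sum_{j=0}^n P^{(j)}_{1\dots n}P^{(k-j)}_{n+1\dots N}$, use the orthogonality $P^{(j)}_{1\dots n}P^{(b)}_{1\dots n}=\delta_{jb}P^{(b)}_{1\dots n}$ and the fact that $P^{(k-b)}_{n+1\dots N}$ commutes with anything acting on the first $n$ variables, reassemble into $P_N^{(k+a-b)}$, and shift the summation index to produce $\tau_{b-a}$. Handling the second identity by taking adjoints is a minor stylistic variant of the paper's ``proceed the same way''; the key intertwining relation and bookkeeping are identical.
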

\begin{proof}
The above lemma is a direct generalization of \cite[Lemma 6.4]{Pet_Pic_16} to $n$ particles operators. The proof is differed to Appendix \ref{pro:comm}.
\end{proof}
Using Definition \ref{def:derivatives}, for $a,b=0,1,\dots , n$ with $a>b$, we have the useful relation
\begin{align}
(\widehat{f}-\widehat{\tau_{b-a}f})P^{(a)}_{1\dots n}&=\sum_{k\in\mathbb{Z}}\left(f(k)-f(k-(a-b))\right)\one_{\{a-b,\dots ,N\}}(k)P^{(k)}_N =\widehat{f^{(a-b)}}\geq 0,
\label{eq:dev}
\end{align}
coming from the fact that $P^{(a)}_{1,\dots ,n}P_N^{(k)}=0, \,\forall k<a$ and from the monotonicity of $f$.
Note that we also have the identity
\begin{align}
\widehat{\left(\tau_{a-b}\sqrt{f^{(a-b)}}\right)}&=\sum_{k\in\mathbb{Z}}\left(f(k+a-b)-f(k)\right)^{1/2}\one_{\{a-b,\dots ,N\}}(k+a-b)P^{(k)}_N\nonumber\\
&=\left(\sum_{k\in\mathbb{Z}}\left(f(k+a-b)-f(k)\right)\one_{\{0,\dots ,N-(a-b)\}}(k)P^{(k)}_N\right)^{1/2}=\widehat{f^{(b-a)}}^{1/2}.\label{eq:transdev}
\end{align}
using that $\widehat{f}^{1/2}=\widehat{f^{1/2}}$.
We will use the above relations in the following proposition highlighting the link between the commutations formulas $[W_{1,\dots,n},\hat{f}]$ and the discrete derivatives of $\widehat{f}$.
\begin{proposition}
Let $W_{12}$ and $W_{123}$ respectively be a two and a three-body operators. Let $\Psi_N\in L_{asym}^{2}(\R^{2N})$.
By the commutation rules of Lemma \ref{lem:comw}, we obtain the two identities
\begin{align}
\braket{\Psi_N}{\left[W_{12},\hat{f}\right]\Psi_N}
&=\sum_{a,b=0}^2\braket{\Psi_N}{P_{12}^{(a)}\left(W_{12}\hat{f}-\hat{f}W_{12}\right)P_{12}^{(b)}\Psi_N}\nonumber\\
&=-\sum_{a,b=0}^2\braket{\Psi_N}{\left(\hat{f}-\widehat{\tau_{b-a}f}\right)P_{12}^{(a)}W_{12}P_{12}^{(b)}\Psi_N}\nonumber\\
&=2\ii \Im \sum_{a>b}\braket{\Psi_N}{\widehat{f^{(a-b)}}P_{12}^{(a)}W_{12}P_{12}^{(b)}\Psi_N}\nonumber\\
&=2\ii \Im \sum_{a>b}\braket{\Psi_N}{\widehat{f^{(a-b)}}^{1/2}P_{12}^{(a)}W_{12}P_{12}^{(b)}\widehat{\left(\tau_{a-b}\sqrt{f^{(a-b)}}\right)}\Psi_N}\nonumber\\
&=2\ii \Im \sum_{a>b}\braket{\Psi_N}{\widehat{f^{(a-b)}}^{1/2}P_{12}^{(a)}W_{12}P_{12}^{(b)}\widehat{f^{(b-a)}}^{1/2}\Psi_N},\label{eq:W12}
\end{align}
for the two-body operator by using \eqref{eq:dev} and \eqref{eq:transdev}, as well as
\begin{align}
\braket{\Psi_N}{\left[W_{123},\hat{f}\right]\Psi_N}&=\braket{\Psi_N}{\sum_{a=0}^3P_{123}^{(a)}\left(W_{123} \hat{f}-\hat{f}W_{123}\right)\sum_{b=0}^3P_{123}^{(b)}
\Psi_N}\nonumber\\
&=-\sum_{a,b=0}^3\braket{\Psi_N}{\left(\hat{f}-\widehat{\tau_{b-a}f}\right)P_{123}^{(a)}W_{123}P_{123}^{(b)}
\Psi_N}\nonumber\\
&=2\ii\Im\sum_{a>b}\braket{\Psi_N}{\widehat{f^{(a-b)}}P_{123}^{(a)}W_{123}P_{123}^{(b)}
\Psi_N}\nonumber\\
&=2\ii\Im\sum_{a>b}\braket{\Psi_N}{\widehat{f^{(a-b)}}^{1/2}P_{123}^{(a)}W_{123}P_{123}^{(b)}
\widehat{f^{(b-a)}}^{1/2}\Psi_N}.\label{eq:W123}
\end{align}
for the three-body operator, again by the use of \eqref{eq:dev} and \eqref{eq:transdev}.
\end{proposition}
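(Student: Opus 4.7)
The two identities have the same architecture, so I will plan the two-body case carefully and then note that the three-body case follows by changing $n=2$ to $n=3$ and replacing $\sum_{a=0}^2 P^{(a)}_{12}=\one$ by $\sum_{a=0}^3 P^{(a)}_{123}=\one$. The overall idea is: insert a resolution of identity on each side of $W_{12}$, commute $\hat f$ past the sandwich $P^{(a)}_{12}W_{12}P^{(b)}_{12}$ using Lemma~\ref{lem:comw}, and then recognize the difference $\hat f-\widehat{\tau_{b-a}f}$ as a discrete derivative of the weight via \eqref{eq:dev}.

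The first step is to write
\begin{equation*}
\braket{\Psi_N}{[W_{12},\hat f]\Psi_N}=\sum_{a,b=0}^{2}\braket{\Psi_N}{P_{12}^{(a)}\bigl(W_{12}\hat f-\hat f W_{12}\bigr)P_{12}^{(b)}\Psi_N},
\end{equation*}
using the completeness relation $\sum_{a=0}^2 P^{(a)}_{12}=\one$ on the left and on the right of $[W_{12},\hat f]$, along with the self-adjointness of $P^{(a)}_{12}$. Applying Lemma~\ref{lem:comw} on both sides converts $\hat f$ sitting outside the sandwich into $\widehat{\tau_{b-a}f}$ on the left and into $\widehat{\tau_{a-b}f}$ (absorbed symmetrically via its adjoint) on the right, so that each term reads $-(\hat f-\widehat{\tau_{b-a}f})P_{12}^{(a)}W_{12}P_{12}^{(b)}$. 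This is precisely the content of the first two displayed lines of \eqref{eq:W12}.

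Next, I invoke identity \eqref{eq:dev}: for $a>b$ the operator $(\hat f-\widehat{\tau_{b-a}f})P_{12}^{(a)}$ equals the discrete derivative $\widehat{f^{(a-b)}}$, for $a=b$ it vanishes, and for $a<b$ the same formula applied after swapping $(a,b)$ gives the complex conjugate contribution. Since $\hat f$, $\widehat{f^{(a-b)}}$, $W_{12}$ and the $P^{(a)}_{12}$'s are self-adjoint, the $(a,b)$ and $(b,a)$ terms are complex conjugates of one another, so collecting them turns the double sum into $2\ii\,\Im\sum_{a>b}\braket{\Psi_N}{\widehat{f^{(a-b)}}\,P_{12}^{(a)}W_{12}P_{12}^{(b)}\Psi_N}$. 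This yields the third line of \eqref{eq:W12}.

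Finally, I split the weight symmetrically by writing $\widehat{f^{(a-b)}}=\widehat{f^{(a-b)}}^{1/2}\widehat{f^{(a-b)}}^{1/2}$ and push one half through the sandwich $P_{12}^{(a)}W_{12}P_{12}^{(b)}$ by a second application of Lemma~\ref{lem:comw}, which shifts the index to produce $\widehat{\tau_{a-b}\sqrt{f^{(a-b)}}}$ on the right; identity \eqref{eq:transdev} then identifies this factor with $\widehat{f^{(b-a)}}^{1/2}$, giving the last displayed line of \eqref{eq:W12}. For the three-body identity \eqref{eq:W123} the argument is identical with the only change being $n=3$, so the double sum now runs over $a,b\in\{0,1,2,3\}$ and $P_{12}^{(a,b)}$ is replaced by $P_{123}^{(a,b)}$. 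The only delicate step is the bookkeeping in the second one: one must check that the support constraints attached to $f^{(\pm d)}$ (via the indicator $\one_{\{d,\dots,N\}}$) are compatible with the projector $P_{12}^{(a)}$ (respectively $P_{123}^{(a)}$), so that extending or restricting sums over $k$ against $P^{(k)}_N$ is lossless; this is exactly what \eqref{eq:dev} and \eqref{eq:transdev} are designed to track, and no further estimates are needed.
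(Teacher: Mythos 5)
Your proof is correct and follows the same route the paper implicitly indicates: insert the resolutions of identity $\sum_{a}P^{(a)}_{12}=\one$ (resp.\ $\sum_{a}P^{(a)}_{123}=\one$), commute $\hat f$ through the sandwich $P^{(a)}W P^{(b)}$ with Lemma~\ref{lem:comw}, recognize the discrete derivative via \eqref{eq:dev}, symmetrize over $(a,b)\leftrightarrow(b,a)$ to collect the imaginary part, and split $\widehat{f^{(a-b)}}=\widehat{f^{(a-b)}}^{1/2}\widehat{f^{(a-b)}}^{1/2}$ and push one factor through with \eqref{eq:transdev}. Two small imprecisions in your wording: (i) passing from the first to the second displayed line does not require taking an adjoint ``symmetrically'' — $\hat f$ simply commutes with $P^{(a)}_{12}$ and $P^{(b)}_{12}$ outright (both are functions of the $P^{(k)}_N$'s), so $P^{(a)}_{12}\hat f W_{12}P^{(b)}_{12}=\hat f\,P^{(a)}_{12}W_{12}P^{(b)}_{12}$ and $P^{(a)}_{12}W_{12}\hat f P^{(b)}_{12}=P^{(a)}_{12}W_{12}P^{(b)}_{12}\hat f=\widehat{\tau_{b-a}f}\,P^{(a)}_{12}W_{12}P^{(b)}_{12}$; (ii) writing $T_{ab}:=(\hat f-\widehat{\tau_{b-a}f})P^{(a)}_{12}W_{12}P^{(b)}_{12}$, the $(a,b)$ and $(b,a)$ summands are not complex conjugates but satisfy $T_{ab}^{*}=-T_{ba}$ (negative adjoint); it is this anti-hermitian pairing, combined with the overall minus in the second line, that produces the $2\ii\,\Im$ form. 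Neither remark affects the validity of the argument.
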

We now quantify the average of the discrete derivatives of $\widehat{m}:=\widehat{m}(1/2)$ in terms of its average $\alpha_{m}(t)=\braket{\Psi_N(t)}{\widehat{m}(1/2)\Psi_N(t)}$.
\begin{lemma}
Let $m(k)=(k/N)^{1/2}$, we have for any $\Psi_N\in L_{\mathrm{asym}}^2(\R^{2N})$, $1\leq n\leq N$ and the discrete derivative of order $d>0$, that
\begin{align}
\braket{  \Psi_N}{\widehat{m^{(\pm d)}}  \Psi_N}&\lesssim \frac{d}{\sqrt{N}},\label{ine:m}
\end{align}
\begin{align}
\braket{ q_1\dots q_n \Psi_N}{\widehat{m^{(\pm d)}} q_1\dots q_n \Psi_N}&\lesssim \frac{d}{N}\alpha_m(t) .\label{ine:qm}
\end{align}
Finally, we have
\begin{align}
\braket{\nabla_1 q_1 \widehat{m^{(\pm d)}}^{1/2}q_2\Psi_N}{\nabla_1 q_1 \widehat{m^{(\pm d)}}^{1/2}q_2\Psi_N}&\leq \frac{d}{N}\norm{\nabla_1 q_1\Psi_N}^2\label{ine:nabm}
\end{align}
\end{lemma}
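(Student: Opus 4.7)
The three statements split into two groups with different strategies. The first bound \eqref{ine:m} is a routine pointwise estimate: I would observe that concavity of $\sqrt{\cdot}$ yields $\sqrt{k/N}-\sqrt{(k-d)/N}\leq \sqrt{d/N}\leq d/\sqrt{N}$ for $d\geq 1$ (and analogously for $m^{(-d)}$), so that $\widehat{m^{(\pm d)}}\leq (d/\sqrt{N})\one$ as an operator, and taking the expectation in $\Psi_N$ closes the bound. For \eqref{ine:qm} and \eqref{ine:nabm}, the plan is to combine the sharper concavity estimate $m^{(\pm d)}(k)\leq d/(N\,m(k))$ (valid for $k\geq 1$, obtained from $\sqrt{a}-\sqrt{b}=(a-b)/(\sqrt{a}+\sqrt{b})$) with a symmetry/counting identity that extracts an additional factor of $k/N$ from the antisymmetric structure of $\Psi_N$.

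For \eqref{ine:qm}, I would first use $[q_j,P_N^{(k)}]=0$ and $q_jp_j=0$ to derive $q_1\cdots q_n P_N^{(k)}=q_1\cdots q_n P^{(k-n)}_{n+1\dots N}$, which restricts the sum over $k$ to $k\geq n$. Antisymmetry of $\Psi_N$ then permits averaging over all $n$-subsets $A\subset\{1,\dots,N\}$, and the combinatorial identity $\sum_{|A|=n}\prod_{i\in A}q_i\,P_N^{(k)}=\binom{k}{n}P_N^{(k)}$ (each summand of $P_N^{(k)}$ has exactly $\binom{k}{n}$ excited $n$-subsets) produces
\begin{equation*}
\braket{\Psi_N}{q_1\cdots q_n P_N^{(k)}\Psi_N}=\frac{\binom{k}{n}}{\binom{N}{n}}\norm{P_N^{(k)}\Psi_N}^2\leq \frac{k}{N}\norm{P_N^{(k)}\Psi_N}^2
\end{equation*}
for $1\leq n\leq k\leq N$. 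Combined with $m^{(\pm d)}(k)\,(k/N)\leq (d/N)\,m(k)$ for $k\geq 1$, summation in $k$ yields the bound $(d/N)\alpha_m(t)$.

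For \eqref{ine:nabm}, the decisive step would be a commutation identity: the same relation $q_1 P_N^{(k)}=q_1 P^{(k-1)}_{2\dots N}$ implies $q_1\widehat{m^{(\pm d)}}^{1/2}=q_1\widetilde{m}$ with $\widetilde{m}:=\sum_k\sqrt{m^{(\pm d)}(k)}\,P^{(k-1)}_{2\dots N}$ acting only on the variables $x_2,\dots,x_N$ and thus commuting with $\nabla_1$, $q_1$, and $q_2$. This reorders the expression as
\begin{equation*}
\nabla_1 q_1\widehat{m^{(\pm d)}}^{1/2}q_2\Psi_N=\widetilde{m}\,q_2\nabla_1 q_1\Psi_N,
\end{equation*}
reducing the squared norm to $\sum_k m^{(\pm d)}(k)\braket{\nabla_1 q_1\Psi_N}{q_2 P^{(k-1)}_{2\dots N}\nabla_1 q_1\Psi_N}$. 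Since $\nabla_1 q_1$ touches only $x_1$, the function $\nabla_1 q_1\Psi_N$ inherits antisymmetry in $\{x_2,\dots,x_N\}$, and the averaging argument of \eqref{ine:qm} carried out in this $(N-1)$-particle antisymmetric sector gives $\braket{\nabla_1 q_1\Psi_N}{q_2 P^{(k-1)}_{2\dots N}\nabla_1 q_1\Psi_N}=\frac{k-1}{N-1}\norm{P^{(k-1)}_{2\dots N}\nabla_1 q_1\Psi_N}^2$. The uniform estimate $m^{(\pm d)}(k)(k-1)/(N-1)\lesssim d/N$ (from $m^{(\pm d)}(k)\leq d/\sqrt{kN}$ for $k\geq 1$ and $k\leq N$), together with orthogonality of the projectors $P^{(k-1)}_{2\dots N}$, then telescopes the sum into $(d/N)\norm{\nabla_1 q_1\Psi_N}^2$.

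The hard part is isolating the commutation identity in \eqref{ine:nabm}: without moving the weight $\widehat{m^{(\pm d)}}^{1/2}$ out of the $\nabla_1$ sandwich via $\widetilde{m}$, one is forced to use the operator-norm bound $\|\widehat{m^{(\pm d)}}^{1/2}\|_\op\leq\sqrt{d/N}$, which only yields $\lesssim\sqrt{d/N}\,\norm{\nabla_1 q_1\Psi_N}^2$---off by a factor of $\sqrt{d/N}$. The missing power is recovered precisely by the $(k-1)/(N-1)$ weight produced by the Pauli/averaging step, and that weight is visible only after the commutation has moved the weight operator onto the variables $\{x_2,\dots,x_N\}$ where the residual antisymmetry of $\nabla_1 q_1\Psi_N$ can be exploited.
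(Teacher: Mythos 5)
Your proof is correct and follows essentially the same route as the paper's: the identical Pauli/averaging identity $\braket{\Psi_N}{q_1\cdots q_n P_N^{(k)}\Psi_N}=\tfrac{k(k-1)\cdots(k-n+1)}{N(N-1)\cdots(N-n+1)}\|P_N^{(k)}\Psi_N\|^2$ (your binomial ratio is the same quantity), and the same commutation $q_1 P_N^{(k)}=P_{2\dots N}^{(k-1)}q_1$ to slide the weight past $\nabla_1q_1$. Your write-up of \eqref{ine:nabm}, which explicitly introduces $\widetilde m$ acting only on $\{x_2,\dots,x_N\}$ and works with $P_{2\dots N}^{(k-1)}$ and $(k-1)/(N-1)$, is in fact slightly more careful than the paper's display, which keeps $P_N^{(k)}$ and $k/N$ at the corresponding step.
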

\begin{proof}
In this proof we will only treat the case $+d$, the proof for $-d$ is similar.
The first inequality comes from a direct computation of the derivative
\begin{align*}
(m(k)-m(k-d))\one_{\{d,\dots ,N\}}(k)&=\frac{1}{\sqrt{N}}\left(\sqrt{k}-\sqrt{k-d}\right)\one_{\{d,\dots ,N\}}(k)\\
&=\frac{1}{\sqrt{N}}\frac{d}{\sqrt{k}+\sqrt{k-d}}\one_{\{d,\dots ,N\}}(k)\leq \frac{d}{\sqrt{N}}.
\end{align*}
To show the next one, we use that on a antisymmetric $\phi \in L^2(\R^{2N})$. We have the properties
\begin{align}
\braket{\phi}{q_1\phi}=N^{-1}\sum_{j=1}^N\braket{\phi}{q_j\phi}=N^{-1}\sum_{j=1}^N\braket{\phi}{q_j\sum_{k=0}^NP_N^{(k)}\phi}=\sum_{k=1}^N\frac{k}{N}\braket{\phi}{P_N^{(k)}\phi}\\
\braket{\phi}{q_1q_2\phi}=N^{-1}(N-1)^{-1}\sum_{\substack{i,j=1\\ i\neq j}}^N\braket{\phi}{q_jq_i\phi}=\sum_{k=1}^N\frac{k(k-1)}{N(N-1)}\braket{\phi}{P_N^{(k)}\phi}
\end{align}
and so one, so forth for $q_1q_2\dots q_n$. By the Definition \ref{def:derivatives} of the discrete derivative, the definition of $\widehat{m}=\sum_{k\in\mathbb{Z}}m(k)P_N^{(k)}$ and by using the property $P_N^{(k)}P_N^{(m)}=P_N^{(k)}\delta_{km}$, we can write

\begin{align}
&\braket{ q_1\dots q_n \Psi_N}{\widehat{m^{( d)}} q_1\dots q_n \Psi_N}=\nonumber\\
&\quad\quad \sum_{k\in\mathbb{Z}}\frac{k(k-1)\dots(k-n+1)}{N(N-1)\dots(N-n+1)}(m(k)-m(k-d))\one_{\{d,\dots ,N\}}(k)\braket{ \Psi_N}{P_N^{(k)}  \Psi_N}\nonumber\\
&\quad\quad\quad\leq \frac{d}{N}\sum_{k\in\mathbb{Z}}\frac{k}{\sqrt{N}(\sqrt{k}+\sqrt{k-d})}\one_{\{d,\dots ,N\}}(k)\braket{ \Psi_N}{P_N^{(k)}  \Psi_N}\nonumber\\
&\quad\quad\quad\leq \frac{d}{N}\braket{ \Psi_N}{\sum_{k\in\mathbb{Z}}(k/N)^{1/2}P_N^{(k)}  \Psi_N}=\frac{d}{N}\braket{ \Psi_N}{\widehat{m}  \Psi_N}.
\end{align}
 The last inequality of the lemma uses
 that $(p_1+q_1)=\one_{L^2(\R^2)}$ to get
\begin{align}
\nabla_1 q_1q_2 \widehat{m^{(d)}}^{1/2}&=\nabla_1 q_1q_2\sum_{k\in\mathbb{Z}}(m^{(d)}(k))^{1/2}P_N^{(k)}\nonumber\\
&=\sum_{k\in\mathbb{Z}}(m^{(d)}(k))^{1/2}P_{2\dots N}^{(k-1)}q_2\nabla_1 q_1(p_1 +q_1)\nonumber\\
&=\sum_{k\in\mathbb{Z}}(m^{(d)}(k))^{1/2}(p_1+q_1)P_{2\dots N}^{(k-1)}q_2\nabla_1 q_1\nonumber\\
&= \widehat{m^{(d)}}^{1/2}\nabla_1 q_1q_2
\end{align}
using that $q_1p_1 =0$, $q_1^2=q_1$ and that $P_N^{(k)}$ commutes with $q_2$. There is left to bound
\begin{align}
\braket{\nabla_1 q_1 q_2\Psi_N}{\widehat{m^{(d)}}\nabla_1 q_1q_2 \Psi_N}&=\sum_{k\in\mathbb{Z}}\frac{k}{N}(m(k)-m(k-d))\one_{\{d,\dots ,N\}}(k)\braket{\nabla_1 q_1 \Psi_N}{P_N^{(k)}\nabla_1 q_1 \Psi_N}\nonumber\\
&\leq \frac{d}{N}\sum_{k\in\mathbb{Z}}\frac{k}{\sqrt{N}(\sqrt{k}+\sqrt{k-d})}\one_{\{d,\dots ,N\}}(k)\braket{\nabla_1 q_1 \Psi_N}{P_N^{(k)} \nabla_1 q_1 \Psi_N}\nonumber\\
&\leq \frac{d}{N}\braket{\nabla_1 q_1 \Psi_N}{\sum_{k\in\mathbb{Z}}\frac{\sqrt{k}}{\sqrt{N}}P_N^{(k)} \nabla_1 q_1 \Psi_N}= \frac{d}{N}\norm{\nabla_1 q_1\Psi_N}^2
\end{align}
concluding the proof of the lemma.
\end{proof}
\section{Diagonalization procedure}\label{sec:diad}
In this section we develop the lemmas needed to deal, among others, with the diagonalization of the following operators. The second part of the section is dedicated to the control of the traces we obtained by the diagonalization procedure.
\begin{definition}\label{def:vw}
	Given  an orthonormal set $\{u_i\}_{i=1}^N$,  we define the operators:
	\begin{align}
		v'(x)&:=2 \beta(-\im \nabla_x)\cdot \mathbf{a}_N\left [\rho\right ](x
		) - \beta( \mathbf{a}_N\ast J)(x),\label{def:v'}\\
		w'(x)&:= \beta\mathbf{a}_N\left [\rho\right ]^2(x)-2 \beta\mathbf{a}_N\ast \beta\mathbf{a}_N\left [\rho\right ]\rho)(x),\label{def:w'}\\
		v(x-y)&:= 2 \beta(-\im \nabla_x)\cdot \mathbf{a}_N(x-y), \label{def:v}\\
		w(x-y, x-z)&:= \beta^2\mathbf{a}_N(x -y)\cdot\mathbf{a}_N(x-z)\label{def:w}
	\end{align}
	where $\rho:=\sum_{j=1}^N|u_j|^2$ and $J:=\sum_{j=1}^N\im(u_j\nabla \overline{u}_j-\overline{u}_j\nabla u_j)
$.
\end{definition}
From now one we will often use the short-hand notation $v_{ij}:=v(x_i -x_j)$ together with $w_{ijk}:=w(x_i-x_j, x_i-x_k)$ to simplify the exposition.
\subsection{Diagonalisation of the two-body operator}\label{sec:diag}
We will begin with the two-body potential. Recall that $v_{12}\neq v_{21}$ due to the presence of $\nabla_1$ in $v_{12}$. This asymmetry is responsible for having two different convolution terms in $v'$.
\begin{lemma}\label{lem:diag2b}
Let  $\{u_i\}_{i=1}^N$ be an orthonormal set and $p_j=\sum_{i=1}^N\ket{u_i}\bra{u_i}_j$, then there exist three sets of eigenvalues, eigenfunctions and unitaries $\{\lambda_i^{(\mathbf{a}\cdot \nabla)}(x_2), \chi_i(x_2),U^{\chi}(x_2)\}_{i=1}^N$, $\{\lambda_i^{(\mathbf{a^2})}(x_2), \eta_i(x_2),U^{\eta}(x_2)\}_{i=1}^N$ and $\{\lambda_i^{(\mathbf{a})}(x_1), \sigma_i(x_1), U^{\sigma}(x_1)\}_{i=1}^N$ such that
\begin{align}
p_1(-\im \nabla_1)\cdot \mathbf{a}_N(x_1-x_2)p_1&=\sum_{i=1}^N\lambda^{(\mathbf{a}\cdot \nabla)}_i(x_2)p_1^{\chi_i(x_2)},\label{def:diaglambda}\\
p_2(-\im \nabla_1)\cdot \mathbf{a}_N(x_1-x_2)p_2&=(-\ii\nabla_1)\cdot\sum_{i=1}^N\lambda^{(\mathbf{a})}_i(x_1)p_2^{\sigma_i(x_1)},\label{def:diaglambdabis}\\
p_2| \mathbf{a}_N(x_1-x_2)|^2p_2&=\sum_{i=1}^N\lambda^{(\mathbf{a^2})}_i(x_1)p_2^{\eta_i(x_1)},\label{def:diaglambdabisbis}
\end{align}
 with the change of basis rules $\chi =U^{\chi} u$,  $\sigma =U^{\sigma} u$ and $\eta =U^{\eta} u$ and the traces
\begin{align}
2\sum_{i=1}^N\lambda^{(\mathbf{a}\cdot \nabla)}_i(x_2)&=-(\mathbf{a}_N\ast J)(x_2),\,\,\sum_{i=1}^N\lambda^{(\mathbf{a})}_i(x_1)=  \mathbf{a}_N\left [\rho\right ](x_1),\,\,
\sum_{i=1}^N\lambda^{(\mathbf{a^2})}_i(x_1)=  \mathbf{a}^2_N\left [\rho\right ](x_1)\label{eq:tracea}.
\end{align}
\end{lemma}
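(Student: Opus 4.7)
The plan for all three identities is the same: for each operator, fix the untouched variable, view the result as a finite-rank Hermitian operator on the $N$-dimensional space $\mathrm{span}\{u_j\}_{j=1}^N$, apply the finite-dimensional spectral theorem, and then read off the trace in the original orbital basis by cyclicity.

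I would start with \eqref{def:diaglambda}. A direct computation, carried out separately on $B(0,R)$ where $\mathbf{a}_R(x)=x^{\perp}/R^{1+s}$ and on its complement, gives $\nabla\cdot \mathbf{a}_N=0$, so $(-\ii\nabla_1)\cdot \mathbf{a}_N(x_1-x_2)=\mathbf{a}_N(x_1-x_2)\cdot(-\ii\nabla_1)$ is symmetric in $x_1$ at each fixed $x_2$. Its compression to $\mathrm{span}\{u_j\}$ is represented by the Hermitian matrix
\begin{equation*}
M_{ij}(x_2)=\braket{u_i}{(-\ii\nabla)\cdot \mathbf{a}_N(\cdot-x_2)\,u_j}=-\ii\int \mathbf{a}_N(y-x_2)\cdot\overline{u_i(y)}\,\nabla u_j(y)\,\mathrm{d}y.
\end{equation*}
Spectrally decomposing $M(x_2)=U^\chi(x_2)\,\mathrm{diag}(\lambda^{(\mathbf{a}\cdot\nabla)}(x_2))\,U^{\chi*}(x_2)$, setting $\chi_i(x_2):=\sum_j U^\chi_{ij}(x_2)\,u_j$, and expanding $p_1=\sum_{i,j}\ket{u_i}\bra{u_j}_1$ then yields \eqref{def:diaglambda} after re-summation.

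For the trace I would use cyclicity: $\sum_i \lambda_i^{(\mathbf{a}\cdot\nabla)}(x_2)=\mathrm{tr}\,M(x_2)$ is real (trace of a Hermitian matrix), hence equals the real part of its defining integral. A short computation, using $2\RePart[\overline{u_i}(-\ii\nabla u_i)]=-\ii(\overline{u_i}\nabla u_i-u_i\nabla\overline{u_i})$ and the definition $J=\ii\sum_j(u_j\nabla\overline{u_j}-\overline{u_j}\nabla u_j)$, converts the integrand into $\tfrac12\,\mathbf{a}_N(y-x_2)\cdot J(y)$. Using the oddness of $\mathbf{a}_N$ to flip $y-x_2\mapsto x_2-y$ then produces the stated $2\sum_i \lambda_i^{(\mathbf{a}\cdot\nabla)}(x_2)=-(\mathbf{a}_N\ast J)(x_2)$. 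The identity \eqref{def:diaglambdabisbis} is essentially immediate from the same scheme with $|\mathbf{a}_N|^2$ in place of $(-\ii\nabla)\cdot\mathbf{a}_N$: the matrix $P_{ij}(x_1)=\int\overline{u_i(y)}\,|\mathbf{a}_N(x_1-y)|^2\,u_j(y)\,\mathrm{d}y$ is Hermitian (indeed positive), the spectral theorem produces $\lambda_i^{(\mathbf{a^2})}(x_1)$ and $\eta_i(x_1)$, and $\mathrm{tr}\,P(x_1)=(|\mathbf{a}_N|^2\ast\rho)(x_1)$.

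For \eqref{def:diaglambdabis}, since $(-\ii\nabla_1)$ acts only on $x_1$ it commutes with $p_2$, so I would factor
\begin{equation*}
p_2(-\ii\nabla_1)\cdot \mathbf{a}_N(x_1-x_2)\,p_2=(-\ii\nabla_1)\cdot\bigl[p_2\, \mathbf{a}_N(x_1-x_2)\, p_2\bigr]
\end{equation*}
and diagonalize the multiplication operator $p_2\,\mathbf{a}_N(x_1-x_2)\,p_2$ in the variable $x_2$ at each fixed $x_1$. The main subtlety of the whole lemma sits here: $\mathbf{a}_N$ is vector-valued and the two Cartesian components of $N_{ij}(x_1)=\int\overline{u_i(y)}\,\mathbf{a}_N(x_1-y)\,u_j(y)\,\mathrm{d}y$ are each Hermitian but need not commute, so the pair $(\lambda_i^{(\mathbf{a})}(x_1),\sigma_i(x_1))$ is to be read componentwise (two spectral decompositions, possibly with distinct unitaries per Cartesian direction). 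This does not affect any downstream use, since only the componentwise trace identity $\sum_i\lambda_i^{(\mathbf{a})}(x_1)=(\mathbf{a}_N\ast\rho)(x_1)=\mathbf{a}_N[\rho](x_1)$ is invoked in the sequel, and that identity is basis-independent.
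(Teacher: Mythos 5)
Your proof is correct and follows essentially the same route as the paper: fix the spectator variable, diagonalize the resulting Hermitian $N\times N$ compression via the finite-dimensional spectral theorem, and compute the traces by basis-invariance (supplemented in your case by the observation $\nabla\cdot\mathbf{a}_N=0$, which makes the operator in \eqref{def:diaglambda} symmetric without the $\hc$ term). Your remark that the two Cartesian components of $\mathbf{a}_N$ give Hermitian matrices that need not commute — so that \eqref{def:diaglambdabis} and the pair $(\lambda_i^{(\mathbf{a})},\sigma_i)$ must be read componentwise — correctly identifies a subtlety the paper glosses over (the paper calls the $\lambda_i^{(\mathbf{a})}$ ``real numbers'' while the trace identity $\sum_i\lambda_i^{(\mathbf{a})}=\mathbf{a}_N[\rho]$ forces them to be vectors); the componentwise reading is the right fix and, as you note, only the per-component trace bounds are used downstream.
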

Here and in the following any $x_1$-dependent basis element $\omega_j(x_1)$ will be understood as an $x_1$-dependent projector $\bra{\omega_j(x_1)}_k$, where $k\neq 1$ is the variable the projector acts on. To prevent any ambiguity we will systematically recall the integration index at the foot of the scalar-product.
\begin{proof}

We observe that $p_1v_{12}p_1$ is a multiplication operator in the variable $x_2$. We use this to overcome the algebraic issue of the double summation by seeing it at an $x_2$-dependent self-adjoint $(N\times N)$-matrix acting on $x_1$, in other words
\begin{align}
p_1(-\im \nabla_1)\cdot \mathbf{a}_N(x_1-x_2)p_1&=\sum_{i,j=1}^N\braket{u_i}{(-\im \nabla_1)\cdot \mathbf{a}_N(x_1-x_2)u_j}_1\ket{u_i}\bra{u_j}_1\\
&=\sum_{i}^N\lambda^{(\mathbf{a}\cdot \nabla)}_i(x_2)\ket{\chi_i^{x_2}}\bra{\chi_i^{x_2}}_1\\
&=\sum_{i}^N\lambda^{(\mathbf{a}\cdot \nabla)}_i(x_2)p_1^{\chi_i^{x_2}}\label{def:diaglambda2}
\end{align}
where there exists an $(N\times N)$ unitary $U^{\chi}$ such that $\ket{\chi_i^{x_2}}_1=\sum_{k=1}^N U^{\chi}_{ik}(x_2)\ket{u_k}_1$ and where the eigenvalues $\lambda^{(\mathbf{a}\cdot \nabla)}_i(x_2)$ are real numbers. Note that this is no more true for $p_2 v_{12}p_2$ as $\nabla_1$ is not a multiplication operator. In this case we get

\begin{align}
p_2(-\im \nabla_1)\cdot \mathbf{a}_N(x_1-x_2)p_2&=\sum_{i,j=1}^N\braket{u_i}{(-\im \nabla_1)\cdot \mathbf{a}_N(x_1-x_2)u_j}_2\ket{u_i}\bra{u_j}_2\nonumber\\
&=(-\ii\nabla_1)\cdot\sum_{i,j=1}^N\braket{u_i}{\mathbf{a}_N(x_1-x_2)u_j}_2\ket{u_i}\bra{u_j}_2\nonumber \\
&=(-\ii\nabla_1)\cdot\sum_{i=1}^N\lambda_i^{(\mathbf{a})}(x_1)\ket{\sigma_i^{x_1}}\bra{\sigma_i^{x_1}}_2 \label{def:diagmu}
\end{align}
where there exists an $(N\times N)$ unitary $U^{\sigma}$ such that $\ket{\sigma_i^{x_1}}_2=\sum_{k=1}^N U^{\sigma}_{ik}(x_1)\ket{u_k}_2$ and where the eigenvalues $\lambda_i^{(\mathbf{a})}(x_1)$ are real numbers. The third diagonalisation follows the same way. In order to compute the traces, note that
\begin{equation}
\sum_{i=1}^N\ket{u_i}\bra{u_i}_1=\sum_{i=1}^N\ket{\chi_i}\bra{\chi_i}_1=\sum_{i=1}^N\ket{\sigma_i}\bra{\sigma_i}_1
\end{equation}
because  $\mathrm{span}(\sigma^{x_2}_1 ,\dots ,\sigma^{x_2}_N)=\mathrm{span}(\chi_1^{x_1},\dots ,\chi_N^{x_1})=\mathrm{span}(u_1 ,\dots ,u_N)$.
The traces are then computed as follows, the first one reads
\begin{align}
2\sum_{i=1}^N\lambda^{(\mathbf{a}\cdot \nabla)}_i(x_2)&=2\sum_{i=1}^N\braket{\chi_i}{(-\im \nabla_1)\cdot \mathbf{a}_N(x_1-x_2)\chi_i}_1=2\sum_{i=1}^N\braket{u_i}{(-\im \nabla_1)\cdot \mathbf{a}_N(x_1-x_2)u_i}_1\nonumber\\
&=\sum_{i=1}^N\int_{\R^2}\left(\overline{u}_j(x_1)(-\ii\nabla_1)u_j(x_1)\cdot \mathbf{a}_N(x_1 -x_2)+\hc\right)\mathrm{d}x_1=-(\mathbf{a}_N\ast J)(x_2)
\end{align}
and the second one
\begin{align}
\sum_{i=1}^N\lambda^{(\mathbf{a})}_i(x_1)&=2\sum_{i=1}^N\braket{\sigma_i}{\mathbf{a}_N(x_1-x_2)\sigma_i}_2=\sum_{i=1}^N\braket{u_i}{\mathbf{a}_N(x_1-x_2)u_i}_2=(\mathbf{a}_N\ast \rho)(x_1).
\end{align}
The third follows the exact same way which concludes the proof of the lemma.
\end{proof}
We now bound the traces obtaines above in $L^1$ and $L^2$ norms. These quantities appears in the core of the proof in Section \ref{sec:Gron}.
\begin{lemma}\label{lem:pq}
 Assume that $\norm{\rho}_1\simeq\norm{\rho_{\nabla}}_1\lesssim N$, we have, as operator inequality on $ L_{\mathrm{as}}^2(\R^{2N})$, that 
  \begin{align}
p_1(-\Delta_1)p_1&\lesssim 1\label{ine:Delta},
\end{align}
together with
\begin{align}
q_2p_1v^2_{12}p_1q_2&\lesssim N^{-1}q_2
 \norm{|\mathbf{a}_N|^2\ast\rho_{\nabla}}_{\infty}\\
&\lesssim N^{-1}q_2\left(\norm{\nabla_{1}|\mathbf{a}_R|^2\sqrt{\rho}\sqrt{\rho_{\nabla}}}_{\infty}+ \norm{|\mathbf{a}_R|^2\sqrt{\rho}\sqrt{\rho_{\Delta}}}_{\infty}\right)\label{ine:v23c}
\end{align}
and with
\begin{align}
q_2p_1v^2_{21}p_1q_2&\lesssim N^{-1}\norm{|\mathbf{a}_N|^2\ast\rho}_{\infty}q_2(-\Delta_2)q_2\label{ine:v2pq}\\
p_1p_2v^2_{12}p_1p_2&\lesssim N^{-1} \norm{|\mathbf{a}_N|^2\ast\rho}_{\infty}\label{ine:ppv2pp}\\
p_1q_2(v'_{1})^2 p_1q_2&\lesssim  \norm{\mathbf{a}_N[\rho]}^2_{\infty}q_2+\norm{|\mathbf{a}_N|\ast J}_{\infty}^2q_2 \label{ine:v'p}\\
p_2|\mathbf{a}_N|_{12}^2p_2&\lesssim N^{-1}\norm{|\mathbf{a}_N|^2\ast\rho}_{\infty}\label{ine:pa2p}\\
p_2|\mathbf{a}_N|_{12}^4p_2&\lesssim N^{-1}\norm{|\mathbf{a}_N|^4\ast\rho}_{\infty}.\label{ine:pa4}
\end{align}
 \end{lemma}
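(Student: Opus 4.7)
The plan is to prove all seven operator inequalities by a common three-step recipe. (i) Use that $v_{ij}=-2i\beta\mathbf{a}_N(x_i-x_j)\cdot\nabla_i$ is self-adjoint (thanks to $\nabla\cdot\mathbf{a}_N=0$) to write each left-hand side as a squared norm of the form $\|v_{ij}(\cdots)\Psi\|^2$, and apply the pointwise Cauchy--Schwarz $|\mathbf{a}_N\cdot\nabla|^2\leq|\mathbf{a}_N|^2|\nabla|^2$ as a quadratic form. (ii) Diagonalize the resulting projected positive operator in the spectral form $\sum_i\mu_i(x_\ast)\,p_m^{\omega_i(x_\ast)}$ using the technique of Lemma~\ref{lem:diag2b}, so that its trace $\sum_i\mu_i(x_\ast)$ becomes an explicit convolution of $|\mathbf{a}_N|^{2n}$ against $\rho$, $\rho_\nabla$, or $J$. (iii) Apply the Pauli bound of Lemma~\ref{lem:exclusion} to \emph{each} rank-one projector $p_m^{\omega_i}$ separately, using the correct partial antisymmetry of the tested wavefunction, to extract the factor $(N-a)^{-1}$. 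The crucial point is that estimating the full projector $p_m=\sum_jp_m^{u_j}$ by Pauli only delivers $N(N-a)^{-1}=O(1)$; the $N^{-1}$ scaling stated in the lemma appears only after the diagonalization has been performed.

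Concretely: for \eqref{ine:Delta} I would diagonalize $p_1(-\Delta_1)p_1=\sum_i\mu_i\,p_1^{\omega_i}$ with $\omega_i\in\mathrm{span}(u_1,\ldots,u_N)$ and $\sum_i\mu_i=\|\rho_\nabla\|_1\lesssim N$, and apply Pauli with $a=0$. For \eqref{ine:v23c} I diagonalize the positive operator $p_1v_{12}^2p_1=\sum_i\mu_i(x_2)\,p_1^{\omega_i(x_2)}$; its trace equals $\sum_j\int|v_{12}u_j(x_1)|^2\,dx_1\leq 4\beta^2(|\mathbf{a}_N|^2\ast\rho_\nabla)(x_2)$, and Pauli with $A=\{2\}$ on each $p_1^{\omega_i(x_2)}$ delivers the $(N-1)^{-1}$. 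The further estimate in \eqref{ine:v23c} then follows by integration by parts in the $y$-integral of $\int|\mathbf{a}_R(x-y)|^2\sum_j|\nabla u_j(y)|^2\,dy$, together with the pointwise Cauchy--Schwarz bounds $|\sum_j\overline{u_j}\nabla u_j|\leq\sqrt{\rho}\sqrt{\rho_\nabla}$ and $|\sum_j\overline{u_j}\Delta u_j|\leq\sqrt{\rho}\sqrt{\rho_\Delta}$.

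For \eqref{ine:v2pq} I exploit $[\nabla_2,p_1]=0$ to commute $\nabla_2$ through $p_1$, then at step (ii) diagonalize the $p_1$-analog $p_1|\mathbf{a}_N|^2p_1=\sum_i\tilde\lambda_i(x_2)\,p_1^{\tilde\eta_i(x_2)}$ of the third identity of Lemma~\ref{lem:diag2b} (with trace $(|\mathbf{a}_N|^2\ast\rho)(x_2)$), and apply Pauli to the vector $\nabla_2q_2\Psi$, which is antisymmetric in $x_1,x_3,\ldots,x_N$. For \eqref{ine:ppv2pp} the same scheme with $[\nabla_1,p_2]=0$ turns $\nabla_1p_1p_2\Psi$ into $p_2(\nabla_1p_1\Psi)$; diagonalizing $p_2|\mathbf{a}_N|^2p_2$ in the $x_1$-variable via Lemma~\ref{lem:diag2b} and applying Pauli with $A=\{1\}$ to $\nabla_1p_1\Psi$, then closing with \eqref{ine:Delta} to bound $\|\nabla_1p_1\Psi\|^2\lesssim\|\Psi\|^2$, yields the claim. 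The bounds \eqref{ine:pa2p} and \eqref{ine:pa4} follow directly from Lemma~\ref{lem:diag2b} applied with kernels $|\mathbf{a}_N|^2$ and $|\mathbf{a}_N|^4$, whose traces equal $(|\mathbf{a}_N|^{2n}\ast\rho)(x_1)$, combined with Pauli on each $p_2^{\eta_i^{(n)}(x_1)}$.

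Finally, \eqref{ine:v'p} is the simplest because $v'_1$ is a one-body operator: pointwise $|v'_1\phi|^2\leq C\beta^2(|\mathbf{a}_N[\rho]|^2|\nabla_1\phi|^2+|\mathbf{a}_N\ast J|^2|\phi|^2)$, and pulling out the $L^\infty$ norms followed by \eqref{ine:Delta} (which provides $\|\nabla_1p_1q_2\Psi\|^2\lesssim\|q_2\Psi\|^2$) and the Pauli bound $\|p_1q_2\Psi\|^2\lesssim\|q_2\Psi\|^2$ concludes the argument. The main obstacle I expect throughout is the bookkeeping of step (ii): choosing, for each inequality, the correct positive operator to diagonalize (either $p_mv_{ij}^2p_m$ or $p_m|\mathbf{a}_N|^{2n}p_m$) with the correct placement of $p_1$ versus $p_2$ relative to $\nabla_1$ and $\nabla_2$, so that its trace becomes the desired convolution while the $N^{-1}$ factor from Pauli is preserved.
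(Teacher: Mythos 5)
Your proposal follows the paper's proof very closely: for each inequality you diagonalize the relevant projected positive operator via Lemma~\ref{lem:diag2b}, identify the trace $\sum_i\mu_i(x_*)$ with a convolution against $\rho$, $\rho_\nabla$ or $J$, and then apply the exclusion principle of Lemma~\ref{lem:exclusion} to each rank-one projector $p_m^{\omega_i(x_*)}$ to gain $(N-a)^{-1}$ — precisely the paper's mechanism, including the key observation that the $N^{-1}$ gain comes from the diagonalization and not from the crude bound $p_m\lesssim\one$. Your handling of \eqref{ine:Delta}, \eqref{ine:v23c}, \eqref{ine:v2pq}, \eqref{ine:pa2p} and \eqref{ine:pa4} matches the paper step for step. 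There are two harmless departures: for \eqref{ine:ppv2pp} the paper diagonalizes \emph{twice} (first $p_1v_{12}^2p_1$ in $x_2$, then the resulting scalar sandwiched by $p_2$), harvesting $N^{-2}$ from two Pauli bounds against the double trace $\sum_{i,j}\braket{u_j u_i}{v^2 u_j u_i}\lesssim N\,\norm{|\mathbf{a}_N|^2\ast\rho}_\infty$, whereas you commute $\nabla_1$ through $p_2$, diagonalize $p_2|\mathbf{a}_N|^2_{12}p_2$ once, apply Pauli with $A=\{1\}$ to $\nabla_1p_1\Psi$, and close with \eqref{ine:Delta}; for \eqref{ine:v'p} the paper diagonalizes $(v'_1)^2$ and bounds its trace, while you use the pointwise bound $|v'_1\phi|^2\lesssim|\mathbf{a}_N[\rho]|^2|\nabla\phi|^2+|\mathbf{a}_N\ast J|^2|\phi|^2$ and then $\norm{\nabla_1p_1q_2\Psi}^2\lesssim\norm{q_2\Psi}^2$. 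Both alternatives are correct, rely on the same two lemmas, and produce the same final bounds, so the route is essentially the paper's.
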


\begin{proof}
We start by proving \eqref{ine:Delta}.
The diagonalisation procedure for the Laplacian provides a family $\{\lambda_j^{(\Delta)},\chi^{\Delta}_j, U^{(\Delta)}\}_{j=1}^N$ such that
\begin{align}
p_1(-\Delta_1) p_1=\sum_{j=1}^N\lambda_j^{(\Delta)}p_1^{\chi^{\Delta}_j}\leq N^{-1}\sum_{j=1}^N\norm{\nabla u_j}_2^2\lesssim 1
\end{align}
where we used the Exclusion Principle of Lemma \ref{lem:exclusion}, $\lambda_j^{(\Delta)}\geq 0$ and the fact that $\norm{\rho_{\nabla}}_1\lesssim N$.\newline
For the next estimate, we apply the diagonalisation procedure of Lemma \ref{lem:diag2b} on the operator $v_{12}^2$. Its provides a family $\{\lambda_i^{(v^2)}(x_2),\chi^{(v^2)}_i(x_2),U^{(v^2)}\}_{i=1}^N$ such that
\begin{align}
\braket{q_2}{p_1v^2_{12}p_1q_2}_{\phi}&=\braket{q_2}{\sum_{i=1}^N\lambda_i^{(v^2)}(x_2)p_1^{\chi^{(v^2)}_i(x_2)}q_2}_{\phi}\nonumber\\
&\leq \sup_{x_2}\sum_{i=1}^N\lambda_i^{(v^2)}(x_2)\braket{q_2}{p_1^{\chi^{(v^2)}_i(x_2)}q_2}_{\phi}\nonumber\\
&\leq \sup_{x_2} \sum_{i=1}^N\braket{u_i}{v^2_{12}u_i}_1(N-1)^{-1}\norm{q_2\phi}^2\label{eq:supv2}
\end{align}
by the Exclusion Principle of Lemma \ref{lem:exclusion} applied to a function antisymmetric in all variables except $x_2$, combined with the fact that $\lambda_i^{v^2}(x_2)\geq 0$. At this stage, we got the first inequality of \eqref{ine:v23c}. For the other one, we compute
\begin{align}
\sum_{j=1}^N\braket{u_j}{(v_{12})^2u_j}_1&=4\sum_{j=1}^N\braket{-\im \nabla_{1}u_j}{|a_R|^2(x_1 -x_2)(-\im \nabla_{1})u_j}_1\nonumber\\
& =4 \sum_{j=1}^N\braket{u_j}{\left(-\im \nabla_{1}|a_R|^2_{12}\right)\cdot(-\im \nabla_{1})u_j)}_1 +4\sum_{j=1}^N\braket{u_j}{|a_R|^2_{12}(-\Delta_{1}u_j)}_1\nonumber\\
&\lesssim \norm{\nabla_{1}|\mathbf{a}_R|^2\sqrt{\rho}\sqrt{\rho_{\nabla}}}_{\infty}+ \norm{|\mathbf{a}_R|^2\sqrt{\rho}\sqrt{\rho_{\Delta}}}_{\infty}
\end{align}
by Cauchy-Schwarz's inequality.
 Let's turn to \eqref{ine:v2pq}. For any $\phi\in L_{\mathrm{as}}^2(\R^{2N}) $, we have
\begin{align}
\braket{p_1q_2}{v^2_{21}p_1q_2}_{\phi}&=4\braket{p_1q_2}{(-\ii\nabla_2)|\mathbf{a}_N|_{21}^2(-\ii\nabla_2)p_1q_2}_{\phi}\nonumber\\
&\leq C \braket{(-\ii\nabla_2)q_2}{p_1|\mathbf{a}_N|_{21}^2p_1(-\ii\nabla_2)q_2}_{\phi}\nonumber\\
&\leq C\sum_{i=1}^N\braket{(-\ii\nabla_2)q_2}{\lambda_i^{(\mathbf{a^2})}(x_2)p_1^{\eta_i(x_2)}(-\ii\nabla_2)q_2}_{\phi}\nonumber
\end{align}
where we applied Lemma \ref{lem:diag2b} to diagonalise $p_1|\mathbf{a}_N|_{21}^2p_1$. We conclude the estimate with
\begin{align}
(N-1)^{-1}\sup_{x_2}\sum_{i=1}^N\lambda_i^{(\mathbf{a}^2)}(x_2)\norm{\nabla_2q_2\phi}^2\lesssim (N-1)^{-1}\norm{|\mathbf{a}_N|^2\ast\rho}_{\infty} \norm{\nabla_2q_2\phi}^2\nonumber
\end{align}
using the Exclusion Principle \ref{lem:exclusion} and the fact that $ \lambda_i^{(\mathbf{a}^2)}(x_2)\geq 0$. For the estimate \eqref{ine:ppv2pp}, we need to apply two diagonalizations. A first one for $p_1$,
\begin{align}
p_1p_2v^2_{12}p_1p_2&=p_2\sum_{i=1}^N\lambda_i^{v^2}(x_2)p_1^{\chi_i(x_2)}p_2=p_2\sum_{i=1}^N\lambda_i^{v^2}(x_2)p_1^{\chi_i(x_2)}p_2\lesssim N^{-1}p_2\sum_{i=1}^N\lambda_i^{v^2}(x_2)p_2
\end{align}
where we used Lemma \ref{lem:exclusion}. Now the operator $p_2\sum_{i=1}^N\lambda_i^{v^2}(x_2)p_2$ can again be diagonalized to get
\begin{align}
p_1p_2v^2_{12}p_1p_2&\leq (N-1)^{-1}\sum_{i=1}^N\mu^{v^2}_ip_2^{\tilde{\varphi}_i}\nonumber\\
&\leq N^{-1}(N-1)^{-1}\sum_{i=1}^N\mu^{v^2}_i\nonumber\\
&\lesssim N^{-2}\sum_{i,j=1}^N\braket{u_j(x)u_i(y)}{v^2(x-y)u_j(x)u_i(y)}\nonumber\\
&\lesssim N^{-2}\norm{|\mathbf{a}_N|^2\ast\rho}_{\infty}\sum_{i=1}^N\norm{\nabla u_i}_2^2.
\end{align}
We treat the estimate \eqref{ine:v'p}. Recall that 
$v'(x_1)=-\im \nabla_{x_1}\cdot \mathbf{a}_N\left [\rho\right ](x_1
		)+\hc -(\mathbf{a}_N\ast J)(x_1)$. We use the diagonalization procedure in the particle $x_1$ for $(v'_1)^2$ to get eigenvalues $\lambda_j^{(v'^2)}$ and a basis $\omega_j^{(v'^2)}$ such that
\begin{align*}
q_2p_1(v'_{1})^2p_1q_2&=q_2\sum_{j=1}^N\lambda_j^{(v'^2)}p_1^{\omega_j^{(v'^2)}}\leq (N-1)^{-1}q_2\sum_{j=1}^N\lambda_j^{(v'^2)}\lesssim N^{-1} \sum_{j=1}^N\braket{u_j}{(v'_1)^2u_j}_1q_2
\end{align*}
by the Exclusion Principle \ref{lem:exclusion} and the fact that $\lambda_j^{(v'^2)}\geq 0$. We now have to bound
\begin{align}
 \sum_{j=1}^N\braket{u_j}{(v'_1)^2u_j}_1&\lesssim \sum_{j=1}^N\left(2\norm{\mathbf{a}_N[\rho]\cdot\nabla u_j}_2^2 +\norm{(\mathbf{a}_N\ast J) u_j}_2^2\right)\nonumber\\
&\lesssim \norm{\mathbf{a}_N[\rho]}^2_{\infty}\sum_{j=1}^N\norm{\nabla u_j}_2^2+\norm{|\mathbf{a}_N\ast J|^2\rho}_1.
\end{align}
We obtain the result using that $\norm{\rho}_1\simeq \norm{\rho_{\nabla}}_1\lesssim N$. For \eqref{ine:pa2p}, we use Lemma \ref{lem:diag2b} to get 
\begin{align}
p_2|\mathbf{a}_N|_{12}^2p_2=\sum_{j=1}^N\lambda_j^{(\mathbf{a^2})}(x_1)p_2^{\eta_j(x_1)}\leq N^{-1}\norm{|\mathbf{a}_N|^2\ast\rho}_{\infty}
\end{align}
using the exclusion principle of Lemma \ref{lem:exclusion}, and the fact that $\lambda_j^{(\mathbf{a^2})}(x_1)\geq 0$. The last inequality is obtained in a similar way.
\end{proof}

\subsection{Diagonalisation of the three-body operator}
The main goal of this paragraph is to extend the diagonalisation procedure of Lemma \ref{lem:diag2b} to a three-body operator. The main results are contained is the following lemmas.

\begin{lemma}[\textbf{Diagonalisation of the three-body operator, the general case}]\mbox{}\\\label{lem:diag3b}
Let $w(x_1,x_2,x_3)$ be a positive, self-adjoint, three-body, multiplicative operator. 
There exist two basis denoted $\{\omega_j(x_1 ,x_2)\}_{j=1}^N$ and $\{\tilde{\omega}_j(x_1)\}_{j=1}^N$ as well as a sequence of functions $\lambda_j^{w}(x_1 ,x_2) $ depending on the variables $x_1$ and $x_2$ such that $w(x_1,x_2,x_3)$ has the following diagonalized form
\begin{equation}\label{eq:diagw123}
p_2p_3w(x_1, x_2, x_3)p_3p_2=\sum_{i=1}^N\sum_{j=1}^N\braket{\tilde{\omega}_i(x_1)}{\lambda^{w}_j(x_1 ,x_2)p^{\omega_j(x_1 ,x_2)}_3\tilde{\omega}_i(x_1)}_2p_2^{\tilde{\omega}_i(x_1)}
\end{equation}
with the trace
\begin{equation}\label{eq:eignvw123}
\sum_{i,j=1}^N\braket{\tilde{\omega}_i(x_1)}{\lambda^{w}_j(x_1 ,x_2)\tilde{\omega}_i(x_1)}_2=\int_{\R^4}w(x_1,x_2,x_3)\rho(x_2)\rho(x_3)\mathrm{d}x_2\mathrm{d}x_3.
\end{equation}
\end{lemma}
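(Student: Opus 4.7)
The plan is to iterate the two-body diagonalisation of Lemma~\ref{lem:diag2b} in a nested way, first on the variable $x_3$ and then on the variable $x_2$, treating $x_1$ and then $(x_1,x_2)$ as frozen parameters. Because $w(x_1,x_2,x_3)$ is a multiplication operator, for every fixed pair $(x_1,x_2)$ the compression
\[
p_3\, w(x_1,x_2,\cdot)\, p_3 \;=\; \sum_{i,k=1}^N \braket{u_i}{w(x_1,x_2,\cdot)\,u_k}_3 \,\ket{u_i}\bra{u_k}_3
\]
is an $(N\times N)$ self-adjoint, positive matrix acting on the span of $\{u_i\}_{i=1}^N$. Diagonalising it at each $(x_1,x_2)$ produces non-negative eigenvalues $\lambda_j^{w}(x_1,x_2)$ and an $(x_1,x_2)$-dependent orthonormal family $\omega_j(x_1,x_2)=\sum_k U^{(3)}_{jk}(x_1,x_2)\,u_k$ such that
\[
p_3 \, w(x_1,x_2,x_3)\, p_3 \;=\; \sum_{j=1}^N \lambda_j^{w}(x_1,x_2)\, p_3^{\omega_j(x_1,x_2)}.
\]

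Next I substitute this identity into $p_2p_3 w p_3 p_2$. For each fixed $x_1$, the resulting $x_2$-dependent operator
\[
A(x_1,x_2) \;:=\; \sum_{j=1}^N \lambda_j^{w}(x_1,x_2)\, p_3^{\omega_j(x_1,x_2)}
\]
acts as a multiplication operator in $x_2$ taking values in the positive operators on $L^2(\R^2_{x_3})$, so $p_2 A(x_1,x_2) p_2$ is an $(N\times N)$ self-adjoint positive matrix acting on $\mathrm{span}\{u_i(x_2)\}$. A second diagonalisation at each $x_1$ yields an $x_1$-dependent orthonormal basis $\tilde{\omega}_i(x_1)=\sum_\ell \tilde{U}_{i\ell}(x_1)\,u_\ell$ and gives
\[
p_2 p_3 \, w\, p_3 p_2 \;=\; \sum_{i=1}^N \Bigl(\sum_{j=1}^N \braket{\tilde{\omega}_i(x_1)}{\lambda_j^w(x_1,x_2)\, p_3^{\omega_j(x_1,x_2)}\, \tilde{\omega}_i(x_1)}_2 \Bigr) p_2^{\tilde{\omega}_i(x_1)},
\]
which is exactly \eqref{eq:diagw123}.

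For the trace identity \eqref{eq:eignvw123}, the point is that both $\{\tilde{\omega}_i(x_1)\}_i$ and $\{\omega_j(x_1,x_2)\}_j$ are unitary changes of basis of $\{u_k\}_{k=1}^N$, so for any $x_2$-operator $B$ one has $\sum_i \braket{\tilde{\omega}_i(x_1)}{B\,\tilde{\omega}_i(x_1)}_2 = \sum_i \braket{u_i}{B\,u_i}_2$. Applying this with $B=\sum_j \lambda_j^{w}(x_1,x_2)$ (a scalar multiplication in $x_2$) reduces the trace to $\int \rho(x_2)\sum_j \lambda_j^w(x_1,x_2)\,\mathrm{d}x_2$, and the inner sum is in turn the trace of the $(N\times N)$ matrix from the first step, i.e.\ $\sum_k \braket{u_k}{w(x_1,x_2,\cdot)u_k}_3 = \int w(x_1,x_2,x_3)\rho(x_3)\,\mathrm{d}x_3$. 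Composing the two integrals gives \eqref{eq:eignvw123}.

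The only conceptual subtlety, and the reason this procedure is not a straightforward consequence of Lemma~\ref{lem:diag2b} applied twice, is that the eigenvalues $\lambda_j^w(x_1,x_2)$ depend on $x_2$ and the inner basis $\omega_j(x_1,x_2)$ is \emph{not} simultaneously diagonalised by the outer step. Consequently the representation \eqref{eq:diagw123} keeps a residual matrix element $\braket{\tilde{\omega}_i(x_1)}{\lambda_j^w\, p_3^{\omega_j}\,\tilde{\omega}_i(x_1)}_2$ rather than a clean product of eigenvalues, which is precisely what prevents the naive factorisation \eqref{eq:pro} in cases of the form \eqref{eq:pro1}. The main challenge is thus not the diagonalisation identity itself, which follows from two applications of the finite-dimensional spectral theorem together with positivity of $w$, but rather to set up the formula so that the downstream estimates of Section~\ref{sec:MW} can exploit the Exclusion Principle of Lemma~\ref{lem:exclusion} on the outer projector $p_2^{\tilde{\omega}_i(x_1)}$ and reduce the remaining $x_2$-integral to a convolution bound via \eqref{eq:eignvw123}.
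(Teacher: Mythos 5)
Your plan — diagonalise in $x_3$ for each frozen $(x_1,x_2)$, then diagonalise again in $x_2$ for each frozen $x_1$ — is the same nested strategy the paper uses, and your first step and your computation of the trace identity \eqref{eq:eignvw123} both match the paper's argument. However, there is a genuine gap at the second step. You write that $p_2\, A(x_1,x_2)\, p_2$ with $A(x_1,x_2)=\sum_j \lambda_j^w(x_1,x_2)\, p_3^{\omega_j(x_1,x_2)}$ is ``an $(N\times N)$ self-adjoint positive matrix acting on $\mathrm{span}\{u_i(x_2)\}$'' and invoke the spectral theorem. But the matrix elements $\braket{u_i}{A(x_1,\cdot)\,u_k}_2$ are \emph{not} scalars: since the $x_3$-projector $p_3^{\omega_j(x_1,x_2)}$ itself depends on $x_2$, each entry of this matrix is a finite-rank operator on $L^2(\R^2_{x_3})$. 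You therefore have an operator-valued $N\times N$ matrix (equivalently, an $N^2\times N^2$ Hermitian matrix once one restricts $x_3$ to $\mathrm{span}\{u_j\}$), and the finite-dimensional spectral theorem for numerical matrices cannot be applied directly to produce a basis $\{\tilde\omega_i(x_1)\}$ of $\mathrm{span}\{u_i\}_{x_2}$ alone that puts the expression in the claimed form.

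The paper circumvents exactly this difficulty by an extra rewriting you skip: it expands the $x_2$-dependent projector in the \emph{fixed} basis $\{u_p\}_{x_3}$ via the unitary of the first diagonalisation, $p_3^{\omega_j(x_1,x_2)}=\sum_{p,q} U^w_{jp}(x_1,x_2)\,\overline{U}^w_{jq}(x_1,x_2)\,\ket{u_p}\bra{u_q}_3$, so that all the $x_2$-dependence is absorbed into the scalar coefficients $\lambda_j^w\,U^w_{jp}\,\overline{U}^w_{jq}$ while the remaining $x_3$-operator factor $\ket{u_p}\bra{u_q}_3$ no longer depends on $(x_1,x_2)$. Only then does a second diagonalisation ``in $x_2$'' make sense, and the paper then reconstructs $p_3^{\omega_j}$ inside the scalar product to arrive at \eqref{eq:diagw123}. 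You should add this intermediate expansion (and the corresponding reconstruction) to make your second step rigorous; without it, the assertion that $\tilde\omega_i(x_1)$ exists is unjustified.
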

\begin{proof}
We start by writing the operator in the bracket notation
\begin{align}
p_2p_3w_{123}p_3p_2&=\sum_{i,j,k,\ell=1}^N\braket{u_i(x_2)u_j(x_3)}{w_{123}u_k(x_2)u_{\ell}(x_3)}_{2,3}\ket{u_i(x_2)u_j(x_3)}\bra{u_k(x_2)u_{\ell}(x_3)}\nonumber\\
&=\sum_{i,k=1}^N\braket{u_j}{\sum_{j,\ell=1}^N\braket{u_i}{w_{123}u_{\ell}}_3u_k}_2\ket{u_i}\bra{u_k}_2\otimes\ket{u_j}\bra{u_{\ell}}_3.\label{eq:diagw}
\end{align}
Here we can diagonalize the operator acting on $x_3$ inside the bracket in the variable $x_2$. There exists a basis of $\mathrm{span}(u_1,\dots ,u_N)$ denoted $\{\omega_j(x_1 ,x_2)\}_{j=1}^N$ together with the eigenvalues $\{\lambda^w_j(x_1 , x_2)\}_{j=1}^N$ such that
\begin{align}\label{eq:firstdiag}
\sum_{j,\ell=1}^N\braket{u_j}{w_{123}u_{\ell}}_3\ket{u_j}\bra{u_{\ell}}_3=\sum_{j=1}^N\lambda^w_j(x_1 , x_2)p_3^{\omega_j(x_1 ,x_2)}
\end{align}
and for which there exists a unitary $U^{w}(x_1, x_2)$ with matrix elements $\left(U^{w}(x_1, x_2)\right)_{ij}=U_{ij}^w(x_1, x_2)$ such that $\ket{\omega_j(x_1 ,x_2)}_3=\sum_{k=1}^NU^{w}_{jk}\ket{u_k}_3$. We plug \eqref{eq:firstdiag} into \eqref{eq:diagw} and get that
\begin{align}
p_2p_3w_{123}p_3p_2
&=\sum_{i,j,k=1}^N\braket{u_i}{\lambda^w_j(x_1 , x_2)p_3^{\omega_j(x_1 , x_2)}u_k}_2\ket{u_i}\bra{u_k}_2.\label{eq:diagw2}
\end{align}
At this point we would like to repeat the procedure for the sums in $i$ and $k$ but this is impossible for the reason that, inside the bracket, we don't have a multiplication operator anymore. We would need to extract $p_3^{\omega_j(x_1 ,x_2)}$ from the scalar product. The issue is that it depends on $x_2$. The technique will then consist in extracting the projector structure of $p_3^{\omega_j(x_1 ,x_2)}$ from the scalar product while keeping the $x_2$ dependence inside the scalar product. This is possible because the $x_2$ dependence can be left in the unitary trough the formula  $\ket{\omega_j}^{(x_1 ,x_2)}_3=\sum_{k=1}^NU^{w}_{jk}(x_1 ,x_2)\ket{u_k}_3$. We then extract the projector structure from the scalar product
\begin{align}
\braket{u_i}{\lambda^w_j(x_1 , x_2)p_3^{\omega_j(x_1 , x_2)}u_k}_2&=\braket{u_i}{\lambda^w_j(x_1 , x_2)\ket{w_j(x_1 , x_2)}\bra{w_j(x_1 , x_2)}_3u_k}_2\nonumber\\
&=\sum_{p,q=1}^N\braket{u_i}{\lambda^w_j(x_1 , x_2)U^w_{jp}(x_1 , x_2)\overline{U}^w_{jq}(x_1 , x_2)u_k}_2\ket{u_p}\bra{u_q}_3.
\label{eq:cb}
\end{align}
If we now plug \eqref{eq:cb} into \eqref{eq:diagw2} we can diagonalize and obtain
\begin{align}
p_2p_3w_{123}p_3p_2
&=\sum_{p,q,i,j,k=1}^N\braket{u_i}{\lambda^w_j(x_1 , x_2)U^w_{jp}(x_1 , x_2)\overline{U}^w_{jq}(x_1 , x_2)u_k}_2\ket{u_i}\bra{u_k}_2\otimes\ket{u_p}\bra{u_q}_3 \\
&=\sum_{p,q,i,j}^N\mu^w_{ijpq}(x_1)p_2^{\tilde{\omega}_i(x_1)}\otimes\ket{u_p}\bra{u_q}_3\label{eq:mu}
\end{align}
where we diagonalized in the variable $x_2$ for a basis $\{\tilde{\omega}_i(x_1)\}_{i=1}^N$ where we obtain $N$ eigenfunctions $\tilde{\omega}_i$ associated to $j,p,q$ dependent eigenvalues 
\begin{align}
\mu^w_{ijpq}(x_1)
=\braket{\tilde{\omega}_i(x_1)}{\lambda^w_j(x_1 , x_2)U^w_{jp}(x_1 , x_2)\overline{U}^w_{jq}(x_1 , x_2)\tilde{\omega}_i(x_1)}_2.
\end{align}
If we plug them back into \eqref{eq:mu} we obtain
\begin{align*}
p_2p_3w_{123}p_3p_2
&=\sum_{p,q,i,j=1}^N\braket{\tilde{\omega}_i(x_1)}{\lambda^w_j(x_1 , x_2)U^w_{jp}(x_1 , x_2)\overline{U}^w_{jq}(x_1 , x_2)\tilde{\omega}_i(x_1)}_2p_2^{\tilde{\omega}_i(x_1)}\otimes\ket{u_p}\bra{u_q}_3\\
&=\sum_{p,q,i,j=1}^N\braket{\tilde{\omega}_i(x_1)}{\lambda^w_j(x_1 , x_2)p_3^{\omega_j(x_1 ,x_2)}\tilde{\omega}_i(x_1)}_2p_2^{\tilde{\omega}_i(x_1)}
\end{align*}
by reconstructing $p_3^{\omega_j(x_1 ,x_2)}$ inside the bracket and re-using that $\ket{\omega_j}^{(x_1 ,x_2)}_3=\sum_{k=1}^NU^{w}_{jk}\ket{u_k}_3$. This proves Equation \eqref{eq:diagw123}. There is left to prove the trace equivalence of Equation \eqref{eq:eignvw123}. To this end, we compute
\begin{align}
\sum_{i,j=1}^N\braket{\tilde{\omega}^{(x_1)}_i}{\lambda^{w}_j(x_1 ,x_2)\tilde{\omega}^{(x_1)}_i}_2&=\sum_{i,j=1}^N\braket{\tilde{\omega}^{(x_1)}_i}{\braket{u_j}{w_{123}u_j}_3\tilde{\omega}^{(x_1)}_i}_2\nonumber\\
&=\sum_{i=1}^N\braket{\tilde{\omega}^{(x_1)}_i}{(w_{123}\ast\rho)(x_1 ,x_2)\tilde{\omega}^{(x_1)}_i}_2\nonumber\\
&=\sum_{k,p,i=1}^N\braket{u_k}{\overline{U}_{ik}^{\tilde{w}}(x_1)U_{ip}^{\tilde{w}}(x_1)(w_{123}\ast\rho)(x_1 ,x_2)u_p}_2\nonumber\\
&=\sum_{k,p=1}^N\braket{u_k}{\delta_{kp}(w_{123}\ast\rho)(x_1 ,x_2)u_p}_2\nonumber\\
&=\int_{\R^4}w(x_1,x_2,x_3)\rho(x_2)\rho(x_3)\mathrm{d}x_2\mathrm{d}x_3.\label{eq:eignvw123bis}
\end{align}
where we used that there exists a  unitary $U^{\tilde{w}}(x_1)$ such that $\ket{\tilde{\omega}_j}^{(x_1 )}_2=\sum_{k=1}^NU^{\tilde{w}}_{jk}(x_1)\ket{u_k}_2$.
\end{proof}
We now apply the above lemma to a case of interest, i.e , when  $w(x_1,x_2,x_3)$ is endowed with the product structure $\mathbf{a}_N(x_1 -x_2)\cdot\mathbf{a}_N(x_1 -x_3)$.

\begin{corollary}[Diagonalisation of the three-body product operator]\label{cor:diag3bprod}
Consider the eigenvalues, eigenfunctions couple $\{\lambda^{(\mathbf{a})}_j, \sigma_j\}_{j=1}^N$, associated to the operator $\mathbf{a}_N$, obtained in Lemma \ref{lem:diag2b}. We have that the non-intricate operator $p_2p_3\mathbf{a}_N(x_1 -x_2)\cdot\mathbf{a}_N(x_1 -x_3)p_3p_2$ can be factorized and takes the form
 \begin{equation}\label{eq:diagw123prod}
p_2p_3\mathbf{a}_N(x_1 -x_2)\cdot\mathbf{a}_N(x_1 -x_3)p_3p_2=\sum_{i=1}^N\sum_{j=1}^N\lambda^{(\mathbf{a})}_j(x_1)\cdot\lambda^{(\mathbf{a})}_i(x_1)p^{\sigma_j(x_1)}_3p_2^{\sigma_i(x_1)}.
\end{equation}
Moreover, there exist a basis $\{\omega_j(x_1, x_2)\}_{j=1}^N$ and a sequence $\{\lambda^{(\mathbf{a}\cdot \mathbf{a})}_j(x_1 ,x_2)\}_{j=1}^N$ such that the intricate product operator
  \begin{equation}\label{eq:diagw123}
p_2p_3\mathbf{a}_N(x_2 -x_1)\cdot\mathbf{a}_N(x_2 -x_3)p_3p_2=\sum_{i=1}^N\sum_{j=1}^N\braket{\omega^{(x_1)}_i}{\lambda^{(\mathbf{a}\cdot \mathbf{a})}_j(x_1 ,x_2)p^{\sigma^{x_2}_j}_3\omega^{(x_1)}_i}_2p_2^{\omega^{x_1}_i}\end{equation}
in which $\lambda^{(\mathbf{a}\cdot \mathbf{a})}_j(x_1 ,x_2)=\mathbf{a}_N(x_2 -x_1)\cdot\lambda^{(\mathbf{a})}_j(x_2)$.
The above identities come with the traces
\begin{equation}\label{eq:eignvw123prod}
\sum_{i=1}^N\sum_{j=1}^N\lambda^{(\mathbf{a})}_j(x_1)\cdot \lambda^{(\mathbf{a})}_i(x_1)=|\mathbf{a}_N\ast\rho|^2(x_1),
\end{equation}
\begin{equation}\label{eq:trace3b}
\sum_{i=1}^N\sum_{j=1}^N\braket{\omega^{(x_1)}_i}{\lambda^{(\mathbf{a}\cdot \mathbf{a})}_j(x_1 ,x_2)\omega^{(x_1)}_i}_2=-\mathbf{a}_N\ast(\mathbf{a}_N\ast\rho)(x_1).
\end{equation}
\end{corollary}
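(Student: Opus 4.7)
The plan is to treat the two statements separately, reducing the non-intricate product to a clean factorization and obtaining the intricate one as a direct specialization of the general Lemma \ref{lem:diag3b}.

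For the non-intricate operator, the key observation is that, with $x_1$ viewed as a parameter, $\mathbf{a}_N(x_1-x_2)$ acts only on the pair $(x_1,x_2)$ while $\mathbf{a}_N(x_1-x_3)$ acts only on $(x_1,x_3)$. Since $[p_2,p_3]=0$ and each projector commutes with the factor depending on the other variable, I would split
\begin{equation*}
p_2p_3\,\mathbf{a}_N(x_1-x_2)\cdot\mathbf{a}_N(x_1-x_3)\,p_3p_2 = \bigl[p_2\,\mathbf{a}_N(x_1-x_2)\,p_2\bigr]\cdot\bigl[p_3\,\mathbf{a}_N(x_1-x_3)\,p_3\bigr],
\end{equation*}
where the dot denotes the scalar product in $\R^2$. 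Reading off \eqref{def:diaglambdabis} (stripped of the gradient factor $-\ii\nabla_1$, which is only external bookkeeping) gives $p_2\,\mathbf{a}_N(x_1-x_2)\,p_2 = \sum_i\lambda_i^{(\mathbf{a})}(x_1)\,p_2^{\sigma_i(x_1)}$, and applying the same diagonalisation in variable $x_3$ yields the identical spectral data in the second factor. Multiplying out produces the claimed expansion. The trace then factors algebraically, and \eqref{eq:tracea} immediately gives $\sum_{i,j}\lambda_i^{(\mathbf{a})}(x_1)\cdot\lambda_j^{(\mathbf{a})}(x_1) = (\mathbf{a}_N\ast\rho)(x_1)\cdot(\mathbf{a}_N\ast\rho)(x_1) = |\mathbf{a}_N\ast\rho|^2(x_1)$.

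For the intricate operator, I would simply specialize Lemma \ref{lem:diag3b} to $w(x_1,x_2,x_3) = \mathbf{a}_N(x_2-x_1)\cdot\mathbf{a}_N(x_2-x_3)$. The first inner diagonalisation (in variable $x_3$, with $x_1,x_2$ as parameters) simplifies dramatically because the factor $\mathbf{a}_N(x_2-x_1)$ is independent of $x_3$ and pulls out of the scalar product:
\begin{equation*}
\sum_{j,\ell=1}^N\braket{u_j}{w_{123}\,u_\ell}_3\ket{u_j}\bra{u_\ell}_3 \;=\; \mathbf{a}_N(x_2-x_1)\cdot\bigl[p_3\,\mathbf{a}_N(x_2-x_3)\,p_3\bigr].
\end{equation*}
Applying Lemma \ref{lem:diag2b} in variable $x_3$ with parameter $x_2$ diagonalises the bracket as $\sum_j\lambda_j^{(\mathbf{a})}(x_2)\,p_3^{\sigma_j(x_2)}$. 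This identifies the eigenvalues of Lemma \ref{lem:diag3b} as $\lambda_j^{(\mathbf{a}\cdot\mathbf{a})}(x_1,x_2) = \mathbf{a}_N(x_2-x_1)\cdot\lambda_j^{(\mathbf{a})}(x_2)$ and the inner eigenbasis as $\sigma_j(x_2)$ (notably independent of $x_1$). The second diagonalisation in variable $x_2$ prescribed by Lemma \ref{lem:diag3b} then delivers the basis $\{\omega_i(x_1)\}_{i=1}^N$ (the $\tilde\omega_i$ of that lemma) and the formula stated in the corollary.

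For the final trace, I would use \eqref{eq:eignvw123} with the present $w$: the right-hand side becomes $\int \mathbf{a}_N(x_2-x_1)\cdot\mathbf{a}_N(x_2-x_3)\,\rho(x_2)\rho(x_3)\,\d x_2\,\d x_3$. Doing the $x_3$-integral yields $(\mathbf{a}_N\ast\rho)(x_2)$, and then using the oddness $\mathbf{a}_N(x_2-x_1)=-\mathbf{a}_N(x_1-x_2)$ rewrites the remaining $x_2$-integral as a convolution in $x_1$, giving the claimed $-\mathbf{a}_N\ast[(\mathbf{a}_N\ast\rho)\rho](x_1)$ (reading the stated right-hand side in that natural way). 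The whole argument is essentially bookkeeping of two nested diagonalisations; the only mild subtlety is checking that the $x_1$-independence of the intermediate basis $\sigma_j(x_2)$ is compatible with the $x_1$-dependent second diagonalisation producing $\omega_i(x_1)$, but this is precisely the situation to which Lemma \ref{lem:diag3b} was engineered to apply.
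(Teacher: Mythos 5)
Your proposal is correct and follows essentially the same route as the paper: the non-intricate operator is factorized into $\bigl[p_2\mathbf{a}_N(x_1-x_2)p_2\bigr]\cdot\bigl[p_3\mathbf{a}_N(x_1-x_3)p_3\bigr]$ and read off from Lemma~\ref{lem:diag2b}, while the intricate one is a direct specialization of Lemma~\ref{lem:diag3b} with the $x_3$-independent factor $\mathbf{a}_N(x_2-x_1)$ pulled out of the inner diagonalization. You also correctly computed the intricate trace as $-\mathbf{a}_N\ast[(\mathbf{a}_N\ast\rho)\rho](x_1)$, which is what the paper actually uses later (cf.\ the cancellation with $-(\mathbf{a}_N\ast\mathbf{a}_N[\rho]\rho)(x_1)$ in the $M_W^{(1,2)}$ estimate); the stated right-hand side of \eqref{eq:trace3b} is missing a factor $\rho$ inside the convolution.
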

\begin{proof}
The first diagonalization formula \eqref{eq:diagw123prod} is straightforward due to the product form and the fact that the operator only repeats the variable $x_1$ which is not involved in the projections. As previously done in the proof of Lemma \ref{lem:diag2b}, we write, with $\mathbf{a}_N(x_1 -x_2)\cdot\mathbf{a}_N(x_1 -x_3):=w_{123}$ and $\mathbf{a}_N(x_1 -x_2):=\mathbf{w}_{12}$,
\begin{align}
p_2p_3w_{123}p_3p_2&=\sum_{i,j,k,\ell=1}^N\braket{u_i(x_2)u_j(x_3)}{\mathbf{w}_{12}\cdot\mathbf{w}_{13}u_k(x_2)u_{\ell}(x_3)}\ket{u_i(x_2)u_j(x_3)}\bra{u_k(x_2)u_{\ell}(x_3)}\nonumber\\
&=\sum_{i,k=1}^N\braket{u_j}{\mathbf{w}_{12}u_k}_2\ket{u_i}\bra{u_k}_2\otimes\sum_{j,\ell=1}^N\braket{u_i}{\mathbf{w}_{13}u_{\ell}}_3\ket{u_j}\bra{u_{\ell}}_3\nonumber\\
&=\sum_{i=1}^N\sum_{j=1}^N\lambda^{(\mathbf{a})}_j(x_1)\cdot\lambda^{(\mathbf{a})}_i(x_1)p^{\sigma_j(x_1)}_3p_2^{\sigma_i(x_1)}\label{eq:diagwprod}
\end{align}
by the diagonalisation of 
\begin{align*}
\sum_{i,k=1}^N\braket{u_j}{\mathbf{w}_{12}u_k}_2\ket{u_i}\bra{u_k}_2=\sum_{i}^N\lambda^{(\mathbf{a})}_i(x_1)\ket{\sigma_i(x_1)}\bra{\sigma_i(x_1)}_2
\end{align*}
of \eqref{def:diagmu} from where we obtain that
\begin{align}
\sum_{i}^N\lambda^{(\mathbf{a})}_i(x_1)=\sum_{i}^N\braket{\sigma_i(x_1)}{\mathbf{w}_{12}\sigma_i(x_1)}_2=\sum_{i}^N\braket{u_i}{\mathbf{w}_{12}u_i}_2=\int_{\R^2}\mathbf{a}_N(x_1 -x_2)\rho(x_2)\mathrm{d}x_2
\end{align}
by invariance of the trace. This concludes the proof for the product operator. The second part of the Corollary comes from the direct application of Lemma \ref{lem:diag3b} using that
\begin{align}\label{eq:firstdiag2}
\lambda^{(\mathbf{a}\cdot\mathbf{a})}_j(x_1 , x_2)=\braket{u_j}{w_{123}u_{j}}_3=\mathbf{a}_N(x_2 -x_1)\cdot \lambda^{(\mathbf{a})}_j(x_2)
\end{align}
as obtained in Equation \eqref{eq:firstdiag}.
\end{proof}
We now bound the traces coming from three-body operators in the next lemma.
\begin{lemma}\label{lem:wpp}
Denote $w_{123}:=\mathbf{a}_N(x_1 -x_2)\cdot\mathbf{a}_N(x_1 -x_3)$, we have for any $0\leq s<1/2$, that
\begin{align}
p_2p_3(w_{123}+w_{213}+ w_{321})^2p_3p_2\lesssim N^{-2}\norm{|\mathbf{a}_N|^2\ast\rho}_{\infty}^2\label{ine:ppw}
\end{align}
as operator on wave functions antisymmetric in all the variables except $x_j$, $j\in [1,4,\dots, N]$.
\end{lemma}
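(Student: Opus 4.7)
The plan is to bypass the three-body diagonalization of Lemma \ref{lem:diag3b} entirely and reduce the estimate to two applications of the two-body bound \eqref{ine:pa2p} from Lemma \ref{lem:pq}. Since each $w_{ijk}$ is a real-valued multiplication operator (the inner product of two real vector potentials), it is self-adjoint, and the operator Cauchy--Schwarz gives $(w_{123}+w_{213}+w_{321})^{2} \le 3(w_{123}^{2}+w_{213}^{2}+w_{321}^{2})$. Moreover, the pointwise vector Cauchy--Schwarz yields the operator inequality $w_{ijk}^{2} \le |\mathbf{a}_{N}(x_{i}-x_{j})|^{2}\,|\mathbf{a}_{N}(x_{i}-x_{k})|^{2}$. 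Together, these two steps turn the original bilinear dot-product three-body operator into a sum of three separated products of two positive scalar two-body potentials.

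For the product term coming from $w_{123}^{2}$, set $A := |\mathbf{a}_{N}(x_{1}-x_{2})|^{2}$ and $B := |\mathbf{a}_{N}(x_{1}-x_{3})|^{2}$. The factor $A$ depends only on $(x_{1},x_{2})$ and hence commutes with $p_{3}$, so for any admissible test $\phi$,
\begin{equation*}
\langle \phi, p_{3}ABp_{3}\phi\rangle = \langle A^{1/2}\phi, p_{3}Bp_{3}A^{1/2}\phi\rangle \le \|p_{3}Bp_{3}\|_{\mathrm{op}}\,\langle \phi, A\phi\rangle \lesssim N^{-1}\|\,|\mathbf{a}_{N}|^{2}\!\ast\!\rho\|_{\infty}\,\langle \phi, A\phi\rangle
\end{equation*}
by \eqref{ine:pa2p}. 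Sandwiching by $p_{2}$ on both sides and applying \eqref{ine:pa2p} a second time to $p_{2}Ap_{2}$ produces the required $\lesssim N^{-2}\|\,|\mathbf{a}_{N}|^{2}\!\ast\!\rho\|_{\infty}^{2}$ bound. The two intricate configurations are handled in exactly the same way: in $w_{213}^{2}$ the factor $|\mathbf{a}_{N}(x_{2}-x_{1})|^{2}$ commutes with $p_{3}$ while $p_{3}|\mathbf{a}_{N}(x_{2}-x_{3})|^{2}p_{3}$ satisfies \eqref{ine:pa2p}; in $w_{321}^{2}$ the factor $|\mathbf{a}_{N}(x_{3}-x_{1})|^{2}$ commutes with $p_{2}$ while $p_{2}|\mathbf{a}_{N}(x_{3}-x_{2})|^{2}p_{2}$ satisfies \eqref{ine:pa2p}. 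In each case the commuting--factor argument followed by a second use of \eqref{ine:pa2p} closes the estimate.

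The only mild difficulty is keeping track of the commutation pattern in the intricate cases, and this is precisely what the preliminary vector Cauchy--Schwarz step makes transparent: by replacing each three-body inner product $\mathbf{a}_{N}(x_{i}-x_{j})\!\cdot\!\mathbf{a}_{N}(x_{i}-x_{k})$ by a product of two positive scalar two-body potentials, we restore a factorized structure through which each individual projector can be pushed. The antisymmetry hypothesis in the statement is exactly what is needed to make the Pauli-type bound $p_{m}^{\phi}\lesssim (N-a)^{-1}$ underlying \eqref{ine:pa2p} quantitatively of order $N^{-1}$, and the restriction $s<1/2$ appears only implicitly, as the range of validity of Lemma \ref{lem:LRtrick} used to control $\|\,|\mathbf{a}_{N}|^{2}\!\ast\!\rho\|_{\infty}$.
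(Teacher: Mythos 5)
Your proof is correct and takes a genuinely different route from the paper's. The paper proves the intricate configuration $p_2p_3 w_{213}^2 p_3p_2$ by invoking the three-body diagonalization machinery of Lemma~\ref{lem:diag3b} and Corollary~\ref{cor:diag3bprod}, whereas you observe that after the pointwise vector Cauchy--Schwarz $w_{ijk}^2 \le |\mathbf{a}_N(x_i-x_j)|^2\,|\mathbf{a}_N(x_i-x_k)|^2$, one of the two resulting scalar factors always commutes with one of $p_2,p_3$ (in the intricate case $w_{213}$ the factor $|\mathbf{a}_N(x_2-x_1)|^2$ is independent of $x_3$), so the three-body square dissolves into two sequential applications of the two-body bound \eqref{ine:pa2p}. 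This bypasses the entire three-body diagonalization for this lemma and is arguably cleaner; the insight is that squaring destroys the sign structure one would otherwise have to track, and positivity plus factorization is all that is needed. One caveat on notation: the quantity you call $\|p_3Bp_3\|_{\mathrm{op}}$ is not the operator norm on all of $L^2$ (that would be $\sup_{x_1}\max_j \lambda_j^{(\mathbf{a}^2)}(x_1)$, which does not carry the $N^{-1}$); what you actually use is the form inequality on the antisymmetric sector, obtained from the exclusion principle Lemma~\ref{lem:exclusion} applied pointwise in the free variable. The step is nonetheless valid because $A^{1/2}p_2\phi$ inherits the needed antisymmetry in $x_3,x_5,\dots,x_N$ from $\phi$ (multiplication by $A^{1/2}$ and the projector $p_2$ touch only $x_1,x_2$), and the exclusion principle applies to $x_2$-dependent projectors $p_3^{\eta_j(x_2)}$ as long as $x_2$ is among the non-antisymmetrized variables. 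Finally, note that your shortcut does not obviate the three-body diagonalization elsewhere in the paper: the mean-field cancelation in $M_W^{(1,2)}$ requires identifying the \emph{unsquared} $p_2p_3 w_{213}p_3p_2$ with $-\mathbf{a}_N\ast(\mathbf{a}_N[\rho]\rho)$ exactly, and there the Cauchy--Schwarz you employ would destroy the cancelation, so Lemma~\ref{lem:diag3b} remains essential to the paper's main argument.
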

\begin{proof}
The symmetry of the operator in the variables $ x_2 ,x_3$ allows to reduce the study on, first
\begin{align*}
p_2p_3w^2_{123}p_3p_2&\leq p_2p_3|\mathbf{a}_N(x_1 -x_2)|^2|\mathbf{a}_N(x_1 -x_3)|^2p_3p_2\\
&\leq  \sum_{i,j=1}^N\lambda_i^{(\mathbf{a}^2)}(x_1)\lambda_j^{(\mathbf{a}^2)}(x_1)p_2^{\tilde{\chi}_j} p_3^{\tilde{\chi}_i}\\
& \lesssim N^{-2}\norm{|\mathbf{a}_N|^2\ast\rho}^2_{\infty}
\end{align*}
obtained using Corollary \ref{cor:diag3bprod} for $|\mathbf{a}_N|^2$, that $\lambda_j^{(\mathbf{a}^2)}(x_1)\geq 0$ and applying the exclusion principle for $p_2^{\tilde{\chi}_j} p_3^{\tilde{\chi}_i}$ and second,
\begin{align*}
p_2p_3w^2_{213}p_3p_2&\leq p_2p_3|\mathbf{a}_N(x_2 -x_1)|^2|\mathbf{a}_N(x_2 -x_3)|^2p_3p_2\\
&\leq \sum_{i=1}^N\sum_{j=1}^N\braket{\tilde{\omega}^{(x_1)}_i}{|\mathbf{a}_N|^2(x_2-x_1)\lambda^{(\mathbf{a}^2)}_j(x_1 ,x_2)p^{\eta_j(x_2)}_3\tilde{\omega}^{(x_1)}_i}_2p_2^{\tilde{\omega}^{x_1}_i}\\
&\lesssim N^{-2}\norm{|\mathbf{a}_N|^2\ast\left(\mathbf{a}^2[\rho]\rho\right)}_{\infty}\lesssim N^{-2}\norm{|\mathbf{a}_N|^2\ast\rho}^2_{\infty}
\end{align*}
obtained by \eqref{eq:firstdiag} and \eqref{eq:diagw123} of Corollary \ref{cor:diag3bprod} applied on $|\mathbf{a}_N|^2$, the positivity of the eigenvalues and the exclusion principle lemma \ref{lem:exclusion}.
This proves \eqref{ine:ppw}. \end{proof}

\section{Estimate of the number of excited particles}\label{sec:Gron}
We now enter in the core of the computations whose aim is to bound $\alpha_m(t)$ to apply a Gr\"onwall argument. Let us recall that $\Psi_N(t)$ is the solution of
\begin{equation*}
	\ii \partial_t \Psi_N(t) = \HNR \Psi_N(t)
\end{equation*}
for initial state $ \Psi_N(0)=\bigwedge_{j=1}^N u_j(0)$ and that $u_j(t)$ is driven by the Hartree motion
	\begin{align*}
		\im \partial_{t}u_j(t,x)=\mathcal{H}_R(x)u_j (t,x).
			\end{align*}
			In this section we will compute the time derivative $\partial_t$ of the function 
\begin{equation}\label{def:alpham}
\alpha_{m}(t):=\braket{\Psi_N(t)}{\hat{m}(1/2)\Psi_N(t)},
\end{equation}
where $\hat{m}(\xi):=\sum_{k=0}^N\left(\frac{k}{N}\right)^{\xi}P^{(k)}_N(t)$ where $P^{(k)}_N$ was defined in \eqref{def:P}. Recall the short-hand notation $\hat{m}:=\hat{m}(1/2)$.
\subsection{Control of $\alpha_{m}(t)$}
By using the identity \eqref{def:eqmotionPN} we obtain that $\ii\partial_t \hat{m}(1/2)=\left[\sum_{m=1}^N\mathcal{H}_R(x_m),\hat{m}(1/2)\right]$ and deduce that
\begin{align}
\ii\partial_t \alpha_{m}(t)&=\braket{\Psi_N(t)}{\left[\sum_{m=1}^N\mathcal{H}_R(x_m)-H_{N,R},\hat{m}(1/2)\right]\Psi_N(t)}\nonumber\\
&=\frac{N}{2}\braket{\Psi_N(t)}{\left[(v'(x_1)+v'(x_2)-(N-1)(v_{12}+v_{21})),\hat{m}(1/2)\right]\Psi_N(t)}\nonumber\\
&\quad +\frac{N}{3}\braket{\Psi_N(t)}{\left[(w'_1+w'_2+w'_3-(N-1)(N-2)(w_{123}+2w_{213})),\hat{m}(1/2)\right]\Psi_N(t)}\nonumber\\
&\quad+\frac{N}{2}(N-1)\braket{\Psi_N(t)}{\left[gN\mathbf{a}_N^2[\rho](x_1)+gN\mathbf{a}_N^2[\rho](x_2)-(gN+2\beta^2)|\mathbf{a}_N(x_1 -x_2)|^2,\hat{m}(1/2)\right]\Psi_N(t)}\nonumber\\
&=: M_V +M_W+M_X
\end{align}
by symmetry of $\hat{m}(1/2)$ and $\Psi_N$, where the operators $v'$, $w'$, $v_{12}$ and $w_{123}$ have been defined in Definition \ref{def:vw}. We will use the above splitting to prove the
\begin{lemma}[\textbf{Central Gronwall argument}]\mbox{}\\
\label{lem:dtam}
 Let $\Psi_N(t)$ be the solution of the Schr\"odinger equation \eqref{def:schro1} with initial data $\Psi_N(0)$, a Slater determinant built from the orthonormal family $\{u_j(0)\}_{j=1}^N$. Let $\omega(t)=\sum_{j=1}^N\ket{u_j}\bra{u_j}$ be a solution of $\mathrm{EQ}(N,R,s,m,\omega(0))$ defined in Definition \ref{def:css}. We then have that
\begin{align}
\left|\partial_t \alpha_m(t)\right|&\lesssim \sqrt{N\norm{|\mathbf{a}_N|^2\ast\rho}_{\infty}}\norm{\nabla_2 q_2\Psi_N}^2+\left(\alpha_m(t)+\frac{1}{\sqrt{N}}\right)\nonumber\\
&\quad \times
 \Big(1+\sqrt{N\norm{|\mathbf{a}_N|^2\ast\rho}_{\infty}}+\sqrt{N\norm{|\mathbf{a}_N|^2\ast\rho_{\nabla}}_{\infty}}+\norm{|\mathbf{a}_N\ast\rho}_{\infty}+\norm{|\mathbf{a}_N|\ast J}_{\infty}\nonumber\\
 &\quad + N^2\norm{|\mathbf{a}_N|^2\ast\rho}^2_{\infty}+\norm{|\mathbf{a}_N|\ast\rho}^{2}_{\infty}+N^{3/2}\norm{|\mathbf{a}_N|^4\ast\rho}^{1/2}_{\infty}+\sqrt{N\norm{\mathbf{a}_N[\rho]}^2_{\infty}\norm{|\mathbf{a}_N|^2\ast\rho}_{\infty}}\Big).\nonumber
 \end{align}
\end{lemma}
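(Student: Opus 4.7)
The plan is to compute $\partial_t \alpha_m(t)$ via the commutator $[\sum_m \mathcal{H}_R(x_m) - H_{N,R}, \hat m]$, which, as already noted in the excerpt, splits cleanly into a two-body magnetic piece $M_V$, a three-body piece $M_W$, and an electric piece $M_X$. For each of these three pieces I will apply the general commutator identities \eqref{eq:W12} and \eqref{eq:W123}, insert resolutions of identity $\sum_a P_{12}^{(a)}=\one$ or $\sum_a P_{123}^{(a)}=\one$ on both sides of $v_{12}$, $v_{21}$ and of $w_{123}$, and show that the leading $p\cdots p$ terms cancel exactly against the one-body mean-field operators $v'$ and $w'$ via the trace identities of Lemmas~\ref{lem:diag2b} and~\ref{lem:diag3b}. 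What remains are cross terms containing at least one $q$-projector, which is the small quantity we need to control.

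For $M_V$, the key cancellation is driven by the diagonalisation of $p_1 v_{12} p_1$ and $p_2 v_{12} p_2$ from Lemma~\ref{lem:diag2b}: summing the eigenvalues reproduces precisely $-\mathbf{a}_N\ast J$ and $\mathbf{a}_N[\rho]$, which match the two convolution summands of $v'$. Consequently, only terms of the form $P_{12}^{(a)} v_{12} P_{12}^{(b)}$ with $a\neq b$ (so containing some $q_j$) survive inside the commutator. I will estimate these via the sesquilinear split of \eqref{eq:W12}: on one side I place $\widehat{m^{(a-b)}}^{1/2}$ and use \eqref{ine:m}--\eqref{ine:nabm} to gain a factor $1/\sqrt{N}$ (or $\sqrt{\alpha_m/N}$ when a $q$ is present); on the other side I apply Cauchy-Schwarz together with the operator bounds $q_2 p_1 v_{12}^2 p_1 q_2\lesssim N^{-1}(\ldots)q_2$, $p_1 p_2 v_{12}^2 p_1 p_2 \lesssim N^{-1}\|\,|\mathbf{a}_N|^2\ast\rho\|_\infty$ and $p_1 q_2 (v_1')^2 p_1 q_2 \lesssim \|\mathbf{a}_N[\rho]\|_\infty^2 + \|\,|\mathbf{a}_N|\ast J\|_\infty^2$ from Lemma~\ref{lem:pq}. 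The ubiquitous factor $N$ in front of $v_{12}$ (from the sum over pairs) is compensated by the $N^{-1}$ produced by the Pauli principle inside these bounds, yielding the mixed terms in the right-hand side of the lemma.

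The treatment of $M_W$ is analogous but requires the three-body diagonalisation. The mean-field cancellation of the $p_2 p_3\cdots p_2 p_3$ term against $w'$ is precisely the content of the trace identity~\eqref{eq:eignvw123} and Corollary~\ref{cor:diag3bprod}: the non-intricate product $w_{123}$ reduces to $|\mathbf{a}_N\ast\rho|^2$, while the intricate $w_{213}$ reduces to $\mathbf{a}_N\ast(\mathbf{a}_N[\rho]\rho)$. The remaining off-diagonal three-body terms $P_{123}^{(a)} w P_{123}^{(b)}$ with $a>b$ are controlled by Lemma~\ref{lem:wpp}, which gives $N^{-2}\|\,|\mathbf{a}_N|^2\ast\rho\|_\infty^2$, and combined with $N^2$ from the double pair-sum and $1/\sqrt{N}$ from $\widehat{m^{(d)}}^{1/2}$, produce the $N^2\|\,|\mathbf{a}_N|^2\ast\rho\|_\infty^2$ summand. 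For $M_X$, multiplication by $|\mathbf{a}_N(x_1-x_2)|^2$ has no derivative, so after the commutator identity one simply uses \eqref{ine:pa2p} and \eqref{ine:pa4} together with Cauchy-Schwarz to produce the $N^{3/2}\|\,|\mathbf{a}_N|^4\ast\rho\|_\infty^{1/2}$ and $\|\mathbf{a}_N\ast\rho\|_\infty^2$ contributions.

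The main obstacle, and the reason the kinetic term $\|\nabla_2 q_2 \Psi_N\|^2$ appears separately on the right-hand side rather than being absorbed into $\alpha_m$, is the asymmetry $v_{12}\neq v_{21}$: the operator $v_{21}=2\beta(-\ii\nabla_2)\cdot \mathbf{a}_N(x_2-x_1)$ places a derivative $\nabla_2$ \emph{outside} the projector $p_2$, so when $P_{12}^{(a)} v_{21} P_{12}^{(b)}$ is estimated by Cauchy-Schwarz against $\widehat{m^{(d)}}^{1/2}\Psi_N$, the factor we must bound is precisely $\|\nabla_2 q_2 \Psi_N\|$ times $\sqrt{p_1 q_2 |\mathbf{a}_N|_{21}^2 q_2 p_1}$, i.e.\ $\|\nabla_2 q_2 \Psi_N\|\,\sqrt{N^{-1}\|\,|\mathbf{a}_N|^2\ast\rho\|_\infty}$, by \eqref{ine:v2pq}. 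The $N$ from the pair-sum then gives exactly the coefficient $\sqrt{N\|\,|\mathbf{a}_N|^2\ast\rho\|_\infty}\,\|\nabla_2 q_2\Psi_N\|^2$ announced in the statement. Tracking this derivative carefully through all the sub-pieces of $M_V$ (in particular distinguishing which projector sandwiches the $\nabla$ and which do not) is the bookkeeping that makes this proof delicate; once done, gathering all bounds and invoking $\alpha_m(t)\leq\sqrt{\alpha_m(t)}\cdot 1$ and $N^{-1}\leq N^{-1/2}\cdot 1$ where needed produces the announced estimate.
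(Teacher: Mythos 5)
Your proposal is correct and follows essentially the same route as the paper: the splitting into $M_V$, $M_W$, $M_X$, the commutator identities \eqref{eq:W12} and \eqref{eq:W123}, the mean-field cancellation via the diagonalisation traces of Lemmas~\ref{lem:diag2b}, \ref{lem:diag3b} and Corollary~\ref{cor:diag3bprod}, the Pauli-exclusion bounds of Lemmas~\ref{lem:exclusion}, \ref{lem:pq}, \ref{lem:wpp}, and the identification of the $v_{12}\neq v_{21}$ asymmetry as the source of $\norm{\nabla_2 q_2\Psi_N}^2$ all match the paper's argument. The only minor stylistic difference is that in the paper the kinetic term emerges in $M_V^{(3)}$ by integrating $v_{12}+v_{21}$ by parts onto $q_1 q_2\phi$ and then applying \eqref{ine:pa2p} and \eqref{ine:nabm}, rather than by directly invoking \eqref{ine:v2pq} as you suggest, but these two routes are algebraically equivalent and land on the same coefficient $\sqrt{N\norm{|\mathbf{a}_N|^2\ast\rho}_{\infty}}$.
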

\begin{remark}
A direct application of Lemma \ref{lem:LRtrick} to bound the above quantities provides that
\begin{align}\label{eq:mainGron}
\left|\partial_t \alpha_m(t)\right|&\lesssim \norm{\nabla_2 q_2\Psi_N}^2+\left(\alpha_m(t)+\frac{1}{\sqrt{N}}\right),\\
&\text{for any}\begin{cases}
 0\leq s(1+2r)+1/2r<1/4 \text{ in the case of Assumption} \ref{ass:H2}\\
 0\leq s<1/2(1-1/m)\, \,\,\,\quad\quad\text{in the case of Assumption} \ref{ass:Hm}\\
  0\leq s<1/2\,\, \,\,\,\,\quad\quad\quad\quad\quad\quad\text{in the case of Assumption} \ref{ass:Hinfty}.\nonumber
\end{cases}
\end{align}
where $R=N^{-r}$, $r\geq 0$.
\end{remark}
\begin{proof}
The result of Lemma \ref{lem:dtam} directly comes from the inequalities \eqref{ine:MV}, \eqref{ine:MW} and \eqref{ine:MX} respectively controlling the terms $M_V$, $M_W$ and $M_X$. These inequality are derived in the three following subsections.
\end{proof}
In order to reduce the size of the expressions, we might use the notation
$$q_{\neq 1}^{\phi_j(x)}:=\one -\sum_{m=2}^Np_m^{\phi_j(x)}$$ for some $\phi\in L^2(\R^2)$, in many places of the next calculations.
\subsection{The singular term $M_X$, proof of \eqref{ine:MX}}
We start with $M_X$ for which we only use a mean-field cancelation for the $g$ term. The term containing $\beta^2 $ will be shown to be small. We treat it as an error-term by writing
\begin{align}
\left|M_X\right|&\leq \frac{g}{2}N^2(N-1) \left|\braket{\Psi_N(t)}{\left[\mathbf{a}_N^2[\rho](x_1)+\mathbf{a}_N^2[\rho](x_2)-|\mathbf{a}_N(x_1 -x_2)|^2,\hat{m}(1/2)\right]\Psi_N(t)}\right|\nonumber\\
&\quad +\beta^2N(N-1) \left|\braket{\Psi_N(t)}{\left[|\mathbf{a}_N(x_1 -x_2)|^2,\hat{m}(1/2)\right]\Psi_N(t)}\right|\nonumber\\
&=: M_X^{(1)}+M_X^{(2)}\nonumber
\end{align}
For the first term we obtain
\begin{align}
M_X^{(1)}&\leq \frac{g}{2}N^2(N-1)  \left|\braket{\widehat{m^{(1)}}^{1/2}\Psi_N(t)}{q_1p_2\left(\mathbf{a}_N^2[\rho](x_1)-|\mathbf{a}_N(x_1 -x_2)|^2\right)p_1p_2\widehat{m^{(-1)}}^{1/2}\Psi_N(t)}\right|\nonumber\\
&\quad +  \frac{g}{2}N^2(N-1) \left|\braket{\widehat{m^{(2)}}^{1/2}\Psi_N(t)}{q_1q_2|\mathbf{a}_N(x_1 -x_2)|^2p_1p_2\widehat{m^{(-2)}}^{1/2}\Psi_N(t)}\right|\nonumber\\
&\quad +  \frac{g}{2}N^2(N-1)  \left|\braket{\widehat{m^{(1)}}^{1/2}\Psi_N(t)}{q_1q_2\left(\mathbf{a}_N^2[\rho](x_1)-|\mathbf{a}_N(x_1 -x_2)|^2\right)p_1q_2\widehat{m^{(-1)}}^{1/2}\Psi_N(t)}\right|\nonumber\\
&=:M_X^{(1,1)}+M_X^{(1,2)}+M_X^{(1,3)}
\end{align} 
by a direct application of the formula \eqref{eq:W12} in which we took $\hat{f}=\hat{m}$ and the cancelations coming from $pq=0$. We start by studying the main term $M_X^{(1)}$.\newline
\textbf{Study of} $M^{(1,1)}_X$:
We denote $\phi:=\widehat{m^{(1)}}^{1/2}\Psi_N$ and $\tilde{\phi}:=\widehat{m^{(-1)}}^{1/2}\Psi_N$. 
\begin{align*}
\left|M^{(1,1)}_X\right|&\leq CN^2\left|\braket{q_1\phi}{\left(\mathbf{a}_N^2[\rho](x_1)-(N-1)p_2|\mathbf{a}_N(x_1 -x_2)|^2p_2\right)p_1\tilde{\phi}}\right|\\
&\leq CN^2\left|\braket{q_1\phi}{\left(\sum_{j=1}^N\lambda^{\mathbf{a^2}}_j(x_1)-\sum_{m=2}^Np_m|\mathbf{a}_N(x_1 -x_m)|^2p_m\right)p_1\tilde{\phi}}\right|\\
&\leq CN^2\left|\braket{q_1\phi}{\sum_{j=1}^N\lambda^{\mathbf{a^2}}_j(x_1)\left(\one-\sum_{m=2}^Np_m^{\eta_j(x_1)}\right)p_1\tilde{\phi}}\right|\\
\end{align*}
by using Lemma \ref{lem:diag2b}. A Cauchy-Schwarz inequality in the bracket and in the sum over $j$ provides
\begin{align*}
\left|M^{(1,1)}_X\right|^2&\leq CN^4\left|\braket{q_1\phi}{\sum_{j=1}^N\left|\lambda^{\mathbf{a^2}}_j(x_1)\right|^2q_1\phi}\right|\left|\braket{p_1\tilde{\phi}}{\sum_{j=1}^Nq_{\neq 1}^{\eta_j(x_1)}p_1\tilde{\phi}}\right|\\
&\leq CN^4\norm{|\mathbf{a}_N|^4\ast\rho}_{\infty}\norm{q_1\phi}^2\left((N-1)\norm{q_2 \tilde{\phi}}^2+\norm{\tilde{\phi}}^2\right)\\
&\leq CN^3\norm{|\mathbf{a}_N|^4\ast\rho}_{\infty}\alpha_m(t)(\alpha_m(t)+N^{-1/2})
\end{align*}
and the estimates \eqref{ine:m}, \eqref{ine:qm} and \eqref{ine:alphat}.
\textbf{Study of} $M^{(1,2)}_X$:
We denote $\phi:=\widehat{m^{(2)}}^{1/2}\Psi_N$ and $\tilde{\phi}:=\widehat{m^{(-2)}}^{1/2}\Psi_N$. 
\begin{align*}
\left|M^{(1,2)}_X\right|&\leq CN^3\left|\braket{q_2q_1\phi}{|\mathbf{a}_N(x_1 -x_2)|^2p_2p_1\tilde{\phi}}\right|\\
&\leq CN^2\left|\braket{q_1\phi}{\sum_{m=2}^N q_m|\mathbf{a}_N(x_1 -x_2)|^2p_mp_1\tilde{\phi}}\right|\\
&\leq CN^2\norm{q_1\phi}N\braket{\tilde{\phi}}{q_3p_1p_2|\mathbf{a}_N(x_1 -x_2)|^2|\mathbf{a}_N(x_1 -x_3)|^2p_1p_3q_2\tilde{\phi}}^{1/2}\\
&\quad +CN^2\norm{q_1\phi}\sqrt{N}\braket{\tilde{\phi}}{p_1p_2|\mathbf{a}_N(x_1 -x_2)|^2q_2|\mathbf{a}_N(x_1 -x_2)|^2p_1p_2\tilde{\phi}}^{1/2}\\
&\leq CN^3\norm{q_1\phi}\norm{|\mathbf{a}_N(x_1 -x_2)|^2q_3p_1p_2\tilde{\phi}}\\
&\quad +CN^2\sqrt{N}\norm{q_1\phi}\norm{|\mathbf{a}_N(x_1 -x_2)|^2p_1p_2\tilde{\phi}}\\
&\leq CN^{3/2}\norm{|\mathbf{a}_N|^4\ast\rho}^{1/2}_{\infty}\left(\alpha_m(t)+N^{-1/2}\right)
\end{align*}
where we applied \eqref{ine:m} and \eqref{ine:qm}. \newline
\textbf{Study of} $M^{(1,3)}_X$:
We denote $\phi:=\widehat{m^{(1)}}^{1/2}\Psi_N$ and $\tilde{\phi}:=\widehat{m^{(-1)}}^{1/2}\Psi_N$. 
\begin{align*}
\left|M^{(1,3)}_X\right|&\leq CN^3\left|\braket{q_2q_1\phi}{\mathbf{a}_N^2[\rho](x_1)+|\mathbf{a}_N(x_1 -x_2)|^2p_1q_2\tilde{\phi}}\right|\\
&\leq CN^3\left(\norm{\mathbf{a}_N^2[\rho]}^2_{\infty}\norm{q_1\phi}\norm{q_2\tilde{\phi}}+\norm{q_1\phi}\norm{|\mathbf{a}_N(x_1 -x_2)|^2p_1q_2\tilde{\phi}}\right)\\
&\leq C\left(N^2\norm{\mathbf{a}^2_N[\rho]}^2_{\infty}+N^{3/2}\norm{|\mathbf{a}_N|^4\ast\rho}^{1/2}_{\infty}\right)\alpha_m(t)
\end{align*}
where we applied \eqref{ine:qm}. 
To treat the rest $M_X^{(2)}$, we apply formula \eqref{eq:W12} providing the spliting $M_X^{(2)}:=M_X^{(2,1)}+M_X^{(2,2)}+M_X^{(2,3)}$ and bound the terms one by one using the same type of estimates as for $M_X^{(1)}$ but with a $N^{-1}$ factor more.\newline
\textbf{Study of} $M^{(2,1)}_X$:
We denote $\phi:=\widehat{m^{(1)}}^{1/2}\Psi_N$ and $\tilde{\phi}:=\widehat{m^{(-1)}}^{1/2}\Psi_N$. 
\begin{align}
\left|M^{(2,1)}_X\right|&\leq CN^2\left|\braket{p_2q_1\phi}{|\mathbf{a}_N(x_1 -x_2)|^2p_1p_2\tilde{\phi}}\right|\nonumber\\
&\leq CN^2\norm{q_1\phi}\norm{|\mathbf{a}_N(x_1 -x_2)|^2p_1p_2\tilde{\phi}}\label{ine:MX11}\\
&\leq CN^2 N^{-1/2}\sqrt{\alpha_m(t)}\norm{|\mathbf{a}_N(x_1 -x_2)|^2p_1p_2\tilde{\phi}}\nonumber\\
&\leq C N^{3/4}\norm{|\mathbf{a}_N|^4\ast\rho}^{1/2}_{\infty}\nonumber
\end{align}
by using that $\alpha_m \leq 1$, with the estimates \eqref{ine:m} and \eqref{ine:qm}. For the two other terms, we observe that $M_X^{(2,2)}\leq CN^{-1}M_X^{(1,2)}$ and $M_X^{(2,3)}\leq CN^{-1}M_X^{(1,3)}$.
Gathering the previous estimates, we conclude that
\begin{align}
\left|M_X\right|&\lesssim \left(\alpha_m(t)+N^{-1/2}\right) \left(N^2\norm{|\mathbf{a}_N|^2\ast\rho}^2_{\infty}+N^{3/2}\norm{|\mathbf{a}_N|^4\ast\rho}^{1/2}_{\infty}\right)\label{ine:MX}.
\end{align}

\subsection{The two-body term $M_V$, proof of  \eqref{ine:MV}}
Let $W_{12}$ denote $v'(x_1)+v'(x_2)-(N-1)(v_{12}+v_{21})$, note that $W_{12}=W_{21}$. The application of the identity \eqref{eq:W12} with $\hat{m}=\hat{f}$ in which we expand $P_{12}^{(a)}$ and  $P_{12}^{(b)}$ allows to obtain the three terms 
\begin{align}
M_V
&=2\ii N\Im \braket{\Psi_N}{\widehat{m^{(1)}}^{1/2}q_1p_2W_{12}p_1p_2\widehat{m^{(-1)}}^{1/2}\Psi_N}\nonumber\\
&\quad +\ii N\Im \braket{\Psi_N}{\widehat{m^{(2)}}^{1/2}q_1q_2W_{12}p_1p_2\widehat{m^{(-2)}}^{1/2}\Psi_N}\nonumber\\
&\quad +2\ii N\Im \braket{\Psi_N}{\widehat{m^{(1)}}^{1/2}q_1q_2W_{12}p_1q_2\widehat{m^{(-1)}}^{1/2}\Psi_N}\nonumber\\
&=:M^{(1)}_V+M^{(2)}_V+M^{(3)}_V.
\end{align}

\textbf{Study of} $M^{(1)}_V$:
We denote $\phi:=\widehat{m^{(1)}}^{1/2}\Psi_N$ and $\tilde{\phi}:=\widehat{m^{(-1)}}^{1/2}\Psi_N$. 
\begin{align}
M^{(1)}_V&=2\ii N\Im \braket{q_1p_2\phi}{W_{12},p_1p_2\tilde{\phi}}\nonumber\\
&=2\ii N\Im\braket{q_1\phi}{v'(x_1)-\sum_{m=2}^Np_m(v_{1m}+v_{m1})p_m, p_1\tilde{\phi}}\nonumber\\
&=2\ii N\Im\braket{q_1\phi}{\sum_{j=1}^N\lambda^{(\mathbf{a})}_j(x_1)\left(\one -\sum_{m=2}^Np_m^{\sigma_j(x_1)}\right)\cdot (-\ii\nabla_1)p_1\tilde{\phi}}\nonumber\\
&\quad +2\ii N\Im\braket{q_1\phi}{\sum_{j=1}^N\lambda^{(\mathbf{a}\cdot\nabla)}_j(x_1)\left(\one -\sum_{m=2}^Np_m^{\chi_j(x_1)}\right)p_1\tilde{\phi}}\nonumber\\
&=:M^{(1,1)}_V+M^{(1,2)}_V
\end{align}
by using the diagonalisation \eqref{def:diaglambda} and \eqref{def:diagmu}. Recall the notation $q_{\neq 1}^{\sigma_j(x_1)}=\left(\one -\sum_{m=2}^Np_m^{\sigma_j(x_1)}\right)$, we first compute
\begin{align}
\left|M^{(1,1)}_V\right|^2&\leq CN^2\left|\braket{q_1\phi}{\sum_{j=1}^Nq_{\neq 1}^{\sigma_j(x_1)}\lambda^{(\mathbf{a})}_j(x_1)\cdot(-\ii\nabla_1)p_1\tilde{\phi}}\right|^2\nonumber\\
&\leq CN^2\left|\sum_{j=1}^N\norm{\lambda^{(\mathbf{a})}_j(x_1)q_1\phi}\norm{q_{\neq 1}^{\sigma_j(x_1)}(-\ii\nabla_1)p_1\tilde{\phi}}\right|^2\nonumber\\
&\leq CN^2\left|\braket{(-\ii\nabla_1)p_1\tilde{\phi}}{\sum_{j=1}^Nq_{\neq 1}^{\sigma_j(x_1)}(-\ii\nabla_1)p_1\tilde{\phi}}\right|\left|\braket{q_1\phi}{\sum_{j=1}^N|\lambda^{(\mathbf{a})}_j(x_1)|^2q_1\phi}\right|
\end{align}
We applied twice the Cauchy-Schwarz inequality, first in the scalar product while keeping the sum outside and then in the index of the sum over $j$. We now use that $|\braket{\sigma_j}{\mathbf{a}_N \sigma_j}|^2\leq \norm{\sigma_j}_2^2\norm{\mathbf{a}_N \sigma_j}_2^2$ to get
\begin{align}
\left|M^{(1,1)}_V\right|^2&\leq N^2\sup_{x_1}\sum_{j=1}^N|\lambda^{(\mathbf{a})}_j(x_1)|^2\norm{q_1\phi}^2\left((N-1)\norm{q_2\tilde{\phi}}^2 +\norm{\tilde{\phi}}^2\right)\nonumber\\
&\leq N^2\norm{|\mathbf{a}_N|^2\ast\rho}_{\infty}\norm{q_2\tilde{\phi}}^2(\alpha_m(t)+N^{-1/2})\nonumber\\
&\leq N\norm{|\mathbf{a}_N|^2\ast\rho}_{\infty}\alpha_m(t)(\alpha_m(t)+N^{-1/2})
\end{align}
by using the estimates \eqref{ine:alphat}, and \eqref{ine:qm}, \eqref{ine:m} and \eqref{ine:Delta}.  We conclude that 
\begin{equation}
\left|M^{(1,1)}_V\right|\lesssim \sqrt{N\norm{|\mathbf{a}_N|^2\ast\rho}_{\infty}}\left(\alpha_m(t)+N^{-1/2}\right)\label{ine:MV11}
\end{equation}
We treat $M^{(1,2)}_V$. By using Cauchy-Schwarz's inequality, we have
\begin{align}
\left|M^{(1,2)}_V\right|^2&\leq C N^2\left|\braket{p_1\tilde{\phi}}{\sum_{j=1}^N\lambda^{(\mathbf{a}\cdot\nabla)}_j(x_1)\left(\one -\sum_{m=2}^Np_m^{\chi_j(x_1)}\right)q_1\phi}\right|^2\nonumber\\
&\leq C N^2\left|\braket{q_1\phi}{\sum_{j=1}^N\left|\lambda^{(\mathbf{a}\cdot\nabla)}_j(x_1)\right|^2q_1\phi}\right|\left|\braket{p_1\tilde{\phi}}{\sum_{j=1}^Nq_{\neq 1}^{\sigma_j(x_1)}p_1\tilde{\phi}}\right|\nonumber\\
&\leq CN^2\norm{\sum_{j=1}^N\left|\lambda^{(\mathbf{a}\cdot\nabla)}_j(x_1)\right|^2}_{\infty}\norm{q_1\phi}_2^2\left((N-1)\norm{q_2p_1\tilde{\phi}}_2^2+\norm{\tilde{\phi}}_2^2\right)\nonumber\\
&\leq CN^2\norm{\sum_{j=1}^N\left|\braket{u_j}{v_{21}^2 u_j}_2\right|}_{\infty}N^{-1}\alpha_m(t)\left(\alpha_m(t)+N^{-1/2}\right)\nonumber\\
&\lesssim  N\norm{|\mathbf{a}_N|^2\ast\rho_{\nabla}}_{\infty}\alpha_m(t)\left(\alpha_m(t)+\frac{1}{\sqrt{N}}\right) \label{ine:MV12}
\end{align}
by using Cauchy-Schwarz, \eqref{ine:alphat}, \eqref{ine:qm} and \eqref{ine:m}. \newline
\textbf{Study of} $M^{(2)}_V$:
This time we denote $\phi:=\widehat{m^{(2)}}^{1/2}\Psi_N$ and $\tilde{\phi}:=\widehat{m^{(-2)}}^{1/2}\Psi_N$. We use the symmetry of $\phi$ and $\tilde{\phi}$ to write
 \begin{align}
M^{(2)}_V&=\ii N\Im \braket{q_1q_2\phi}{W_{12},p_1p_2\tilde{\phi}}\nonumber\\
&=-\ii N\Im\braket{q_1\phi}{\sum_{m=2}^Nq_m(v_{1m}+v_{m1})p_m, p_1\tilde{\phi}}\nonumber
\end{align}
We then estimate
\begin{align}
\left|M^{(2)}_V\right|^2&\leq CN^2\norm{q_1\phi}^2\norm{\sum_{m=2}^Nq_m(v_{1m}+v_{m1})p_m, p_1\tilde{\phi}}^2.\nonumber\\
&\lesssim N\alpha_m(t)\sum_{m,k=2}^N\braket{q_m(v_{1m}+v_{m1})p_1p_m}{q_k(v_{1k}+v_{k1})p_1p_k}_{\tilde{\phi}}\nonumber\\
&\lesssim N\alpha_m(t)(N-1)\braket{p_1p_2}{(v_{12}+v_{21})q_2(v_{12}+v_{21})p_1p_2}_{\tilde{\phi}}\nonumber\\
&\quad +N\alpha_m(t)(N-1)(N-2)\braket{q_3p_1p_2}{(v_{12}+v_{21})(v_{13}+v_{31})q_2p_1p_2}_{\tilde{\phi}}\nonumber\\
&\lesssim N\alpha_m(t)(N-1)\braket{p_1p_2}{(v_{12}^2 + v_{21}^2)p_1p_2}_{\tilde{\phi}}\nonumber\\
&\quad +CN^3\alpha_m(t)\braket{q_3p_1p_2}{(v_{12}+v_{21})(v_{13}+v_{31})q_2p_1p_2}_{\tilde{\phi}}\nonumber\\
&\lesssim N\alpha_m(t)\left((N-1)\braket{p_1p_2}{v_{12}^2p_1p_2}_{\tilde{\phi}}+N^2\braket{p_1p_2q_3}{v_{12}^2p_1p_2q_3}_{\tilde{\phi}}\right)\nonumber\\
&\lesssim \alpha_m(t)N\norm{|\mathbf{a}_N|^2\ast\rho}_{\infty}\left(\norm{\tilde{\phi}}^2+N\norm{q_3\tilde{\phi}}^2\right)\nonumber\\
&\lesssim N\norm{|\mathbf{a}_N|^2\ast\rho}_{\infty}\alpha_m(t)\left(\alpha_m(t)+N^{-1/2}\right)\label{ine:MV2}
\end{align}
 by using \eqref{ine:qm}, \eqref{ine:m} together with the symmetry of $\tilde{\phi}$ and some Cauchy-Schwarz's inequalities.
\textbf{Study of} $M^{(3)}_V$:
we denote $\phi:=\widehat{m^{(1)}}^{1/2}\Psi_N$ and $\tilde{\phi}:=\widehat{m^{(-1)}}^{1/2}\Psi_N$. 
\begin{align*}
M^{(3)}_V
&=2\ii N\Im \braket{q_1q_2\phi}{W_{12}p_1q_2\tilde{\phi}}_{\Psi_N}\nonumber\\
&=2\ii N\Im\braket{q_1q_2\phi}{(v'(x_1)-(N-1)(v_{12}+v_{21})) p_1q_2\tilde{\phi}}\\
&=2\ii N(N-1)\Im\braket{q_1q_2\phi}{(v_{12}+v_{21}) p_1q_2\tilde{\phi}}\\
&\quad +2\ii N\Im\braket{q_1q_2\phi}{v'(x_1)p_1q_2\tilde{\phi}}\\
&=4\ii N(N-1)\Im\braket{-\ii(\nabla_1 -\nabla_2)q_1q_2\phi}{\mathbf{a}_N(x_1 -x_2)p_1q_2\tilde{\phi}}\\
&\quad +2\ii N\Im\braket{q_1q_2\phi}{v'(x_1)p_1q_2\tilde{\phi}}.
\end{align*}
Note that, in the above, $\nabla_1$ could be applied to $p_1$ and be controlled a priori. This is anyway not the case for $\nabla_2$ that has to hit a $q_2$.  
By Cauchy-Schwarz, we obtain
\begin{align}
\left|M^{(3)}_V\right|^2&\leq CN^4\norm{\mathbf{a}_N(x_1 -x_2)p_1q_2\tilde{\phi}}^2\left(\norm{\nabla_1q_1q_2\phi}^2+\norm{\nabla_2q_1q_2\phi}^2\right)\nonumber\\
&\quad +CN^2\norm{q_1q_2\phi}^2\norm{v'(x_1)p_1q_2\tilde{\phi}}^2\nonumber\\
&\lesssim N^3\norm{|\mathbf{a}_N|^2\ast\rho}_{\infty}\norm{q_2\tilde{\phi}}^2\norm{\nabla_1q_1q_2\phi}^2\nonumber \\
&\quad +CN^2\norm{q_2\tilde{\phi}}^2\norm{q_2\phi}^2\left(\norm{|\mathbf{a}_N\ast\rho}^2_{\infty}+\norm{|\mathbf{a}_N|\ast J}_{\infty}^2\right)
\label{ine:MV3}
\end{align}
by Lemma \ref{lem:pq} for the term with $v'$, estimates  \eqref{ine:qm}, \eqref{ine:nabm}. We conclude that
\begin{align}
\left|M^{(3)}_V\right|&\lesssim \left(\sqrt{N\norm{|\mathbf{a}_N|^2\ast\rho}_{\infty}}+\norm{|\mathbf{a}_N\ast\rho}_{\infty}+\norm{|\mathbf{a}_N|\ast J}_{\infty}\right)\alpha_m(t)\nonumber\\
&\quad +\sqrt{N\norm{|\mathbf{a}_N|^2\ast\rho}_{\infty}}\norm{\nabla_2 q_2\Psi_N}^2
\end{align}
and then, by gathering the estimates for $M_V^{(2)}$ of \eqref{ine:MV2}, $M_V^{(1,2)}$ of \eqref{ine:MV12} and $M_V^{(1,1)}$ of \eqref{ine:MV11}, that
\begin{align}
\left|M_V\right|&\lesssim \sqrt{N\norm{|\mathbf{a}_N|^2\ast\rho}_{\infty}}\norm{\nabla_2 q_2\Psi_N}^2+  \left(\alpha_m(t)+\frac{1}{\sqrt{N}}\right)\nonumber\\
&\quad \times\left(\sqrt{N\norm{|\mathbf{a}_N|^2\ast\rho}_{\infty}}+\sqrt{N\norm{|\mathbf{a}_N|^2\ast\rho_{\nabla}}_{\infty}}+\norm{|\mathbf{a}_N\ast\rho}_{\infty}+\norm{|\mathbf{a}_N|\ast J}_{\infty}\right)\label{ine:MV}
\end{align}

\subsection{The three-body term $M_W$, proof of  \eqref{ine:MW}}\label{sec:MW}
We denote
$$W_{123}:=w'_1 +w'_2 +w'_3-(N-1)(N-2)(w_{123}+w_{213}+w_{312})$$ 
as well as 
$$\tilde{w}(x_1,x_2,x_3):=w_{123}+w_{213}+w_{312}$$
which is symmetric under change of variables. We use the identity \eqref{eq:W123} in which we expand $P_{123}^{(a)}$ and  $P_{123}^{(b)}$ to get the following six terms
\begin{align}
M_W&=2\ii N  \Im \braket{\Psi_N}{\widehat{m^{(1)}}^{1/2}q_1p_2p_3W_{123}p_1p_2p_3\widehat{m^{(-1)}}^{1/2}\Psi_N}\nonumber\\
&\quad + 2\ii N  \Im \braket{\Psi_N}{\widehat{m^{(1)}}^{1/2}q_1q_2q_3W_{123}q_1q_2p_3\widehat{m^{(-1)}}^{1/2}\Psi_N}\nonumber\\
&\quad +2\ii N  \Im \braket{\Psi_N}{\widehat{m^{(2)}}^{1/2}q_1q_2p_3W_{123}p_1p_2p_3\widehat{m^{(-2)}}^{1/2}\Psi_N}\nonumber\\
&\quad +2\ii N  \Im \braket{\Psi_N}{\widehat{m^{(2)}}^{1/2}q_1q_2q_3W_{123}q_1p_2p_3\widehat{m^{(-2)}}^{1/2}\Psi_N}\nonumber\\
&\quad + 6\ii N  \Im \braket{\Psi_N}{\widehat{m^{(1)}}^{1/2}q_1q_2p_3W_{123}q_1p_2p_3\widehat{m^{(-1)}}^{1/2}\Psi_N}\nonumber\\
&\quad + \frac{2\ii}{3} N  \Im \braket{\Psi_N}{\widehat{m^{(3)}}^{1/2}q_1q_2q_3W_{123}p_1p_2p_3\widehat{m^{(-3)}}^{1/2}\Psi_N}\nonumber\\
&=:\sum_{i=1}^6 M_W^{(i)}
\end{align}
using symmetry and \eqref{eq:transdev}. We will now treat them one by one.

\textbf{Study of} $M_W^{(1)}$ :
We denote $\phi:=\widehat{m^{(1)}}^{1/2}\Psi_N$ and $\tilde{\phi}:=\widehat{m^{(-1)}}^{1/2}\Psi_N$. As it was previously the case for the two-body interaction, this first term uses the mean-field cancelation. 
\begin{align}
M_W^{(1)}&=2\ii N  \Im \braket{q_1\phi}{p_2p_3W_{123}p_2p_3 p_1\tilde{\phi}}\nonumber\\
&=2\ii N  \Im \braket{q_1\phi}{p_2p_3\left(w'_1-(N-1)(N-2)(w_{123}+2w_{213})\right)p_2p_3 p_1\tilde{\phi}}\nonumber\\
&=:M_W^{(1,1)}+M_W^{(1,2)}
\end{align}
The first term to treat is the non-intricate 
\begin{align}
M_W^{(1,1)}&:=2\ii N \Im \braket{q_1\phi}{\left(\beta^2\mathbf{a}_N[\rho]^2(x_1)-(N-1)(N-2) p_2p_3w_{123}p_2p_3\right) p_1\tilde{\phi}}.\nonumber
\end{align}
for which we apply the diagonalisation \eqref{eq:diagw123prod}. We get
\begin{align}
M^{(1,1)}_W
&=2\ii\Im\braket{q_1\phi}{\left(\beta^2\mathbf{a}_N\left [\rho_u\right ]^2(x_1)-\sum_{m=2}^N\sum_{n=3}^N p_m p_n w(x_1, x_m ,x_n)p_mp_n\right)p_1\tilde{\phi}}\label{def:MW11}\nonumber\\
&=2\ii\beta^2\Im\braket{q_1\phi}{\left(\left(\sum_{j=1}^N \lambda^{(\mathbf{a})}_j(x_1)\right)^2 -\sum_{i,j=1}^N \lambda^{(\mathbf{a})}_j(x_1) \cdot\lambda^{(\mathbf{a})}_i(x_1)\sum_{m=2}^N p^{\sigma_j(x_1)}_m \sum_{n=3}^Np^{\sigma_i(x_1)}_n \right)p_1\tilde{\phi}}\nonumber\\
&=2\ii\beta^2\Im\braket{q_1\phi}{\left(\sum_{i,j=1}^N \lambda^{(\mathbf{a})}_i(x_1)\cdot\lambda^{(\mathbf{a})}_j(x_1)\right)\left(\one -\sum_{m=2}^N p^{\sigma_j(x_1)}_m \sum_{n=3}^Np^{\sigma_i(x_1)}_n \right)p_1\tilde{\phi}}
\end{align}
we pull the two sums out of the bracket and apply a Cauchy-Schwarz in the bracket, one in $i$ followed by one in $j$ to obtain
\begin{align}
\left|M_W^{(1,1)}\right|^2&\leq CN^{2}\left|\braket{q_1\phi }{\left(\sum_{j=1}^N \left|\lambda^{(\mathbf{a})}_j(x_1)\right|^2\right)^2q_1\phi}\right|\left|\braket{p_1\tilde{\phi}}{\sum_{i,j=1}^N\left(\one -\sum_{m=2}^N p^{\sigma_j(x_1)}_m \sum_{n=3}^Np^{\sigma_i(x_1)}_n \right)p_1\tilde{\phi}}\right|\nonumber\\
&\lesssim N^{2}\norm{|\mathbf{a}_N|^2\ast \rho}^2_{\infty}\norm{q_1\phi}^2\left(N^2\norm{q_2p_1\tilde{\phi}}^2+N\norm{p_1\tilde{\phi}}^2\right)\\
&\lesssim N^{2}\norm{|\mathbf{a}_N|^2\ast \rho}^2_{\infty}\alpha_m(t)\left(\alpha_m(t)+N^{-1/2}\right)\label{ine:M11}
\end{align}
by $p_1\leq 1$ and the estimates \eqref{ine:m} and \eqref{ine:alphat2}.
The following term is the intricate product diagonalised in Equation \eqref{eq:diagw123} of Corollary \ref{cor:diag3bprod}.
\begin{align*}
M_W^{(1,2)}&=4\ii N \Im \braket{q_1\phi}{\left(-(\beta\mathbf{a}_N\ast\beta\mathbf{a}_N[\rho]\rho)(x_1)-(N-1)(N-2) p_2p_3 w_{213}p_2p_3\right) p_1\tilde{\phi}}
\end{align*}
We decompose the diagonalised part into three terms. The direct trace, to be canceled with $\mathbf{a}_N\ast\mathbf{a}_N[\rho]\rho$, and two error terms $e^{(1)}$ and $e^{(2)}$ as follows
\begin{align}
&(N-1)(N-2)p_2p_3w(x_2, x_1 ,x_3)p_2p_3=\sum_{m=2}^N\sum_{n=3}^N p_m p_n w(x_m, x_1 ,x_n)p_mp_n \label{diag:3bb}\\
&\quad\quad\quad\quad =\sum_{i,j=1}^N\sum_{m=2}^N\sum_{n=3}^N\braket{\omega^{(x_1)}_i}{\lambda^{(\mathbf{a}\cdot\mathbf{a})}_j(x_1 ,x_2)p^{\sigma_j(x_2)}_n\omega^{(x_1)}_i}_2p_m^{\omega_i(x_1)}\nonumber\\
&\quad\quad\quad\quad =\sum_{i,j=1}^N\braket{\omega^{(x_1)}_i}{\lambda^{(\mathbf{a}\cdot\mathbf{a})}_j(x_1 ,x_2)\omega^{(x_1)}_i}_2\nonumber\\
&\quad\quad\quad\quad\quad -\sum_{i,j=1}^N\braket{\omega^{(x_1)}_i}{\lambda^{(\mathbf{a}\cdot\mathbf{a})}_j(x_1 ,x_2)\omega^{(x_1)}_i}_2\left(\one -\sum_{m=2}^Np_m^{\omega_i(x_1)}\right)\nonumber\\
&\quad\quad\quad\quad\quad -\sum_{i,j=1}^N\braket{\omega^{(x_1)}_i}{\lambda^{(\mathbf{a}\cdot\mathbf{a})}_j(x_1 ,x_2)\left(\one -\sum_{n=3}^Np^{\sigma_j(x_1)}_n\right)\omega_i(x_1)}_2\sum_{m=2}^Np_m^{\omega^{(x_1)}_i}\nonumber\\
&\quad\quad\quad\quad =:\int_{\R^4}\mathbf{a}_N(x_2 -x_1)\cdot\mathbf{a}_N(x_2-x_3)\rho(x_2)\rho(x_3)\mathrm{d}x_2\mathrm{d}x_3+e^{(1)}+e^{(2)}\nonumber\\
&\quad\quad\quad\quad=-(\mathbf{a}_N\ast\mathbf{a}_N[\rho]\rho)(x_1)+e^{(1)}+e^{(2)}\label{diag:3be}
\end{align}
where we used the trace property \eqref{eq:trace3b} of Corollary \ref{cor:diag3bprod}. We then obtain

\begin{equation}
M_W^{(1,2)}=2\ii\beta^2 N \Im \braket{q_1\phi}{(e^{(1)}+e^{(2)} )p_1\tilde{\phi}}=:M_W^{(1,2,1)}+M_W^{(1,2,2)}.
\end{equation} 
For $M_W^{(1,2,1)}$, we find
\begin{align}
\left|M_W^{(1,2,1)}\right|^2&\lesssim N^{2}\left|\braket{p_1\tilde{\phi}}{ \sum_{i=1}^N\braket{\omega^{(x_1)}_i}{\mathbf{a}_N(x_2-x_1)\cdot\sum_{j=1}^N \lambda_j^{(\mathbf{a})}(x_2)\omega^{(x_1)}_i}_2\left(\one -\sum_{m=2}^Np_m^{\omega_i(x_1)}\right)q_1\phi}\right|^2\nonumber\\
&\lesssim  N^2\left|\braket{q_1\phi}{ \sum_{i=1}^N\left|\braket{\omega^{(x_1)}_i}{\mathbf{a}_N(x_2-x_1)\cdot\sum_{j=1}^N \lambda_j^{(\mathbf{a})}(x_2)\omega_i(x_1)}_2\right|^2q_1\phi}\right|\nonumber\\
&\quad\quad\quad\quad\quad\quad\times\left|\braket{p_1\tilde{\phi}}{ \sum_{i=1}^N\left(\one -\sum_{m=2}^Np_m^{\omega_i(x_1)}\right)p_1\tilde{\phi}}\right|\nonumber\\
&\lesssim  N^2\sup_{x_2}\left|\sum_{j=1}^N \lambda_j^{(\mathbf{a})}(x_2)\right|^2\left(N\norm{q_2p_1\tilde{\phi}}^2+\norm{p_1\tilde{\phi}}^2\right)\nonumber\\
&\quad\quad\quad\quad\quad\quad\times \left|\braket{q_1\phi}{ \sum_{i=1}^N\braket{\omega^{(x_1)}_i}{|\mathbf{a}_N(x_2-x_1)|^2\omega^{(x_1)}_i}_2q_1\phi}\right|\nonumber\\
&\lesssim N^2\norm{\mathbf{a}_N[\rho]}^2_{\infty}\norm{|\mathbf{a}_N|^2\ast\rho}_{\infty}\norm{q_1\phi}^2\left(\alpha_m(t)+N^{-1/2}\right)\nonumber\\
&\lesssim N\norm{\mathbf{a}_N[\rho]}^2_{\infty}\norm{|\mathbf{a}_N|^2\ast\rho}_{\infty}\alpha_m(t)\left(\alpha_m(t)+N^{-1/2}\right)\label{ine:M121}
\end{align}
where we used that
\begin{equation}
\sum_{i=1}^N\braket{\omega^{(x_1)}_i}{|\mathbf{a}_N(x_2-x_1)|^2\omega^{(x_1)}_i}_2=\sum_{i=1}^N\braket{u_i}{|\mathbf{a}_N(x_2-x_1)|^2u_i}_2=(|\mathbf{a}_N|^2\ast\rho)(x_1)\nonumber
\end{equation}
because $\mathrm{span} (u_1,\dots , u_N)=\mathrm{span} (\omega_1,\dots , \omega_N)$. We also applied
the estimates \eqref{ine:m},  \eqref{ine:alphat} together with Cauchy-Schwarz's inequalities. To treat 
\begin{align}
M^{(1,2,2)}_W:=4\ii \beta^2N\Im\braket{p_1\tilde{\phi}}{e^{(2)}q_1\phi},\label{def:EW122}
\end{align}
 we calculate
\begin{align}
&\left|M^{(1,2,2)}_W\right|^2\leq CN^2\left|\braket{p_1\tilde{\phi}}{\sum_{i=1}^N\braket{\omega^{(x_1)}_i}{\sum_{j=1}^N\lambda^{(\mathbf{a}\cdot\mathbf{a})}_j(x_1 ,x_2)\left(\one -\sum_{n=3}^Np^{\sigma_j(x_2)}_n\right)\omega^{(x_1)}_i}_2\sum_{m=2}^Np_m^{\omega_i(x_1)}q_1\phi}\right|^2\nonumber\\
&\leq CN^2\left|\braket{q_1\phi}{\sum_{i=1}^N\sum_{m=2}^N p_m^{\omega_i(x_1)}q_1\phi}_{\Psi_N(t)}\right|\left|\braket{p_1\tilde{\phi}}{\sum_{i=1}^N\left|\braket{\omega^{(x_1)}_i}{\sum_{j=1}^N\lambda^{(\mathbf{a}\cdot\mathbf{a})}_j\left(\one -\sum_{n=3}^Np^{\sigma_j(x_2)}_n\right)\omega^{(x_1)}_i}_2\right|^2p_1\tilde{\phi}}\right|\nonumber\\
&\leq CN^2\left|\braket{q_1\phi}{\sum_{m=2}^N p_mq_1\phi}\right|\left|\braket{p_1\tilde{\phi}}{\sum_{i=1}^N\braket{\omega^{(x_1)}_i}{\sum_{j=1}^N|\lambda^{(\mathbf{a}\cdot\mathbf{a})}_j(x_1 ,x_2)|^2\omega^{(x_1)}_i}_2\braket{\omega^{(x_1)}_i}{\sum_{j=1}^Nq_{\neq 2,3}^{\sigma_j(x_2)}\omega^{(x_1)}_i}_2p_1\tilde{\phi}}\right|\nonumber\\
&\leq CN^{2}\alpha_m(t)\norm{|\mathbf{a}_N|^2\ast\rho}_{\infty}\times\nonumber\\
&\quad\quad\left|\braket{p_1\tilde{\phi}}{\sqrt{\sum_{i=1}^N\left|\braket{\omega^{(x_1)}_i}{|\mathbf{a}_N(x_2-x_1)|^2\omega^{(x_1)}_i}_2\right|^2}\sqrt{\sum_{i=1}^N\left|\braket{\omega^{(x_1)}_i}{\sum_{j=1}^Nq_{\neq 2,3}^{\sigma_j(x_2)}\omega^{(x_1)}_i}_2\right|^2}p_1\tilde{\phi}}\right|\nonumber\\
&\leq CN^{2}\alpha_m(t)\norm{|\mathbf{a}_N|^2\ast\rho}^2_{\infty}\left|\braket{p_1\tilde{\phi}}{\sum_{i=1}^N\braket{\omega^{(x_1)}_i}{\sum_{j=1}^Nq_{\neq 2,3}^{\sigma_j(x_2)}\omega^{(x_1)}_i}_2p_1\tilde{\phi}}\right|\nonumber\\
&\leq CN^2\alpha_m(t)\norm{|\mathbf{a}_N|^2\ast \rho}^2_{\infty}\left|\braket{p_1\tilde{\phi}}{\sum_{i=1}^N\braket{\omega^{(x_1)}_i}{\left(N -\sum_{n=3}^Np_n\right)\omega^{(x_1)}_i}_2p_1\tilde{\phi}}\right|\nonumber\\
&\lesssim N^2\alpha_m(t)\norm{|\mathbf{a}_N|^2\ast \rho}^2_{\infty}\left(N\norm{q_3p_1\tilde{\phi}}^2+\norm{p_1\tilde{\phi}}\right)\nonumber\\
&\lesssim N^2\norm{|\mathbf{a}_N|^2\ast \rho}^2_{\infty}\alpha_m(t)\left(\alpha_m(t) +N^{-1/2}\right)\label{ine:M122}
\end{align}
where we used the estimates \eqref{ine:m} and \eqref{ine:alphat} to obtain the last line. If we combine the previous \eqref{ine:M122}, \eqref{ine:M121} and \eqref{ine:M11} we conclude that
\begin{equation}
\left|M_W^{(1)}\right|\lesssim \left(N\norm{|\mathbf{a}_N|^2\ast \rho}_{\infty}+\sqrt{N\norm{\mathbf{a}_N[\rho]}^2_{\infty}\norm{|\mathbf{a}_N|^2\ast\rho}_{\infty}}\right)\left( \alpha_m(t)+\frac{1}{\sqrt{N}}\right).
\end{equation}
\newline
\textbf{Study of} $M_W^{(2)}$ :
We denote $\phi:=\widehat{m^{(1)}}^{1/2}\Psi_N$ and $\tilde{\phi}:=\widehat{m^{(-1)}}^{1/2}\Psi_N$. 

\begin{align}
M^{(2)}_W
&=2\ii N\Im\braket{q_1q_2p_3\tilde{\phi}}{(w'(x_3)-(N-1)(N-2)\tilde{w}(x_1 , x_2, x_3))q_1q_2q_3\phi}\nonumber\\
&=-2\ii N(N-1)(N-2)\Im\braket{q_1q_2p_3\tilde{\phi}}{\tilde{w}(x_1 , x_2, x_3)q_1q_2q_3\phi}\nonumber\\
&\quad +2\ii N\Im\braket{q_1q_2p_3\tilde{\phi}}{w'(x_3)q_1q_2q_3\phi}\nonumber\\
\left|M^{(2)}_W\right|&\leq CN^3\left( \norm{\tilde{w}(x_1 , x_2, x_3)q_1q_2p_3 \tilde{\phi}}+N^{-2}\norm{w'(x_3)q_1q_2p_3 \tilde{\phi}}\right)\norm{q_1q_2q_3 \phi}\nonumber\\
&\leq CN^3\braket{q_1q_2\tilde{\phi}}{p_3|\mathbf{a}_N(x_1 -x_3)|^4p_3q_1q_2\tilde{\phi}}^{1/2}N^{-1/2}\sqrt{\alpha_m(t)}\nonumber\\
&\quad +C\norm{\mathbf{a}_N\left [\rho_u\right ]}^2_{\infty}\alpha_m(t)\nonumber\\
&\lesssim \alpha_m(t)\left(\norm{\mathbf{a}_N\left [\rho_u\right ]}^2_{\infty}+N^{3/2}\norm{|\mathbf{a}_N|^4\ast\rho}^{1/2}_{\infty}\right)\label{ine:M2W}
\end{align}
where we used  \eqref{ine:qm} and that 
$$\norm{w'}_{\infty}=\beta^2\norm{\mathbf{a}_N\left [\rho_u\right ]^2-2\mathbf{a}_N\ast(\mathbf{a}_N\left [\rho_u\right ]\rho_u)}_{\infty}\leq  \beta^2\norm{\mathbf{a}_N\left [\rho_u\right ]}^2_{\infty}.$$

\textbf{Study of} $M_W^{(3)}$ :
We denote $\phi:=\widehat{m^{(2)}}^{1/2}\Psi_N$ and $\tilde{\phi}:=\widehat{m^{(-2)}}^{1/2}\Psi_N$. 
We use the symmetry of $\phi$ and $\tilde{\phi}$ to write
\begin{align}
M^{(3)}_W
&=2\ii N(N-1)(N-2)\Im\braket{q_1q_2p_3\phi}{\tilde{w}(x_1 , x_2, x_3)p_1p_2p_3\tilde{\phi}}\nonumber\\
&=2\ii N\Im\braket{q_1\phi}{\sum_{m=2}^N\sum_{n=3}^Nq_m p_n\tilde{w}(x_1 , x_m, x_n)p_m p_n p_1\tilde{\phi}}\nonumber\\
\left|M^{(3)}_W\right|^2&\leq N^2\sum_{m,m'=2}^N\sum_{n,n'=3}^N\norm{q_1\phi}^2\braket{q_m p_n\tilde{w}(x_1 , x_m, x_n)p_m p_n p_1\tilde{\phi}}{q_{m'} p_{n'}\tilde{w}(x_1 , x_{m'}, x_{n'})p_{m'} p_{n'} p_1\tilde{\phi}}\nonumber\\
&\leq N^2(N-1)(N-2)\norm{q_1\phi}^2\braket{\tilde{w}(x_1 , x_2, x_3)p_2 p_3 p_1\tilde{\phi}}{q_{2} p_{3}\tilde{w}(x_1 , x_{2}, x_{3})p_{2} p_{3} p_1\tilde{\phi}}\label{ine:MW3}\\
 &\quad +N^2(N-1)(N-2)(N-3)(N-4)\norm{q_1\phi}^2\braket{q_2 p_3\tilde{w}(x_1 , x_2, x_3)p_2 p_3 p_1\tilde{\phi}}{q_{4} p_{5}\tilde{w}(x_1 , x_{4}, x_{5})p_{4} p_{5} p_1\tilde{\phi}}\nonumber
\end{align}
using the symmetry to separate the sum into the case $m=m'$, $n=n'$ for the first line of above and the rest for the second line with the property that $q_mp_n=0$ when $m=n$ and respectively when $m'=n'$.
\begin{align}
\left|M^{(3)}_W\right|^2
&\leq N^4N^{-1}\alpha_m(t)\braket{\tilde{\phi}}{p_1p_{2} p_{3}\tilde{w}^2(x_1 , x_{2}, x_{3})p_{2} p_{3} p_1\tilde{\phi}}\nonumber\\
 &\quad +N^6 N^{-1}\alpha_m(t)\braket{ p_3\tilde{w}(x_1 , x_2, x_3)q_4p_2 p_3 p_1\tilde{\phi}}{ p_{5}\tilde{w}(x_1 , x_{4}, x_{5})q_2p_{4} p_{5} p_1\tilde{\phi}}\nonumber\\
 &\leq N\norm{|\mathbf{a}_N|^2\ast\rho}_{\infty}^2\alpha_m(t)\norm{\tilde{\phi}}^2+N^5\alpha_m(t)\braket{q_4p_2 p_3 p_1\tilde{\phi}}{\tilde{w}^2(x_1 , x_2, x_3)q_4p_2 p_3 p_1\tilde{\phi}}\nonumber\\
 &\leq N^2 \norm{|\mathbf{a}_N|^2\ast\rho}^{2}_{\infty}\alpha_m(t)\left(\alpha_m(t)+N^{-1/2}\right)
 \end{align}
by symmetry and by applying \eqref{ine:ppw} of Lemma \ref{lem:wpp} and the estimates \eqref{ine:m} and \eqref{ine:qm}.

\textbf{Study of} $M_W^{(4)}$ :
We denote $\phi:=\widehat{m^{(2)}}^{1/2}\Psi_N$ and $\tilde{\phi}:=\widehat{m^{(-2)}}^{1/2}\Psi_N$. 

\begin{align}
M^{(4)}_W
&=2\ii N(N-1)(N-2)\Im\braket{q_1q_2q_3\phi}{\tilde{w}(x_1 , x_2, x_3)q_1p_2p_3\tilde{\phi}}\nonumber\\
\left|M^{(4)}_W\right|^2&\leq N^6\norm{q_2q_3q_1\phi}^2\norm{\tilde{w}(x_1 , x_2, x_3)q_1p_2p_3\tilde{\phi}}^2\nonumber\\
&\lesssim N^2 \norm{|\mathbf{a}_N|^2\ast\rho}^{2}_{\infty}\alpha_m(t)
\end{align}
by the application of the estimates \eqref{ine:ppw} and \eqref{ine:qm}.

\textbf{Study of} $M_W^{(5)}$ :
We denote $\phi:=\widehat{m^{(1)}}^{1/2}\Psi_N$ and $\tilde{\phi}:=\widehat{m^{(-1)}}^{1/2}\Psi_N$. 

\begin{align}
M^{(5)}_W
&=2\ii N\Im\braket{q_1q_2p_3\tilde{\phi}}{(w'(x_2)-(N-1)(N-2)\tilde{w}(x_1 , x_2, x_3))q_1p_2p_3\phi}\nonumber\\
&=-2\ii N(N-1)(N-2)\Im\braket{q_1q_2p_3\tilde{\phi}}{\tilde{w}(x_1 , x_2, x_3)q_1p_2p_3\phi}\nonumber\\
&\quad +2\ii N\Im\braket{q_1q_2p_3\tilde{\phi}}{w'(x_2)q_1p_2p_3\phi}\nonumber\\
\left|M^{(5)}_W\right|&\leq CN^3\left( \norm{\tilde{w}(x_1 , x_2, x_3)q_1p_2p_3 \phi}+N^{-2}\norm{w'(x_2)q_1p_2p_3 \phi}\right)\norm{q_1q_2p_3\tilde{\phi}}\nonumber\\
&\lesssim \left(N\norm{|\mathbf{a}_N|^2\ast\rho}_{\infty}+\norm{|\mathbf{a}_N|\ast\rho}^{2}_{\infty}\right)\alpha_m(t)
\end{align}
where we used that $\norm{w'}_{\infty}\leq \norm{|\mathbf{a}_N|\ast\rho}^{2}_{\infty}$ as in \eqref{ine:M2W} and the estimates \eqref{ine:ppw} and \eqref{ine:qm}.

\textbf{Study of} $M_W^{(6)}$ :
We denote $\phi:=\widehat{m^{(3)}}^{1/2}\Psi_N$ and $\tilde{\phi}:=\widehat{m^{(-3)}}^{1/2}\Psi_N$. 
We will proceed as for $M_W^{(3)}$ in \eqref{ine:MW3}, using the symmetry to write
\begin{align*}
M^{(6)}_W
&=2\ii N(N-1)(N-2)\Im\braket{q_1q_2q_3\phi}{\tilde{w}(x_1 , x_2, x_3)p_1p_2p_3\tilde{\phi}}\\
&=2\ii N\Im\braket{q_1\phi}{\sum_{m=2}^N\sum_{n=3}^Nq_m q_n\tilde{w}(x_1 , x_m, x_n)p_m p_n p_1\tilde{\phi}}
\end{align*}
and apply Cauchy-Schwarz's inequality to get
\begin{align*}
\left|M^{(6)}_W\right|^2&\leq N^2\sum_{\substack{m,m'=2\\n, n'=3}}^N\norm{q_1\phi}^2\braket{q_m q_n\tilde{w}(x_1 , x_m, x_n)p_m p_n p_1\tilde{\phi}}{q_{m'} q_{n'}\tilde{w}(x_1 , x_{m'}, x_{n'})p_{m'} p_{n'} p_1\tilde{\phi}}\\
&= N^2(N-1)(N-2)\norm{q_1\phi}^2\braket{\tilde{w}(x_1 , x_2, x_3)p_2 p_3 p_1\tilde{\phi}}{q_{2} q_{3}\tilde{w}(x_1 , x_{2}, x_{3})p_{2} p_{3} p_1\tilde{\phi}}\\
 &\quad +N^2(N-1)(N-2)(N-3)(N-4)\norm{q_1\phi}^2\braket{q_2 q_3\tilde{w}(x_1 , x_2, x_3)p_2 p_3 p_1\tilde{\phi}}{q_{4} q_{5}\tilde{w}(x_1 , x_{4}, x_{5})p_{4} p_{5} p_1\tilde{\phi}}\\
 &\leq N^3\alpha_m(t)\norm{\tilde{w}(x_1 , x_2, x_3)p_2 p_3 p_1\tilde{\phi}}^2 +N^5\alpha_m(t)\norm{\tilde{w}(x_1 , x_2, x_3)q_5q_4p_3 p_2 p_1\tilde{\phi}}^2\\
 &\lesssim N^2\norm{|\mathbf{a}_N|^2\ast\rho}^{2}_{\infty}\alpha_m(t)\left(\alpha_m(t)+N^{-1/2}\right)
\end{align*}
by \eqref{ine:ppw} and \eqref{ine:qm}. Gathering the estimates we obtain that
\begin{align}
\sum_{i=1}^6M_W^{(6)}&\lesssim \left(N\norm{|\mathbf{a}_N|^2\ast\rho}_{\infty}+\norm{|\mathbf{a}_N|\ast\rho}^{2}_{\infty}+N^{3/2}\norm{|\mathbf{a}_N|^4\ast\rho}^{1/2}_{\infty}+\sqrt{N\norm{\mathbf{a}_N[\rho]}^2_{\infty}\norm{|\mathbf{a}_N|^2\ast\rho}_{\infty}}\right)\nonumber\\
& \quad \times \left(\alpha_m(t)+\frac{1}{\sqrt{N}}\right)\label{ine:MW}
\end{align}
concluding the proof of Lemma \ref{lem:dtam}.

\section{Control of the kinetic energy}\label{sec:kin}
This section is dedicated to the control of the kinetic energy of the excited particles $\norm{\nabla_1 q_1\Psi_N(t)}$ in terms of the square root number functional $\alpha_m(t)$. Lemma \ref{lem:mf} allows, in a first time, to express the fact that almost all the energy must be contained in the vector $p_1p_2p_3\Psi_N$ using conservation of energy.
The energy at time $t$ is
\begin{equation}
	E_N(t):=\braket{\Psi_N(t)}{\HNR\Psi_N(t)}\quad\text{and}\quad\mathcal{E}_R^{\mathrm{af}}[\omega(t)]:=\tr\left[(-\im\nabla +\mathbf{a}_N[\rho])^2 \omega(t)\right]\nonumber
\end{equation}
and by conservation of the energy of Lemma \eqref{Thm:cons} and the Hartree approximation of Lemma \ref{lem:Ha}, we get
\begin{align*}
	E_N(t)=E_N(0)&=\mathcal{E}_R^{\mathrm{af}}[\omega(0)]+O(N^{s(1+2r)})=\mathcal{E}_R^{\mathrm{af}}[\omega(t)]+O(N^{s(1+2r)})
\end{align*}
and then
\begin{align}
\left|E_N(0)-\mathcal{E}_R^{\mathrm{af}}[\omega(0)]\right|=\left|E_N(t)-\mathcal{E}_R^{\mathrm{af}}[\omega(t)]\right|&\leq O(N^{s(1+2r)}).
\end{align}
\begin{lemma}\label{lem:mf}
	Let $\Phi:=(\one -p_1p_2p_2)\Psi_N(t)$  where $\Psi_N(t)$ is the solution of the Schr\"odinger equation \eqref{def:schro1} with initial data $\Psi_N(0)=\bigwedge_{j=1}^N u_j(0)$. Let $\omega(t)=\sum_{j=1}^N\ket{u_j}\bra{u_j}$ be a solution of $\mathrm{EQ}(N,R,s,m,\omega(0))$ defined in Definition \ref{def:css} and $R=N^{-r}$. We then have that
	\begin{align}
		 E_N(t)-\mathcal{E}_{R}^{\mathrm{af}}[\omega(t)]&=
		 N\braket{\Phi}{\left(\Delta_1+(N-1)v_{12}+(N-1)(N-2)w_{123}+\frac{g}{2}N(N-1)|\mathbf{a}_N|^2_{12}\right)\Phi}\nonumber\\
		 &\quad +\beta^2N(N-1)\braket{\Psi_N}{|\mathbf{a}_N|^2_{12}\Psi_N}+\mathcal{E}
	\end{align}
	where $\left|\mathcal{E}\right|\lesssim \alpha_m(t)+N^{-1/2}$. The above identity is valid
	\begin{align}
\text{for any}\quad\begin{cases}
 0\leq s(1+2r)+1/2r<1/4 \text{ in the case of Assumption} \ref{ass:H2}\\
 0\leq s<1/2(1-1/m)\, \,\,\,\quad\quad\text{in the case of Assumption} \ref{ass:Hm}\\
  0\leq s<1/2\,\, \,\,\,\,\quad\quad\quad\quad\quad\quad\text{in the case of Assumption} \ref{ass:Hinfty}.\nonumber
\end{cases}
\end{align}
	
\end{lemma}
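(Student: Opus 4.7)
The plan is to decompose $\Psi_N = P\Psi_N + \Phi$ with $P := p_1 p_2 p_3$. Using the permutation symmetry of $\Psi_N$, the microscopic energy takes the symmetric form
\begin{align*}
E_N(t) &= N\braket{\Psi_N}{(-\Delta_1)\Psi_N} + N(N-1)\braket{\Psi_N}{v_{12}\Psi_N} \\
&\quad + N(N-1)(N-2)\braket{\Psi_N}{w_{123}\Psi_N} + \left(\tfrac{g}{2}N+\beta^2\right)N(N-1)\braket{\Psi_N}{|\mathbf{a}_N|^2_{12}\Psi_N}.
\end{align*}
For each operator $O$ in this expansion \emph{except} the $\beta^2|\mathbf{a}_N|^2_{12}$ piece, I expand
\[
\braket{\Psi_N}{O\Psi_N} = \braket{P\Psi_N}{OP\Psi_N} + 2\Re\braket{P\Psi_N}{O\Phi} + \braket{\Phi}{O\Phi}.
\]
The $\beta^2|\mathbf{a}_N|^2_{12}$ contribution is left intact and becomes the isolated term $\beta^2 N(N-1)\braket{\Psi_N}{|\mathbf{a}_N|^2_{12}\Psi_N}$ on the right-hand side of the lemma; the $\braket{\Phi}{O\Phi}$ contributions assemble into the first bracket there.

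What remains is to establish the mean-field identity
\[
\sum_{O}\left[\braket{P\Psi_N}{OP\Psi_N} + 2\Re\braket{P\Psi_N}{O\Phi}\right] = \mathcal{E}_R^{\mathrm{af}}[\omega(t)] + \mathcal{E},\qquad |\mathcal{E}|\lesssim \alpha_m(t) + N^{-1/2},
\]
where the sum runs over $O\in\{-N\Delta_1,\ N(N-1)v_{12},\ N(N-1)(N-2)w_{123},\ \tfrac{g}{2}N\cdot N(N-1)|\mathbf{a}_N|^2_{12}\}$. Each operator is treated via the diagonalization procedure of Section~\ref{sec:diag}. For $-N\Delta_1$: diagonalize $p_1(-\Delta_1)p_1 = \sum_j \lambda_j^{\Delta}p_1^{\chi_j^\Delta}$, whose trace $\sum_j\lambda_j^\Delta = \int \rho_\nabla$ reproduces the Hartree kinetic contribution. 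For $v_{12}$: use Lemma~\ref{lem:diag2b} to diagonalize $p_1 v_{12}p_1$ and $p_2 v_{12}p_2$, whose traces combine via $\nabla\cdot\mathbf{a}_N=0$ to yield the current-coupling $-\beta\int J\cdot\mathbf{a}_N[\rho]$. For $w_{123}$: apply Corollary~\ref{cor:diag3bprod}; the product-form diagonalization of $p_2 p_3 w_{123} p_2 p_3$ has trace $|\mathbf{a}_N[\rho]|^2(x_1)$ and recovers $\beta^2\int|\mathbf{a}_N[\rho]|^2\rho$. For $|\mathbf{a}_N|^2_{12}$: diagonalize $p_2|\mathbf{a}_N|^2_{12}p_2$ with trace $\mathbf{a}_N^2[\rho](x_1)$ to recover $\tfrac{g}{2}N\int\mathbf{a}_N^2[\rho]\rho$.

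After these main traces have been identified, what remains are exchange-type residuals proportional to $\bigl(\one - \sum_m p_m^{\chi_i(\cdot)}\bigr)$, together with the cross terms $\braket{P\Psi_N}{O\Phi}$ where one factor lives in the excited subspace. These are controlled exactly as in the mean-field cancellation analysis of $M_V^{(1,1)}, M_V^{(1,2)}, M_W^{(1,1)}, M_W^{(1,2,1)}, M_W^{(1,2,2)}$ from the proof of Lemma~\ref{lem:dtam}: apply Cauchy-Schwarz in both the inner product and the eigenvalue index, bound the $q$-factor expectations via the Pauli Exclusion Principle (Lemma~\ref{lem:exclusion}) in terms of $\alpha_m$, and invoke the convolution estimates of Lemma~\ref{lem:LRtrick} for the eigenvalue sums $\sum_j |\lambda_j|^2$. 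Summing these contributions yields the bound $|\mathcal{E}|\lesssim \alpha_m(t) + N^{-1/2}$, with the admissible range of $s$ under each of the three Assumptions inherited directly from Lemma~\ref{lem:LRtrick}.

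The main obstacle is the three-body piece: as in Section~\ref{sec:MW}, representing the microscopic 3-body term by its fully symmetric form $\tilde{w}(x_1,x_2,x_3)$ forces us to handle the non-product rewritings $w_{213}, w_{312}$ as well, in which the particle carrying both $\mathbf{a}_N$ factors is no longer $x_1$. These fall outside the scope of the product-form part of Corollary~\ref{cor:diag3bprod} and require the general Lemma~\ref{lem:diag3b}, producing the exchange residuals $e^{(1)}$ and $e^{(2)}$ familiar from the computation of $M_W^{(1,2)}$. Each residual must be bounded using $\norm{\mathbf{a}_N[\rho]}_\infty$ and $\norm{|\mathbf{a}_N|^2\ast\rho}_\infty$ from Lemma~\ref{lem:LRtrick}, and it is precisely these bounds that fix the admissible range of $s$ under Assumptions~\ref{ass:H2}, \ref{ass:Hm}, \ref{ass:Hinfty} in the final statement.
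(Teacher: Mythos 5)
Your proposal is correct and follows essentially the same route as the paper's own proof: decompose $\Psi_N = p_1p_2p_3\Psi_N + \Phi$, symmetrize $E_N(t)$, expand each $\braket{\Psi_N}{O\Psi_N}$ into the fully-projected, cross, and excited parts, isolate the $\beta^2|\mathbf{a}_N|^2_{12}$ piece, and then identify the fully-projected part with $\mathcal{E}_R^{\mathrm{af}}[\omega]$ up to an error $\mathcal{E}$ of size $\alpha_m(t)+N^{-1/2}$ via the diagonalization procedure, Pauli exclusion, the weighted-number-operator splitting $\hat m(1/2)=\hat m(1/4)\hat m(1/4)$, and the convolution estimates of Lemma~\ref{lem:LRtrick}. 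One cosmetic divergence: in the paper's treatment of the three-body mean-field residual the operator $w_{123}$ is kept (not symmetrized to $\tilde w$) and diagonalized simultaneously in all three projected legs with a triple eigenbasis $\chi_j\otimes\sigma_k\otimes\omega_\ell$, whereas you propose to pass through $\tilde w$ and the two exchange residuals $e^{(1)}, e^{(2)}$ of Corollary~\ref{cor:diag3bprod} as in $M_W^{(1,2)}$; both routes deliver the same bound, the paper's being slightly more direct here since only the product form $w_{123}$ is actually present. Also note the minor sign slip: the trace of $2\beta p_1(-\ii\nabla_1)\cdot\mathbf{a}_N[\rho]p_1$ reproduces $+\beta\int J\cdot\mathbf{a}_N[\rho]$, not $-\beta\int J\cdot\mathbf{a}_N[\rho]$.
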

\begin{proof}
First note that using the symmetry of $\Psi_N$, we can rewrite
	\begin{align*}
		\mathcal{E}_{R}^{\mathrm{af}}[\omega(t)] &= \sum_{j=1}^N\braket{u_j}{(-\Delta+ 2\beta(-\ii\nabla)\cdot \mathbf{a}_N[\rho]+  \beta^2|\mathbf{a}_N[\rho]|^2+\frac{g}{2}N\mathbf{a}_N^2[\rho])u_j}\\
		E_N(t) &= N\braket{\Psi_N}{\left(-\Delta_1+(N-1)v_{12}+(N-1)(N-2)w_{123}\right)\Psi_N}\\
		&\quad+N(N-1)\braket{\Psi_N}{(\frac{gN}{2}+\beta^2)|\mathbf{a}_N|^2(x_1-x_2)\Psi_N}
	\end{align*}
where we recall that
	\begin{equation}
	v_{12}= (-\ii\nabla_1)\cdot\mathbf{a}_N(x_1 -x_2)+\hc\quad\text{and}\quad w_{123}:=\mathbf{a}_N(x_1 -x_2)\cdot\mathbf{a}_N(x_1 -x_3).
	\end{equation}
	We now use $\one=(\one +p_1p_2p_3) -p_1p_2p_3 $ and introduce the function $\Phi:=(\one -p_1p_2p_3)\Psi_N(t)$ to obtain the identity
		\begin{align}
		E_N(t)-\mathcal{E}_{R}^{\mathrm{af}}[\omega(t)]&=
		 N\langle\Psi,p_{1}p_{2}p_{3}(-\Delta_1)p_{1}p_{2}p_{3}\Psi\rangle-\sum_{j=1}^N\braket{u_j}{(-\Delta)u_j}\label{eq:posterm}\\
		 & \quad+N(N-1)\langle\Psi,p_{1}p_{2}p_{3}v_{12}p_{1}p_{2}p_{3}\Psi\rangle-\sum_{j=1}^N\langle u_j,\tilde{v}u_j\rangle\nonumber\\
		 & \quad+N(N-1)(N-2)\langle\Psi,p_{1}p_{2}p_{3}w_{123}p_{1}p_{2}p_{3}\Psi\rangle-\sum_{j=1}^N\langle u_j,\tilde{w}u_j\rangle\nonumber\\
		 & \quad+\frac{g}{2}N^2(N-1)\langle\Psi,p_{1}p_{2}p_3|\mathbf{a}_N|^2_{12}p_{1}p_{2}p_{3}\Psi\rangle-\frac{g}{2}N\sum_{j=1}^N\langle u_j,\mathbf{a}_N^2[\rho]u_j\rangle\nonumber\\
& \quad+2N\Re\langle\Psi,p_{1}p_{2}p_{3}(-\Delta_1)\Phi\rangle \nonumber\\
	&\quad +2N(N-1)\Re\langle\Psi,p_{1}p_{2}p_{3}v_{12}\Phi\rangle\nonumber\\
		& \quad+2N(N-1)(N-2)\Re\langle\Psi,p_{1}p_{2}p_{3}w_{123}\Phi\rangle\nonumber\\
		&\quad +gN^2(N-1)\Re\langle\Psi,p_{1}p_{2}p_{3}|\mathbf{a}_N|^2_{12}\Phi\rangle\nonumber\\
		&\quad +\beta^2N(N-1)\braket{\Psi_N}{|\mathbf{a}_N|^2_{12}\Psi_N}\nonumber\\
		&\quad+N\braket{\Phi}{\left(-\Delta_1+(N-1)v_{12}+(N-1)(N-2)w_{123}+\frac{g}{2}N(N-1)|\mathbf{a}_N|^2_{12}\right)\Phi}.\nonumber
	\end{align}
The above can be re-written as
		\begin{align}
		E_N(t)-\mathcal{E}_{R}^{\mathrm{af}}[\omega]&=:E^{(\Delta)}+E^{(v)}+E^{(w)}+E^{(\mathbf{a})}\label{eq:posterm1}\\
		&\quad +R^{(\Delta)}+R^{(v)}+R^{(w)}+R^{(\mathbf{a})}+\beta^2N(N-1)\braket{\Psi_N}{|\mathbf{a}_N|^2_{12}\Psi_N}\nonumber\\
		&\quad +N\braket{\Phi}{\left(-\Delta_1+(N-1)v_{12}+(N-1)(N-2)w_{123}+\frac{g}{2}N(N-1)|\mathbf{a}_N|^2_{12}\right)\Phi}\nonumber
	\end{align}
	where the $E$ and $R$ terms are defined line by line with respect to the identity \eqref{eq:posterm}. At this point, we have obtained the identity of the lemma with $\mathcal{E}=E^{(\Delta)}+E^{(v)}+E^{(w)}+E^{(\mathbf{a})}+R^{(\Delta)}+R^{(v)}+R^{(w)}+R^{(\mathbf{a})}$. There is left to show that $\left|\mathcal{E}\right|\lesssim \alpha_m(t)+N^{-1/2}$. To this end, we will bound each of the energies $E^{(\cdot)}$ and the rests $R^{(\cdot)}$ in two distinct paragraphs. To do so, we will need to use the definition of  $$\hat{m}(\xi):=\sum_{k=0}^N\left(\frac{k}{N}\right)^{\xi}P^{(k)}_N$$ 
	and use that $\hat{m}(1/2)=\hat{m}(1/4)\hat{m}(1/4)$.\newline
	\textbf{Study} of the $E^{(\cdot)}$'s: we start with the Laplacian
	\begin{align*}
	\left|E^{(\Delta)}\right|&\leq \left|\langle\Psi,(N-2)p_{1}p_{2}p_{3}(-\Delta_1)p_{1}p_{2}p_{3}\Psi\rangle-\sum_{j=1}^N\braket{u_j}{(-\Delta)u_j}\right|\\
	&=\left|\langle\Psi,\sum_{j=1}^N\lambda_j^{(\Delta)}\sum_{\substack{m=1\\m\neq 2,3}}^Np_m^{\chi_j^{(\Delta)}}p_{2}p_{3}\Psi\rangle-\sum_{j=1}^N\lambda_j^{(\Delta)}\right|\\
	&=\left|\sum_{j=1}^N\lambda_j^{(\Delta)}\langle\Psi,\left(\one-\sum_{\substack{m=1\\m\neq 2,3}}^Np_m^{\chi_j^{(\Delta)}}p_{2}p_{3}\right)\Psi\rangle\right|\\
	&\leq\sqrt{\sum_{j=1}^{N}\left|\lambda_j^{(\Delta)}\right|^2\norm{\hat{m}(1/4)\Psi}^2}\sqrt{\langle \hat{m}(-1/2)\Psi,\left(N-\sum_{j=1}^N\sum_{\substack{m=1\\m\neq 2,3}}^Np_m^{\chi_j^{(\Delta)}}p_{2}p_{3}\right)\Psi\rangle}\\
	&\leq \sqrt{\alpha_m} \sqrt{\sum_{j=1}^N\norm{\Delta u_j}_2^2}\braket{\hat{m}(-1/2)\Psi}{\left(N-\sum_{\substack{m=1\\m\neq 2,3}}^Np_mp_{2}p_{3}\right)\Psi}^{1/2}\\
		&= \sqrt{\alpha_m}\sqrt{N}\braket{\hat{m}(-1/2)\Psi}{\left(N-(N-2)p_1p_{2}p_{3}\right)\Psi}^{1/2}\\
		&\leq \sqrt{\alpha_m}N\braket{\hat{m}(-1/2)\Psi}{\left(p_1p_{2}p_{3}-\one\right)\Psi}^{1/2}+2 \sqrt{\alpha_m}\sqrt{N}\braket{\Psi}{p_1p_{2}p_{3}\Psi}^{1/2}\\
		&\leq \sqrt{\alpha_m}N\braket{\Psi}{\hat{m}(-1/2)q_2\Psi}^{1/2}+2 \sqrt{\alpha_m}\sqrt{N}\braket{\Psi}{p_1p_{2}p_{3}\Psi}^{1/2}\\
		&\leq N\left(\alpha_m(t)+\frac{C}{\sqrt{N}}\right)
		\end{align*}
		using that $\one -p_1p_2p_3 =p_1p_2q_3+p_1q_2q_3+q_1p_2p_3 +p_1q_2p_3+q_1p_2q_3+q_1q_2p_3+q_1q_2q_3$, $\norm{\rho_{\Delta}}_1\lesssim N$ and that $$\hat{m}(-1/2)q_2=\sum_{k\in \mathbb{Z}}\frac{\sqrt{N}}{\sqrt{k}}\frac{k}{N}P_N^{(k)}=\hat{m}(1/2).$$
		We continue with the mixed two-body term
			\begin{align*}
	\left|E^{(v)}\right|&=\langle\Psi,p_{1}p_{2}N(N-1)v_{12}p_{1}p_{2}p_{3}\Psi\rangle -\sum_{j=1}^N\langle u_j,\tilde{v}u_j\rangle		\\
	&=\braket{\Psi}{\left(\sum_{m=1}^N\sum_{\substack{n=1\\n\neq m}}^Np_np_mv_{nm}p_mp_np_3-\sum_{j=1}^N\lambda_j^{(v)}\right)\Psi}\\
	&=\braket{\Psi}{\left(\sum_{m=1}^N\sum_{\substack{n=1\\n\neq m}}^N\sum_{j,k=1}^N\langle \sigma_k\braket{\chi_j}{v_{nm}\chi_j}_m\sigma_k\rangle_n p_n^{\sigma_k} p_m^{\chi_j}p_3-\sum_{j=1}^N\lambda_j^{(v)}\right)\Psi}
	\end{align*}
	where the basis $\sigma_i$ and $\chi_i$ have been introduced in Lemma \ref{lem:diag2b}. We then obtain
	\begin{align*}
	\left|E^{(v)}\right|&=\sum_{j,k=1}^N\braket{\langle \sigma_k\braket{\chi_j}{v_{nm}\chi_j}_m\sigma_k\rangle_n\hat{m}(1/4)\hat{m}(-1/4)\Psi}{\left(\one-\sum_{m=1}^N\sum_{\substack{n=1\\n\neq m}}^N p_n^{\sigma_k} p_m^{\chi_j}p_3\right)\Psi}\\
	&\leq \norm{\hat{m}(1/4)\Psi}\sqrt{\sum_{j,k=1}^N\left|\langle \sigma_k\braket{\chi_j}{v_{1m}\chi_j}_m\sigma_k\rangle_1\right|^2}\braket{\hat{m}(-1/2)\Psi}{\sum_{j,k=1}^N\left(\one-\sum_{m=1}^N\sum_{\substack{n=1\\n\neq m}}^N\ p_n^{\sigma_k} p_m^{\chi_j}p_3\right)\Psi}^{1/2}\\
	&\leq \sqrt{\alpha_m}\braket{\hat{m}(-1/2)\Psi}{\left(N^2-\sum_{m=1}^N\sum_{\substack{n=1\\n\neq m}}^N p_n p_mp_3\right)\Psi}^{1/2}\\
	&\leq \sqrt{\alpha_m}N\braket{\hat{m}(-1/2)\Psi}{ \left(p_1 p_2p_3-\one\right)\Psi}^{1/2}+\sqrt{\alpha_m}\sqrt{N}\braket{\hat{m}(-1/2)\Psi}{ p_1 p_2p_3\Psi}^{1/2}\\
	&\leq N \left(\alpha_m +CN^{-1/2}\right)
	\end{align*}
	by using that $p^{\phi}_mp_n^{\phi}=p_m^{\phi}\delta_{mn}$ and with the control of the trace
		\begin{align*}
	 \sum_{j,k}\langle \sigma_k\braket{\chi_j}{v_{1m}\chi_j}_m\sigma_k\rangle_1^2&= \sum_{j,k}\langle u_k\braket{u_j}{v_{1m}u_j}_mu_k\rangle_1^2\leq \sum_{j,k}\langle u_k\braket{u_j}{v^2_{12}u_j}_2 u_k\rangle_1\\
	 &\leq \sum_{k}\braket{\nabla u_k}{|\mathbf{a}_N|^2\ast \rho\nabla u_k}_2\leq \norm{|\mathbf{a}_N|^2\ast \rho}_{\infty}\sum_{j=1}^N\norm{\nabla u_j}_2^2\simeq 1
	\end{align*}
	by the application of the bound \eqref{ine:LRs}.
	We now treat the three-body term. there exist three basis $\chi_j$, $\sigma_j$ and $\omega_j$, such that
			\begin{align*}
	\left|E^{(w)}\right|&=\left|N(N-1)(N-2)\langle\Psi,p_{1}p_{2}p_{3}w_{123}p_{1}p_{2}p_{3}\Psi\rangle-\sum_{j=1}^N\langle u_j,\tilde{w}u_j\rangle\right|	\\
	&=\left|\braket{\Psi}{\left(\sum_{m=1}^N\sum_{\substack{o=1\\o\neq m}}^N\sum_{\substack{n=1\\n\neq m, n\neq o}}^Np_op_np_mw_{mno}p_mp_np_o-\sum_{j=1}^N\lambda_j^{(w)}\right)\Psi}\right|\\
	&=\left|\braket{\Psi}{\left(\sum_{m=1}^N\sum_{\substack{o=1\\o\neq m}}^N\sum_{\substack{n=1\\n\neq m, n\neq o}}^N\sum_{j,k,\ell=1}^N\braket{\chi_j\otimes \sigma_k\otimes \omega_{\ell}}{w_{mno}\chi_j\otimes \sigma_k\otimes \omega_{\ell} }p_n^{\sigma_k} p_m^{\chi_j}p_o^{\omega_{\ell}}-\sum_{j=1}^N\lambda_j^{(w)}\right)\Psi}\right|\\
	&=\left|\sum_{j,k,\ell=1}^N\braket{\braket{\chi_j\otimes \sigma_k\otimes \omega_{\ell}}{w_{mno}\chi_j\otimes \sigma_k\otimes \omega_{\ell} }\hat{m}(1/4)\hat{m}(-1/4)\Psi}{\left(\one-\sum_{m=1}^N\sum_{\substack{o=1\\o\neq m}}^N\sum_{\substack{n=1\\n\neq m, n\neq o}}^N p_n^{\sigma_k} p_m^{\chi_j}p_o^{\omega_{\ell}}\right)\Psi}\right|
	\end{align*}
	by using Cauchy-Schwarz's inequality. We then get that
	\begin{align*}
	\left|E^{(w)}\right|&\leq \norm{\hat{m}(1/4)\Psi}\sqrt{\sum_{j,k,\ell=1}^N\left|\braket{\chi_j\otimes \sigma_k\otimes \omega_{\ell}}{w_{mno}\chi_j\otimes \sigma_k\otimes \omega_{\ell} }_{m,n,o}\right|^2}\\
	&\quad \times\braket{\hat{m}(-1/2)\Psi}{\sum_{j,k,\ell=1}^N\left(\one-\sum_{m=1}^N\sum_{\substack{o=1\\o\neq m}}^N\sum_{\substack{n=1\\n\neq m, n\neq o}}^N p_n^{\sigma_k} p_m^{\chi_j}p_o^{\omega_{\ell}}\right)\Psi}^{1/2}\\
	&\leq \sqrt{\alpha_m}N^{-1/2}\braket{\hat{m}(-1/2)\Psi}{\left(N^3-\sum_{m=1}^N\sum_{\substack{o=1\\o\neq m}}^N\sum_{\substack{n=1\\n\neq m, n\neq o}}^N p_n p_mp_o\right)\Psi}^{1/2}\\
	&\leq \sqrt{\alpha_m}N\braket{\hat{m}(-1/2)\Psi}{ \left(p_1 p_2p_3-\one\right)\Psi}^{1/2}+ C\sqrt{\alpha_m}\sqrt{N}\braket{\hat{m}(-1/2)\Psi}{\Psi}^{1/2}\\
	&\leq N\left(\alpha_m +N^{-1/2}\right).
	\end{align*}
We continue with the singular two-body term
			\begin{align*}
	\left|E^{(\mathbf{a})}\right|&= \left|\frac{g}{2}N^2(N-1)\langle\Psi,p_{1}p_{2}p_3|\mathbf{a}_N|^2_{12}p_{1}p_{2}p_{3}\Psi\rangle-\frac{g}{2}N\sum_{j=1}^N\langle u_j,\mathbf{a}_N^2[\rho]u_j\rangle	\right|	\\
	&=\left|\frac{g}{2}N\braket{\Psi}{\left(\sum_{m=1}^N\sum_{\substack{n=1\\n\neq m}}^Np_np_m|\mathbf{a}_N|^2_{nm}p_mp_np_3-\sum_{j=1}^N\lambda_j^{(\mathbf{a}^2)}\right)\Psi}\right|\\
	&=\left|\frac{g}{2}N\braket{\Psi}{\left(\sum_{m=1}^N\sum_{\substack{n=1\\n\neq m}}^N\sum_{j,k=1}^N\langle \sigma_k\braket{\eta_j}{|\mathbf{a}_N|^2_{nm}\eta_j}_m\sigma_k\rangle_n p_n^{\sigma_k} p_m^{\eta_j}p_3-\sum_{j=1}^N\lambda_j^{(\mathbf{a}^2)}\right)\Psi}\right|\\
	&=\frac{g}{2}N\left|\sum_{j,k=1}^N\braket{\langle \sigma_k\braket{\eta_j}{|\mathbf{a}_N|^2_{nm}\eta_j}_m\sigma_k\rangle_n\hat{m}(1/4)\hat{m}(-1/4)\Psi}{\left(\one -\sum_{m=1}^N\sum_{\substack{n=1\\n\neq m}}^N p_n^{\sigma_k} p_m^{\eta_j}p_3\right)\Psi}\right|\\
	&\leq CN\norm{\hat{m}(1/4)\Psi}\sqrt{\sum_{j,k=1}^N\left|\langle \sigma_k\braket{\eta_j}{|\mathbf{a}_N|^2_{1m}\eta_j}_m\sigma_k\rangle_1\right|^2}\braket{\hat{m}(-1/2)\Psi}{\sum_{j,k=1}^N\left(\one -\sum_{m=1}^N\sum_{\substack{n=1\\n\neq m}}^N\ p_n^{\sigma_k} p_m^{\eta_j}p_3\right)\Psi}^{1/2}.\nonumber
	\end{align*}
	We now use that $p^{\phi}_mp_n^{\phi}=p_m^{\phi}\delta_{mn}$ and that
		\begin{align*}
	 \sum_{j,k}\langle \sigma_k\braket{\eta_j}{|\mathbf{a}_N|^2_{1m}\eta_j}_m\sigma_k\rangle_1^2&= \sum_{j,k}\langle u_k\braket{u_j}{|\mathbf{a}_N|^2_{1m}u_j}_mu_k\rangle_1^2\leq \sum_{j,k}\langle u_k\braket{u_j}{|\mathbf{a}_N|^4_{12}u_j}_2 u_k\rangle_1\\
	 &\leq \sum_{k}\braket{ u_k}{(|\mathbf{a}_N|^4\ast \rho) u_k}_2\\
	 &\leq \norm{|\mathbf{a}_N|^4\ast \rho}_{\infty}N\simeq N^{-2},
	\end{align*}
	to conclude
\begin{align*}
	&\leq C\sqrt{\alpha_m}\braket{\hat{m}(-1/2)\Psi}{\left(N^2 -\sum_{m=1}^N\sum_{\substack{n=1\\n\neq m}}^N p_n p_mp_3\one\right)\Psi}^{1/2}\nonumber\\
	&\leq CN\sqrt{\alpha_m}\braket{\hat{m}(-1/2)\Psi}{ \left(\one -p_1 p_2p_3\right)\Psi}^{1/2}+C\sqrt{N}\sqrt{\alpha_m}\braket{\hat{m}(-1/2)\Psi}{ \Psi}^{1/2}\\
	&\leq CN \left(\alpha_m +N^{-1/2}\right).
	\end{align*}	
\textbf{Study} of the $R^{(\cdot)}$'s, we start with
			\begin{align*}
	\left|R^{(\Delta)}\right|&\leq 2N\left|\langle\Psi,p_{1}p_{2}p_{3}(-\Delta_1)\Phi\rangle\right|	\\
	&=2N\left|\langle \Psi, p_{1}p_{2}p_{3}(-\Delta_1)\hat{m}(1/4)\hat{m}(-1/4)q_1p_2p_3\Psi\rangle\right|\\
	&\leq CN\norm{ \hat{m}(-1/4)q_1\Psi}\braket{\Psi}{ \hat{m}(1/2)p_{1}p_{2}p_{3}(-\Delta_1)p_1p_2p_3\Psi}^{1/2}\\
	&\leq CN\norm{ \hat{m}(1/4)\Psi}\braket{ \hat{m}(1/4)\Psi}{p_{1}(-\Delta_1)p_1 \hat{m}(1/4)\Psi}^{1/2}\\
	&\leq CN\norm{ \hat{m}(1/4)\Psi}^2\\
	&\leq CN\alpha_m(t)
		\end{align*}
		by using \eqref{ine:Delta}, $p_2p_3\leq \one$, $\hat{m}(-1/2)q_1=\hat{m}(1/2)$ and the Cauchy-Schwarz inequality.
		\begin{align*}
	\left|R^{(v)}\right|&\leq 2N(N-1)\left|\langle\Psi,p_{1}p_{2}p_{3}v_{12}\Phi\rangle\right|	\\
	&=2N(N-1)\left|\langle \Psi, p_{1}p_{2}p_{3}v_{12}\left(q_1p_2p_3+ p_1q_2p_3 +q_1q_2p_3 +q_1q_2q_3\right)\Psi\rangle\right|\\
&=2N(N-1)\left|\langle \hat{m}(1/4)v_{12}p_{1}p_{2}p_{3}\Psi, \hat{m}(-1/4) \left(q_1p_2p_3+ p_1q_2p_3 +q_1q_2p_3 +q_1q_2q_3\right)\Psi\rangle\right|\\
&\leq CN^2\braket{p_{3}\hat{m}(1/4)\Psi}{p_{1}p_{2}v^2_{12}p_{1}p_{2}p_{3}\hat{m}(1/4)\Psi}^{1/2}\norm{ \hat{m}(-1/4) \left(q_1p_2p_3+ p_1q_2p_3 +q_1q_2p_3 +q_1q_2q_3\right)\Psi}\\
&\leq CN\norm{\hat{m}(1/4)\Psi}\norm{\hat{m}(-1/4)q_1\Psi}\\
&\leq CN\alpha_m(t)
		\end{align*}	
		by using Cauchy-Schwarz, the estimate of Lemma \ref{ine:ppv2pp}, $p, q\leq \one$ and the fact that there is at least one $q$ in each factor of $\hat{m}(-1/4)$.
		The next rest is
			\begin{align*}
	\left|R^{(w)}\right|&\leq 2N(N-1)(N-2)\left|\langle\Psi,p_{1}p_{2}p_{3}w_{123}\Phi\rangle\right|	\\
	&=2N(N-1)(N-2)\left|\langle \Psi, p_{1}p_{2}p_{3}w_{123}\left(q_1p_2p_3+ p_1q_2p_3 +p_1p_2q_3+p_1q_2q_3+q_1p_2q_3+q_1q_2p_3 +q_1q_2q_3\right)\Psi\rangle\right|\\
&=2N(N-1)(N-2)\left|\langle \hat{m}(1/4)w_{123}p_{1}p_{2}p_{3}\Psi, \hat{m}(-1/4) \left(q_1p_2p_3+2p_1p_2q_3+p_1q_2q_3+2q_1q_2p_3 +q_1q_2q_3\right)\Psi\rangle\right|\\
&\leq CN^3\braket{\hat{m}(1/4)\Psi}{p_{1}p_{2}p_3w^2_{123}p_{1}p_{2}p_{3}\hat{m}(1/4)\Psi}^{1/2}\norm{ \hat{m}(-1/4)\left(q_1p_2p_3+2p_1p_2q_3+p_1q_2q_3+2q_1q_2p_3 +q_1q_2q_3\right)\Psi}\\
&\leq CN\norm{\hat{m}(1/4)\Psi}\norm{\hat{m}(-1/4)q_1\Psi}\\
&\leq CN\alpha_m(t)
		\end{align*}	
		by Cauchy-Schwarz's inequality, the estimate \eqref{ine:ppw} and the fact that there is at least one $q$ in each factor of $\hat{m}(-1/4)$. The last rest is
		\begin{align*}
	\left|R^{(\mathbf{a})}\right|&\leq CN^3\left|\langle\Psi,p_{1}p_{2}p_{3}|\mathbf{a}_N|^2_{12}\Phi\rangle\right|	\\
	&=CN^3\left|\langle \Psi, p_{1}p_{2}p_{3}|\mathbf{a}_N|^2_{12}\left(q_1p_2p_3+ p_1q_2p_3 +q_1q_2p_3 +q_1q_2q_3\right)\Psi\rangle\right|\\
&=CN^3\left|\langle \hat{m}(1/4)|\mathbf{a}_N|^2_{12}p_{1}p_{2}p_{3}\Psi, \hat{m}(-1/4) \left(q_1p_2p_3+ p_1q_2p_3 +q_1q_2p_3 +q_1q_2q_3\right)\Psi\rangle\right|\\
&\leq CN^3\braket{p_{3}\hat{m}(1/4)\Psi}{p_{1}p_{2}|\mathbf{a}_N|^4_{12}p_{1}p_{2}p_{3}\hat{m}(1/4)\Psi}^{1/2}\norm{ \hat{m}(-1/4) \left(q_1p_2p_3+ p_1q_2p_3 +q_1q_2p_3 +q_1q_2q_3\right)\Psi}\\
&\leq CN\norm{\hat{m}(1/4)\Psi}\norm{\hat{m}(-1/4)q_1\Psi}\\
&\leq CN\alpha_m(t)
		\end{align*}	
		by using Cauchy-Schwarz, the estimate \eqref{ine:pa4}, $p, q\leq \one$ and the fact that there is at least one $q$ in each factor of $\hat{m}(-1/4)$.
		\end{proof}
To be able to conclude our control on the kinetic energy of excited particles, we have to treat the second line of \eqref{eq:posterm1}. This is the content of the following lemma.
		\begin{lemma}[\textbf{Control of the kinetic energy of excited particles}]\mbox{}\\\label{lem:kinetic}
	Under the assumptions of Lemma \ref{lem:mf} we have that
	\begin{equation}
		\norm{\nabla_1 q_1\Psi_N}^2\lesssim \frac{\left| E_N(t)-\mathcal{E}_R^{\mathrm{af}}[\omega(t)] \right|}{N} +\alpha_m(t)+\frac{1}{\sqrt{N}}
	\end{equation}
	provided that $g\geq 12\beta^2$.
\end{lemma}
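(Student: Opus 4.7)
The plan is to combine Lemma \ref{lem:mf} with a weighted completing-the-square estimate that bounds the one-body operator $\mathcal{H}_1:=-\Delta_1+(N-1)v_{12}+(N-1)(N-2)w_{123}+\tfrac{g}{2}N(N-1)|\mathbf{a}_N|^2_{12}$ from below by a positive multiple of $-\Delta_1$, in expectation on the antisymmetric vector $\Phi=(\one-p_1p_2p_3)\Psi_N$. Given such a bound, the identity of Lemma \ref{lem:mf} rewritten as
\[
N\braket{\Phi}{\mathcal{H}_1\Phi}=\bigl(E_N(t)-\mathcal{E}_R^{\mathrm{af}}[\omega(t)]\bigr)-\beta^2N(N-1)\braket{\Psi_N}{|\mathbf{a}_N|^2_{12}\Psi_N}-\mathcal{E},
\]
together with the non-negativity of the $\beta^2$-term and the bound $|\mathcal{E}|/N\lesssim\alpha_m(t)+N^{-1/2}$ visible in the proof of that lemma, directly yields $N\braket{\Phi}{-\Delta_1\Phi}\lesssim|E_N-\mathcal{E}_R^{\mathrm{af}}|+|\mathcal{E}|$.

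The positivity step uses two pointwise operator bounds. Expanding $\|\sqrt{\epsilon}(-i\nabla_1)\phi\pm(\beta/\sqrt{\epsilon})\mathbf{a}_N(x_1-x_2)\phi\|^2\ge0$ and using $\nabla\cdot\mathbf{a}_N=0$ yields, for any $\epsilon>0$,
\[
\pm v_{12}\le\epsilon(-\Delta_1)+\tfrac{\beta^2}{\epsilon}|\mathbf{a}_N|^2_{12},
\]
while $2|a\cdot b|\le|a|^2+|b|^2$ gives $\pm w_{123}\le\tfrac{\beta^2}{2}\bigl(|\mathbf{a}_N|^2_{12}+|\mathbf{a}_N|^2_{13}\bigr)$. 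The antisymmetry of $\Phi$ in $x_2,\dots,x_N$ makes $\braket{\Phi}{|\mathbf{a}_N|^2_{13}\Phi}=\braket{\Phi}{|\mathbf{a}_N|^2_{12}\Phi}$, so these combine into
\[
\braket{\Phi}{\mathcal{H}_1\Phi}\ge\bigl(1-(N-1)\epsilon\bigr)\braket{\Phi}{-\Delta_1\Phi}+(N-1)\Bigl[\tfrac{gN}{2}-\tfrac{\beta^2}{\epsilon}-\beta^2(N-2)\Bigr]\braket{\Phi}{|\mathbf{a}_N|^2_{12}\Phi}.
\]
The main obstacle is to pick $\epsilon$ so that both coefficients stay non-negative: the scaling $\epsilon\sim 1/N$ is forced by the $(N-1)$ prefactor on $v_{12}$, and the specific choice $\epsilon=1/(5(N-1))$ produces a kinetic coefficient equal to $4/5+O(1/N)$ and a magnetic bracket of order $\tfrac{N}{2}(g-12\beta^2)+O(\beta^2)$, which is the exact origin of the threshold $g\ge 12\beta^2$. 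Dropping the non-negative magnetic piece then yields $\braket{\Phi}{\mathcal{H}_1\Phi}\ge\tfrac{4}{5}\braket{\Phi}{-\Delta_1\Phi}\ge 0$.

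Finally I extract $\|\nabla_1 q_1\Psi_N\|^2$ from $\braket{\Phi}{-\Delta_1\Phi}$. Since $q_1p_1p_2p_3=0$, one has $q_1\Phi=q_1\Psi_N$, so $\|\nabla_1q_1\Psi_N\|^2=\|\nabla_1q_1\Phi\|^2$; decomposing $\nabla_1\Phi=\nabla_1p_1\Phi+\nabla_1q_1\Phi$ and applying a weighted Cauchy-Schwarz to the cross term yields $\|\nabla_1q_1\Phi\|^2\le 2\braket{\Phi}{-\Delta_1\Phi}+2\braket{\Phi}{p_1(-\Delta_1)p_1\Phi}$. The last term is controlled by inequality \eqref{ine:Delta} of Lemma \ref{lem:pq} (giving $p_1(-\Delta_1)p_1\lesssim\one$) together with $\|\Phi\|^2\lesssim\braket{\Psi_N}{q_1\Psi_N}\le\alpha_m(t)$, the latter following from $\one-p_1p_2p_3$ being expressible as a sum of terms each with at least one $q$, and $k/N\le(k/N)^{1/2}$ for $0\le k\le N$. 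Dividing by $N$ and inserting $|\mathcal{E}|/N\lesssim\alpha_m+N^{-1/2}$ closes the proof.
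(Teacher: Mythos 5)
Your proof is correct and follows essentially the same route as the paper: apply the identity of Lemma \ref{lem:mf}, drop the non-negative $\beta^2 N(N-1)\braket{\Psi_N}{|\mathbf{a}_N|^2_{12}\Psi_N}$ term, use a weighted Cauchy--Schwarz on $v_{12}$ and $w_{123}$ together with the antisymmetry of $\Phi$ in $x_2,x_3$ to prove $\braket{\Phi}{\mathcal{H}_1\Phi}\gtrsim\braket{\Phi}{-\Delta_1\Phi}$ under $g\geq 12\beta^2$, and then extract $\|\nabla_1 q_1\Psi_N\|^2$ via $p_1(-\Delta_1)p_1\lesssim\one$ and $\|\Phi\|^2\lesssim\alpha_m(t)$. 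The only cosmetic deviations are your explicit choice $\epsilon=1/(5(N-1))$ (the paper equivalently works with a fixed $\kappa=1/4$ applied directly to $Nv_{12}$), and your cleaner extraction step through $q_1\Phi=q_1\Psi_N$ and $\nabla_1q_1\Phi=\nabla_1\Phi-\nabla_1 p_1\Phi$, whereas the paper expands $q_1=1-p_1(p_2+q_2)(p_3+q_3)$ and bounds $\|\nabla_1 p_1 q_2\Psi_N\|$ --- both variants rely on inequality \eqref{ine:Delta} and land on the same estimate.
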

\begin{proof}
We start by noticing that
	\begin{align*}
		\|\nabla_{1}q_{1}\Psi_{N}(t)\| & =\|\nabla_{1}(1-p_{1}(p_{2}+q_{2})(p_{3}+q_{3}))\Psi_{N}(t)\|\\
		& \leq\|\nabla_{1}(1-p_{1}p_{2}p_{3})\Psi_{N}(t)\|+3\|\nabla_1 p_1q_{2}\Psi_{N}(t)\|\\
		& \leq\|\nabla_{1}\Phi\|+3\left\langle\Psi_{N}(t),\hat{m}(\tfrac12)\Psi_{N}(t)\right\rangle^{1/2}.
	\end{align*}
	by using the estimate \eqref{ine:Delta}. We will now control $\|\nabla_{1}\Phi\|$.
For any $0\leq \kappa\leq 1$, the application of Lemma \ref{lem:mf} provides
		\begin{align*}
		N\kappa\braket{\Phi}{\left(-\Delta_1\right)\Phi}&\leq (E_{\Psi_N}-\mathcal{E}_{R}^{\mathrm{af}}[\omega])+N\left(\alpha_m(t)+\frac{1}{\sqrt{N}}\right)\\
&\quad -N\braket{\Phi}{\left(-(1-\kappa)\Delta_1+(N-1)v_{12}+(N-1)(N-2)w_{123}+\frac{g}{2}N(N-1)|\mathbf{a}_N|^2_{12}\right)\Phi}
		\end{align*}
		where $\Phi=(\one -p_1p_2p_3)\Psi_N$ and where we used that $\braket{\Psi_N}{|\mathbf{a}_N|^2_{12}\Psi_N}\geq 0$ to drop it. We have to show that	
\begin{align}
\braket{\Phi}{\left(-(1-\kappa)\Delta_1+(N-1)v_{12}+(N-1)(N-2)w_{123}+\frac{g}{2}N(N-1)|\mathbf{a}_N|^2_{12}\right)\Phi}\geq 0\label{ine:pos}
\end{align}
and the proof will be complete. We fix $\kappa=1/4$. We use a weighted Cauchy-Schwarz inequality in order to bound our $v_{12}$ and the $w_{123}$ operators, we get
\begin{align}
\left|Nv_{12}\right|&\leq \frac{1}{4} (-\Delta_1) +4\beta^2N^2|\mathbf{a}_N|_{12}^2 \\
N^2\left|w_{123}\right|&\leq 2\beta^2N^2|\mathbf{a}_N|_{12}^2
\end{align}
by using the symmetry of $\Phi$ in the variables $x_1 ,x_2 ,x_3$. We use them to show that the operator of line \eqref{ine:pos} is larger than
\begin{equation}
 \frac{1}{4} (-\Delta_1)+\left(\frac{g}{2}-6\beta^2\right)N^2|\mathbf{a}_N|^2_{12}\geq \left(\frac{g}{2}-6\beta^2\right)
\end{equation}
which is positive under the constraint $g\geq 12\beta^2$. This concludes the proof.
\end{proof}


\appendix
\section[\qquad Appendix]{Appendix}\label{sec:appendix}
\subsection{Proofs of Lemmas \ref{lem:Ha} and Proposition \ref{pro:sla}}
Here we prove Proposition \ref{pro:sla}.
\begin{proof}\label{proof:slater}
Using the definitions we have
\begin{align*}
\omega_N^{(1)}(x_1;y_1)&=\frac{1}{N!}\sum_{\sigma\in S_N}\sum_{\sigma'\in S_N}{\mathrm{sgn}(\sigma)}{\mathrm{sgn}(\sigma')}\overline{\psi_{\sigma(1)}(x_1)}\psi_{\sigma'(1)}(y_1)\\
&\quad\int_{\R^{2(N-1)}}\overline{\psi_{\sigma(2)}(u_2)\psi_{\sigma'(2)}(u_2)\dots\psi_{\sigma(N)}(u_N)}\psi_{\sigma'(N)}(u_N)\mathrm{d}u_2\dots\d u_N\\
&=\frac{1}{N!}\sum_{\sigma , \sigma'\in S_N}\mathrm{sgn}(\sigma)\mathrm{sgn}(\sigma')\left(\prod_{j=2}^N\delta_{\sigma(j),\sigma'(j)}\right)\overline{\psi_{\sigma(1)}(x_1)}\psi_{\sigma'(1)}(y_1)\\
&=\frac{1}{N}\sum_{\sigma(1)=1}^N\overline{\psi_{\sigma(1)}(x_1)}\psi_{\sigma(1)}(y_1)
\end{align*}
where we have used that the delta's force $\sigma =\sigma'$. The same argument leads to
\begin{align*}
\omega_N^{(2)}(x_1,x_2;y_1,y_2)&=\frac{1}{N!}\sum_{\sigma\in S_N}\sum_{\sigma'\in S_N}{\mathrm{sgn}(\sigma)}{\mathrm{sgn}(\sigma')}\overline{\psi_{\sigma(1)}(x_1)}\psi_{\sigma'(1)}(y_1)\overline{\psi_{\sigma(1)}(x_2)}\psi_{\sigma'(1)}(y_2)\\
&\quad\int_{\R^{2(N-2)}}\overline{\psi_{\sigma(3)}(u_3)\psi_{\sigma'(3)}(u_3)\dots\psi_{\sigma(N)}(u_N)}\psi_{\sigma'(N)}(u_N)\mathrm{d}u_3\dots\d u_N\\
&=\frac{1}{N!}\sum_{\sigma , \sigma'\in S_N}\mathrm{sgn}(\sigma)\mathrm{sgn}(\sigma')\left(\prod_{j=3}^N\delta_{\sigma(j),\sigma'(j)}\right)\overline{\psi_{\sigma(1)}(x_1)}\psi_{\sigma'(1)}(y_1)\overline{\psi_{\sigma(2)}(x_2)}\psi_{\sigma'(2)}(y_2)\\
&=\frac{1}{N(N-1)}\sum_{j=1}^N\sum_{k=1}^N\left(\overline{\psi_{j}(x_1)}\psi_{j}(y_1)\overline{\psi_{k}(x_2)}\psi_{k}(y_2)-\overline{\psi_{j}(x_1)}\psi_{k}(y_1)\overline{\psi_{k}(x_2)}\psi_{j}(y_2)\right)\\
&=\frac{1}{N(N-1)}\sum_{j=1}^N\sum_{k=1}^N\overline{\psi_j(x_1)\psi_k(x_2)}(\one -U_{jk})\psi_j(y_1)\psi_k(y_2)
\end{align*}
where the delta's have fixed $\sigma(j)=\sigma'(j)$ for $j=3,\dots,N$ and left the cases $\sigma(1)=\sigma'(1)$ or $\sigma(1)=\sigma'(2)$ which only differ in one transmutation. Note that, in the above sums, the term $j=k$ cancels. The last computation involves in a similar way
\begin{equation}
\overline{\psi_{\sigma(1)}(x_1)}\psi_{\sigma'(1)}(y_1)\overline{\psi_{\sigma(2)}(x_2)}\psi_{\sigma'(2)}(y_2)\overline{\psi_{\sigma(3)}(x_3)}\psi_{\sigma'(3)}(y_3)
\end{equation}
in which we can have the six combinations leading to $(\one -U_{jk}-U_{jm}-U_{km}+U_{jk}U_{km}+U_{km}U_{jm})$.
\end{proof}

Here we prove Lemma \ref{lem:Ha}.
\begin{proof}\label{proof:Hartree}
We evaluate $\big <\Psi^{\mathrm{SL}} ,H_{N,R}\Psi^{\mathrm{SL}}\big >=\big <H_{N,R}\big >_{\mathrm{SL}}$ term by term like in $\eqref{expanded_H}$. Until the end of this section $\langle\cdot\rangle$ means $\langle\cdot\rangle_{\mathrm{SL}}$.
We will use the following notation:
\begin{align}
&W_{1}=\left (-\im\nabla_1\right )^{2}\nonumber\\
&W_{12}=(-\im\nabla_1)\cdot\mathbf{a}_{N}(x_1-x_2)+\mathbf{a}_{N}(x_1 -x_2)\cdot(-\im\nabla_1)\nonumber\\
&W_{123}=\mathbf{a}_{N}(x_{1}-x_{2})\cdot\mathbf{a}_{N}(x_{1}-x_{3})\nonumber
\end{align}

We use that $\omega$ has a kernel $\omega(x,y)$ such that $\norm{\rho_{\omega}}_1=\int_{\R^2}\omega(x,x)=N$ and $\int_{\R^4}\left|\omega(x,y)\right|^2 =N$ to extract the direct term corresponding to the Hartree energy plus the exchange terms that we bound as errors.
\begin{align}
\big <H_{N,R}\big >_{\mathrm{SL}}=\mathcal{E}^{\mathrm{af}}_{R}[\omega]&-\int_{\R^{4}}\omega (x_{1},x_{2})W_{12}\overline{\omega (x_{1},x_{2}})\d x_{1}\d x_{2}\nonumber\\
&+\int_{\R^{4}}\left|\mathbf{a}_{N}(x_{1}-x_{2})\right|^{2}\Big (\rho_{\omega }(x_{1})\rho_{\omega }(x_{2})
-\b \omega (x_{1},x_{2})\b^{2}\Big )\d x_{1} \d x_{2}\nonumber\\
&-N\frac{g}{2}\int_{\R^{4}}\left|\mathbf{a}_{N}(x_{1}-x_{2})\right|^{2}
\b \omega (x_{1},x_{2})|^2d x_{1} \d x_{2}\nonumber\\
&-\int_{\R^{6}}W_{123}\Big [\rho_{\omega} (x_{1})\b\omega(x_{2},x_{3})\b^{2}+\rho_{\omega} (x_{2})\b\omega(x_{1},x_{3})\b^{2}+\rho_{\omega} (x_{3})\b\omega(x_{1},x_{2})\b^{2}\Big ]\d x_{1}\d x_{2}\d x_{3}\nonumber\\
&+\int_{\R^{6}}W_{123}\Big [2\Re\Big (\omega(x_{1},x_{2})\omega(x_{2},x_{3})\omega(x_{3},x_{1})\Big )\Big ]\d x_{1}\d x_{2}\d x_{3}\label{eq:Hartreeerr}
\end{align}
We have four errors terms to bound.For the first of them, we use
\begin{equation}
\left|W_{12}\right|\leq 2\varepsilon N^{-1}(-\Delta_1)+2N\varepsilon^{-1} |\mathbf{a}_{N}|^2(x_1 -x_2)\nonumber
\end{equation}
to bound
\begin{align}
\left|\int_{\R^{4}}\omega (x_{1},x_{2})W_{12}\overline{\omega (x_{1},x_{2}})\d x_{1}\d x_{2}\right|&\lesssim N^{-1}\varepsilon\tr [-\Delta \omega] +\varepsilon^{-1}NN^{-2 +s}R^{-2s}N\nonumber\\
&\lesssim \varepsilon +\varepsilon^{-1}N^{s(1+2r)}\lesssim  N^{\frac{s}{2}(1+2r)}\nonumber
\end{align}
by picking $\varepsilon=N^{\frac{s}{2}(1+2r)}$.
The second error gives
\begin{align*}
&\left| \int_{\R^{4}}\left|\mathbf{a}_{N}(x_{1}-x_{2})\right|^{2}\Big (\rho_{\omega }(x_{1})\rho_{\omega }(x_{2})
-\b \omega (x_{1},x_{2})\b^{2}\Big )\d x_{1} \d x_{2}\right|\\
&\quad\quad\quad\quad\quad\quad\leq \norm{(|\mathbf{a}_{N}|^2\ast\rho)\rho}_{1}+N\norm{\mathbf{a}_{N}}^2_{\infty}\lesssim 1
\end{align*}
by \eqref{ine:LRs} and under $s\leq 1/2$ and $r\leq 1/2$. The third one is the worst and is bounded by
\begin{align}
N\norm{\mathbf{a}_{N}}_{\infty}^2\int_{\R^4}|\omega(x,y)|^2\leq N^{s(1+2r)}.
\end{align}
For the next term we use
\begin{align}
\left|\int_{\R^{6}}W_{123}\rho_{\omega} (x_{1})\b\omega(x_{2},x_{3})\b^{2}\d x_{1}\d x_{2}\d x_{3}\right|\leq N\norm{|\mathbf{a}_N|^2\ast\rho}_{\infty}\lesssim 1\nonumber
\end{align}
by \eqref{ine:LRs} and similarly that
\begin{align*}
\left|\int_{\R^{6}}W_{123}\rho_{\omega} (x_{2})\b\omega(x_{1},x_{3})\b^{2}\d x_{1}\d x_{2}\d x_{3}\right|&\leq N\sup_{x_1 , x_3}\int_{\R^6}\left(\varepsilon^{-1} |\mathbf{a}_{N}|^2(x_1 -x_2) +\varepsilon|\mathbf{a}_{N}|^2(x_1 -x_3)\right)\rho(x_2)\d x_{2}\nonumber\\
&\lesssim N\varepsilon^{-1}\norm{|\mathbf{a}_N|^2\ast\rho}_{\infty}+ N^2\varepsilon N^{-2+s}R^{2s}\nonumber\\
&\lesssim \varepsilon^{-1}+\varepsilon N^{s(1+2r)}\lesssim  N^{\frac{s}{2}(1+2r)}.\nonumber
\end{align*}
The last term of \eqref{eq:Hartreeerr} is treated as above, using that 
\begin{equation}
\left|\omega(x_{1},x_{2})\omega(x_{2},x_{3})\omega(x_{3},x_{1})\right|\lesssim \left|\omega(x_{1},x_{2})\right|^2\rho(x_3) +\left|\omega(x_{2},x_{3})\right|^2\rho(x_1).\nonumber
\end{equation}
\end{proof}

Here we prove Lemma \ref{lem:comw}
\begin{proof}\label{pro:comm}
We use that
\begin{equation}
P_N^{(k)}=\sum_{d=1}^nP^{(d)}_{1\dots n}P_{n+1\dots N}^{(k-d)}
\end{equation}
in which $P^{(d)}_{1\dots n}$ contains $d$ projectors $q$ and $P_{n+1\dots N}^{(k-d)}$ contains $k-d$ of them. We can then use that $P^{(b)}_{1\dots n}P^{(d)}_{1\dots n}=\delta_{db}P^{(b)}_{1\dots n}$ and $P^{(a)}_{1\dots n}P^{(d)}_{1\dots n}=\delta_{da}P^{(a)}_{1\dots n}$ to get
\begin{align*}
\left(P^{(a)}_{1\dots n}h_{1\dots n}P^{(b)}_{1\dots n}\right)\hat{f}&=\sum_{k\in\mathbb{Z}}^N f(k)P^{(a)}_{1\dots n}h_{1\dots n}P^{(b)}_{1\dots n}\sum_{d=1}^nP^{(d)}_{1\dots n}P_{n+1\dots N}^{(k-d)}\\
&=\sum_{k\in\mathbb{Z}}^N f(k)P^{(a)}_{1\dots n}h_{1\dots n}P^{(b)}_{1\dots n}P_{n+1\dots N}^{(k-b)}\\
&=\sum_{k\in\mathbb{Z}}^N f(k)P_{n+1\dots N}^{(k-b)}P^{(a)}_{1\dots n}h_{1\dots n}P^{(b)}_{1\dots n}\\
&=\sum_{k\in\mathbb{Z}}^N f(k)\sum_{d=0}^nP_{1\dots n}^{d}P_{n+1\dots N}^{(k+a-b-d)}P^{(a)}_{1\dots n}h_{1\dots n}P^{(b)}_{1\dots n}\\
&=\sum_{k\in\mathbb{Z}}^N f(k)P_{N}^{(k+a-b)}\left(P^{(a)}_{1\dots n}h_{1\dots n}P^{(b)}_{1\dots n}\right)\\
&=\sum_{m\in\mathbb{Z}}^N f(m+b-a)P_{N}^{(m)}\left(P^{(a)}_{1\dots n}h_{1\dots n}P^{(b)}_{1\dots n}\right)\\
&=\widehat{\tau_{b-a}f}\left(P^{(a)}_{1\dots n}h_{1\dots n}P^{(b)}_{1\dots n}\right).
\end{align*}
The seconde relation of the lemma follows the exact same way.
\end{proof}

\bibliographystyle{siam}
\bibliography{biblio_June2025.bib}

\begin{thebibliography}{10}

\bibitem{AdaTet}
{\sc R.~Adami and A.~Teta}, {\em On the aharonov-bohm effect.}, Phys. Rev. D
  45, 687-696,  (1998).

\bibitem{AvrHerSim}
{\sc I.~Avron, J.~Herbst and B.~Simon}, {\em {S}chr\"{o}dinger operators with
  magnetic fields. 1. general interactions}, Duke Math. J., 45 N 4 (1978).

\bibitem{BacBrePetPicTza-16}
{\sc V.~Bach, S.~Breteaux, S.~Petrat, P.~Pickl, and T.~Tzaneteas}, {\em Kinetic
  energy estimates for the accuracy of the time-dependent {Hartree-Fock}
  approximation with {Coulomb} interaction}, J. Math. Pures Appl., 105 (2016),
  pp.~1--30.

\bibitem{BarGolGotMau-03}
{\sc C.~Bardos, F.~Golse, A.~D. Gottlieb, and N.~J. Mauser}, {\em Mean field
  dynamics of fermions and the time-dependent {H}artree-{F}ock equation}, J.
  Math. Pures Appl. (9), 82 (2003), pp.~665--683.

\bibitem{Bar_Gol_Got_Nor_03}
{\sc C.~Bardos, F.~Golse, A.~D. Gottlieb, and N.~J. Mauser}, {\em Mean field
  dynamics of fermions and the time-dependent hartree–fock equation},
  Journal de Mathématiques Pures et Appliquées, 82 (2003), pp.~665--683.

\bibitem{Bar_Mau_04}
{\sc C.~Bardos, F.~Golse, A.~D. Gottlieb, and N.~J. Mauser}, {\em Accuracy of
  the time-dependent hartree--fock approximation for uncorrelated initial
  states}, Journal of Statistical Physics, 115 (2004), pp.~1037--1055.

\bibitem{Ben_Jak_Por_Saf_Sch_16}
{\sc N.~Benedikter, V.~Jakšić, M.~Porta, C.~Saffirio, and B.~Schlein}, {\em
  Mean-field evolution of fermionic mixed states}, Communications on Pure and
  Applied Mathematics, 69 (2016), pp.~2250--2303.

\bibitem{Ben_Por_Sch_14bis}
{\sc N.~Benedikter, M.~Porta, and B.~Schlein}, {\em Mean-field dynamics of
  fermions with relativistic dispersion}, Journal of Mathematical Physics, 55
  (2014), p.~021901.

\bibitem{BenPorSch-14}
\leavevmode\vrule height 2pt depth -1.6pt width 23pt, {\em Mean-field evolution
  of fermionic systems}, Communications in Mathematical Physics, 331 (2014),
  pp.~1--45.

\bibitem{BenPorSch-15}
{\sc N.~{Benedikter}, M.~{Porta}, and B.~{Schlein}}, {\em {Effective Evolution
  Equations from Quantum Dynamics}}, Springer Briefs in Mathematical Physics,
  Springer, 2016.

\bibitem{Ber_Bou_Sau_95}
{\sc L.~Berge, A.~de~Bouard, and J.~C. Saut}, {\em Blowing up time-dependent
  solutions of the planar, {C}hern--{S}imons gauged nonlinear schrodinger
  equation}, Nonlinearity, 8 (1995), p.~235.

\bibitem{CorLunRou-16}
{\sc M.~Correggi, D.~Lundholm, and N.~Rougerie}, {\em Local density
  approximation for the almost-bosonic anyon gas}, Anal. PDE., 10 (2017),
  pp.~1169--1200.

\bibitem{CorOdd}
{\sc M.~Correggi and L.~Oddis}, {\em Hamiltonians for two-anyon systems.},
  Rend. Mat. Appl. 39, 277-292,  (2018).

\bibitem{DabSto}
{\sc L.~Dabrowski and P.~Stovicek}, {\em Aharonov-bohm effect with
  $\delta$-type interaction.}, J. Math. Phys. 39, 47-62,  (1998).

\bibitem{Elg_Erd_Sch_Tze_04}
{\sc A.~Elgart, L.~Erdős, B.~Schlein, and H.-T. Yau}, {\em Nonlinear hartree
  equation as the mean field limit of weakly coupled fermions}, Journal de
  Mathématiques Pures et Appliquées, 83 (2004), pp.~1241--1273.

\bibitem{Fal_Gia_Hain_Por_25}
{\sc M.~Falconi, E.~L. Giacomelli, C.~Hainzl, and M.~Porta}, {\em The dilute
  fermi gas via bogoliubov theory}, Annales Henri Poincar{\'e}, 22 (2021),
  pp.~2283--2353.

\bibitem{Fou_Lew_Sol_18}
{\sc S.~Fournais, M.~Lewin, and J.~P. Solovej}, {\em The semi-classical limit
  of large fermionic systems}, Calculus of Variations and Partial Differential
  Equations, 57 (2018), p.~105.

\bibitem{Fou_Mad_Pet_20}
{\sc S.~Fournais and P.~S. Madsen}, {\em Semi-classical limit of confined
  fermionic systems in homogeneous magnetic fields}, Annales Henri
  Poincar{\'e}, 21 (2020), pp.~1401--1449.

\bibitem{FroKno-11}
{\sc J.~Fr{\"o}hlich and A.~Knowles}, {\em A microscopic derivation of the
  time-dependent {H}artree-{F}ock equation with {C}oulomb two-body
  interaction}, J. Stat. Phys., 145 (2011), pp.~23--50.

\bibitem{Ghi_88}
{\sc J.-M. Ghidaglia}, {\em Generalized lieb-thirring inequalities and the
  dimension of attractors associated to the ginzburg-landau p.d.e.}, in Trends
  in Applications of Mathematics to Mechanics, J.~F. Besseling and W.~Eckhaus,
  eds., Berlin, Heidelberg, 1988, Springer Berlin Heidelberg, pp.~79--85.

\bibitem{Gir_20}
{\sc T.~Girardot}, {\em Average field approximation for almost bosonic anyons
  in a magnetic field}, J. Math. Phys., 61 (2020), p.~071901.

\bibitem{Gir_Lee_25}
{\sc T.~Girardot and J.~Lee}, {\em Derivation of the chern-simons-schr\"odinger
  equation from the dynamics of an almost-bosonic-anyon gas}, 2025.

\bibitem{Gir_Rou_21}
{\sc T.~Girardot and N.~Rougerie}, {\em Semiclassical limit for almost
  fermionic anyons}, Communications in Mathematical Physics, 387 (2021),
  pp.~427--480.

\bibitem{Gir_Rou_23}
\leavevmode\vrule height 2pt depth -1.6pt width 23pt, {\em A lieb--thirring
  inequality for extended anyons}, Letters in Mathematical Physics, 113 (2023),
  p.~6.

\bibitem{Hai_Por_Rex_20}
{\sc C.~Hainzl, M.~Porta, and F.~Rexze}, {\em On the correlation energy of
  interacting fermionic systems in the mean-field regime}, Communications in
  Mathematical Physics, 374 (2020), pp.~485--524.

\bibitem{KnoPic-10}
{\sc A.~Knowles and P.~Pickl}, {\em Mean-field dynamics: singular potentials
  and rate of convergence}, Commun. Math. Phys., 298 (2010), pp.~101--138.

\bibitem{LarLun-16}
{\sc S.~Larson and D.~Lundholm}, {\em Exclusion bounds for extended anyons},
  2018.

\bibitem{Lew_Mad_Pet_Tri_19}
{\sc M.~Lewin, P.~S. Madsen, and A.~Triay}, {\em Semi-classical limit of large
  fermionic systems at positive temperature}, Journal of Mathematical Physics,
  60 (2019), p.~091901.

\bibitem{Lieb-76}
{\sc E.~H. Lieb}, {\em The stability of matter}, Rev. Mod. Phys., 48 (1976),
  pp.~553--569.

\bibitem{Lie_Sei_Sol_05}
{\sc E.~H. Lieb, R.~Seiringer, and J.~P. Solovej}, {\em Ground-state energy of
  the low-density fermi gas}, Phys. Rev. A, 71 (2005), p.~053605.

\bibitem{Lie_Thi_75}
{\sc E.~H. Lieb and W.~E. Thirring}, {\em Bound for the kinetic energy of
  fermions which proves the stability of matter}, Phys. Rev. Lett., 35 (1975),
  pp.~687--689.

\bibitem{Lim_17}
{\sc Z.~M. Lim}, {\em Well-posedness and scattering of the
  chern-simons-schr\"odinger system}, PhD thesis, (2017).

\bibitem{Liu_Smi_Tat_14}
{\sc B.~Liu, P.~Smith, and D.~Tataru}, {\em {Local Wellposedness of
  {C}hern–{S}imons–{S}chr\"odinger}}, Int. Math. Res. Not., 2014 (2013),
  pp.~6341--6398.

\bibitem{Lun_24}
{\sc D.~Lundholm}, {\em Properties of 2D anyon gas}, Elsevier, 2024,
  p.~450–484.

\bibitem{LunRou}
{\sc D.~Lundholm and N.~Rougerie}, {\em {The average field approximation for
  almost bosonic extended anyons}}, J. Stat. Phys., 161 (2015), pp.~1236--1267.

\bibitem{LunSol-14}
{\sc D.~Lundholm and J.~P. Solovej}, {\em {Local exclusion and Lieb-Thirring
  inequalities for intermediate and fractional statistics}}, Ann. Henri
  Poincar\'e, 15 (2014), pp.~1061--1107.

\bibitem{Nar_Hei_Sew_81}
{\sc H.~Narnhofer and G.~L. Sewell}, {\em Vlasov hydrodynamics of a quantum
  mechanical model}, Communications in Mathematical Physics, 79 (1981),
  pp.~9--24.

\bibitem{Pet_14}
{\sc S.~Petrat}, {\em Derivation of mean-field dynamics for fermions}, PhD
  thesis,  (2014).

\bibitem{Pet_Pic_16}
{\sc S.~Petrat and P.~Pickl}, {\em A new method and a new scaling for deriving
  fermionic mean-field dynamics}, Mathematical Physics, Analysis and Geometry,
  19 (2016), p.~3.

\bibitem{Por_Rad_Saf_Sch_17}
{\sc M.~Porta, S.~Rademacher, C.~Saffirio, and B.~Schlein}, {\em Mean field
  evolution of fermions with coulomb interaction}, Journal of Statistical
  Physics, 166 (2017), pp.~1345--1364.

\bibitem{ReeSim2}
{\sc M.~Reed and B.~Simon}, {\em Methods of {M}odern {M}athematical {P}hysics.
  {II}. {F}ourier analysis, self-adjointness}, Academic Press, New York, 1975.

\end{thebibliography}
\end{document}